  \newcommand{\bg}{black} % background
  \newcommand{\fg}{white} % foreground
  \newcommand{\bg}{white} % background
  \newcommand{\fg}{black} % foreground
\definecolor{TUMBlue}{HTML}{0065BD}
\definecolor{TUMSecondaryBlue}{HTML}{005293}
\definecolor{TUMSecondaryBlue2}{HTML}{003359}
\definecolor{TUMBlack}{HTML}{000000}
\definecolor{TUMWhite}{HTML}{FFFFFF}
\definecolor{TUMDarkGray}{HTML}{333333}
\definecolor{TUMGray}{HTML}{808080}
\definecolor{TUMLightGray}{HTML}{CCCCC6}
\definecolor{TUMAccentGray}{HTML}{DAD7CB}
\definecolor{TUMAccentOrange}{HTML}{E37222}
\definecolor{TUMAccentGreen}{HTML}{A2AD00}
\definecolor{TUMAccentLightBlue}{HTML}{98C6EA}
\definecolor{TUMAccentBlue}{HTML}{64A0C8}
\pgfplotsset{compat=newest}
\pgfplotsset{
  % For available color names, see http://www.latextemplates.com/svgnames-colors
  cycle list={TUMBlue\\TUMAccentOrange\\TUMAccentGreen\\TUMSecondaryBlue2\\TUMDarkGray\\},
}
\bfseries\color{TUMBlue},
\newtheorem{theorem}{Theorem}[section]
\newtheorem{corollary}[theorem]{Corollary}
\newtheorem{lemma}[theorem]{Lemma}
\theoremstyle{definition}
\newtheorem{definition}[theorem]{Definition}
\theoremstyle{remark}
\newtheorem*{remark}{Remark}
\newtheorem*{example}{Example}
\definecolor{isarblue}{HTML}{006699}
\definecolor{isargreen}{HTML}{009966}
\definecolor{isarlightblue}{HTML}{0099ff}
\lstdefinelanguage{isabelle}{%
    keywords=[1]{locale,sublocale,global_interpretation,interpretation,context,definition,proof,lemma,proposition,theorem,corollary,next,qed,using,unfolding,by,have,moreover,ultimately,hence},
    keywordstyle=[1]\bfseries\color{isarblue},
    keywords=[2]{where,fixes,obtains,assumes,shows,and,begin,end,for},
    keywordstyle=[2]\bfseries\color{isargreen},
    keywords=[3]{if,then,else,SOME,let,in,AE,LINT},
    keywordstyle=[3]\color{isarblue},
	keywords=[4]{define,assume,fix,show,thus, obtain},
    keywordstyle=[4]\bfseries\color{isarlightblue},
}
\lstdefinestyle{isabelle}
{%
  language=isabelle,
  escapeinside={&}{&},
  columns=fixed,
  extendedchars,
  basewidth={0.5em,0.45em},
  basicstyle=\ttfamily,
  mathescape,
  tabsize=4
}
\newtheoremstyle{break}
  {\topsep}{\topsep}%
  {}{}%
  {\bfseries}{}%
  {\newline}{}%
\theoremstyle{break}
\newtheorem{isacorollary}[theorem]{Corollary}
\newtheorem{isalemma}[theorem]{Lemma}
\newtheorem{isadefinition}[theorem]{Definition}
\newcommand*{\getUniversity}{Technische Universität München}
\newcommand*{\getFaculty}{Informatics \& Mathematics}
\newcommand*{\getSchool}{Computation, Information and Technology}
\newcommand*{\getTitle}{On the Formalization of Martingales}
\newcommand*{\getTitleGer}{Eine Formalisierung von Martingalen}
\newcommand*{\getAuthor}{Ata Keskin}
\newcommand*{\getDoctype}{Bachelor's Thesis}
\newcommand*{\getSupervisor}{Prof. Dr. Tobias Nipkow}
\newcommand*{\getAdvisor}{M. Sc. Katharina Kreuzer}
\newcommand*{\getSubmissionDate}{15 September 2023}
\begin{document}

\pagenumbering{alph}

\frontmatter{}

\begin{titlepage}
  \centering

  \includegraphics[height=20mm]{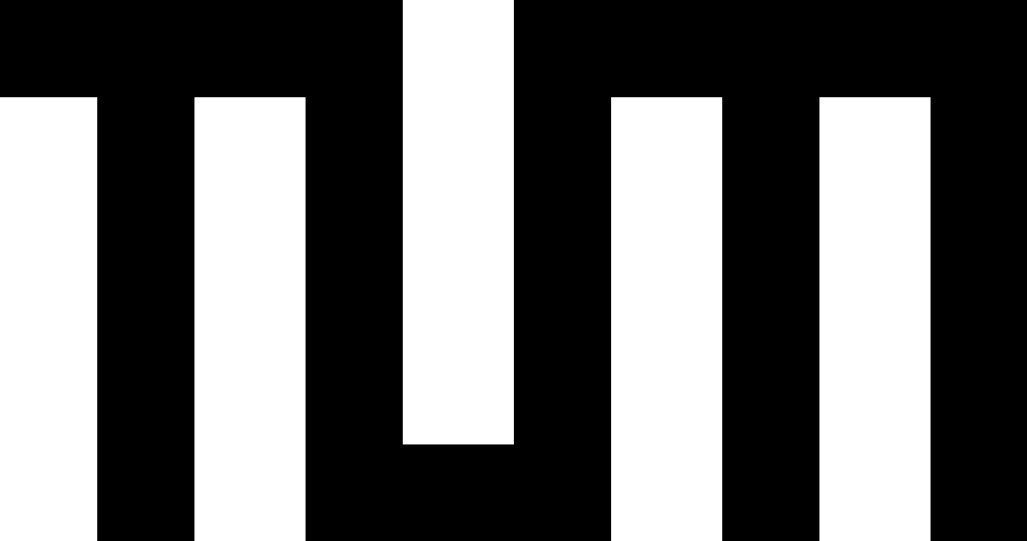}

  \vspace{5mm}
  {\huge\MakeUppercase{School of \getSchool{} --- \getFaculty{}} \par}

  \vspace{5mm}
  {\large\MakeUppercase{\getUniversity{}} \par}

  \vspace{20mm}
  {\Large \getDoctype{} in Informatics \par}

  {\Large \getDoctype{} in Mathematics \par}
  \vspace{8mm}
  
  \includegraphics[height=30mm]{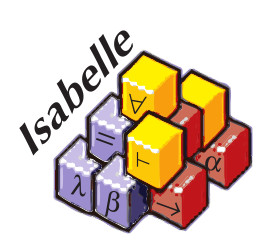}

  \vspace{8mm}
  {\huge\bfseries \getTitle{} \par}

  \vspace{10mm}
  {\huge\bfseries \foreignlanguage{ngerman}{\getTitleGer{}} \par}

  \vspace{10mm}
  \begin{tabular}{l l}
    Author:          & \getAuthor{}         \\
    Supervisor:      & \getSupervisor{}     \\
    Advisor:         & \getAdvisor{}        \\
    Submission Date: & \getSubmissionDate{} \\
  \end{tabular}

\end{titlepage}

\chapter{\abstractname}

This thesis presents a formalization of martingales in arbitrary Banach spaces using Isabelle/HOL. We begin by examining formalizations in prominent proof repositories and extend the definition of the conditional expectation operator from the real numbers to general Banach spaces, drawing inspiration from prior work. We define filtered measure spaces, adapted, progressively measurable and predictable processes and rigorously formalize martingales, submartingales, and supermartingales. Additionally, our contributions expand the scope of Bochner integration techniques to general Banach spaces, and introduces additional lemmas and induction schemes for integrable functions. Our formalization provides a robust framework for future formalizations within the theory of stochastic processes.
\tableofcontents{}

\mainmatter{}

% !TeX root = ../main.tex

\chapter{Introduction}\label{chapter:introduction}

Martingales hold a central position in the theory of stochastic processes, making them a fundamental concept for the working mathematician. They provide a powerful way to study and analyze random phenomena, offering a mathematical framework for understanding their behavior. 

In various real-world scenarios, we encounter systems that evolve randomly over time which can be modeled using martingales. In finance and economics, martingales are an invaluable tool for modeling asset prices \cite{fama1965} and option pricing \cite{Musiela_Rutkowski_2005}. They provide insight into risk assessment, portfolio management, and the efficient market hypothesis, which postulates that asset prices fully reflect all available information \cite{yaes1989}. 

Martingales are also closely related to several important limit theorems in probability theory. These theorems, such as the strong law of large numbers and the central limit theorem, formalize the asymptotic behavior of sample means and sums of random variables. They have profound implications in statistics, allowing us to draw conclusions about large datasets and make predictions based on limited information. Martingale theory allows us to investigate whether these systems remain bounded or converge to certain values in the long run.

In addition to their relevance in mathematics, martingales find applications in various interdisciplinary fields. Their ability to model randomness and analyze dynamic systems makes them useful in physics \cite{roldan2023}, biology, and computer science \cite{mitzenmacher_upfal_2005}, among others.

In the scope of this thesis, we present a formalization of martingales in arbitrary Banach spaces using Isabelle/HOL \cite{Keskin_A_Formalization_of_2023}. We start our discourse by examining existing formalizations in two prominent formal proof repositories, the Lean Mathematical Library (\textsf{mathlib}) and the Archive of Formal Proofs (\textsf{AFP}). Afterwards, we go over some of the basic concepts concerning the theory of integration in Banach spaces, laying a solid foundation for our research.

The current formalization of conditional expectation in the Isabelle library is limited to real-valued functions. To overcome this limitation, we extend the construction of conditional expectation to general Banach spaces, employing an approach similar to the one described in \cite{Hytoenen_2016}. We use measure theoretic arguments to construct the conditional expectation using simple functions and limiting arguments. We compare our construction\footnote{\texttt{Martingale.Conditional\_Expectation\_Banach} in \cite{Keskin_A_Formalization_of_2023}} with the approach in \cite{Hytoenen_2016} in Chap. \ref{chapter:discussion}.

Subsequently, we define stochastic processes and introduce the concepts of adapted, progressively measurable and predictable processes using suitable locale definitions\footnote{\texttt{Martingale.Stochastic\_Process} in \cite{Keskin_A_Formalization_of_2023}}. We pay special attention to predictable processes in discrete-time, providing a characterization using adapted processes. Moving forward, we rigorously define martingales, submartingales, and supermartingales, presenting their first consequences and corollaries\footnote{\texttt{Martingale.Martingale} in \cite{Keskin_A_Formalization_of_2023}}. Discrete-time martingales are given special attention in the formalization. In every step of our formalization, we make extensive use of the powerful locale system of Isabelle.
Our formalization fully encompasses the introductory \textsf{mathlib} theory \texttt{probability.mar\-tingale.basic} on martingales \cite{Degenne_Ying_2022}, even offering more generalization at certain stages.

Our thesis further contributes by generalizing concepts in Bochner integration by extending their application from the real numbers to arbitrary Banach spaces equipped with a second-countable topology. Induction schemes for integrable simple functions on Banach spaces are introduced, accommodating various scenarios with or without a real vector ordering\footnote{\texttt{Martingale.Bochner\_Integration\_Addendum} in \cite{Keskin_A_Formalization_of_2023}}. These amendments expand the applicability of Bochner integration techniques.
We conclude our thesis with reflections on the formalization approach and suggestions for future research directions.
% !TeX root = ../main.tex
% Add the above to each chapter to make compiling the PDF easier in some editors.

\chapter{Background and Related Work}\label{chapter:background}

In the following section, we explore existing formalizations of martingales within the mathematical proof repositories \textsf{mathlib} and \textsf{AFP}. Afterwards, we will provide a concise introduction to the theory of integration in Banach spaces, establishing the mathematical foundation that underpins our formalization efforts.

\section{Existing Formalizations}

We start by looking at the existing developments in the proof repositories \textsf{mathlib} and \textsf{AFP}.

\subsection{Lean Mathematical Library}

Our main motivation for formalizing a theory of martingales in Isabelle/HOL comes from the existing in-depth formalization of the same subject in \textsf{mathlib}. As stated on their online platform, ``The Lean mathematical library, \textsf{mathlib}, is a community-driven effort to build a unified library of mathematics formalized in the Lean proof assistant'' \cite{Lean}. The Lean formalization of martingales consists of six documents. In the introductory Lean document \texttt{probability.martingale.basic}, fundamentals of the theory of martingales are formalized \cite{Degenne_Ying_2022}. The aim of this bachelor's thesis is to reproduce the results contained in this document using Isabelle/HOL. As will become clear in a moment, this is not a straightforward task, since there are a lot of dependencies missing in the Isabelle/HOL libraries.

The document \texttt{probability.martingale.basic} contains definitions for martingales, submartingales and supermartingales. The main results of this document are
\begin{itemize}
\item[$\rightarrow$]\lstinline[mathescape]{measure_theory.martingale $f$ $\mathcal{F}$ $\mu$:}
\item[] \quad$f$ is a martingale with respect to filtration $\mathcal{F}$ and measure $\mu$.
\item[$\rightarrow$]\lstinline[mathescape]{measure_theory.supermartingale $f$ $\mathcal{F}$ $\mu$:}
\item[] \quad$f$ is a supermartingale with respect to filtration $\mathcal{F}$ and measure $\mu$.
\item[$\rightarrow$]\lstinline[mathescape]{measure_theory.submartingale $f$ $\mathcal{F}$ $\mu$:}
\item[] \quad$f$ is a submartingale with respect to filtration $\mathcal{F}$ and measure $\mu$.
\item[$\rightarrow$]\lstinline[mathescape]{measure_theory.martingale_condexp $f$ $\mathcal{F}$ $\mu$:}
\item[] \quad the sequence $(\mu[f \vert \mathcal{F}_i])_{i\in\mathcal{T}}$ is a martingale with respect to $\mathcal{F}$ and $\mu$, where $\mu[f \vert \mathcal{F}_i]$ denotes the conditional expectation of $f$ with respect to the subalgebra $\mathcal{F}_i$ and $\mu$.
\end{itemize}

On a first note, we see that this theory relies heavily on the conditional expectation operator in Banach spaces. Prior to our development, the only formalization of conditional expectation in Isabelle/HOL was done in the real setting and resides in the theory document \texttt{HOL-Probability.Conditional\_Expectation}. This formalization was accomplished by S\`ebastien Gou\"ezel, presumably in anticipation of his later entries \cite{Ergodic_Theory-AFP} and \cite{Lp-AFP}. We will delve further into the existing formalization and how our contribution improves upon it in the upcoming section.

Within the \textsf{mathlib} formalization, the majority of lemmata on martingales require the measures in question to be finite. In our formalization of martingales, we will demonstrate that $\sigma$-finiteness suffices alone. This approach is also consistent with our generalized formalization of conditional expectation, as it inherits the $\sigma$-finiteness requirement from the pre-existing formalization in the real setting.

Another short-coming of the \textsf{mathlib} formalization is its treatment of predictable processes. The proof library \textsf{mathlib} contains the definition of adapted processes and progressively measurable processes. However, no explicit definition of a predictable process is given. Instead, predictability is defined only in the discrete-time case, using an equivalent characterization via adapted processes. In contrast, our formalization defines predictable processes more generally using the concept of a predictable $\sigma$-algebra. One of the major advantages of our formalization is the use of locales and sublocale relations. Concretely, we will show the following relationship between various types of stochastic processes.
\[
	\texttt{stochastic} \supseteq \texttt{adapted} \supseteq \texttt{progressive} \supseteq \texttt{predictable}
\]

Another important point to consider is the restrictions placed on the types in question. In the \textsf{mathlib} formalization, martingales are defined as a family of integrable functions $f : \iota \rightarrow \Omega \rightarrow E$, indexed by the set $\iota$. The \textsf{mathlib} formalization further requires that
\begin{itemize}
\item $\iota$ is a pre-ordered set,
\item $\Omega$ is a measurable space (i.e. a set together with a $\sigma$-algebra $\Sigma$),
\item $E$ is a normed and complete $\mathbb{R}$-vector space, i.e. a Banach space.
\end{itemize}

These restrictions are easily replicated in our formalization using type classes and the ``\texttt{measure}'' type. We simply restrict ourselves to functions $f : \; 't \;\rightarrow \; 'a \;\texttt{measure} \;\rightarrow\; 'b$, where the type $'t$ is an instance of the type class \texttt{order\_topology} and the type $'b$ is an instance of the type class \texttt{banach}. With this specification, our approach mirrors the \textsf{mathlib} formalization, since measure spaces, measurable spaces and $\sigma$-algebras are all represented using the type ``\texttt{measure}'' in Isabelle/HOL.

The main purpose of the \textsf{mathlib} formalization on martingales is to prove Doob's martingale convergence theorems, which concern discrete-time and continuous-time martingales (i.e. the naturals or the reals as indices). This justifies their focus on discrete-time processes and the formulation of predictability only in the discrete-time case. More information on the specifics and the development of Doob's martingale convergence theorems is available in \cite{ying2022formalization}.

This concludes our review of the \textsf{mathlib} formalization on martingales.

\subsection{Archive of Formal Proofs}

The Archive of Formal Proofs (\textsf{AFP}) is a digital repository of formalized proofs and theories developed using the Isabelle theorem prover and proof assistant. The \textsf{AFP}  hosts a variety of formalizations and proofs, primarily in the fields of logic, mathematics, and computer science. The repository allows researchers to share their formal proofs, theories, and related materials with the broader community.

The repository offers a search function, which allows us to find if any formalization on martingales have been done previously. A quick search yields the theory file \texttt{DiscretePricing.Martingale} \cite{DiscretePricing-AFP}. The entry \texttt{DiscretePricing}, by Mnacho Echenim, focuses on the formalization of the Binomial Options Pricing Model in finance \cite{EchenimPeltier}. A development of discrete-time real-valued martingales is given in order to introduce the concept of risk-neutral measures. Similar to the development on \textsf{mathlib}, the goal of this entry is not to formalize martingales. An incomplete formalization of martingales and filtered measure spaces is only given as a byproduct.
Apart from this entry, no other development on the theory of martingales is present on the \textsf{AFP}.

\section{Mathematical Foundations and Reference Material}

The main focus of our project is to formalize martingales in as general of a setting as possible. In this vein, we will study martingales defined on arbitrary Banach spaces, as opposed to the reals only. The main obstacle we will face is the development of the conditional expectation operator in arbitrary Banach spaces. As a primer for the upcoming chapters, we will quickly cover the basics of integration on Banach spaces. More information covering the prerequisites of our work can be found in the initial chapters of the book \textit{Analysis in Banach Spaces} \cite{Hytoenen_2016}.

\begin{remark}
	For the remainder of this document, unless stated otherwise explicitly, we fix a measure space $M = (\Omega, \Sigma, \mu)$ and a Banach space $(E, \lVert \cdot \rVert)$. Here $\Omega$ denotes an arbitrary set, $\Sigma$ a $\sigma$-algebra defined on this set, and $\mu : \Sigma \rightarrow \mathbb{R}_{\ge 0}$ a measure. Similarly, $E$ is a vector space which is complete with respect to the metric topology generated by the norm $\lVert \cdot \rVert$.
\end{remark}

Integration on Banach spaces is usually done using the Bochner-integral, which is defined similarly to the Lebesgue-integral. For $M$ a measure space and $E$ a Banach space, we introduce the Bochner-integral as follows:

\vspace{0.3cm}

We consider simple functions $s : \Omega \rightarrow E$. These are functions which can be expressed $\mu$-almost everywhere ($\mu$-a.e.) as finite sums of the form
\[
	s = \sum_{i=1}^n \mathbf{1}_{A_i} \cdot_\mathbb{R} c_i
\]
where $\mathbf{1}_A$ is the indicator function of a set $A \in \Sigma$ and $c_i \in E$. Here $\cdot_\mathbb{R}$ denotes the scalar multiplication in $E$. We call such a function $s$ Bochner-integrable if $\mu(A_i) < \infty$ for all $A_i \in \Sigma$. In this case, we define the Bochner-integral simply as the sum
\[
	\int s \; \textrm{d}\mu = \sum_{i=1}^n \mu(A_i) \cdot_\mathbb{R} c_i
\]
If we replace $E$ with $\mathbb{R}$, we can easily see that Bochner-integrable simple functions are exactly those functions, which are Lebesgue-integrable and simple.
\vspace{0.2cm}

We call a function $f : \Omega \rightarrow E$ strongly measurable, if there exists a sequence $(f_n)_{n\in\mathbb{N}}$ of simple functions converging to $f$ $\mu$-almost everywhere.
A strongly measurable function $f$ is called Bochner-integrable with respect to $\mu$, if there exists a sequence of Bochner-integrable simple functions $f_n : \Omega \rightarrow E$ such that
\[
	\lim_{n \to \infty} \int_\Omega \lVert f - f_n \rVert \; \textrm{d}\mu = 0
\]
The integral used in this definition is the ordinary Lebesgue-integral. This definition makes sense, since $w \mapsto \lVert f(w) - f_n(w) \rVert$ is $\mu$-measurable and non-negative.

It can be shown via the triangle inequality that the integrals $\int f_n\; \textrm{d}\mu$ form a Cauchy sequence. By completeness, this sequence converges to some element $\lim_{n \to \infty} \int f_n \; \textrm{d}\mu \in E$. This limit is called the Bochner-integral of $f$ with respect to the measure $\mu$
\[
	\int f \; \textrm{d}\mu = \lim_{n \to \infty} \int f_n \; \textrm{d}\mu
\]

Furthermore, a function $f$ in this setting is Bochner-integrable, if and only if the function $x \mapsto \lVert f(x) \rVert$ is integrable.

\vspace{0.3cm}

A formalization of the Bochner-integral is available in Isabelle/HOL in the theory file \texttt{HOL-Analysis.Bochner\_Integration} \cite{hoelzl2011measuretheory}. This formalization, by Johannes Hölzl, has the additional assumption that the space $E$ be second-countable. In the context of a metric space, this is the same as requiring separability.
\vspace{0.3cm}
\begin{remark}

One can show that a function $f$ is strongly measurable if and only if it is essentially separably valued and for all $A \in \mathcal{B}(E)$ we have $f^{-1}(A) \in \Sigma$. Here $\mathcal{B}(E)$ denotes the Borel $\sigma$-algebra on $E$. A function is called essentially separably valued if there exists a $\mu$-null set $N \subseteq \Omega$, such that $f (\Omega \setminus N)$ is separable as a subspace of $E$. Therefore, if $E$ is already a separable Banach space, a function $f : \Omega \rightarrow E$ is strongly measurable if and only if it is $\Sigma$-measurable.

Consequently, we do not need to concern ourselves with definining strong measurability when working within separable (or equivalently second-countable) Banach spaces.
\end{remark}
\vspace{0.3cm}

The book \textit{Analysis in Banach Spaces} also contains an in depth section on the construction of the conditional expectation operator on Banach spaces. For our purposes, we only need to focus on the case where $f : \Omega \rightarrow E$ is a Bochner-integrable function, i.e. an element of $L^1(E)$. Here, $L^p(E)$ denotes the set of functions $f: \Omega \rightarrow E$, for which $x \mapsto \lVert f(x) \rVert^p$ is integrable which are unique upto a $\mu$-null set. In this case, the conditional expectation is constructed as a linear operator $L^1(E) \rightarrow L^1(E)$. The book contains theorems for the existence and uniqueness of conditional expectations (up to $\mu$-null sets) for functions not only in $L^1(E)$, but also for those in $L^2(E)$ and $L^0(E)$. The latter is the space of strongly measurable functions with codomain $E$. Unsuprisingly, the definition of conditional expectation in this case is a bit more complicated, since it has to take into account the case where $f$ is not integrable.

To demonstrate commonly used properties of conditional expectation, we have drawn upon the ideas presented in the lecture notes of Gordan Zitkovic during his lecture on conditional expectation at the University of Texas at Austin \cite{Zitkovic_2015}.

For defining stochastic processes in a general setting, we have used the definitions presented in the books \textit{PDE and Martingale Methods in Option Pricing} by Andrea Pascucci \cite{Pascucci_2011} and \textit{Stochastic Calculus and Applications} by Samuel N. Cohen and Robert J. Elliott \cite{Elliott_Cohen_1982}. Apart from these resources, we have made heavy use of the blog \textit{Almost Sure} by George Thowler from the University of Cambridge \cite{Thowler}.

Another extensive reference regarding martingales in Banach spaces is the book \textit{Martingales in Banach Spaces} by Gilles Pisier \cite{pisier_2016}. This resource provides an in-depth exploration of the theory of martingales in Banach spaces at a graduate level. Given the limited scope of this thesis, the book serves only as a supplementary resource.

% !TeX root = ../main.tex
% Add the above to each chapter to make compiling the PDF easier in some editors.

\chapter{Conditional Expectation in Banach Spaces}\label{chapter:conditional_expectation}

Conditional expectation extends the concept of expected value to situations where we have additional information about the outcomes. In a discrete setting, i.e. when the range of the random variables in question is countable, the setup is quite simple. 
\begin{example}
\par Let $(\Omega, \mathcal{F}, \mu)$ be a measure space. Let $E$ be a complete normed vector space, i.e. a Banach space, and $S \subseteq E$ be some countable subset. Let $X : \Omega \rightarrow S$ be an integrable random variable and an event $A \in \mathcal{F}$ with $\mu(A) < \infty$. The conditional expectation of $X$ given $A$, denoted as $\mathbb{E}(X \vert A)$, represents the expected value of $X$ given that $A$ occurs. In this simple (and countable) case, we can directly define the conditional expectation as:
\[
	\mathbb{E}(X \vert A) = \sum_{w \in S} \frac{\mu(\{X = w\} \cap A)}{\mu(A)} \cdot w
\]
\end{example}
Of course, this definition only makes sense if the value on the right hand side is finite and $\mu(A) \neq 0$. Defined this way, the conditional expectation satisfies the following equality
\begin{align*}
	\int_A X \; \textrm{d} \mu &= \sum_{w \in S} \mu(\{\mathbf{1}_A \cdot X = w\}) \cdot w\\
	&= \mu(A) \cdot \mathbb{E}(X \vert A) \\
	&= \int_A \mathbb{E}(X \vert A) \; \textrm{d} \mu
\end{align*}

\begin{remark}
	We use the notation ``$c \; \cdot \; w$'' to denote the scalar multiplication of $c \in \mathbb{R}$ and $w \in E$. When $E = \mathbb{R}$, it is just the standard multiplication on $\mathbb{R}$. 
\end{remark}

This observation motivates us to generalize the definition of conditional expectation to take into account not just a single event, but a collection of events. 

Fix $X : \Omega \rightarrow E$. Given a sub-$\sigma$-algebra $\mathcal{H} \subseteq \Sigma$, we call an $\mathcal{H}$-measurable function $g : \Omega \rightarrow E$ a conditional expectation of $X$ with respect to the sub-$\sigma$-algebra $\mathcal{H}$ , denoted as $\mathbb{E}(X \vert \mathcal{H})$, if the following equality holds for all $A \in \mathcal{H}$

\[
	\int_A X \; \textrm{d} \mu = \int_A g \; \textrm{d} \mu
\]

In the case that $E = \mathbb{R}$, it is straightforward to show that such a function $g$ always exists (via Radon-Nikodym), and is unique up to a $\mu$-null set. Notice that $\mathbb{E}(X \vert \mathcal{H})$ is a function $\Omega \rightarrow E$, as opposed to some value in $E$.

The suitable setting for defining the conditional expectation is when the sub-$\sigma$-algebra $\mathcal{H}$ gives rise to a $\sigma$-finite measure space. This is the case when $\mu\vert_\mathcal{H}$, the restriction of $\mu$ to $\mathcal{H}$, is a $\sigma$-finite measure. To see what goes wrong otherwise, consider the trivial sub-$\sigma$-algebra $\{\varnothing, \Omega\}$. A function which is measurable with respect to this $\sigma$-algebra is necessarily constant. Therefore, if $\mu(\Omega) = \infty$, no conditional expectation can exist, since it would have to be equal to $0$ $\mu$-almost everywhere in order to be integrable.

\vspace{0.3cm}
\begin{example}
Let $\mathcal{H} \subseteq \mathcal{F}$ be a sub-$\sigma$-algebra such that $\mu\vert_\mathcal{H}$ is a $\sigma$-finite measure. Given an integrable function $X : \Omega \rightarrow \mathbb{R}$, we can define a measure $\nu$ on $(\Omega, \mathcal{F})$ via
\[
	\nu(A) := \int_A X \; \textrm{d}\mu
\]
It is easy to verify that $\mu\vert_\mathcal{H}(A) = 0$ implies $\nu\vert_\mathcal{H}(A) = 0$, i.e. $\nu\vert_\mathcal{H}$ is absolutely continuous with respect to $\mu\vert_\mathcal{H}$. Using the Radon-Nikodym Theorem, we obtain an $\mathcal{H}$-measurable function $g : \Omega \rightarrow \mathbb{R}$ such that
\[
	\nu\vert_\mathcal{H}(A) = \int_A g \;\textrm{d}\mu\vert_\mathcal{H}
\]
Thus for any $A \in \mathcal{H}$, we have
\[
	\int_A X \; \textrm{d}\mu = \int_A g \;\textrm{d}\mu\vert_\mathcal{H} = \int_A g \;\textrm{d}\mu
\]
In the last equality, we use the fact that $g$ is $\mathcal{H}$-measurable. Radon-Nikodym also guarentees that this function $g$ is unique up to a $\mu\vert_\mathcal{H}$-null set. Since all $\mu\vert_\mathcal{H}$-null sets are also $\mu$-null sets, the function $g$ satisfies the requirements of a conditional expectation.
\end{example}
\vspace{0.3cm}

Details aside, this shows that the conditional expectation always exists and is unique up to $\mu$-null set for all $X \in L^1(\mathbb{R})$. Our job now will be to construct a similar operator on arbitrary Banach spaces using methods from functional analysis and measure theory. The results formalized in this chapter can be found in the theory files \texttt{Martingale.Elementary\_Metric\_Spaces\_Supplement}, \;\texttt{Martingale.Bochner\_Integra\-tion\_Supplement}, and \texttt{Martingale.Conditional\_Expectation\_Banach} in \cite{Keskin_A_Formalization_of_2023}.

\section{Preliminaries}

In anticipation of our construction, we need to lift some results from the real setting to our more general setting. Our fundamental tool in this regard will be the \textbf{averaging theorem}. The proof of this theorem is due to Serge Lang \cite{Lang_1993}. The theorem allows us to make statements about a function's value almost everywhere, depending on the value its integral takes on various sets of the measure space.

\subsection{Averaging Theorem}

Before we introduce and prove the averaging theorem, we will first show the following lemma which is crucial for our proof. While not stated exactly in this manner, our proof makes use of the characterization of second-countable topological spaces given in the book General Topology by Ryszard Engelking (Theorem 4.1.15) \cite{engelking_1989}.

\begin{lemma}
Let $E$ be a separable metric space. Then there exists a countable set $D \subseteq E$, such that the set of open balls
\[
	\mathcal{B} = \{ B_\varepsilon(x) \; \vert \; x \in D, \; \varepsilon \in \mathbb{Q} \cap (0, \infty) \}
\]
generates the topology on $E$. Here $B_\varepsilon(x)$ is the open ball of radius $\varepsilon$ with centre $x$.
\end{lemma}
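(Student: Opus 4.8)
The plan is to show that the countable family of balls $\mathcal{B}$ forms a base for the topology on $E$, which by definition means it generates the topology. Since $E$ is separable, fix a countable dense subset $D \subseteq E$; this is exactly the $D$ we want. It then suffices to verify the standard base criterion: for every open set $U \subseteq E$ and every point $y \in U$, there exist $x \in D$ and $\varepsilon \in \mathbb{Q} \cap (0,\infty)$ with $y \in B_\varepsilon(x) \subseteq U$.

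First I would fix such a $U$ and $y \in U$. Since $U$ is open in the metric topology, there is some $\delta > 0$ with $B_\delta(y) \subseteq U$. Next I would pick a rational $\varepsilon$ with $0 < \varepsilon < \delta/2$ (possible by density of $\mathbb{Q}$ in $\mathbb{R}$), and then use density of $D$ to find $x \in D$ with $d(x, y) < \varepsilon$. The triangle inequality then gives $y \in B_\varepsilon(x)$ (since $d(x,y) < \varepsilon$) and $B_\varepsilon(x) \subseteq B_{2\varepsilon}(y) \subseteq B_\delta(y) \subseteq U$: indeed, for any $z \in B_\varepsilon(x)$ we have $d(z,y) \le d(z,x) + d(x,y) < \varepsilon + \varepsilon = 2\varepsilon < \delta$. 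This establishes the base criterion. Finally, $\mathcal{B}$ is countable because it is indexed by $D \times (\mathbb{Q} \cap (0,\infty))$, a product of two countable sets. Since a base of a topology generates that topology, we conclude that $\mathcal{B}$ generates the topology on $E$.

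I do not expect a serious obstacle here; the argument is the classical proof that a separable metric space is second countable. The only point requiring mild care is the choice of the radius: one should take the rational radius strictly smaller than half the given $\delta$ so that the same rational $\varepsilon$ simultaneously serves as the radius in which $x$ approximates $y$ and keeps the whole ball $B_\varepsilon(x)$ inside $U$. In a formalized setting the subtlety is instead bookkeeping — relating Isabelle's notion of "generated topology" (the topology whose open sets are arbitrary unions of members of $\mathcal{B}$) to the metric topology, and checking the countability of the index set — but mathematically the content is entirely the triangle-inequality estimate above.
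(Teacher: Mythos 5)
Your proposal is correct and is essentially the same argument as the paper's: take a countable dense set $D$, verify the standard base criterion for open $U$ and a point in $U$ via density of $\mathbb{Q}$ and of $D$ together with the triangle inequality. The only difference is cosmetic bookkeeping (you fix the rational radius first and then choose the dense point within that radius, while the paper picks a dense point in $B_{\varepsilon/3}$ and then a rational $r$ between $\varepsilon/3$ and $\varepsilon/2$), and both choices of constants work.
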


\begin{proof}
In the context of metric spaces, second-countability is equivalent to separability. Consequently, there exists some non-empty countable subset $D \subseteq E$, which is dense in $E$. We want to show that this $D$ fulfills the statement above. To this end we will use the following equivalence which is valid for any $\mathcal{A} \subseteq \mathcal{P}(E)$

\[
	\mathcal{A} \textrm{ is topological basis} \Longleftrightarrow \forall \textrm{open } U.\; \forall x \in U.\; \exists A \in \mathcal{A}.\; x \in A \wedge A \subseteq U
\]

Let $U \subseteq E$ be open. Fix $x \in U$. Since $U$ is open and we are working with the metric topology, there is some $\varepsilon > 0$, such that $B_\varepsilon(x) \subseteq U$. Furthermore, we know that a set $D$ is dense if and only if for any non-empty open subset $O \subseteq E$, $D \cap O$ is also non-empty. Therefore, there exists some $y \in D \cap B_{\varepsilon/3}(x)$. Since $\mathbb{Q}$ is dense in $\mathbb{R}$, there exists some $r \in \mathbb{Q}$ with $e/3 < r < e/2$. It is easy to check that $x \in B_r(y)$ and $B_r(y) \subseteq U$ with $y \in D$ and $r \in \mathbb{Q} \cap (0, \infty)$. This concludes the proof.
\end{proof}
Now we are ready to state and subsequently prove the averaging theorem.

\begin{theorem}\label{thm:averaging_theorem} (Averaging Theorem) \par
Let $(\Omega, \mathcal{F}, \mu)$ be some $\sigma$-finite measure space. Let $f \in L^1(E)$. Let $S$ be a closed subset of $E$ and assume that for all measurable sets $A \in \mathcal{F}$ with finite and non-zero measure the following holds
\[
	\frac{1}{\mu(A)}\int_A f \;\textrm{d}\mu \in S
\]
Then $f(x) \in S$ for $\mu$-almost all $x$.
\end{theorem}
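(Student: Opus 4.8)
The plan is to prove the equivalent statement that $f^{-1}(E \setminus S) = \{x : f(x) \notin S\}$ is a $\mu$-null set. Since $S$ is closed, $U := E \setminus S$ is open, and the first step is an elementary separability observation: for every $y \in U$ there exist $c \in D$ and $q \in \mathbb{Q} \cap (0,\infty)$, where $D \subseteq E$ is the countable dense set furnished by the preceding lemma, such that $y$ lies in the \emph{closed} ball $\bar{B}_q(c) = \{z \in E : \lVert z - c \rVert \le q\}$ while $\bar{B}_q(c) \cap S = \varnothing$. Indeed, set $d := \inf_{s \in S} \lVert y - s \rVert$, which is strictly positive because $S$ is closed and $y \notin S$; choose a rational $q$ with $0 < q < d/2$; and choose $c \in D$ with $\lVert y - c \rVert < q$ by density. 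Then $y \in \bar{B}_q(c)$, and for every $s \in S$ the reverse triangle inequality gives $\lVert c - s \rVert \ge \lVert y - s \rVert - \lVert y - c \rVert > d - q > q$, so $\bar{B}_q(c)$ avoids $S$. Consequently
\[
	f^{-1}(U) \;\subseteq\; \bigcup_{(c,q) \in \mathcal{I}} f^{-1}\bigl(\bar{B}_q(c)\bigr), \qquad \mathcal{I} := \bigl\{ (c,q) \in D \times (\mathbb{Q} \cap (0,\infty)) : \bar{B}_q(c) \cap S = \varnothing \bigr\},
\]
and the index set $\mathcal{I}$ is countable.

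Since a countable union of $\mu$-null sets is $\mu$-null, it then suffices to show $\mu\bigl(f^{-1}(\bar{B}_q(c))\bigr) = 0$ for each fixed $(c,q) \in \mathcal{I}$, and this is the step where $\sigma$-finiteness enters. The set $A_0 := f^{-1}(\bar{B}_q(c))$ is measurable, because $f \in L^1(E)$ is strongly measurable and $\bar{B}_q(c)$ is Borel. Suppose for contradiction that $\mu(A_0) > 0$. Writing $\Omega = \bigcup_n \Omega_n$ with all $\mu(\Omega_n) < \infty$, at least one set $A := A_0 \cap \Omega_n$ must satisfy $0 < \mu(A) < \infty$, for otherwise $\mu(A_0) \le \sum_n \mu(A_0 \cap \Omega_n) = 0$. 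On $A$ we have $\lVert f(x) - c \rVert \le q$ for every $x$, and $x \mapsto f(x) - c$ is Bochner-integrable over $A$ (the restriction of $f$ is integrable and $\int_A \lVert c \rVert \, \textrm{d}\mu = \mu(A)\lVert c \rVert < \infty$). Using linearity of the Bochner integral together with $\int_A c \, \textrm{d}\mu = \mu(A) \cdot c$, and the estimate $\lVert \int g \, \textrm{d}\mu \rVert \le \int \lVert g \rVert \, \textrm{d}\mu$, we get
\[
	\left\lVert \frac{1}{\mu(A)} \int_A f \; \textrm{d}\mu - c \right\rVert = \left\lVert \frac{1}{\mu(A)} \int_A (f - c) \; \textrm{d}\mu \right\rVert \le \frac{1}{\mu(A)} \int_A \lVert f - c \rVert \; \textrm{d}\mu \le q .
\]
Hence the average $\frac{1}{\mu(A)} \int_A f \, \textrm{d}\mu$ lies in $\bar{B}_q(c)$, which is disjoint from $S$ — contradicting the hypothesis of the theorem, since $A$ has finite and non-zero measure. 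Therefore $\mu(A_0) = 0$.

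Combining the two steps, $\mu\bigl(f^{-1}(U)\bigr) \le \sum_{(c,q) \in \mathcal{I}} \mu\bigl(f^{-1}(\bar{B}_q(c))\bigr) = 0$, which is exactly the assertion that $f(x) \in S$ for $\mu$-almost all $x$. The two substantive moves are precisely where the hypotheses are used: covering the open set $U$ by \emph{countably} many closed balls that avoid $S$ relies on separability — and hence on the preceding lemma — while shrinking an arbitrary positive-measure preimage down to a subset of finite positive measure, so that the averaging hypothesis becomes applicable, relies on $\sigma$-finiteness. I expect the hard part, particularly in a formal development, to be not any single deep idea but the bookkeeping around the Bochner integral: establishing integrability of $f - c$ on the restricted set $A$, extracting the constant $c$ from the integral, and invoking the norm inequality on a finite-measure set, each of which requires the appropriate library lemma on restriction and linearity of $\int \cdot \, \textrm{d}\mu$.
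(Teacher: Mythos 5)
Your proof is correct and follows essentially the same route as the paper's: cover $E \setminus S$ by countably many balls centred in a countable dense set and disjoint from $S$, derive a contradiction from the averaging hypothesis on any finite positive-measure piece of each preimage, and conclude by countable subadditivity. The only differences are cosmetic — you use closed balls with a direct distance-to-$S$ estimate instead of invoking the basis lemma, and you invoke $\sigma$-finiteness locally (intersecting each bad set with an exhausting sequence) where the paper reduces to $\mu(\Omega) < \infty$ globally at the outset via its induction scheme.
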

\begin{proof}
Without loss of generality we will show the statement assuming $\mu(\Omega) < \infty$. Let $v \in E$ and $v \notin S$. 

We show by contradiction that if $B_r(v) \cap S = \varnothing$,  then $A := f^{-1}(B_r(v))$, the set of all $x \in \Omega$ such that $f(x) \in B_r(v)$, is a $\mu$-null set. Therefore, let's assume $\mu(A) > 0$. We have

\begin{align*}
	\left\lVert \frac{1}{\mu(A)}\int_A f \;\textrm{d}\mu  - v \right\rVert &= \left\lVert \frac{1}{\mu(A)}\int_A f - v \;\textrm{d}\mu \right\rVert \\
	&\le \frac{1}{\mu(A)}\int_A \lVert f - v \rVert \;\textrm{d}\mu \\
	&< r
\end{align*}

The last inequality follows from the fact that $f(x) \in B_r(v)$ for $x \in A$. This contradicts our first assumption. Therefore $\mu(A) = 0$.

Notice that $E \setminus S$ is an open subset of $E$. By the previous lemma, there exist open balls $B_{r_i}(w_i)$ with $r_i \in \mathbb{Q}_{\ge 0}$, $w_i \in D$ for $i \in \mathbb{N}$ such that $\bigcup_i B_{r_i}(w_i) = - S$. Obviously, $w_i \in E \setminus S$ and $B_{r_i}(w_i) \cap S = \varnothing$ for $i \in \mathbb{N}$. It follows

\begin{align*}
	\mu(f^{-1}(E \setminus S)) &= \mu\left(\bigcup_i f^{-1}(B_{r_i}(w_i))\right) \\
	&\le \sum_i \mu(f^{-1}(B_{r_i}(w_i))) \\
	&= 0
\end{align*}

Thus $\{f \notin S \}$ is a $\mu$-null set, which completes the proof.

\end{proof}

The formalization of the averaging theorem was part of our development\footnote{\texttt{Bochner\_Integration\_Supplement.averaging\_theorem} in \cite{Keskin_A_Formalization_of_2023}}. At the beginning of our proof, we assumed $\mu(\Omega) < \infty$ without loss of generality. This is only possible since we assumed the measure space in question to be $\sigma$-finite. To simplify the formalization of proofs employing this argument, we have introduced the following induction scheme

\begin{isalemma}
{\small
	\begin{lstlisting}[style=isabelle]
	lemma sigma_finite_measure_induct:
	  assumes "$\bigwedge N \; \Omega. \;\; \texttt{finite\_measure} \; N$
			$\Longrightarrow N = \texttt{restrict\_space} \; M \; \Omega$
			$\Longrightarrow \Omega \in \texttt{sets} \; M$
			$\Longrightarrow \texttt{emeasure}\; N \; \Omega \neq \infty $
			$\Longrightarrow \texttt{emeasure}\; N \; \Omega \neq 0$
			$\Longrightarrow \texttt{almost\_everywhere} \; N \; Q$"
	  and "$\texttt{Measurable.pred} \; M \; Q$"
	  shows "$\texttt{almost\_everywhere} \; M \; Q$"
	\end{lstlisting}
}
\end{isalemma}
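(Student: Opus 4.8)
The plan is to peel the conclusion off the very family of sets that witnesses $\sigma$-finiteness of $M$. Since $M$ is $\sigma$-finite, there is a countable collection $\{\Omega_i\}_{i\in\mathbb{N}} \subseteq \texttt{sets}\;M$ with $\bigcup_i \Omega_i = \texttt{space}\;M$ and $\texttt{emeasure}\;M\;\Omega_i \neq \infty$ for every $i$; this is exactly what the \texttt{sigma\_finite\_measure} locale provides. It suffices to show that for each $i$ we have $\texttt{almost\_everywhere}\;M\;(\lambda x.\ x \in \Omega_i \longrightarrow Q\,x)$: almost-everywhere statements are closed under countable conjunction, so from these we obtain $\texttt{almost\_everywhere}\;M\;(\lambda x.\ \forall i.\ x \in \Omega_i \longrightarrow Q\,x)$, and since $\bigcup_i \Omega_i = \texttt{space}\;M$ the inner premise is met for some $i$ at every $x \in \texttt{space}\;M$, whence this collapses to $\texttt{almost\_everywhere}\;M\;Q$.

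So fix $i$ and put $N := \texttt{restrict\_space}\;M\;\Omega_i$. Then $\texttt{finite\_measure}\;N$ holds, $N = \texttt{restrict\_space}\;M\;\Omega_i$ with $\Omega_i \in \texttt{sets}\;M$, and $\texttt{emeasure}\;N\;\Omega_i = \texttt{emeasure}\;M\;\Omega_i \neq \infty$ by the choice of the $\Omega_i$ (using $\texttt{space}\;N = \Omega_i$ and \texttt{emeasure\_restrict\_space}). Now distinguish two cases. If $\texttt{emeasure}\;M\;\Omega_i = 0$, then $\Omega_i$ is itself $M$-null and $\texttt{almost\_everywhere}\;M\;(\lambda x.\ x \in \Omega_i \longrightarrow Q\,x)$ holds trivially. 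Otherwise $\texttt{emeasure}\;N\;\Omega_i \neq 0$, so the first hypothesis of the lemma applies and yields $\texttt{almost\_everywhere}\;N\;Q$. Here measurability of $Q$ transfers from $M$ to $N$ because $\texttt{sets}\;N$ consists precisely of the traces $A \cap \Omega_i$ with $A \in \texttt{sets}\;M$, so $\texttt{Measurable.pred}\;N\;Q$ is automatic. Finally, \texttt{AE\_restrict\_space\_iff} turns $\texttt{almost\_everywhere}\;N\;Q$ into exactly $\texttt{almost\_everywhere}\;M\;(\lambda x.\ x \in \Omega_i \longrightarrow Q\,x)$.

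Combining the two cases gives the per-$i$ statement for every $i$, and the reduction of the first paragraph then delivers $\texttt{almost\_everywhere}\;M\;Q$. The mathematical content is light; the only real work is the bookkeeping around $\texttt{restrict\_space}$ — pinning down its $\sigma$-algebra and its emeasure, transferring $\texttt{Measurable.pred}\;M\;Q$ to $N$, and translating ``$N$-almost everywhere'' back to ``$M$-almost everywhere on $\Omega_i$'' — all of which are covered by the library facts \texttt{sets\_restrict\_space}, \texttt{space\_restrict\_space2}, \texttt{emeasure\_restrict\_space}, \texttt{AE\_restrict\_space\_iff}, and \texttt{AE\_all\_countable}. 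The one genuine subtlety, easy to overlook, is that the hypothesis deliberately excludes $\texttt{emeasure}\;N\;\Omega = 0$, so the degenerate case $\texttt{emeasure}\;M\;\Omega_i = 0$ must be carved out and handled separately as above.
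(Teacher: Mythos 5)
Your proof is correct and follows essentially the same route as the paper's: take the countable exhausting family $(\Omega_i)_{i\in\mathbb{N}}$ of finite-measure sets witnessing $\sigma$-finiteness, apply the hypothesis on each restricted space to conclude that $\Omega_i \cap \{\neg Q\}$ is null, and take the countable union. Your explicit handling of the degenerate case $\mu(\Omega_i)=0$ (which the hypothesis deliberately excludes) is a point the paper's informal proof glosses over, so your version is if anything the more careful one.
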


This induction scheme allows us to prove results about a $\sigma$-finite measure space $M$, assuming that we can show the property on arbitrary subspaces of $M$ with finite measure. For increased usability, we include additional assumptions such as $\texttt{emeasure}\; N \; \Omega \neq 0$ which let us disregard trivial measure spaces. The proof of this induction scheme is straightforward.
\begin{proof}
Let $M = (\Omega, \Sigma, \mu)$ be a $\sigma$-finite measure space. There exists a family of sets with finite measure $(\Omega_i)_{i \in \mathbb{N}}$ such that $\bigcup_{i \in \mathbb{N}} \Omega_i = \Omega$. By assumption, the property $Q$ holds $\mu$-almost everywhere on all $\Omega_i$. Therefore the sets $\Omega_i \cap \{x \in \Omega \;\vert\; \neg Q(x)\} \in \Sigma\vert_{\Omega_i} \subseteq \Sigma$ are all $\mu$-null sets. This means that $\bigcup_{i \in \mathbb{N}} (\Omega_i \cap \{x \in \Omega \;\vert\; \neg Q(x)\}) = \{x \in \Omega \;\vert\; \neg Q(x)\}$ is also a $\mu$-null set, which completes the proof.
\end{proof}

Now that we have the averaging theorem at our disposal, we can lift the following results from the real case, to our more general setting.

\begin{corollary}
	Let $f \in L^1(E)$ and $\int_A f \;\textrm{d}\mu = 0$ for all measurable sets $A \subseteq \Omega$. Then $f = 0$ $\mu$-almost everywhere.
\end{corollary}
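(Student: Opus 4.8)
The plan is to obtain this as an immediate consequence of the Averaging Theorem (Theorem~\ref{thm:averaging_theorem}), taking the closed set there to be the singleton $S = \{0\} \subseteq E$. This set is indeed closed, since $E$, being a normed vector space, is a metric space and hence Hausdorff.

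The one point needing attention is that the averaging theorem is stated for a $\sigma$-finite measure space, whereas no such assumption is imposed here. This costs nothing: since $f \in L^1(E)$, the function $x \mapsto \lVert f(x) \rVert$ is integrable, so each set $\Omega_n := \{x \in \Omega : \lVert f(x) \rVert \ge 1/n\}$ has finite measure, and $\Omega_0 := \{x \in \Omega : f(x) \neq 0\} = \bigcup_{n \ge 1} \Omega_n$ is $\sigma$-finite. I would therefore pass to the restricted measure space on $\Omega_0$, noting that $f$ restricted to $\Omega_0$ still lies in $L^1(E)$ and that the hypothesis $\int_A f \, \textrm{d}\mu = 0$ continues to hold for every measurable $A \subseteq \Omega_0$, since such $A$ are in particular measurable subsets of $\Omega$.

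With this reduction in place, the verification is routine: for any measurable $A \subseteq \Omega_0$ with $0 < \mu(A) < \infty$ we have $\frac{1}{\mu(A)} \int_A f \, \textrm{d}\mu = 0 \in S$, so the averaging theorem yields $f(x) \in \{0\}$ for $\mu$-almost every $x \in \Omega_0$, i.e.\ $f = 0$ $\mu$-a.e.\ on $\Omega_0$. Since $f$ vanishes identically on $\Omega \setminus \Omega_0$ by construction, we conclude $f = 0$ $\mu$-almost everywhere on $\Omega$. I do not anticipate any genuine difficulty; the reduction to the $\sigma$-finite case is the only non-trivial bookkeeping, and it could be dispensed with entirely were the corollary stated under a standing $\sigma$-finiteness hypothesis (in the formalization, the analogous reduction is the kind of step handled by the scheme \texttt{sigma\_finite\_measure\_induct} above).
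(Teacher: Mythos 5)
Your proposal is correct and is essentially the paper's own proof, which reads in its entirety ``Apply the averaging theorem with $S = \{0\}$.'' The additional reduction to the $\sigma$-finite set $\{f \neq 0\}$ is sound and is a reasonable piece of bookkeeping the paper leaves implicit (its standing context is $\sigma$-finite), but it does not change the argument.
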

\begin{proof}
	Apply the averaging theorem with $S = \{0\}$.
\end{proof}

\begin{corollary}\label{cor:density_unique} (Uniqueness of Densities) \par
	Let $f, g \in L^1(E)$ and $\int_A f \;\textrm{d}\mu = \int_A g \;\textrm{d}\mu$ for all measurable sets $A \subseteq \Omega$. Then $f = g$ $\mu$-almost everywhere.
\end{corollary}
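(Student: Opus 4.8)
The plan is to reduce immediately to the preceding corollary by passing to the difference $h := f - g$. First I would observe that $L^1(E)$ is a vector space, so that $h \in L^1(E)$; concretely, strong measurability is preserved under subtraction, and since $x \mapsto \lVert f(x) - g(x)\rVert \le \lVert f(x)\rVert + \lVert g(x)\rVert$ is dominated by an integrable function, $h$ is Bochner-integrable. This step relies only on standard properties of the Bochner integral already available in \texttt{HOL-Analysis.Bochner\_Integration}.

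Next I would invoke linearity of the Bochner integral over a fixed measurable set: for every $A \subseteq \Omega$ measurable,
\[
	\int_A h \;\textrm{d}\mu = \int_A f \;\textrm{d}\mu - \int_A g \;\textrm{d}\mu = 0,
\]
where the first equality is additivity of $\int_A (\cdot)\,\textrm{d}\mu$ (equivalently, $\int_\Omega \mathbf{1}_A \cdot (f-g)\,\textrm{d}\mu = \int_\Omega \mathbf{1}_A \cdot f\,\textrm{d}\mu - \int_\Omega \mathbf{1}_A \cdot g\,\textrm{d}\mu$) and the second is the hypothesis.

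Finally, applying the previous corollary to $h$ yields $h = 0$ $\mu$-almost everywhere, i.e. $f = g$ $\mu$-almost everywhere, which is the claim. I do not expect any real obstacle here: the only points requiring a little care are the two bookkeeping facts just used — that $f - g$ lands back in $L^1(E)$ and that the set-indexed integral is additive — both of which are routine consequences of the linearity of Bochner integration, so the entire argument is essentially a one-line corollary of the case $S = \{0\}$ of the averaging theorem.
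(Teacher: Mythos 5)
Your argument is correct and is exactly the route the paper takes: the paper's proof reads ``Follows directly from the previous corollary,'' which implicitly means applying that corollary to $h = f - g$ after noting $\int_A h\,\textrm{d}\mu = 0$ for all measurable $A$ by linearity. Your write-up merely makes the routine bookkeeping ($h \in L^1(E)$, additivity of the set-indexed integral) explicit.
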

\begin{proof}
	Follows directly from the previous corollary.
\end{proof}

\begin{corollary}
	Let $E$ be linearly orderable. Let $f \in L^1(E)$ and $\int_A f \;\textrm{d}\mu \ge 0$ for all measurable sets $A \subseteq \Omega$. Then $f$ is non-negative $\mu$-almost everywhere.
\end{corollary}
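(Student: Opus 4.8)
The plan is to deduce this directly from the Averaging Theorem (Theorem~\ref{thm:averaging_theorem}), exactly as in the two preceding corollaries, this time taking for $S$ the positive cone of $E$.

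Concretely, I would set $S := \{\, x \in E : x \ge 0 \,\}$ and first verify that $S$ is closed. Since $E$ is linearly orderable and its order is compatible with the norm topology, the closed ray $\{\, x \in E : x < 0 \,\}$ is open, so its complement $S$ is closed. Next, fix a measurable set $A$ with $0 < \mu(A) < \infty$. The hypothesis gives $\int_A f \,\textrm{d}\mu \ge 0$, and since $1/\mu(A) > 0$ and the order on $E$ is preserved under multiplication by non-negative scalars, it follows that $\frac{1}{\mu(A)} \int_A f \,\textrm{d}\mu \ge 0$, i.e. $\frac{1}{\mu(A)} \int_A f \,\textrm{d}\mu \in S$. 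Since the hypothesis is assumed for \emph{all} measurable $A$, it holds in particular for those of finite non-zero measure, so the hypotheses of the Averaging Theorem are met (recall $f \in L^1(E)$ is given). It delivers $f(x) \in S$ for $\mu$-almost every $x$, which is precisely the claim that $f$ is non-negative $\mu$-almost everywhere.

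There is no genuine obstacle here: the whole argument rests on the Averaging Theorem, and the only things to check are the bookkeeping facts that the positive cone is closed and that dividing by a positive real preserves non-negativity. Both are immediate consequences of the order-compatibility assumptions bundled into ``linearly orderable'', so in the formalization this corollary amounts to little more than instantiating the earlier theorem. If anything requires attention, it is merely making explicit in the formal statement which compatibility properties (closedness of the order in the topology, invariance of $\le$ under positive scaling) are being invoked.
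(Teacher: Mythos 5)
Your proof is correct and follows exactly the paper's route: instantiate the Averaging Theorem with $S=\{x\in E : x\ge 0\}$, using that this set is closed in the order topology and that multiplication by the positive scalar $1/\mu(A)$ preserves non-negativity. (One small slip: you call $\{x : x<0\}$ a ``closed ray''---it is an \emph{open} ray, which is precisely why its complement $S$ is closed.)
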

\begin{proof}
	Our first assumption guarantees that $\{ y \in E \;\vert\; y \ge 0 \}$ is a closed subset of $E$. Applying the averaging theorem on this set, yields the desired result.
\end{proof}

The corollary on the uniqueness of densities (\ref{cor:density_unique}) is crucial in showing that the conditional expectation is unique as an element of $L^1(E)$. These statements are formalized seperately as well.\footnote{\texttt{Bochner\_Integration\_Supplement.density\_zero}\quad \texttt{-.density\_unique\_banach}\quad \texttt{-.density\_nonneg} in \cite{Keskin_A_Formalization_of_2023}}

\subsection{Diameter Lemma}

The goal of this subsection is to prove the diameter lemma, which provides a characterization of Cauchy sequences in metric spaces.

\begin{definition}
Let $E$ be a metric space with metric $d : E \times E \rightarrow \mathbb{R}$. The diameter of a set $A \subseteq E$ is defined as
\[
	\textrm{diam}(A) = \sup_{x,y \in A} d(x,y)
\]
\end{definition}

Intuitively the diameter of a set $A$ measures how ``spread out'' the set $A$ is with respect to the distance defined by the metric.

\begin{lemma}\label{lem:diameter} (Diameter Lemma)\par
	Let $E$ be a metric space with metric $d : E \times E \rightarrow \mathbb{R}$ and $(s_i)_{i\in\mathbb{N}} \subseteq E$ a sequence. Define $S_n = \{s_i \; \vert \; i \ge n \}$. The sequence $(s_i)_{i\in\mathbb{N}}$ is Cauchy, if and only if $S_0$ is bounded and
	\[
		\lim_{n \to \infty} \textup{diam}(S_n) = 0
	\]
\end{lemma}
\begin{proof}
First, assume $(s_i)_{i\in\mathbb{N}}$ is Cauchy. 

Recall that a set $A$ is bounded if there exists some $x \in E$ and $\varepsilon \in \mathbb{R}$ such that $d(x,y) \le \varepsilon$ for all $y \in A$. Since $(s_i)_{i\in\mathbb{N}}$ Cauchy, there exists some $K \in \mathbb{N}$ such that $d(s_n,s_m) < 1$ for all $n, m \ge K$. The set $\{s_i \; \vert \; i \in \{0,\dots,K\}\}$ is bounded since it is finite. Thus there exists some $a \in \mathbb{R}$ such that $d(s_K, s_i) < a$ for all $i \in \{0,\dots,K\}$. Therefore $d(s_N, s_i) < \max(a, 1)$ for all $i \in \mathbb{N}$, which shows that $S_0$ is bounded. 

We know $S_n \subseteq S_m$ for $n \ge m$. Therefore $\textrm{diam}(S_n) < \infty$ for all $n \in \mathbb{N}$.

Let $\varepsilon > 0$. Then there exists some $N \in \mathbb{N}$ such that $d(s_n,s_m) < \frac{\varepsilon}{2}$ for all $n, m \ge N$. Hence
\[
	\textrm{diam}(S_N) = \sup_{x,y \in S_N} d(x,y) \le \frac{\varepsilon}{2} < \varepsilon
\]
Furthermore, we have $\textrm{diam}(S_n) \le \textrm{diam}(S_N)$ for $n \ge N$ because of the subset relation stated above. Thus $\lim_{n \to \infty} \textup{diam}(S_n) = 0$.\smallskip

For the other direction, assume $\lim_{n \to \infty} \textup{diam}(S_n) = 0$ and that $S_0$ is bounded. Hence $\textrm{diam}(S_n) < \infty$ for all $n \in \mathbb{N}$ with the same argument as above. 

Let $\varepsilon > 0$. There exists some $N \in \mathbb{N}$ such that $\sup_{x,y \in S_n} d(x,y) < \varepsilon$ for all $n \ge N$. Hence $d(x,y) < \varepsilon$ for all $x, y \in S_n$ for $n \ge N$. This implies $d(s_i,s_j) < \varepsilon$ for all $i, j \ge n \ge N$, which shows that $(s_i)_{i\in\mathbb{N}}$ is Cauchy.
\end{proof}

In our construction of the conditional expectation, we will use the diameter lemma\footnote{\texttt{Elementary\_Metric\_Spaces\_Addendum.cauchy\_iff\_diameter\_tends\_to\_zero\_and\_bounded} in \cite{Keskin_A_Formalization_of_2023}} (\ref{lem:diameter}) to show that the limit of a sequence of simple functions admits a conditional expectation. 

In anticipation of this, we formalize the following lemmas concerning measurability and integrability.

\begin{isalemma}
{\small
	\begin{lstlisting}[style=isabelle]
	lemma borel_measurable_diameter: 
	  assumes "$\bigwedge x. \;x \in \texttt{space} \; M \implies \texttt{bounded} \; (\texttt{range} \; (\lambda i. \; s \; i \; x))$"
			  "$\bigwedge i. \; (s \; i) \in \texttt{borel\_measurable} \; M$"
	  shows "$(\lambda x. \; \texttt{diameter} \; \{s \; i \; x \; \vert \; i. \; n \le i \}) \in \texttt{borel\_measurable} \; M$"
  	\end{lstlisting}
}
\end{isalemma}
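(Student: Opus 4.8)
The plan is to reduce the diameter to a countable supremum of pairwise-distance functions and then invoke the standard measurability criterion for real-valued functions. Writing $S_n(x) = \{\, s\ i\ x \mid i \ge n \,\}$, note that $S_n(x)$ is a nonempty subset of the range $\{\, s\ i\ x \mid i \in \mathbb{N}\,\}$ (it contains $s\ n\ x$), so the boundedness hypothesis makes $S_n(x)$ a nonempty bounded set for every $x \in \texttt{space}\ M$, and hence
\[
\texttt{diameter}\ S_n(x) = \sup \{\, d(s\ i\ x,\ s\ j\ x) \mid i \ge n,\ j \ge n \,\}
\]
is a genuine nonnegative real supremum.

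First I would show that for each fixed pair $i, j$ the map $g_{ij} : x \mapsto d(s\ i\ x,\ s\ j\ x)$ is Borel measurable. This is where second-countability of $E$ enters: the pairing $x \mapsto (s\ i\ x,\ s\ j\ x)$ is Borel measurable into $E \times E$ because $\mathcal{B}(E \times E) = \mathcal{B}(E) \otimes \mathcal{B}(E)$ for second-countable $E$, and the metric $d$ is continuous, hence Borel measurable; composing the two yields measurability of $g_{ij}$. In Isabelle this is available directly as \texttt{borel\_measurable\_dist}, or obtained by composing the measurable pairing (\texttt{borel\_measurable\_Pair}) with the continuous map \texttt{dist}.

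Next, to conclude that $x \mapsto \texttt{diameter}\ S_n(x)$ is measurable, I would verify that for every real $c$
\[
\{\, x \in \texttt{space}\ M \mid \texttt{diameter}\ S_n(x) \le c \,\} = \bigcap_{i \ge n} \bigcap_{j \ge n} \{\, x \in \texttt{space}\ M \mid g_{ij}(x) \le c \,\}.
\]
For $c \ge 0$ this is just the statement that a supremum of a nonempty set is $\le c$ iff every element is $\le c$; for $c < 0$ both sides are empty, since distances and hence the diameter are nonnegative. The right-hand side is a countable intersection of measurable sets, so the preimage of each $(-\infty, c]$ is measurable, which suffices for Borel measurability. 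Alternatively, one can rewrite $\texttt{diameter}\ S_n(x)$ literally as a \texttt{cSUP} over the countable index set $\{\,(i,j) \mid i \ge n,\ j \ge n\,\}$ and apply \texttt{borel\_measurable\_cSUP}, whose pointwise \texttt{bdd\_above} side condition is exactly the boundedness assumption.

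The main obstacle is the first step — the joint measurability of $x \mapsto d(s\ i\ x,\ s\ j\ x)$ — since this is the only place the second-countability of $E$ is genuinely needed, via the identification of the Borel $\sigma$-algebra of the product with the product of the Borel $\sigma$-algebras; once that is in hand, the remainder is routine bookkeeping with countable intersections. A minor additional point is tracking the Isabelle convention that \texttt{diameter} returns $0$ on empty or unbounded sets, which is harmless here precisely because the hypotheses exclude those cases on \texttt{space}\ M.
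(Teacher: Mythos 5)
Your proposal is correct and matches the paper's approach: the paper only remarks that the proof is ``straightforward and depend[s] on the measurability of the supremum function,'' and your argument --- rewriting the diameter as a countable supremum of the measurable distance functions $x \mapsto d(s\,i\,x, s\,j\,x)$ and invoking the measurability of a pointwise-bounded countable supremum --- is precisely that, with the boundedness hypothesis supplying the \texttt{bdd\_above} side condition and ruling out the degenerate cases of Isabelle's \texttt{diameter}.
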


\begin{isalemma}
{\small
	\begin{lstlisting}[style=isabelle]
	lemma integrable_bound_diameter: 
	  assumes "$\texttt{integrable} \; M \; f$" 
			  "$\bigwedge i. \; (s \; i) \in \texttt{borel\_measurable} \; M$"
			  "$\bigwedge x \; i. \; x \in \texttt{space} \; M \implies \texttt{norm} \; (s \; i \; x) \le f \; x$"
	  shows "$\texttt{integrable} \; M \; (\lambda x. \; \texttt{diameter} \; \{s \; i \; x \; \vert \; i. \; n \le i\})$"
  	\end{lstlisting}
}
\end{isalemma}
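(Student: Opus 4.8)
The plan is to combine the measurability lemma \texttt{borel\_measurable\_diameter} with the standard domination criterion for Bochner integrability. The key quantitative observation is that the norm bound $\lVert s_i(x) \rVert \le f(x)$ forces the set $\{s_i(x) \mid n \le i\}$ to have diameter at most $2 f(x)$, so the diameter function is dominated by the integrable function $2f$.

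First I would establish measurability. For every $x \in \texttt{space}\,M$, each point $s_i(x)$ lies within distance $f(x)$ of the origin, so $\texttt{range}(\lambda i.\ s_i(x))$ is bounded; together with the hypothesis that every $s_i$ is Borel measurable, \texttt{borel\_measurable\_diameter} yields that $x \mapsto \texttt{diameter}\,\{s_i(x) \mid n \le i\}$ is Borel measurable on $M$.

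Next I would prove the pointwise domination. Fix $x \in \texttt{space}\,M$. The index set $\{i \mid n \le i\}$ is non-empty, so $f(x) \ge \lVert s_n(x) \rVert \ge 0$ and in particular $2 f(x) \ge 0$. For any $i, j$ with $n \le i$ and $n \le j$, the triangle inequality in the normed space gives
\[
	\texttt{dist}\,(s_i(x), s_j(x)) = \lVert s_i(x) - s_j(x) \rVert \le \lVert s_i(x) \rVert + \lVert s_j(x) \rVert \le 2 f(x).
\]
Since the value set is bounded and $2 f(x) \ge 0$, the supremum characterization of the diameter (in Isabelle, \texttt{diameter\_le}) gives $\texttt{diameter}\,\{s_i(x) \mid n \le i\} \le 2 f(x)$; moreover this quantity is non-negative, hence equal to its own norm.

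Finally I would invoke the domination theorem for Bochner integrals (\texttt{integrable\_bound}): the function $x \mapsto 2 f(x)$ is integrable because $f$ is, the diameter function is Borel measurable by the first step, and it is dominated everywhere on $\texttt{space}\,M$ — hence almost everywhere — by $2f$ by the second step; therefore $x \mapsto \texttt{diameter}\,\{s_i(x) \mid n \le i\}$ is integrable. I expect the only step carrying genuine content to be the pointwise bound $\texttt{diameter} \le 2 f(x)$, and even that is immediate from the triangle inequality; the sole mild annoyances are the edge cases built into Isabelle's \texttt{diameter} (the empty set and unbounded sets), which are sidestepped here because the index set is non-empty and the value set is bounded.
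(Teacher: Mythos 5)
Your proposal is correct and follows essentially the route the paper intends: the paper only remarks that the proof is ``straightforward and depends on the measurability of the supremum function,'' i.e.\ measurability via \texttt{borel\_measurable\_diameter} followed by a domination argument. Your pointwise bound $\texttt{diameter}\,\{s_i(x) \mid n \le i\} \le 2f(x)$ via the triangle inequality, combined with \texttt{integrable\_bound}, is exactly the standard way to discharge it.
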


The proofs are straightforward and depend on the measurability of the supremum function.

\subsection{Induction Schemes for Integrable Simple Functions}

In the upcoming sections of our work, we will frequently need to prove statements about integrable simple functions. For simple functions $s : \Omega \rightarrow \mathbb{R}_{\ge 0} \cup \{\infty\}$, the Isabelle theory \texttt{HOL\_Analysis.Nonnegative\_Lebesgue\_Integration} already provides an induction scheme \texttt{simple\_function\_induct}. For our purposes we extend this scheme to cover integrable simple functions $s : \Omega \rightarrow E$. Notice that a simple function $s$ is integrable if and only if $\mu(\{s \neq 0\}) < \infty$.

The idea of the new induction scheme\footnote{\texttt{Bochner\_Integration\_Supplement.integrable\_simple\_function\_induct} in \cite{Keskin_A_Formalization_of_2023}} is simple. We know $f$ can be represented $\mu$-a.e. as a finite sum $\sum_{i=1}^n \mathbf{1}_{A_i} \cdot c_i$  for some collection of measurable sets $(A_i)_{i=1,\dots,n}$ and elements $c_i \in E$. We do induction on $n$. We first show that the statement holds for indicator functions of measurable sets with finite measure. Then, we extends this by linearity to arbitrary simple functions. Since $f$ is representable as a finite sum $\mu$-a.e. we additionally need to show that $P$ is congruent on functions which are equal $\mu$-a.e. In other words, $P(f) = P(g)$ if $f = g$ $\mu$-a.e. for any $g \in \mathcal{L}^1(E)$. This guarantees that $P$ is a well defined predicate on the space $L^1(E)$.

\begin{remark}
To make proving certain properties easier, we have the additional assumption $\lVert f(x) + g(x) \rVert = \lVert f(x) \rVert + \lVert g(x) \rVert$ in the second case of the formal statement. It is easy to see why we can assume this without loss of generality. If we have some simple function $s = \sum_{i=1}^n \mathbf{1}_{A_i} \cdot c_i$, we can assume the sets $A_i$ to be pairwise disjoint. Thus, if $x \in A_j$ for some $j \le n$ we have $\lVert s(x) \rVert = \lVert \mathbf{1}_{A_j}(x) \cdot c_j\rVert = \sum_{i=1}^n \mathbf{1}_{A_i} \cdot \lVert c_i \rVert$.
\end{remark}

When working with an ordering on $E$, we may need to concern ourselves with non-negative simple functions. For this goal, we have another induction scheme\footnote{\texttt{Bochner\_Integration\_Supplement.integrable\_simple\_function\_induct\_nn} in \cite{Keskin_A_Formalization_of_2023}}.\label{ind:simple_nn}

The statement of this induction scheme seems even more complicated, but in essence it is the same induction scheme as the previous one with the added assumption of non-negativity everywhere. The proof is also largely the same. We just need to show that the partial sums stay non-negative all the way through.

\pagebreak
\subsection{Bochner Integration on Linearly Ordered Banach Spaces}

When working with the real numbers, the following statement is easy to show.
\begin{center}
Let $f, g : \Omega \rightarrow \mathbb{R}$ be integrable and $f \ge g$ $\mu$-a.e., then $\int f\;\textrm{d} \mu \ge \int g\;\textrm{d} \mu$. 
\end{center}
In this subsection, we aim to provide similar results for functions $f, g : \Omega \rightarrow E$ with $E$ a linearly ordered Banach space. For the remainder of our discourse, a topological space $E$ is linearly ordered, if there exists a total ordering on $E$ such that the topology on $E$ and the order topology induced by the ordering coincide.

We start with the following lemma

\begin{lemma}\label{nonneg_integral}
	Let $f \in L^1(E)$ and $f \ge 0$ $\mu$-a.e. Then $\int f \;\textrm{d}\mu \ge 0$.
\end{lemma}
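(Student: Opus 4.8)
The plan is to prove the statement first for non-negative integrable simple functions, and then to pass to the limit, using that the non-negative cone $\{y \in E \mid y \ge 0\}$ is closed (as observed earlier, this is immediate from the fact that the topology on $E$ coincides with the order topology, so $\{y < 0\}$ is open).

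For the simple case I would apply the second induction scheme for integrable simple functions, \texttt{integrable\_simple\_function\_induct\_nn}. In the base case $s = \mathbf{1}_A \cdot c$ with $\mu(A) < \infty$ and $s \ge 0$ everywhere, the integral equals $\mu(A) \cdot c$; if $\mu(A) = 0$ this is $0 \ge 0$, and otherwise $c = s(x) \ge 0$ for $x \in A$, whence $\mu(A) \cdot c \ge 0$ because scaling a non-negative vector by a non-negative real is non-negative. In the additive step we get $\int (s_1 + s_2) \;\textrm{d}\mu = \int s_1 \;\textrm{d}\mu + \int s_2 \;\textrm{d}\mu \ge 0$ from the sum of two non-negative elements being non-negative, and congruence under $\mu$-a.e.\ equality is clear since the Bochner integral ignores $\mu$-null sets. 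This gives $\int s \;\textrm{d}\mu \ge 0$ for every non-negative integrable simple $s$.

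For general $f \in L^1(E)$ with $f \ge 0$ $\mu$-a.e., I would take a sequence of integrable simple functions $t_n$ with $\int \lVert f - t_n \rVert \;\textrm{d}\mu \to 0$ (available from the construction of the Bochner integral) and replace each $t_n$ by the truncation $s_n := \sup(t_n, 0)$. Each $s_n$ is again a simple function, is non-negative, and has support contained in $\{t_n \neq 0\}$, hence is integrable. Since $f = \sup(f,0)$ $\mu$-a.e.\ and truncation at $0$ is \emph{non-expansive}, we have $\lVert f(x) - s_n(x) \rVert \le \lVert f(x) - t_n(x) \rVert$ for $\mu$-almost every $x$, so $\int \lVert f - s_n \rVert \;\textrm{d}\mu \to 0$ and therefore $\int s_n \;\textrm{d}\mu \to \int f \;\textrm{d}\mu$. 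By the previous paragraph each $\int s_n \;\textrm{d}\mu \ge 0$, and since $\{y \ge 0\}$ is closed, the limit $\int f \;\textrm{d}\mu$ lies in it as well.

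The simple-function induction and the limiting argument are routine; the step that requires care is the truncation estimate, namely that $\sup(\cdot,0)$ is compatible with the norm. This is exactly the point where the two halves of the hypothesis interact — that the order topology agrees with the norm topology, and that the order is compatible with the vector space operations — so I expect this to be the main obstacle and the part where the formalization does most of its work.
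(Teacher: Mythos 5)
Your overall strategy --- truncate the approximating simple functions at $0$, prove non-negativity of the integral for non-negative integrable simple functions via \texttt{integrable\_simple\_function\_induct\_nn}, and pass to the limit using closedness of $\{y \in E \mid y \ge 0\}$ --- is exactly the paper's, and your simple-function induction matches the paper's step for step.

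The difference, and the gap, is in how you transport convergence through the truncation. You invoke non-expansiveness of $\sup(\cdot,0)$, i.e.\ $\lVert \sup(a,0) - \sup(b,0)\rVert \le \lVert a - b\rVert$. Since the order is total, $\sup(a,0)$ is just a selection from $\{a,0\}$, so the only nontrivial case is $a \ge 0 > b$, where the claim reduces to $\lVert a\rVert \le \lVert a-b\rVert$ with $0 \le a \le a-b$; that is, monotonicity of the norm on the positive cone. This is precisely the Banach-lattice compatibility condition ($\lvert a\rvert \le \lvert b\rvert \implies \lVert a\rVert \le \lVert b\rVert$) that the paper explicitly does \emph{not} assume (see the Discussion chapter), and it does not follow from compatibility of the order with the vector-space operations together with the coincidence of the topologies --- at least not without a substantial detour. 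So the step you flag as ``the main obstacle'' is indeed where the argument fails as stated. The paper sidesteps it: it takes simple $s_n \to f$ pointwise $\mu$-a.e.\ with a domination $\lVert s_n\rVert \le 2\lVert f\rVert$, uses continuity of $\max(0,\cdot)$ in the order topology to get $\max(0,s_n) \to \max(0,f) = f$ pointwise a.e., and applies dominated convergence with the \emph{one-variable} bound $\lVert \max(0,s_n(x))\rVert \le \lVert s_n(x)\rVert$, which is trivially valid for a total order because $\max(0,x) \in \{0,x\}$. Replacing your $L^1$-contraction step with this pointwise/dominated-convergence argument repairs the proof; as written, the crucial inequality is assumed rather than available.
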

\begin{proof}
	Since $f \in L^1(E)$, there exists a sequence of integrable simple functions $(s_n)_{n \in \mathbb{N}}$, such that $\lim_{n \to \infty} s_n(x) = f(x)$ $\mu$-a.e. and $\lim_{n \to \infty} \int s_n \;\textrm{d}\mu = \int f \;\textrm{d}\mu$. At first, we have no further information about $s_n$. However, since we know that $f \ge 0$ $\mu$-a.e, it follows that $f = \max(0,f)$ $\mu$-a.e. Using dominated convergence and the fact that the function $\max(0,\cdot)$ is continuous w.r.t to the order topology on $E$, we can show
\[
	\lim_{n \to \infty} \max(0, s_n(x)) = \max(0, f(x)) \; \mu\textrm{-a.e.}
\]
and
\[
	\lim_{n \to \infty} \int \max(0, s_n) \;\textrm{d}\mu = \int \max(0, f) \;\textrm{d}\mu
\]
The function $\max(0, s_n)$ is still a simple and integrable function, which has the additional property of being always non-negative. 

We will now show that if $h$ is a non-negative simple function, then $\int h \;\textrm{d}\mu \ge 0$. For this purpose, we will use the induction scheme for non-negative integrable simple functions that we introduced in the previous subsection (\ref{ind:simple_nn}).

\paragraph{Congruence:} Let $h = g$ $\mu$-a.e. and $\int g \;\textrm{d}\mu \ge 0$ for some $g \in L^1(E)$. It follows directly
\[
	\int h \;\textrm{d}\mu = \int g \;\textrm{d}\mu \ge 0
\]

\paragraph{Indicator Functions:} Let $h = \mathbf{1}_A \cdot y$ for some measurable set $A$ with finite measure and $y \in E$ with $y \ge 0$. It follows directly 
\[
	\int h \;\textrm{d}\mu = \mu(A) \cdot y \ge 0
\]

\paragraph{Addition:} Let $h = h_1 + h_2$ for some integrable simple functions $h_1$ and $h_2$. By the induction hypothesis, we have $\int h_i \;\textrm{d}\mu \ge 0$ for $i = 1,2$. Therefore
\[
	\int h \;\textrm{d}\mu = \int h_1 \;\textrm{d}\mu + \int h_2 \;\textrm{d}\mu \ge 0
\]

Hence the statement holds for simple functions and therefore we know $\int \max(0, s_n) \;\textrm{d}\mu \ge 0$ for all $n \in \mathbb{N}$. Therefore, the same must hold for the limit $\lim_{n \to \infty} \int \max(0, s_n) \;\textrm{d}\mu = \int \max(0, f) \;\textrm{d}\mu$. Since $f = \max(0,f)$ $\mu$-a.e., we have $\int f \;\textrm{d}\mu = \int \max(0, f) \;\textrm{d}\mu$ and the statement follows.

\end{proof}
\begin{remark}
For the proof of this statement, we need the topology on $E$ to coincide with the order topology. Otherwise, we can not guarantee statements such as 
\[
	\forall n. \; x_n \ge 0 \implies \lim_{i \to \infty} x_i \ge 0
\]
or the continuity of the $\max$ function.
\end{remark}
This lemma entails the following corollary.
\begin{corollary}
	Let $f, g \in L^1(E)$ and $f \ge g$ $\mu$-a.e. Then $\int f\;\textrm{d} \mu \ge \int g\;\textrm{d} \mu$.
\end{corollary}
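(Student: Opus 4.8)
The plan is to reduce the statement to Lemma~\ref{nonneg_integral} by passing to the difference $f - g$. First I would note that since $f, g \in L^1(E)$ and the Bochner integral is linear, the function $h := f - g$ is again an element of $L^1(E)$, and
\[
	\int h \;\textrm{d}\mu = \int f \;\textrm{d}\mu - \int g \;\textrm{d}\mu.
\]

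Next I would invoke the hypothesis $f \ge g$ $\mu$-a.e. Since the ordering on $E$ is compatible with the vector space structure (in particular it is translation-invariant, which is already used implicitly in the proof of Lemma~\ref{nonneg_integral}), $f(x) \ge g(x)$ implies $f(x) - g(x) \ge 0$, so $h \ge 0$ $\mu$-a.e. Applying Lemma~\ref{nonneg_integral} to $h$ yields $\int h \;\textrm{d}\mu \ge 0$, and combining this with the displayed identity gives $\int f \;\textrm{d}\mu \ge \int g \;\textrm{d}\mu$.

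There is essentially no hard step here; the only point deserving a moment's care is that passing from $f \ge g$ $\mu$-a.e.\ to $f - g \ge 0$ $\mu$-a.e.\ relies on translation-invariance of the order and on the fact that the union of the two exceptional null sets is again null. Everything else is immediate from the linearity of the Bochner integral and the preceding lemma.
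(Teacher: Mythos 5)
Your proof is correct and is exactly the reduction the paper intends: the paper presents this corollary immediately after Lemma~\ref{nonneg_integral} with the remark that the lemma ``entails'' it, i.e.\ one applies the lemma to $f-g$ using linearity of the Bochner integral and translation-invariance of the vector-space order. Your extra care about combining the two exceptional null sets and about translation-invariance is sound but routine in this setting.
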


In Isabelle, we can replace the assumption $f \in L^1(E)$ with Borel measurability, since a non-integrable function has the value of its integral set to $0$ by default. Lemma \ref{nonneg_integral} can be stated as
{\small
\begin{lstlisting}[style=isabelle]
lemma integral_nonneg_AE_banach:
  assumes "$f \in \texttt{borel\_measurable} \; M$" and "AE$ \; x \; $in$ \; M. \; 0 \le f \; x$"
  shows "$0 \le \texttt{integral}^L \; M \; f$"
\end{lstlisting}
}

\section{Constructing the Conditional Expectation}

Before we can talk about \textit{the} conditional expectation, we must define what it means for a function to have \textit{a} conditional expectation. 

\begin{definition}
	Let $f \in L^1(E)$. Given a sub-$\sigma$-algebra $F \subseteq \Sigma$, we call an $F$-measurable function $g : \Omega \rightarrow E$ a conditional expectation of $f$ with respect to the sub-$\sigma$-algebra $F$, if the following equality holds for all $A \in F$

	\[
		\int_A f \; \textrm{d} \mu = \int_A g \; \textrm{d} \mu
	\]

\end{definition}

We formalize this notion by introducing the following predicate in the theory file \texttt{Conditional\_Expectation\_Banach} 

\begin{isadefinition}
{\small
\begin{lstlisting}[style=isabelle]
	definition has_cond_exp where 
	  "has_cond_exp $M \; F \; f \; g \; = (\forall A \in \texttt{sets} \; F. \; \int_A \; f \; \partial M = \int_A \; g \; \partial M)$
							  $\wedge \; \texttt{integrable} \; M \; f $
							  $\wedge \; \texttt{integrable} \; M \; g $
							  $\wedge \; g \in \texttt{borel\_measurable} \; F$"
\end{lstlisting}
}
\end{isadefinition}

This predicate precisely characterizes what it means for a function $f$ to have a conditional expectation $g$ with respect to the measure $M$ and the sub-$\sigma$-algebra $F$. Now we can use Hilbert's $\epsilon$-operator, \lstinline[language=isabelle]{SOME} in Isabelle \cite{Nipkow-Paulson-Wenzel:2002}, to define \textit{the} conditional expectation, if it exists. The use of the word \textit{the} will be fully justified after we show uniqueness in the next subsection.

\begin{isadefinition}
{\small
	\begin{lstlisting}[style=isabelle]
		definition cond_exp where 
		  "cond_exp $M \; F \; f$ 
			$= ($if$ \; \exists g. \; \texttt{has\_cond\_exp} \; M \; F \; f \; g \; $then$ \; ($SOME$ \; g. \; \texttt{has\_cond\_exp} \; M \; F \; f \; g) \; $else$ \; (\lambda \_. \; 0))$"
	\end{lstlisting}
}
\end{isadefinition}

A major advantage of defining the conditional expectation this way is that it allows us to make statements about its measurability and integrability, without needing to show existence or uniqueness. The following formal lemmas reflect this.

\begin{isalemma}
{\small
	\begin{lstlisting}[style=isabelle]
	lemma borel_measurable_cond_exp: "cond_exp $M \; F \; f \; \in$ borel_measurable $F$"
	\end{lstlisting}
}
\end{isalemma}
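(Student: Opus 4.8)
The plan is to unfold the definition of \texttt{cond\_exp} and split on whether a conditional expectation exists, since the definition is itself a case distinction on exactly this condition.

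In the first case, assume there is some $g$ with \texttt{has\_cond\_exp}~$M\;F\;f\;g$. Then \texttt{cond\_exp}~$M\;F\;f$ reduces to $(\texttt{SOME}\;g.\;\texttt{has\_cond\_exp}\;M\;F\;f\;g)$, and by the elimination rule for Hilbert's choice operator (\texttt{someI\_ex}, applied to the existential hypothesis of this case) the chosen function again satisfies \texttt{has\_cond\_exp}~$M\;F\;f$. Unfolding \texttt{has\_cond\_exp}, one of its conjuncts is exactly $g \in \texttt{borel\_measurable}\;F$, which is the claim.

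In the complementary case no $g$ satisfies the predicate, so by definition \texttt{cond\_exp}~$M\;F\;f$ is the constant function $(\lambda \_.\;0)$. Constant functions are Borel measurable with respect to any $\sigma$-algebra, which the Isabelle library provides as \texttt{borel\_measurable\_const}, so the claim holds here as well. Combining the two cases proves the lemma.

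There is no real obstacle: the only point requiring a little care is invoking the choice operator via \texttt{someI\_ex} (rather than \texttt{someI}) so that the case hypothesis is consumed directly. This short argument — case split, then \texttt{someI\_ex} in one branch and \texttt{borel\_measurable\_const} in the other — is precisely the payoff of defining \texttt{cond\_exp} through \texttt{SOME} together with a default value: $F$-measurability becomes available with no existence or uniqueness obligations, which is what the surrounding text advertises.
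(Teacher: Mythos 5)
Your proof is correct and is precisely the argument the paper intends: the text gives no explicit proof, but presents the lemma as an immediate payoff of defining \texttt{cond\_exp} via \texttt{SOME} with a default value, which is exactly your case split using \texttt{someI\_ex} in the existence branch and measurability of constants otherwise.
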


\begin{isalemma}
{\small
	\begin{lstlisting}[style=isabelle]
	lemma integrable_cond_exp: "integrable $M$ (cond_exp $M \; F \; f$)"
	\end{lstlisting}
}
\end{isalemma}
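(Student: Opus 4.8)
The plan is to exploit the fact that integrability has been deliberately built into the \texttt{has\_cond\_exp} predicate, so that the claim follows from a short case distinction, exactly mirroring the proof of \texttt{borel\_measurable\_cond\_exp}. The point of defining \texttt{cond\_exp} through Hilbert's $\varepsilon$-operator applied to \texttt{has\_cond\_exp} is precisely that measurability and integrability can be read off before one knows anything about existence or uniqueness of the conditional expectation.

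First I would unfold the definition of \texttt{cond\_exp}. It is given by a case split: if $\exists g.\ \texttt{has\_cond\_exp}\ M\ F\ f\ g$, then $\texttt{cond\_exp}\ M\ F\ f = (\texttt{SOME}\ g.\ \texttt{has\_cond\_exp}\ M\ F\ f\ g)$; otherwise it equals the constant function $\lambda\_.\,0$. I split the proof along the same disjunction. In the branch where no conditional expectation of $f$ with respect to $F$ exists, $\texttt{cond\_exp}\ M\ F\ f$ is literally the zero function, which is integrable on any measure space, so this case is immediate.

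In the branch where some $g$ with $\texttt{has\_cond\_exp}\ M\ F\ f\ g$ exists, the characteristic property of \texttt{SOME} gives that the chosen witness $g_0 := (\texttt{SOME}\ g.\ \texttt{has\_cond\_exp}\ M\ F\ f\ g)$ itself satisfies $\texttt{has\_cond\_exp}\ M\ F\ f\ g_0$. Unfolding \texttt{has\_cond\_exp}, one of its conjuncts is exactly $\texttt{integrable}\ M\ g_0$, and since $\texttt{cond\_exp}\ M\ F\ f = g_0$ in this case, the claim follows. There is essentially no obstacle: all the real content was front-loaded into the formulation of \texttt{has\_cond\_exp} and the \texttt{SOME}-based definition of \texttt{cond\_exp}, which is precisely the design decision that makes these basic structural lemmas cheap.
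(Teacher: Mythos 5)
Your proof is correct and matches the paper's (implicit) argument exactly: the paper points out that the \texttt{SOME}-based definition of \texttt{cond\_exp} together with the integrability conjunct baked into \texttt{has\_cond\_exp} makes this lemma immediate, which is precisely the case split you carry out. Nothing to add.
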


\subsection{Uniqueness}

The conditional expectation of a function is unique up to a $\mu$-null set. 

\begin{lemma}
	Let $f, g \in L^1(E)$ such that ``$\normalfont\texttt{has\_cond\_exp} \; M \; F \; f \; g$'' holds. Then 
	\par\noindent ``$\normalfont\texttt{has\_cond\_exp} \; M \; F \; f \; (\texttt{cond\_exp} \; M \; F \; f)$'' and
	\[
		\normalfont\texttt{cond\_exp} \; M \; F \; f = g \;\; \mu\texttt{-a.e.}
	\]
\end{lemma}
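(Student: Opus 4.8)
The plan is to split the claim into its two parts and dispose of the existence part first. We are handed some $g$ with $\texttt{has\_cond\_exp}\ M\ F\ f\ g$, so the existential statement $\exists g.\ \texttt{has\_cond\_exp}\ M\ F\ f\ g$ holds, and hence by definition $\texttt{cond\_exp}\ M\ F\ f = (\texttt{SOME}\ g.\ \texttt{has\_cond\_exp}\ M\ F\ f\ g)$. The defining property of Hilbert's $\epsilon$-operator (the lemma \texttt{someI\_ex} in Isabelle) then immediately yields $\texttt{has\_cond\_exp}\ M\ F\ f\ (\texttt{cond\_exp}\ M\ F\ f)$, which is the first assertion.

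For the second assertion, write $g' := \texttt{cond\_exp}\ M\ F\ f$. Unfolding $\texttt{has\_cond\_exp}$ for both $g$ and $g'$, we know that each is integrable with respect to $M$, each is $F$-measurable, and that for every $A \in F$
\[
	\int_A g \;\textrm{d}\mu = \int_A f \;\textrm{d}\mu = \int_A g' \;\textrm{d}\mu .
\]
The natural next step is to invoke the uniqueness of densities (Corollary \ref{cor:density_unique}), but applied to the measure space $(\Omega, F, \mu\vert_F)$ rather than to $M$ itself. Since $g$ and $g'$ are $F$-measurable and $\mu$-integrable, they are also integrable with respect to the restriction $\mu\vert_F$, and for $A \in F$ we have $\int_A g \;\textrm{d}\mu = \int_A g \;\textrm{d}\mu\vert_F$ and likewise for $g'$. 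Consequently $\int_A g \;\textrm{d}\mu\vert_F = \int_A g' \;\textrm{d}\mu\vert_F$ for all $A \in F$. Because the conditional expectation is set up precisely in the situation where $\mu\vert_F$ is $\sigma$-finite, Corollary \ref{cor:density_unique} applies on $(\Omega, F, \mu\vert_F)$ and gives $g = g'$ $\mu\vert_F$-almost everywhere. Finally, any $\mu\vert_F$-null set lies in $F \subseteq \Sigma$ and is a $\mu$-null set, so $g = g'$ $\mu$-almost everywhere, as claimed.

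The existence part is essentially a one-liner; the only genuine work is the change of measure from $\mu$ to $\mu\vert_F$, i.e. checking that integrability is preserved under restriction and that integrals over $F$-measurable sets agree. In the formalization this is carried out with the library lemmas on \texttt{restrict\_space} (namely \texttt{integrable\_restrict\_space} and \texttt{integral\_restrict\_space}), together with the standing $\sigma$-finiteness hypothesis on $\mu\vert_F$ that makes Corollary \ref{cor:density_unique} applicable in the first place.
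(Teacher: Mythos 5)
Your proposal is correct and follows essentially the same route as the paper: the existence claim falls out of the definition of \texttt{cond\_exp} via Hilbert's choice, and the a.e.\ equality is obtained by applying the uniqueness-of-densities corollary on the restricted measure space $(\Omega, F, \mu\vert_F)$ (where $\sigma$-finiteness holds) and then passing from $\mu\vert_F$-null sets to $\mu$-null sets. You merely spell out the change-of-measure bookkeeping that the paper leaves implicit.
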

\begin{proof}
	The first statement follows directly from the definition of \texttt{cond\_exp}. To show ``$\texttt{cond\_exp} \; M \; F \; f = g$'' $\mu$-a.e. we argue as follows. By the definition of \texttt{has\_cond\_exp} we have for any $A \in F$
	\[
		\int_A \; f \; \textrm{d}\mu = \int_A \; g \; \textrm{d}\mu
	\]
	and
	\[
		\int_A \; f \; \textrm{d}\mu = \int_A \; \texttt{cond\_exp} \; M \; F \; f \; \textrm{d}\mu
	\]
	Together with the lemma on the uniqueness of densities, we have 
	\par\noindent``$\texttt{cond\_exp} \; M \; F \; f = g$'' $\mu\vert_F$-a.e. The lemma follows from the fact that all $\mu\vert_F$-null sets are also $\mu$-null sets.
\end{proof}

Hence, the defining property of the conditional expectation guarantees that it is unique up to a $\mu$-null set. We formalize this statement as follows.

\begin{isalemma}
{\small
	\begin{lstlisting}[style=isabelle]
	lemma cond_exp_charact:
	  assumes "$\bigwedge A \in \texttt{sets} \; F. \; \int_A \; f \; \partial M = \int_A \; g \; \partial M$"
			  "$\texttt{integrable} \; M \; f$"
			  "$\texttt{integrable} \; M \; g$"
			  "$g \in \texttt{borel\_measurable} \; F$"
	  shows "AE$ \; x \; $in$ \; M. \; \texttt{cond\_exp} \; M \; F \; f \; x = g \; x$"
	\end{lstlisting}
}
\end{isalemma}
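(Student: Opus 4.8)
The plan is to observe that this formal lemma is essentially the preceding uniqueness lemma with the predicate \texttt{has\_cond\_exp} spelled out in terms of its defining conjuncts, so the proof reduces to two bookkeeping steps. First I would note that the four hypotheses of \texttt{cond\_exp\_charact} --- that $\int_A f \, \partial M = \int_A g \, \partial M$ for every $A \in \texttt{sets}\,F$, that $f$ is $M$-integrable, that $g$ is $M$-integrable, and that $g$ is $F$-Borel-measurable --- are, taken together, exactly the definition of ``\texttt{has\_cond\_exp}\,M\,F\,f\,g''. Unfolding that definition therefore yields ``\texttt{has\_cond\_exp}\,M\,F\,f\,g'' immediately.

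Second, I would apply the previous lemma verbatim: from ``\texttt{has\_cond\_exp}\,M\,F\,f\,g'' it concludes that ``\texttt{cond\_exp}\,M\,F\,f = g'' holds $\mu$-a.e., which is precisely the goal ``AE $x$ in $M$. \texttt{cond\_exp}\,M\,F\,f\,x = g\,x''. No further argument is needed.

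I do not anticipate a genuine obstacle, since the mathematical content has already been discharged in the proof of the preceding lemma --- which itself rests on the corollary on uniqueness of densities (\ref{cor:density_unique}) applied to the restricted measure $\mu\vert_F$, together with the fact that every $\mu\vert_F$-null set is a $\mu$-null set. In the formalization the only points requiring a little care are keeping the set-restricted integral $\int_A \cdot \, \partial M$ consistent with the \texttt{restrict\_space} machinery and reconciling $F$-measurability with $M$-measurability when invoking the density-uniqueness corollary; both are routine unfoldings rather than real difficulties.
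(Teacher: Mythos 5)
Your proposal is correct and matches the paper's own treatment: the four hypotheses are precisely the conjuncts of \texttt{has\_cond\_exp}~$M\;F\;f\;g$, and the conclusion then follows by the preceding uniqueness lemma, whose proof (as you note) rests on the uniqueness-of-densities corollary applied on $\mu\vert_F$ and the fact that $\mu\vert_F$-null sets are $\mu$-null. Nothing further is needed.
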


\subsection{Existence}

Showing the existence is a bit more involved. Specifically, what we aim to show is that ``$\normalfont\texttt{has\_cond\_exp} \; M \; F \; f \; (\texttt{cond\_exp} \; M \; F \; f)$'' holds for any Bochner-integrable $f$. We will employ the standard machinery of measure theory. First, we will prove existence for indicator functions. Then we will extend our proof by linearity to simple functions. Finally we use a limiting argument to show that the conditional expectation exists for all Bochner-integrable functions.

The conditional expectation operator has already been formalized for real-valued functions by S\`ebastien Gou\"ezel via the definition \texttt{real\_cond\_exp}. The following lemmas show that our formal definition of the conditional expectation coincides with the existing definition in the real case.

\begin{isalemma}
{\small
	\begin{lstlisting}[style=isabelle]
lemma has_cond_exp_real:
  assumes "$\texttt{integrable} \; M \; f$"
  shows "$\texttt{has\_cond\_exp} \; M \; F \; f \; (\texttt{real\_cond\_exp} \; M \; F \; f)$"
	\end{lstlisting}
}
\end{isalemma}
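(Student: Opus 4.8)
The plan is to unfold the definition of \texttt{has\_cond\_exp} and verify its four conjuncts one at a time, reusing the pre-existing real-valued theory \texttt{HOL-Probability.Conditional\_Expectation}. Note that this lemma is stated inside a context (a locale) that provides \texttt{subalgebra M F} together with $\sigma$-finiteness of the restricted measure $\mu\vert_F$; these are exactly the standing assumptions under which Gou\"ezel's \texttt{real\_cond\_exp} is shown to be well behaved, so every fact I need is already on the shelf.

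First, the conjunct \texttt{integrable M f} is simply the hypothesis of the lemma. Second, \texttt{real\_cond\_exp M F f} $\in$ \texttt{borel\_measurable F} is the measurability statement proved in the library (the real-valued analogue of the already-cited \texttt{borel\_measurable\_cond\_exp}). Third, \texttt{integrable M (real\_cond\_exp M F f)} follows from the library's integrability-preservation lemma for \texttt{real\_cond\_exp}, whose only hypothesis is \texttt{integrable M f}. Fourth, the defining identity $\int_A f \; \partial M = \int_A \texttt{real\_cond\_exp}\;M\;F\;f \; \partial M$ for every $A \in \texttt{sets}\;F$ is precisely the characterizing property of \texttt{real\_cond\_exp} established by Gou\"ezel; one instantiates that lemma at an arbitrary $A \in \texttt{sets}\;F$. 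Collecting these four facts and unfolding \texttt{has\_cond\_exp\_def} finishes the proof, essentially by \texttt{auto} or \texttt{blast}.

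The proof carries no genuine mathematical content beyond the existing real-valued construction; the only points requiring care are (i) checking that the set-integral notation $\int_A \cdot \; \partial M$ used in \texttt{has\_cond\_exp} coincides syntactically with the \texttt{set\_lebesgue\_integral} form appearing in the library lemmas, and (ii) making sure the $\sigma$-finiteness-of-restriction assumption is genuinely in scope so that the imported facts apply unconditionally. I expect no real obstacle here — this lemma is the anchor that lets the subsequent Banach-space construction agree with the established real case, and its proof is deliberately short.
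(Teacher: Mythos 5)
Your proposal is correct and matches the paper's approach: the paper states this lemma without an explicit proof precisely because it is, as you say, a direct unfolding of \texttt{has\_cond\_exp} whose four conjuncts are discharged by the hypothesis together with Gou\"ezel's measurability, integrability, and set-integral characterization lemmas for \texttt{real\_cond\_exp}, all available in the ambient $\sigma$-finite-subalgebra locale. Nothing is missing.
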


\begin{isalemma}
{\small
	\begin{lstlisting}[style=isabelle]
lemma cond_exp_real:
  assumes "$\texttt{integrable} \; M \; f$"
  shows "AE$ \; x \; $in$ \; M. \; \texttt{cond\_exp} \; M \; F \; f \; x = \texttt{real\_cond\_exp} \; M \; F \; f \; x$" 
	\end{lstlisting}
}
\end{isalemma}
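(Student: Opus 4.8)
The plan is to obtain this statement as an immediate corollary of the preceding lemma \texttt{has\_cond\_exp\_real} together with the characterization lemma \texttt{cond\_exp\_charact} proved above; essentially all of the mathematical content has already been discharged, and what remains is bookkeeping.

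First I would apply \texttt{has\_cond\_exp\_real} to the hypothesis $\texttt{integrable}\ M\ f$, yielding $\texttt{has\_cond\_exp}\ M\ F\ f\ (\texttt{real\_cond\_exp}\ M\ F\ f)$. Unfolding the definition of \texttt{has\_cond\_exp}, this single fact splits into exactly four conjuncts: the integral identity
\[
	\forall A \in \texttt{sets}\ F.\ \textstyle\int_A f\ \partial M = \int_A \texttt{real\_cond\_exp}\ M\ F\ f\ \partial M,
\]
the integrability of $f$, the integrability of $\texttt{real\_cond\_exp}\ M\ F\ f$, and the $F$-measurability of $\texttt{real\_cond\_exp}\ M\ F\ f$. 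These are precisely the four assumptions of \texttt{cond\_exp\_charact} instantiated with $g := \texttt{real\_cond\_exp}\ M\ F\ f$, so applying that lemma gives $\texttt{AE}\ x\ \texttt{in}\ M.\ \texttt{cond\_exp}\ M\ F\ f\ x = \texttt{real\_cond\_exp}\ M\ F\ f\ x$, which is the claim.

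There is no genuine obstacle in this particular proof: the difficulty is entirely concentrated in \texttt{has\_cond\_exp\_real}, whose proof must check that Gou\"ezel's \texttt{real\_cond\_exp} really does satisfy the defining integral equation on every $A \in \texttt{sets}\ F$ (this is available among the basic properties of \texttt{real\_cond\_exp} in \texttt{HOL-Probability.Conditional\_Expectation}) and is integrable and $F$-measurable. Once that lemma is established, \texttt{cond\_exp\_real} follows in a single step, and indeed one would expect the Isabelle proof to be a one-line \texttt{using}/\texttt{by} invocation chaining \texttt{has\_cond\_exp\_real} into \texttt{cond\_exp\_charact}.
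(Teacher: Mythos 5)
Your proposal is correct and matches the route the paper intends: \texttt{has\_cond\_exp\_real} packages exactly the four conjuncts demanded by \texttt{cond\_exp\_charact} (equivalently, by the uniqueness lemma stating that any witness of \texttt{has\_cond\_exp} agrees with \texttt{cond\_exp} almost everywhere), so \texttt{cond\_exp\_real} is an immediate corollary. The paper does not spell out a separate proof for this lemma precisely because, as you observe, all the content lives in \texttt{has\_cond\_exp\_real} and the characterization/uniqueness result.
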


We can now show that the conditional expectation of indicator functions exist.

\begin{lemma}
	Let $A \subseteq \Omega$ be measurable with $\mu(A) < \infty$ and $y \in E$. Then
	\[
		\normalfont\texttt{has\_cond\_exp} \; M \; F \; (\mathbf{1}_A \cdot y) \; ((\texttt{real\_cond\_exp} \; M \; F \; \mathbf{1}_A) \cdot y)
	\]
\end{lemma}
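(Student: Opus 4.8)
The plan is to reduce everything to the already-established real-valued theory by exploiting that scalar multiplication by a fixed vector, $\iota_y : \mathbb{R} \rightarrow E$, $t \mapsto t \cdot y$, is a bounded linear operator. First I would observe that $\mathbf{1}_A$ is Bochner-integrable as a real-valued function, since $\mu(A) < \infty$ gives $\int \lVert \mathbf{1}_A \rVert \;\textrm{d}\mu = \mu(A) < \infty$. Applying the lemma \texttt{has\_cond\_exp\_real} to $\mathbf{1}_A$ then yields that $\texttt{real\_cond\_exp} \; M \; F \; \mathbf{1}_A$ is integrable, $F$-measurable, and satisfies $\int_{A'} \mathbf{1}_A \;\textrm{d}\mu = \int_{A'} \texttt{real\_cond\_exp} \; M \; F \; \mathbf{1}_A \;\textrm{d}\mu$ for every $A' \in F$.

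Next I would discharge the three structural requirements of \texttt{has\_cond\_exp}. Integrability of $\mathbf{1}_A \cdot y$ follows from $\lVert \mathbf{1}_A(x) \cdot y \rVert = \mathbf{1}_A(x)\,\lVert y \rVert$ together with $\mu(A) < \infty$. Integrability and $F$-measurability of $(\texttt{real\_cond\_exp} \; M \; F \; \mathbf{1}_A) \cdot y$ follow from the corresponding properties of $\texttt{real\_cond\_exp} \; M \; F \; \mathbf{1}_A$ combined with continuity (hence Borel-measurability) and boundedness of $t \mapsto t \cdot y$; concretely $\lVert (\texttt{real\_cond\_exp} \; M \; F \; \mathbf{1}_A)(x) \cdot y \rVert = \lvert (\texttt{real\_cond\_exp} \; M \; F \; \mathbf{1}_A)(x) \rvert \cdot \lVert y \rVert$ is integrable.

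The heart of the argument is the integral identity. Fixing $A' \in F$, the key observation is that for any real-valued integrable $h$ one has $\int_{A'} (h \cdot y) \;\textrm{d}\mu = \left( \int_{A'} h \;\textrm{d}\mu \right) \cdot y$, since $\iota_y$ is bounded linear and the Bochner integral commutes with bounded linear operators — formally, by rewriting $\int_{A'} (\,\cdot\,) \;\textrm{d}\mu$ as $\int \mathbf{1}_{A'} \cdot (\,\cdot\,) \;\textrm{d}\mu$, using $\mathbf{1}_{A'}(x) \cdot (h(x) \cdot y) = (\mathbf{1}_{A'}(x)\,h(x)) \cdot y$, and invoking the Isabelle lemma on integrals of bounded linear images. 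Applying this with $h = \mathbf{1}_A$ and with $h = \texttt{real\_cond\_exp} \; M \; F \; \mathbf{1}_A$, and then inserting the real-valued identity from \texttt{has\_cond\_exp\_real}, gives
\[
  \int_{A'} \mathbf{1}_A \cdot y \;\textrm{d}\mu = \left(\int_{A'} \mathbf{1}_A \;\textrm{d}\mu\right)\cdot y = \left(\int_{A'} \texttt{real\_cond\_exp} \; M \; F \; \mathbf{1}_A \;\textrm{d}\mu\right)\cdot y = \int_{A'} (\texttt{real\_cond\_exp} \; M \; F \; \mathbf{1}_A) \cdot y \;\textrm{d}\mu .
\]
Together with the structural facts above, this is exactly $\texttt{has\_cond\_exp} \; M \; F \; (\mathbf{1}_A \cdot y) \; ((\texttt{real\_cond\_exp} \; M \; F \; \mathbf{1}_A) \cdot y)$.

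I expect the main obstacle to be bookkeeping rather than conceptual: massaging the set-integral notation $\int_{A'}$ into a shape where the bounded-linearity lemma applies, and discharging the integrability side conditions required both to commute the integral with $\iota_y$ and to combine the individual pieces. None of this is deep, but in a formal development it is where most of the effort lies; this lemma then serves as the base case for the linearity and limiting arguments establishing existence in general.
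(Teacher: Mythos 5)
Your proposal is correct and follows exactly the route the paper takes: the paper's one-line proof ("linearity of the Bochner-integral and the previous lemmas") is precisely your argument of applying \texttt{has\_cond\_exp\_real} to the integrable indicator $\mathbf{1}_A$ and pushing the scalar-multiplication-by-$y$ map through the Bochner integral. You have simply spelled out the bookkeeping that the paper leaves implicit.
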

\begin{proof}
	The statement follows directly from the linearity of the Bochner-integral and the previous lemmas.
\end{proof}

Next, we show the following lemma concerning the sum of two conditional expectations.

\begin{lemma}
	Assume $\normalfont\texttt{has\_cond\_exp} \; M \; F \; f \; f'$ and $\normalfont\texttt{has\_cond\_exp} \; M \; F \; g \; g'$. Then
	\[
		\normalfont\texttt{has\_cond\_exp} \; M \; F \; (f + g) \; (f' + g')
	\]
\end{lemma}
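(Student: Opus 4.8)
The plan is to unfold the definition of \texttt{has\_cond\_exp} and verify its four conjuncts for the pair $(f+g,\ f'+g')$, using nothing beyond linearity of the Bochner integral together with the closure of integrability and of $F$-measurability under addition. Concretely, \texttt{has\_cond\_exp} $M\ F\ (f+g)\ (f'+g')$ requires: (i) $\int_A (f+g)\,\textrm{d}\mu = \int_A (f'+g')\,\textrm{d}\mu$ for all $A \in \textrm{sets}\ F$; (ii) integrability of $f+g$ with respect to $M$; (iii) integrability of $f'+g'$ with respect to $M$; (iv) $F$-measurability of $f'+g'$.

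I would dispatch (ii), (iii), (iv) first, since they are immediate. For (iv), $f'$ and $g'$ are $F$-measurable by the hypotheses, and the sum of two strongly measurable functions into a Banach space is again strongly measurable, so $f'+g' \in \textrm{borel\_measurable}\ F$. For (ii) and (iii), $f,g$ (resp. $f',g'$) are integrable with respect to $M$ by hypothesis, hence so are $f+g$ and $f'+g'$ by additivity of integrability.

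The remaining conjunct (i) is the only part with any content. Fix $A \in \textrm{sets}\ F$; since $F$ is a sub-$\sigma$-algebra of $\Sigma$ we also have $A \in \Sigma$. Because $f$ and $g$ are integrable and the indicator $\mathbf{1}_A$ is bounded by $1$, the functions $\mathbf{1}_A \cdot f$ and $\mathbf{1}_A \cdot g$ are integrable, so the set integral is additive:
\[
  \int_A (f+g)\,\textrm{d}\mu = \int_A f\,\textrm{d}\mu + \int_A g\,\textrm{d}\mu .
\]
The identical argument applied to $f'$ and $g'$ gives $\int_A (f'+g')\,\textrm{d}\mu = \int_A f'\,\textrm{d}\mu + \int_A g'\,\textrm{d}\mu$. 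Combining these two equalities with the hypotheses $\int_A f\,\textrm{d}\mu = \int_A f'\,\textrm{d}\mu$ and $\int_A g\,\textrm{d}\mu = \int_A g'\,\textrm{d}\mu$ (instances of the first conjunct of \texttt{has\_cond\_exp}) yields $\int_A (f+g)\,\textrm{d}\mu = \int_A (f'+g')\,\textrm{d}\mu$.

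There is essentially no obstacle here; the statement is a bookkeeping exercise in linearity. The only point that needs a moment's attention in the formalization is the additivity of the set integral, which in Isabelle presupposes integrability of the indicator-weighted summands — discharged by combining \texttt{integrable\_mult\_indicator} with the additivity lemma for \texttt{set\_lebesgue\_integral}. This lemma, together with the analogous one for scalar multiplication, forms the inductive step (addition) that will later be fed into the simple-function induction scheme for the existence proof.
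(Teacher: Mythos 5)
Your proof is correct and matches the paper's, which dispatches this lemma in one line as ``follows directly from the linearity of the Bochner-integral''; you have simply spelled out the four conjuncts of \texttt{has\_cond\_exp} and the indicator-integrability bookkeeping that the paper leaves implicit.
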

\begin{proof}
	The statement follows directly from the linearity of the Bochner-integral.
\end{proof}

Together with the induction scheme \texttt{integrable\_simple\_function\_induct}, we can show that the conditional expectation of an integrable simple function exists.

Now comes the most difficult part. Given a convergent sequence of integrable simple functions $(s_n)_{n \in \mathbb{N}}$, we must show that the sequence $(\texttt{cond\_exp} \; M \; F \; s_n)_{n \in \mathbb{N}}$ is also convergent. Furthermore, we must show that this limit satisfies the properties of a conditional expectation. Unfortunately, we will only be able to show that this sequence convergences in the $L^1$-norm. Luckily, this is enough to show that the operator $\texttt{cond\_exp} \; M \; F$ preserves limits as a function $L^1(E) \rightarrow L^1(E)$. We need the following lemma for this purpose

\begin{lemma} (Contractivity for Simple Functions) \par
	Let $f : \Omega \rightarrow E$ be an integrable simple function. Then
	\[
		\normalfont\lVert \texttt{cond\_exp} \; M \; F \; s \rVert \le \texttt{cond\_exp} \; M \; F \; (\lambda x. \lVert s \; x \rVert)
	\]
\end{lemma}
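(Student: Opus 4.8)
The plan is to establish the asserted pointwise ($\mu$-a.e.) inequality by the induction scheme \texttt{integrable\_simple\_function\_induct}, with predicate $P(s)$ being ``$\lVert\texttt{cond\_exp}\;M\;F\;s\rVert \le \texttt{cond\_exp}\;M\;F\;(\lambda x.\;\lVert s\;x\rVert)$ holds $\mu$-a.e.'' Both sides are real-valued, so this is an ordinary almost-everywhere inequality of real functions, and all the conditional expectations involved genuinely exist: for an integrable simple $s$ this was shown in the preceding paragraph, and $\lambda x.\;\lVert s\;x\rVert$ is an integrable real-valued simple function, whose conditional expectation exists by \texttt{has\_cond\_exp\_real}. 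For the congruence obligation, if $s = g$ $\mu$-a.e. with $g\in\mathcal{L}^1(E)$ and $P(g)$ holds, then $\texttt{cond\_exp}\;M\;F\;s = \texttt{cond\_exp}\;M\;F\;g$ $\mu$-a.e. and $\lVert s\rVert = \lVert g\rVert$ $\mu$-a.e., hence also $\texttt{cond\_exp}\;M\;F\;(\lVert s\rVert) = \texttt{cond\_exp}\;M\;F\;(\lVert g\rVert)$ $\mu$-a.e.; so $P(s)$ agrees a.e. with $P(g)$ and follows. This congruence step is precisely what makes $P$ a well-defined predicate on $L^1(E)$.

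For the indicator base case, take $s = \mathbf{1}_A\cdot y$ with $\mu(A)<\infty$ and $y\in E$. The earlier lemma on indicator functions, combined with \texttt{cond\_exp\_charact}, gives $\texttt{cond\_exp}\;M\;F\;(\mathbf{1}_A\cdot y) = (\texttt{real\_cond\_exp}\;M\;F\;\mathbf{1}_A)\cdot y$ $\mu$-a.e., while $\lVert\mathbf{1}_A\cdot y\rVert = \mathbf{1}_A\cdot\lVert y\rVert$ holds pointwise; so by linearity of \texttt{real\_cond\_exp} and \texttt{cond\_exp\_real} we get $\texttt{cond\_exp}\;M\;F\;(\lVert s\rVert) = \lVert y\rVert\cdot(\texttt{real\_cond\_exp}\;M\;F\;\mathbf{1}_A)$ $\mu$-a.e. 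Since $\lVert(\texttt{real\_cond\_exp}\;M\;F\;\mathbf{1}_A)(x)\cdot y\rVert = \lvert(\texttt{real\_cond\_exp}\;M\;F\;\mathbf{1}_A)(x)\rvert\cdot\lVert y\rVert$ and, because $\mathbf{1}_A\ge 0$, positivity of the real conditional expectation yields $\texttt{real\_cond\_exp}\;M\;F\;\mathbf{1}_A\ge 0$ $\mu$-a.e., the two sides coincide $\mu$-a.e., so $P(s)$ holds (with equality).

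For the addition obligation, let $s = h_1 + h_2$ with $h_1,h_2$ integrable simple functions; the scheme additionally supplies the pointwise identity $\lVert h_1(x)+h_2(x)\rVert = \lVert h_1(x)\rVert + \lVert h_2(x)\rVert$. From the lemma on sums of conditional expectations together with \texttt{cond\_exp\_charact} we obtain $\texttt{cond\_exp}\;M\;F\;s = \texttt{cond\_exp}\;M\;F\;h_1 + \texttt{cond\_exp}\;M\;F\;h_2$ $\mu$-a.e., and, applied to the real-valued $\lVert h_i\rVert$, also $\texttt{cond\_exp}\;M\;F\;(\lVert h_1\rVert + \lVert h_2\rVert) = \texttt{cond\_exp}\;M\;F\;(\lVert h_1\rVert) + \texttt{cond\_exp}\;M\;F\;(\lVert h_2\rVert)$ $\mu$-a.e. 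Combining the triangle inequality in $E$ with the two induction hypotheses then gives, $\mu$-a.e.,
\begin{align*}
	\lVert\texttt{cond\_exp}\;M\;F\;s\rVert
	&\le \lVert\texttt{cond\_exp}\;M\;F\;h_1\rVert + \lVert\texttt{cond\_exp}\;M\;F\;h_2\rVert \\
	&\le \texttt{cond\_exp}\;M\;F\;(\lVert h_1\rVert) + \texttt{cond\_exp}\;M\;F\;(\lVert h_2\rVert) \\
	&= \texttt{cond\_exp}\;M\;F\;(\lVert h_1\rVert + \lVert h_2\rVert) = \texttt{cond\_exp}\;M\;F\;(\lVert s\rVert),
\end{align*}
which is $P(s)$.

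I expect the main obstacle to be not any single step but the bookkeeping of ``almost everywhere'': every identity used for \texttt{cond\_exp} --- pushing it through a fixed scalar multiple, through sums, and identifying it with \texttt{real\_cond\_exp} on real-valued integrands --- holds only up to a $\mu$-null set, so the induction must be run with $P$ phrased as an a.e. statement and each case must quietly intersect finitely many conull sets. The single genuinely mathematical input beyond linearity and the triangle inequality is the positivity $\texttt{real\_cond\_exp}\;M\;F\;\mathbf{1}_A\ge 0$ $\mu$-a.e. in the base case, which is inherited from Gou\"ezel's real-valued development.
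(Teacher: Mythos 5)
Your proposal is correct and follows essentially the same route as the paper: the paper's own (two-sentence) proof likewise reduces the claim to the induction scheme \texttt{integrable\_simple\_function\_induct}, with the real-valued input supplied by Gou\"ezel's \texttt{real\_cond\_exp}. The only cosmetic difference is that the paper phrases the real-case ingredient as a positive/negative-part decomposition, whereas you invoke positivity of $\texttt{real\_cond\_exp}\;M\;F\;\mathbf{1}_A$ directly in the indicator base case; both amount to the same fact, and your handling of the a.e.\ bookkeeping and of the norm-additivity hypothesis in the sum case is exactly what the formalization requires.
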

\begin{proof}
	In the real case, one can show this property by decomposing a function into positive and negative parts. The statement then follows via the induction scheme \par\noindent\texttt{integrable\_simple\_function\_induct}.
\end{proof}

The following lemma is the most involved result of our formalization.

\begin{lemma}
	Let $f : \Omega \rightarrow E$ be an integrable function. Let $(s_n)_{n \in \mathbb{N}}$ be a sequence of integrable simple functions, such that $\lim_{n \to \infty} s_n(x) = f(x)$ and $\forall n. \;\lVert s_n(x) \rVert \le 2 \cdot \lVert f(x) \rVert$ for $\mu$-almost all $x$. Then there exists some subsequence $(s_{r_n})_{n \in \mathbb{N}}$ such that
	\[
		\normalfont (\texttt{cond\_exp} \; M \; F \; s_{r_n})_{n \in \mathbb{N}} \; \textrm{is Cauchy} \; \mu\textrm{-a.e.}
	\]
	and
	\[
		\normalfont \texttt{has\_cond\_exp} \; M \; F \; f \; (\lim_{n \to \infty} \texttt{cond\_exp} \; M \; F \; s_{r_n})
	\]
\end{lemma}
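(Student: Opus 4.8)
The plan is to pass to a rapidly convergent subsequence and then push the resulting $L^1$-summability through the conditional-expectation operator using the contractivity lemma, turning it into $\mu$-a.e. convergence via a Borel--Cantelli-type argument; the limit is then identified as a conditional expectation of $f$ by two dominated-convergence interchanges.

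\textbf{Step 1 (a fast subsequence).} Since $\lVert s_n(x)\rVert \le 2\lVert f(x)\rVert$ and $s_n \to f$ $\mu$-a.e., the functions $\lVert s_n - f\rVert$ are dominated by the integrable function $3\lVert f\rVert$, so dominated convergence gives $\int \lVert s_n - f\rVert\,\textrm{d}\mu \to 0$; in particular $(s_n)$ is Cauchy in $L^1$. Hence we may pick a strictly increasing $(r_n)_{n\in\mathbb{N}}$ with $\int \lVert s_{r_{n+1}} - s_{r_n}\rVert\,\textrm{d}\mu \le 2^{-n}$ for all $n$. Write $g_n := \texttt{cond\_exp}\,M\,F\,s_{r_n}$, which exists, is integrable, and is $F$-measurable since each $s_{r_n}$ is an integrable simple function.

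\textbf{Step 2 (transfer and a.e. Cauchyness).} The difference of integrable simple functions is again an integrable simple function, so by linearity of the conditional expectation $g_{n+1} - g_n$ is a conditional expectation of $s_{r_{n+1}} - s_{r_n}$, and the contractivity lemma gives
\[
	\lVert g_{n+1}(x) - g_n(x)\rVert \le \texttt{cond\_exp}\,M\,F\,(\lambda x.\ \lVert s_{r_{n+1}}(x) - s_{r_n}(x)\rVert)(x) \qquad \mu\textrm{-a.e.}
\]
Integrating over $\texttt{space}\,M \in \texttt{sets}\,F$ and using the defining property of the conditional expectation yields $\int \lVert g_{n+1} - g_n\rVert\,\textrm{d}\mu \le \int \lVert s_{r_{n+1}} - s_{r_n}\rVert\,\textrm{d}\mu \le 2^{-n}$. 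By monotone convergence $\int \sum_n \lVert g_{n+1} - g_n\rVert\,\textrm{d}\mu = \sum_n \int \lVert g_{n+1} - g_n\rVert\,\textrm{d}\mu < \infty$, so $h(x) := \lVert g_0(x)\rVert + \sum_n \lVert g_{n+1}(x) - g_n(x)\rVert$ is finite $\mu$-a.e. and integrable (here $g_0$ is integrable by \texttt{integrable\_cond\_exp}). For every such $x$ the series $\sum_n (g_{n+1}(x) - g_n(x))$ is absolutely convergent, so $(g_n(x))_n$ is Cauchy in the Banach space $E$; equivalently, by the Diameter Lemma (\ref{lem:diameter}), the set $\{g_i(x)\mid i\ge 0\}$ is bounded by $h(x)$ and $\textup{diam}(\{g_i(x)\mid i\ge n\}) \le \sum_{i\ge n}\lVert g_{i+1}(x) - g_i(x)\rVert \to 0$. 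Measurability of these tail-diameter functions is \texttt{borel\_measurable\_diameter}, and \texttt{integrable\_bound\_diameter} (with bound $2h$) makes them integrable. This proves the first assertion, and — after taking a subsequence on which the $g_n(x)$ converge off a fixed null set — lets us define $g(x) := \lim_n g_n(x)$ (and $g(x) := 0$ elsewhere).

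\textbf{Step 3 (the limit is a conditional expectation).} As the $\mu$-a.e. pointwise limit of the $F$-measurable functions $g_n$, the function $g$ is $F$-measurable; since $\lVert g_n(x)\rVert \le h(x)$ and $\lVert g(x)\rVert \le h(x)$ $\mu$-a.e. with $h$ integrable, $g$ is Bochner-integrable and dominated convergence gives $\int \lVert g - g_n\rVert\,\textrm{d}\mu \to 0$, hence $\int_A g_n\,\textrm{d}\mu \to \int_A g\,\textrm{d}\mu$ for every $A \in F$. On the other hand, for $A \in F$ the defining property of $g_n$ gives $\int_A s_{r_n}\,\textrm{d}\mu = \int_A g_n\,\textrm{d}\mu$, while $\int_A s_{r_n}\,\textrm{d}\mu \to \int_A f\,\textrm{d}\mu$ by dominated convergence ($\lVert s_{r_n}\rVert \le 2\lVert f\rVert$). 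Passing to the limit yields $\int_A f\,\textrm{d}\mu = \int_A g\,\textrm{d}\mu$ for all $A \in F$; together with integrability of $f$ and $g$ and $F$-measurability of $g$, this is exactly $\texttt{has\_cond\_exp}\,M\,F\,f\,g$, and $g = \lim_n \texttt{cond\_exp}\,M\,F\,s_{r_n}$ by construction.

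\textbf{Main obstacle.} The crux is Step 2: converting $L^1$-summability of $(s_{r_n})$ into $\mu$-a.e. convergence of $(\texttt{cond\_exp}\,M\,F\,s_{r_n})$, which relies essentially on the contractivity lemma (proved beforehand for this very purpose) and on the Beppo Levi / absolute-summability argument, and — in the formalization — on the careful bookkeeping of $\mu$-null sets together with the measurability and integrability of the tail-diameter functions supplied by \texttt{borel\_measurable\_diameter} and \texttt{integrable\_bound\_diameter}. The repeated dominated-convergence interchanges for Bochner integrals in Step 3 are routine by comparison.
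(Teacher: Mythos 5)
Your proof is correct, and it rests on the same two pillars as the paper's: the contractivity inequality for simple functions to transfer $L^1$-control from $(s_{r_n})$ to $(\texttt{cond\_exp}\;M\;F\;s_{r_n})$, and two dominated-convergence interchanges to identify the limit as a conditional expectation of $f$. Where you diverge is in the middle step, the extraction of an a.e.-Cauchy subsequence. The paper first shows the \emph{whole} sequence $(\texttt{cond\_exp}\;M\;F\;s_n)$ is Cauchy in $L^1$ --- obtaining the $L^1$-Cauchyness of $(s_n)$ via the diameter lemma, the bound $\lVert\mathrm{diam}(S_n(x))\rVert\le 4\lVert f(x)\rVert$ and dominated convergence --- and then delegates the passage to an a.e.-convergent subsequence to a separate, pre-packaged lemma (\texttt{cauchy\_L1\_AE\_cauchy\_subseq}, a variant of \texttt{tendsto\_L1\_AE\_subseq}). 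You instead choose the subsequence \emph{up front} with $\int\lVert s_{r_{n+1}}-s_{r_n}\rVert\,\mathrm{d}\mu\le 2^{-n}$ and inline the classical Beppo Levi / absolute-summability argument, which is essentially the proof of that subsequence lemma unfolded in this particular instance. Your Step 1 is also slightly more economical: dominated convergence applied directly to $\lVert s_n-f\rVert\le 3\lVert f\rVert$ gives $L^1$-convergence to $f$ without routing through the diameter of the tails of $(s_n)$. What the paper's packaging buys is reusability in the formalization (the same subsequence lemma is reused for the contractivity property of general integrable functions, and the diameter machinery comes with ready-made measurability and integrability lemmas); what your version buys is a self-contained argument with an explicit integrable dominating function $h$ built from the telescoping sums, in place of the paper's bound $\lVert(\texttt{cond\_exp}\;M\;F\;s_{r_n})(x)\rVert\le\texttt{cond\_exp}\;M\;F\;(\lambda x.\;2\lVert f(x)\rVert)$. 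Two cosmetic remarks: your parenthetical ``after taking a subsequence on which the $g_n(x)$ converge'' is unnecessary, since a.e.\ Cauchyness already yields a.e.\ convergence of $(g_n)$ itself by completeness of $E$ off the single null set $\{x\mid\sum_n\lVert g_{n+1}(x)-g_n(x)\rVert=\infty\}$; and the appeal to the diameter lemma and its measurability companions in your Step 2 is not actually needed for your route.
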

\begin{proof}
	The sequence $(s_n)_{n \in \mathbb{N}}$ is Cauchy $\mu$-a.e. Hence $\lim_{n \to \infty} \textrm{diam}(S_n(x)) = 0$, with $S_n(x) := \{s_i(x) \; \vert \; i \ge n \}$ by the diameter lemma. Furthermore
	\[
		\lVert\textrm{diam}(S_n(x))\rVert \le 4 \cdot \lVert f(x) \rVert \; \mu\textrm{-a.e.}
	\]
	using the triangle inequality and our second assumption. We have already shown that $\textrm{diam}(S_n(x))$ is measurable. We apply the dominated convergence theorem and get
	\[
		\lim_{n \to \infty} \int \textrm{diam}(S_n(x)) \; \textrm{d}\mu = 0
	\]
	We will now show that $(\texttt{cond\_exp} \; M \; F \; s_n)_{n \in \mathbb{N}}$ is Cauchy in the $L^1$-norm.
	Let $\varepsilon > 0$. Hence there is some $N \in \mathbb{N}$ such that $\int \textrm{diam}(S_n(x)) < \varepsilon$. Thus for any $i,j \ge N$, we have
	\[
		\int \lVert s_i(x) - s_j(x) \rVert \; \textrm{d}\mu \le \int \textrm{diam}(S_N(x)) \; \textrm{d}\mu < \varepsilon
	\]
	by the monotonicity of the integral. Furthermore
	\begin{align*}
		&\quad\;\int \lVert (\texttt{cond\_exp} \; M \; F \; s_i)(x) - (\texttt{cond\_exp} \; M \; F \; s_j)(x) \rVert \; \textrm{d}\mu \\
		&= \int \lVert (\texttt{cond\_exp} \; M \; F \; (s_i - s_j))(x) \rVert \; \textrm{d}\mu \\
		&\le \int (\texttt{cond\_exp} \; M \; F \; (\lambda x.\; \lVert s_i(x) - s_j(x)\rVert))(x) \; \textrm{d}\mu \\
		&= \int \lVert s_i(x) - s_j(x)\rVert \; \textrm{d}\mu \\
		&< \varepsilon
	\end{align*}
	since $s_i(x)-s_j(x)$ is an integrable simple function and conditional expectation already exists in the real setting. Hence $(\texttt{cond\_exp} \; M \; F \; s_n)_{n \in \mathbb{N}}$ is Cauchy in the $L^1$-norm. Therefore, there exists some subsequence $(\texttt{cond\_exp} \; M \; F \; s_{r_n})_{n \in \mathbb{N}}$ that convergences $\mu$-a.e. We have for all $n \in \mathbb{N}$
	\[
		\lVert (\texttt{cond\_exp} \; M \; F \; s_{r_n}) (x) \rVert \le \texttt{cond\_exp} \; M \; F \; (\lambda x. \; 2 \cdot \lVert f(x)\rVert) \; \mu\textrm{-a.e.}
	\]
	Together with the dominated convergence theorem, this implies that \par\noindent$\lim_{n \to \infty} (\texttt{cond\_exp} \; M \; F \; s_{r_n})$ is integrable. \par\noindent As the limit of $F$-measurable functions, $\lim_{n \to \infty} (\texttt{cond\_exp} \; M \; F \; s_{r_n})$ is also $F$-measurable. Finally, we have for $A \in F$
	\begin{align*}
		\int_A (\lim_{n \to \infty} (\texttt{cond\_exp} \; M \; F \; s_{r_n}))(x) \; \textrm{d}\mu &= \lim_{n \to \infty} \int_A (\texttt{cond\_exp} \; M \; F \; s_{r_n})(x) \; \textrm{d}\mu \\
		&= \lim_{n \to \infty} \int_A s_{r_n}(x) \; \textrm{d}\mu \\
		&= \int_A f(x) \; \textrm{d}\mu 
	\end{align*}
	In the first and last equality we have again used the dominated convergence theorem. The statement follows from the definition of \texttt{has\_cond\_exp}.
\end{proof}

At one point in the proof of our lemma, we have used the fact that a convergent sequence in $L^1$ admits a subsequence which is convergent in the underlying norm $\mu$-a.e. This result is stated in Isabelle as follows

{\small
	\begin{lstlisting}[style=isabelle]
	proposition tendsto_L1_AE_subseq:
	  fixes u :: "$\texttt{nat} \Rightarrow 'a \Rightarrow 'b$"
	  assumes "$\bigwedge n. \; \texttt{integrable} \; M \; (u \; n)$"
		  and "$(\lambda n. \; (\int \; \texttt{norm} \;(u \; n \; x) \; \partial M)) \longrightarrow 0$"
	  shows "$\exists r \;:: \; \texttt{nat} \Rightarrow \texttt{nat}. \; \texttt{strict\_mono} \; r \wedge ($AE$\; x \; $in$ \; M. \; (\lambda n. \; u \; (r \; n) \; x) \longrightarrow 0)$"
	\end{lstlisting}
}

In our case, it is cumbersome to formulate the convergence of $(\texttt{cond\_exp} \; M \; F \; s_n)_{n \in \mathbb{N}}$ in the $L^1$-norm in the manner stated above. One might be tempted to use the diameter lemma in the other direction to bring the expression into this form. On paper this is indeed plausible. However, it involves showing that the functions $\texttt{cond\_exp} \; M \; F \; s_n$ have an integrable upper bound $w$. Furthermore, one has to jump back and forth between using the binder \lstinline[mathescape]{AE $x$ in $M$.} and directly showing statements for $x \in \texttt{space} \; M$. We have decided that it would be easier to formalize and use the following more flexible lemma instead. Mathematically, the underlying argument is the same.

\begin{isalemma}
{\small
	\begin{lstlisting}[style=isabelle]
	lemma cauchy_L1_AE_cauchy_subseq:
	  fixes s :: "$\texttt{nat} \Rightarrow 'a \Rightarrow 'b$"
	  assumes "$\bigwedge n. \; \texttt{integrable} \; M \; (s \; n)$"
		  and "$\bigwedge e. \; e > 0 \implies \exists N. \; \forall i\ge N. \forall j \ge N. \; $LINT$ \; x\vert M. \; \texttt{norm} \; (s \; i \; x \; - \; s \; j \; x) < e$"
	  obtains $r$ where "$\texttt{strict\_mono} \; r$" "AE$ \; x \; $in$ \; M. \; \texttt{Cauchy} \; (\lambda i. \; s \; (r \; i) \; x)$"
	\end{lstlisting}
}
\end{isalemma}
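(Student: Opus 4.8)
The plan is to extract the subsequence by the classical ``fast Cauchy'' argument and then to invoke completeness of the Banach space pointwise. First I would use the $L^1$-Cauchy hypothesis to build, by recursion on $k$, a strictly monotone index function $r : \mathbb{N} \to \mathbb{N}$ such that
\[
	\int \lVert s_{r(k+1)}(x) - s_{r(k)}(x) \rVert \; \textrm{d}\mu < 2^{-k} \quad \text{for all } k \in \mathbb{N}.
\]
Concretely, for each $k$ the hypothesis applied with $e = 2^{-k}$ yields a threshold $N_k$; choosing $r(k+1) \ge \max(N_k, r(k)+1)$ guarantees both strict monotonicity and, since then $r(k+1), r(k) \ge N_k$, the displayed bound.

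Next I would consider the nonnegative function $g(x) = \sum_{k=0}^{\infty} \lVert s_{r(k+1)}(x) - s_{r(k)}(x) \rVert$ and show it has finite integral. Interchanging the sum and the integral for a series of nonnegative measurable functions (the monotone convergence theorem, most conveniently phrased for the extended-nonnegative integral \texttt{nn\_integral}), the integral of $g$ is bounded by $\sum_{k=0}^{\infty} 2^{-k} = 2 < \infty$; here I also use that each $\lVert s_{r(k+1)} - s_{r(k)} \rVert$ is measurable, which follows from the integrability assumption on the $s_n$. Consequently $g(x) < \infty$ for $\mu$-almost every $x$.

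Finally, fix an $x$ with $g(x) < \infty$. Then the series $\sum_k \bigl( s_{r(k+1)}(x) - s_{r(k)}(x) \bigr)$ is absolutely convergent in the Banach space, hence convergent by completeness; its partial sums telescope to $s_{r(n)}(x) - s_{r(0)}(x)$, so $(s_{r(n)}(x))_{n \in \mathbb{N}}$ is a Cauchy sequence. This yields ``AE $x$ in $M$. \texttt{Cauchy} $(\lambda i.\; s\,(r\,i)\,x)$'', which together with strict monotonicity of $r$ discharges the \texttt{obtains}.

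The main obstacle will be the bookkeeping around the different integral notions: the hypothesis and the goal speak of the real Bochner integral, whereas the summable-majorant argument is cleanest over the extended nonnegative reals. I would have to move carefully between $\lVert s_i(x) - s_j(x) \rVert$ as a real integrand and as an \texttt{ennreal}-valued one, justify that the infinite sum of norms is measurable with the claimed finite \texttt{nn\_integral}, and then translate ``finite \texttt{nn\_integral}'' back into the a.e.\ finiteness of $g$. The recursive definition of $r$ meeting the simultaneous strict-monotonicity and threshold constraints is a further, minor but fiddly point that in Isabelle requires an explicit \texttt{rec\_nat}-style construction or a small dedicated helper.
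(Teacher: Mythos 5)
Your proposal is correct and is essentially the approach the paper relies on: the paper offers no separate proof of this lemma, stating only that the underlying argument is the same as for the library fact \texttt{tendsto\_L1\_AE\_subseq}, whose proof is exactly your fast-subsequence-plus-summable-majorant argument (extract $r$ with $\int \lVert s_{r(k+1)} - s_{r(k)} \rVert \, \textrm{d}\mu < 2^{-k}$, sum the norms via \texttt{nn\_integral}/monotone convergence, conclude a.e.\ finiteness and hence pointwise Cauchy-ness of the telescoping sequence). Two cosmetic remarks: completeness of the codomain is not actually needed, since summability of $\sum_k \lVert s_{r(k+1)}(x) - s_{r(k)}(x)\rVert$ already yields the Cauchy property directly from the triangle inequality; and the recursion should be set up so that \emph{both} $r(k)$ and $r(k+1)$ exceed the threshold $N_k$ (e.g.\ $r(k) \ge \max(N_k,\, r(k-1)+1)$, using monotonicity of $r$ for the upper index), which your stated choice gets right only up to a harmless index shift.
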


The main result of this subsection is formalized in Isabelle as follows

\begin{isacorollary}
{\small
	\begin{lstlisting}[style=isabelle]
	corollary has_cond_expI:
	  assumes "$\texttt{integrable} \; M \; f$"
	  shows "$\texttt{has\_cond\_exp} \; M \; F \; f \; (\texttt{cond\_exp} \; M \; F \; f)$"
	\end{lstlisting}
}
\end{isacorollary}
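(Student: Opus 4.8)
The plan is to run the standard measure-theoretic bootstrap and then hand off to the uniqueness lemma. Observe first that it suffices to produce, for the given integrable $f$, \emph{any} function $g$ with $\texttt{has\_cond\_exp}\ M\ F\ f\ g$: the lemma from the Uniqueness subsection then immediately yields $\texttt{has\_cond\_exp}\ M\ F\ f\ (\texttt{cond\_exp}\ M\ F\ f)$, which is exactly the claim. So the whole proof reduces to an existence statement, and this existence is established by induction over the structure of integrable functions.

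First I would handle indicator functions: for measurable $A$ with $\mu(A) < \infty$ and $y \in E$, the indicator lemma above (a consequence of \texttt{has\_cond\_exp\_real} and the linearity of the Bochner integral) provides a conditional expectation of $\mathbf{1}_A \cdot y$, namely $(\texttt{real\_cond\_exp}\ M\ F\ \mathbf{1}_A)\cdot y$. Then the additivity lemma gives that if $f'$ and $g'$ are conditional expectations of $f$ and $g$, then $f' + g'$ is one of $f + g$. Combining these two facts with the observation that $\texttt{has\_cond\_exp}$ respects $\mu$-a.e.\ equality (congruence), the induction scheme \texttt{integrable\_simple\_function\_induct} upgrades the existence statement from indicators to all integrable simple functions.

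For a general integrable $f$, I would choose a sequence $(s_n)$ of integrable simple functions with $s_n(x) \to f(x)$ and $\lVert s_n(x) \rVert \le 2\lVert f(x)\rVert$ for $\mu$-almost every $x$ — the canonical dominated simple-function approximation of a strongly measurable function, truncated so as to preserve integrability together with the factor-$2$ bound. Feeding this sequence into the ``most involved'' lemma above produces a subsequence $(s_{r_n})$ for which $(\texttt{cond\_exp}\ M\ F\ s_{r_n})$ is Cauchy $\mu$-a.e.\ and $\texttt{has\_cond\_exp}\ M\ F\ f\ (\lim_n \texttt{cond\_exp}\ M\ F\ s_{r_n})$ holds. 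That limit is the witness $g$ we needed, and one final application of the uniqueness lemma finishes the corollary.

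The conceptual heavy lifting is already encapsulated in the ``most involved'' lemma: the approximating conditional expectations converge only in the $L^1$-norm, and extracting an a.e.-convergent subsequence needs the contractivity estimate for simple functions, the diameter lemma, and dominated convergence (via the bound $\lVert \mathrm{diam}(S_n(x))\rVert \le 4\lVert f(x)\rVert$). Within the present corollary the remaining points that require care are: building the dominated sequence of \emph{integrable} simple functions without breaking the $2\lVert f\rVert$ bound, keeping track of the difference between $\mu$-null and $\mu|_F$-null sets, and unfolding the Hilbert-choice definition of $\texttt{cond\_exp}$ so that the existence witness genuinely transfers to it. I expect this last point — the clean chaining through the uniqueness lemma — to be the only mildly fiddly part of an otherwise routine assembly.
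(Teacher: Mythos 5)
Your proposal follows the paper's own route exactly: establish existence of a witness for \texttt{has\_cond\_exp} by the indicator--addition--congruence induction on integrable simple functions, pass to general integrable $f$ via the dominated approximating sequence and the $L^1$-Cauchy/a.e.-subsequence lemma, and then transfer the witness to \texttt{cond\_exp} through the Hilbert-choice definition (the first half of the uniqueness lemma). The proof is correct and the points you flag as delicate (the factor-$2$ bound, $\mu$ versus $\mu\vert_F$ null sets, unfolding \texttt{SOME}) are precisely the ones the formalization has to handle.
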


\subsection{Properties of the Conditional Expectation}

We will now introduce some commonly used properties of the conditional expectation. The proofs for the last two properties presented in this subsection are derived from the lecture notes by Gordan Zitkovic for the course ``Theory of Probability I'' \cite{Zitkovic_2015}.

\subsubsection{Identity on $F$-measurable functions}

If an integrable function $f$ is already $F$-measurable, then ``$\texttt{cond\_exp} \; M \; F \; f = f$'' $\mu$-a.e. This is a corollary of the lemma on the characterization of \texttt{cond\_exp}. It is formalized in our development as follows.
\begin{isacorollary}
{\small
	\begin{lstlisting}[style=isabelle]
	corollary cond_exp_F_meas:
	  assumes "$\texttt{integrable} \; M \; f$"
			  "$f \in \texttt{borel\_measurable} \; F$"
	  shows "AE$ \; x \; $in$ \; M. \; \texttt{cond\_exp} \; M \; F \; f \; x = f \; x$"
	\end{lstlisting}
}
\end{isacorollary}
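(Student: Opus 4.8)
The plan is to obtain this as an immediate corollary of the characterization lemma \texttt{cond\_exp\_charact}, applied with $g := f$. Recall that \texttt{cond\_exp\_charact} asserts that whenever $f$ and $g$ are integrable, $g$ is $F$-measurable, and $\int_A f \;\textrm{d}\mu = \int_A g \;\textrm{d}\mu$ holds for every $A \in F$, then ``$\texttt{cond\_exp} \; M \; F \; f = g$'' $\mu$-almost everywhere. The key observation is that an integrable $F$-measurable function is trivially a witness for its own conditional expectation: the defining averaging identity $\int_A f \;\textrm{d}\mu = \int_A f \;\textrm{d}\mu$ holds by reflexivity for every $A \in F$, integrability of $f$ is the first hypothesis, and $F$-measurability of $f$ is the second hypothesis.

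Concretely, I would first instantiate \texttt{cond\_exp\_charact} with the role of $g$ played by $f$. Its four premises then reduce to: (i) $\int_A f \;\textrm{d}\mu = \int_A f \;\textrm{d}\mu$ for all $A \in \texttt{sets} \; F$, discharged by reflexivity; (ii) \texttt{integrable} $M$ $f$, discharged by the first assumption; (iii) the same integrability statement again, discharged identically; and (iv) $f \in \texttt{borel\_measurable} \; F$, discharged by the second assumption. The conclusion delivered by \texttt{cond\_exp\_charact} is then syntactically exactly the goal ``\texttt{AE} $x$ \texttt{in} $M.\; \texttt{cond\_exp} \; M \; F \; f \; x = f \; x$''.

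There is essentially no obstacle here; the corollary is a direct consequence of the uniqueness of the conditional expectation combined with the elementary fact that an $F$-measurable integrable function satisfies the defining property with respect to itself. The only minor point of care is orientation: \texttt{cond\_exp\_charact} already produces the equality in the direction $\texttt{cond\_exp} \; M \; F \; f \; x = f \; x$, so no additional symmetry or transitivity step is required, and the proof is a one-line invocation.
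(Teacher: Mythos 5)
Your proposal is correct and matches the paper's own proof: the paper likewise derives this as a direct corollary of the characterization lemma \texttt{cond\_exp\_charact}, instantiated with $g := f$, where the averaging identity holds by reflexivity and the remaining premises are the stated assumptions. Nothing further is needed.
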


\subsubsection{Tower Property}

The following property is called the \textit{tower property} of the conditional expectation.

\begin{lemma}
	Let $F$ and $G$ be nested sub-$\sigma$-algebras, i.e. $F \subseteq G \subseteq \Sigma$. Then, for any $f \in L^1(E)$, we have
	\[
		\normalfont \texttt{cond\_exp} \; M \; F \; (\texttt{cond\_exp} \; M \; G \; f) = \texttt{cond\_exp} \; M \; F \; f \; \mu\textrm{-a.e.}
	\]
\end{lemma}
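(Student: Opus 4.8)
The plan is to apply the characterization lemma \texttt{cond\_exp\_charact} with the candidate $g := \texttt{cond\_exp}\; M\; F\; f$ in the role of a conditional expectation of $h := \texttt{cond\_exp}\; M\; G\; f$ with respect to $F$. That lemma reduces the claim to four obligations: (i)~integrability of $h$, (ii)~integrability of $g$, (iii)~$F$-measurability of $g$, and (iv)~the averaging identity $\int_A h\; \textrm{d}\mu = \int_A g\; \textrm{d}\mu$ for every $A \in F$. If all four hold, \texttt{cond\_exp\_charact} gives $\texttt{cond\_exp}\; M\; F\; h = g$ $\mu$-a.e., which is exactly the asserted equation.

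Obligations (i)--(iii) are immediate from the structural lemmas already established: \texttt{integrable\_cond\_exp} discharges the integrability of both $h$ and $g$, and \texttt{borel\_measurable\_cond\_exp} discharges the $F$-measurability of $g$. So the only genuine work lies in obligation (iv).

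For (iv), fix $A \in F$. Since $F \subseteq G$, we also have $A \in G$. Applying \texttt{has\_cond\_expI} to the integrable function $f$ with the sub-$\sigma$-algebra $G$ yields $\texttt{has\_cond\_exp}\; M\; G\; f\; (\texttt{cond\_exp}\; M\; G\; f)$, whose defining equation evaluated at $A \in G$ gives
\[
	\int_A \texttt{cond\_exp}\; M\; G\; f\; \textrm{d}\mu = \int_A f\; \textrm{d}\mu .
\]
Likewise, \texttt{has\_cond\_expI} applied to $f$ with the sub-$\sigma$-algebra $F$ gives, for $A \in F$,
\[
	\int_A f\; \textrm{d}\mu = \int_A \texttt{cond\_exp}\; M\; F\; f\; \textrm{d}\mu .
\]
Chaining the two equalities produces $\int_A h\; \textrm{d}\mu = \int_A g\; \textrm{d}\mu$, which is precisely (iv).

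There is essentially no hard step here: the tower property is a soft consequence of the defining property of the conditional expectation together with its uniqueness. The only points requiring a little care in the formalization are the implicit $\sigma$-finiteness hypotheses on $\mu\vert_F$ and $\mu\vert_G$ needed to invoke \texttt{has\_cond\_expI} for both sub-$\sigma$-algebras (these are available in the locale context), and ensuring that the measurability witness supplied to \texttt{cond\_exp\_charact} is with respect to $F$ and not $G$.
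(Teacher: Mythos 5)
Your proposal is correct and follows essentially the same route as the paper: both reduce the claim to the characterization lemma and verify the averaging identity by chaining $\int_A \texttt{cond\_exp}\,M\,G\,f\,\textrm{d}\mu = \int_A f\,\textrm{d}\mu = \int_A \texttt{cond\_exp}\,M\,F\,f\,\textrm{d}\mu$ for $A \in F \subseteq G$. Your write-up merely makes explicit the integrability and measurability side conditions that the paper leaves implicit.
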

\begin{proof}
	For any $A \in F$, we have
	\begin{align*}
		\int_A \texttt{cond\_exp} \; M \; G \; f \; \textrm{d} \mu &= \int_A f \; \textrm{d} \mu \\
		&= \int_A \texttt{cond\_exp} \; M \; F \; f \; \textrm{d} \mu
	\end{align*}
	since $A$ is also in $G$. The characterization lemma yields the result.
\end{proof}

\subsubsection{Contractivity}

A linear operator $L : V \rightarrow W$ between normed vector spaces $V$ and $W$ is called a \textit{contraction} if its operator norm
\[
	\lVert L \rVert_{\texttt{op}} = \inf\{ c \geq 0 \;\vert\; \lVert Lv \rVert_W \leq c \lVert v \rVert_V \textrm{ for all } v \in V \}
\]
is less than or equal to 1. Such an operator always preserves limits and has other useful properties in functional analysis \cite{sznagy2010}.

\begin{lemma} (Contractivity) \par
	Let $f \in L^1(E)$. Then
	\[
		\normalfont\lVert \texttt{cond\_exp} \; M \; F \; f \rVert \le \texttt{cond\_exp} \; M \; F \; (\lambda x. \lVert f(x) \rVert)
	\]
\end{lemma}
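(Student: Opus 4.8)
The plan is to mimic the proof of \texttt{has\_cond\_expI}: establish the inequality first for integrable simple functions — which is exactly the \emph{Contractivity for Simple Functions} lemma already proved above — and then transfer it to an arbitrary $f \in L^1(E)$ by the same approximation-and-limiting machinery used for the existence theorem. Write $g := \texttt{cond\_exp} \; M \; F \; f$ and $h := \texttt{cond\_exp} \; M \; F \; (\lambda x. \lVert f(x) \rVert)$. Note that $\lambda x. \lVert f(x)\rVert$ is a non-negative \emph{real-valued} integrable function, so $h$ is real-valued and the target statement $\lVert g(x) \rVert \le h(x)$ $\mu$-a.e. is an inequality in $\mathbb{R}$, which in particular is preserved under pointwise limits.

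First I would pick a sequence $(s_n)_{n\in\mathbb{N}}$ of integrable simple functions with $s_n(x) \to f(x)$ and $\lVert s_n(x)\rVert \le 2\lVert f(x)\rVert$ for $\mu$-almost all $x$ — the standard construction used for the existence theorem. By the main limiting lemma of the previous subsection there is a subsequence $(s_{r_n})$ with $\texttt{cond\_exp} \; M \; F \; s_{r_n} \to g$ $\mu$-a.e., where the limit is identified with $g$ via the uniqueness lemma. On the real-valued side, $\lVert s_{r_n}(x)\rVert \to \lVert f(x)\rVert$ by continuity of the norm, and these functions are dominated by $2\lVert f\rVert \in L^1$; by dominated convergence $\lVert s_{r_n}\rVert \to \lVert f\rVert$ in $L^1$, and since the real-valued conditional expectation is an $L^1$-contraction (using \texttt{cond\_exp\_real} to transfer to \texttt{real\_cond\_exp}), $\texttt{cond\_exp} \; M \; F \; (\lambda x. \lVert s_{r_n}(x)\rVert) \to h$ in $L^1$, hence $\mu$-a.e. along a further subsequence. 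After passing to this common subsequence we may assume both convergences hold $\mu$-a.e. simultaneously.

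For each $n$, the Contractivity for Simple Functions lemma gives $\lVert \texttt{cond\_exp} \; M \; F \; s_{r_n} \rVert \le \texttt{cond\_exp} \; M \; F \; (\lambda x. \lVert s_{r_n}(x)\rVert)$ $\mu$-a.e. Taking $n \to \infty$ on both sides — continuity of the norm on the left, the a.e. convergence just established on the right, and the fact that $\le$ on $\mathbb{R}$ passes to the limit — yields $\lVert g(x)\rVert \le h(x)$ for $\mu$-almost all $x$, which is the claim. The main obstacle is the bookkeeping around subsequences and the a.e. convergence of $\texttt{cond\_exp} \; M \; F \; (\lambda x. \lVert s_{r_n}(x)\rVert)$ to $h$: this requires invoking the real-valued conditional-expectation theory (its $L^1$-contractivity, the transfer lemma \texttt{cond\_exp\_real}, and monotonicity for real functions) and carefully intersecting the null sets arising from the various a.e. statements, exactly as in the proof of \texttt{has\_cond\_expI}. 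Everything else is routine.
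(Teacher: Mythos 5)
Your proposal is correct and follows essentially the same route as the paper: approximate $f$ by integrable simple functions dominated by $2\lVert f\rVert$, extract subsequences along which both $\texttt{cond\_exp}\;M\;F\;s_{r_n}$ and $\texttt{cond\_exp}\;M\;F\;(\lambda x.\lVert s_{r_n}(x)\rVert)$ converge $\mu$-a.e.\ to the respective conditional expectations, apply the simple-function contractivity lemma termwise, and pass to the limit. The only cosmetic difference is that the paper obtains the second a.e.-convergent sub-subsequence by reapplying its main limiting lemma to the real-valued sequence $(\lVert s_{r_n}\rVert)$, whereas you obtain it via $L^1$-contractivity of the real conditional expectation plus extraction of an a.e.-convergent subsequence from $L^1$ convergence --- the same underlying machinery.
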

\begin{proof}
	We have already shown contractivity in the case of simple functions. Since $f$ is integrable, there exists a sequence of simple functions $(s_n)_{n \in \mathbb{N}}$ such that
	\[
		\lim_{n \rightarrow \infty} s_n= f \; \mu\textrm{-a.e.}
	\]
	and
	\[
		\lVert s_n(x) \rVert \le 2 \cdot \lVert f(x) \rVert \; \mu\textrm{-a.e. for all } n \in \mathbb{N}
	\]
	Using the results of the previous subsection, we obtain a subsequence $(s_{r_n})_{n \in \mathbb{N}}$ such that
	\[
		\lim_{n \rightarrow \infty} (\texttt{cond\_exp} \; M \; F \; s_{r_n}) = \texttt{cond\_exp} \; M \; F \; f \quad \mu\textrm{-a.e.}
	\]
	With the exact same arguments applied to the sequence of simple functions $(\lambda x. \lVert s_{r_n}(x) \rVert)_{n \in \mathbb{N}}$, we obtain a sub-subsequence $(s_{r_{r'_n}})_{n \in \mathbb{N}}$ such that
	\[
		\lim_{n \rightarrow \infty} (\texttt{cond\_exp} \; M \; F \; (\lambda x. \lVert s_{r_{r'_n}}(x) \rVert)) = \texttt{cond\_exp} \; M \; F \; (\lambda x. \lVert f(x)\rVert) \quad\mu\textrm{-a.e.}
	\]
	Furthermore, we have
	\[
		\lVert (\texttt{cond\_exp} \; M \; F \; s_{r_{r'_n}})(x) \rVert \le (\texttt{cond\_exp} \; M \; F \; (\lambda x. \lVert s_{r_{r'_n}}(x) \rVert))(x) \quad\mu\textrm{-a.e.}
	\]
	for all $n \in \mathbb{N}$, since the functions in question are simple. Taking the limits on both sides and using the continuity of the norm yields the result.
\end{proof}

\begin{corollary}
	The linear operator $\normalfont\texttt{cond\_exp} \; M \; F : L^1(E) \rightarrow L^1(E)$ is a contraction.
\end{corollary}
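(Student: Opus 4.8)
The plan is to read the statement off directly from the Contractivity lemma proved just above, by integrating its pointwise bound. Fix $f \in L^1(E)$. Since $f$ is Bochner-integrable, the real-valued function $x \mapsto \lVert f(x)\rVert$ is integrable, so by \texttt{has\_cond\_expI} both $\texttt{cond\_exp}\ M\ F\ f$ and $\texttt{cond\_exp}\ M\ F\ (\lambda x.\,\lVert f(x)\rVert)$ genuinely satisfy the defining property of \texttt{has\_cond\_exp}; in particular both are integrable by \texttt{integrable\_cond\_exp}, and $\texttt{cond\_exp}\ M\ F\ (\lambda x.\,\lVert f(x)\rVert)$ agrees with $\lambda x.\,\lVert f(x)\rVert$ when integrated over any set of $F$.

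Next I would integrate the a.e.\ inequality supplied by the Contractivity lemma,
\[
	\lVert \texttt{cond\_exp}\ M\ F\ f\ x \rVert \le \texttt{cond\_exp}\ M\ F\ (\lambda x.\,\lVert f(x)\rVert)\ x \qquad \mu\textrm{-a.e.},
\]
over $\Omega$. The left-hand side is precisely the integrand defining $\lVert \texttt{cond\_exp}\ M\ F\ f \rVert_{L^1}$, and both sides are real-valued integrable functions, so monotonicity of the Lebesgue integral gives
\[
	\lVert \texttt{cond\_exp}\ M\ F\ f \rVert_{L^1} = \int_\Omega \lVert \texttt{cond\_exp}\ M\ F\ f\ x \rVert\ \textrm{d}\mu \le \int_\Omega \texttt{cond\_exp}\ M\ F\ (\lambda x.\,\lVert f(x)\rVert)\ x\ \textrm{d}\mu.
\]
Applying the defining equation of \texttt{has\_cond\_exp} with the set $A = \Omega \in F$ to the integrable function $\lambda x.\,\lVert f(x)\rVert$, the last integral equals $\int_\Omega \lVert f(x)\rVert\ \textrm{d}\mu = \lVert f \rVert_{L^1}$. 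Hence $\lVert \texttt{cond\_exp}\ M\ F\ f \rVert_{L^1} \le \lVert f \rVert_{L^1}$ for every $f \in L^1(E)$.

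Finally, linearity of $\texttt{cond\_exp}\ M\ F$ on $L^1(E)$ (up to a $\mu$-null set) is exactly the sum lemma above, together with the analogous statement for scalar multiples and the uniqueness lemma, so $\texttt{cond\_exp}\ M\ F$ is a genuine linear operator $L^1(E) \rightarrow L^1(E)$; the inequality just established shows that $c = 1$ lies in the set $\{\, c \ge 0 \;\vert\; \lVert \texttt{cond\_exp}\ M\ F\ v \rVert_W \le c\,\lVert v \rVert_V \text{ for all } v \,\}$, whence $\lVert \texttt{cond\_exp}\ M\ F \rVert_{\texttt{op}} \le 1$, i.e.\ it is a contraction. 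Essentially all the work was already done in the Contractivity lemma; the only points needing care here — and the mild ``main obstacle'' — are the integrability side conditions that license integrating the a.e.\ inequality and the use of $\Omega \in F$, which is unproblematic since both functions involved are integrable and so their integrals over $\Omega$ are finite.
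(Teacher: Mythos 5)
Your proposal is correct and follows essentially the same route as the paper's proof: integrate the pointwise a.e.\ bound from the Contractivity lemma and then use the defining property of the conditional expectation on the set $\Omega \in F$ to identify $\int \texttt{cond\_exp}\ M\ F\ (\lambda x.\,\lVert f(x)\rVert)\ \textrm{d}\mu$ with $\lVert f \rVert_1$. You merely make explicit the integrability side conditions and the appeal to linearity that the paper leaves implicit.
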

\begin{proof}
	Let $f \in L^1(E)$. From the previous lemma we have 
	\begin{align*}
		\lVert \texttt{cond\_exp} \; M \; F \; f \rVert_1 &= \int \lVert \texttt{cond\_exp} \; M \; F \; f \rVert \; \textrm{d} \mu \\
		&\le \int \texttt{cond\_exp} \; M \; F \; (\lambda x. \lVert f(x) \rVert) \; \textrm{d} \mu \\
		&= \int \lVert f \rVert \textrm{d} \mu = \lVert f \rVert_1\\
	\end{align*}
	Hence $\lVert \texttt{cond\_exp} \; M \; F \rVert_{\texttt{op}} \le 1$
\end{proof}

\subsubsection{Pulling Out What's Known}

The following property of the conditional expectation is called ``pulling out what's known''. 

\begin{lemma}
	Let $f : \Omega \rightarrow \mathbb{R}$ be an $F$-measurable function. Let $g \in L^1(E)$ and $f \cdot g \in L^1(E)$. Then
	\[
		\normalfont \texttt{cond\_exp} \; M \; F \; (f \cdot g) = f \cdot \texttt{cond\_exp} \; M \; F \; g \quad \mu\textrm{-a.e.}
	\]
\end{lemma}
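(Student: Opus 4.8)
The plan is to reduce everything to the characterization lemma \texttt{cond\_exp\_charact}: writing $h := f \cdot \texttt{cond\_exp}\, M\, F\, g$, it suffices to show that $h$ is $F$-measurable, that $h$ is integrable, and that $\int_A f \cdot g \; \textrm{d}\mu = \int_A h \; \textrm{d}\mu$ for every $A \in F$. Measurability is immediate: $f$ is $F$-measurable by hypothesis, $\texttt{cond\_exp}\, M\, F\, g$ is $F$-measurable by \texttt{borel\_measurable\_cond\_exp}, and scalar multiplication $\mathbb{R} \times E \to E$ is continuous, so $h$ is $F$-measurable and hence $\Sigma$-measurable since $F \subseteq \Sigma$.

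For integrability I would lean on the contractivity lemma. It gives $\lVert \texttt{cond\_exp}\, M\, F\, g \rVert \le \texttt{cond\_exp}\, M\, F\, (\lambda x.\, \lVert g(x) \rVert)$ $\mu$-a.e., whence $\lVert h(x) \rVert = \lvert f(x) \rvert \cdot \lVert (\texttt{cond\_exp}\, M\, F\, g)(x) \rVert \le \lvert f(x) \rvert \cdot (\texttt{cond\_exp}\, M\, F\, (\lambda x.\, \lVert g(x) \rVert))(x)$ $\mu$-a.e. Now $\lvert f \rvert$ is $F$-measurable and $\lvert f \rvert \cdot \lVert g \rVert$ is integrable, since its integral equals $\int \lVert f \cdot g \rVert \; \textrm{d}\mu < \infty$; applying the real-valued analogue of ``pulling out what's known'' for \texttt{real\_cond\_exp} (available in Gouëzel's development, or derivable by the same induction) yields $\lvert f \rvert \cdot \texttt{real\_cond\_exp}\, M\, F\, \lVert g \rVert = \texttt{real\_cond\_exp}\, M\, F\, (\lvert f \rvert \cdot \lVert g \rVert)$ $\mu$-a.e., which is integrable. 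Since \texttt{cond\_exp} agrees with \texttt{real\_cond\_exp} on real-valued integrable functions, $\lVert h \rVert$ is dominated $\mu$-a.e. by an integrable function, and being measurable, $h$ is integrable.

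The integral identity I would establish by the usual escalation in $f$. For $f = \mathbf{1}_B$ with $B \in F$ and $A \in F$, we have $A \cap B \in F$, so $\int_A \mathbf{1}_B \cdot g \; \textrm{d}\mu = \int_{A\cap B} g \; \textrm{d}\mu = \int_{A\cap B} \texttt{cond\_exp}\, M\, F\, g \; \textrm{d}\mu = \int_A \mathbf{1}_B \cdot \texttt{cond\_exp}\, M\, F\, g \; \textrm{d}\mu$ by the defining property of \texttt{has\_cond\_exp}. Linearity of the Bochner integral extends this to all $F$-measurable simple $f$. For general $F$-measurable $f$ with $f \cdot g \in L^1(E)$, I would pick simple $F$-measurable functions $f_n$ with $f_n \to f$ pointwise and $\lvert f_n \rvert \le \lvert f \rvert$; then $f_n \cdot g \to f \cdot g$ and $f_n \cdot \texttt{cond\_exp}\, M\, F\, g \to h$ pointwise $\mu$-a.e., dominated respectively by $\lVert f \cdot g \rVert$ and by $\lvert f \rvert \cdot \texttt{cond\_exp}\, M\, F\, \lVert g \rVert$, both integrable, so two applications of dominated convergence transport the identity to the limit. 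Then \texttt{cond\_exp\_charact} delivers the claim.

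\textbf{The main obstacle} I anticipate is the integrability of $f \cdot \texttt{cond\_exp}\, M\, F\, g$: one cannot invoke the characterization lemma before this is in hand, and a clean argument for it has to combine contractivity, the real-valued ``pulling out what's known'', and the agreement of \texttt{cond\_exp} with \texttt{real\_cond\_exp}. A secondary technical nuisance is producing the approximating sequence $f_n$ of simple functions that are genuinely $F$-measurable (rather than only $\Sigma$-measurable) and dominated by $\lvert f \rvert$, so that the dominated convergence steps can be carried out relative to the sub-$\sigma$-algebra $F$.
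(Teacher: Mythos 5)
Your proposal is correct and follows essentially the same route as the paper's proof: approximate $f$ by dominated $F$-measurable simple functions, use the real-valued ``pulling out what's known'' together with contractivity to manufacture the integrable dominating function for the conditional-expectation side, apply dominated convergence twice, and conclude via the characterization lemma. The only difference is organizational --- the paper proves $\int z\cdot g\,\textrm{d}\mu = \int z\cdot \texttt{cond\_exp}\,M\,F\,g\,\textrm{d}\mu$ for a general $F$-measurable test function $z$ and then specializes $z = f\cdot\mathbf{1}_A$, whereas you verify the hypotheses of \texttt{cond\_exp\_charact} directly (and are usefully more explicit about the integrability of $f\cdot\texttt{cond\_exp}\,M\,F\,g$, which the paper's sketch leaves implicit).
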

\begin{proof}
	The proof of this lemma is involved as well. Therefore we will only focus on the core idea of the proof. We will also assume that the result already holds in the real setting. We show the following seemingly less general statement for $z : \Omega \rightarrow \mathbb{R}$ $F$-measurable and $z \cdot g \in L^1(E)$:
	\[
		\int z \cdot g \; \textrm{d} \mu = \int z \cdot \texttt{cond\_exp} \; M \; F \; g \; \textrm{d}
	\]
	The result will follow by taking $z = f \cdot \mathbf{1}_A$ for $A \in F$.
	Since $z$ is measurable, there exists some sequence of simple functions $(s_n)_{n \in \mathbb{N}}$ such that
	\[
		\lim_{n \rightarrow \infty} s_n= z \; \mu\textrm{-a.e.}
	\]
	and
	\[
		\lvert s_n(x) \rvert \le 2 \cdot \lvert z(x) \rvert \; \mu\textrm{-a.e. for all } n \in \mathbb{N}
	\]
	In this case one can easily check that
	\[
		\int s_n \cdot g \; \textrm{d} \mu = \int s_n \cdot \texttt{cond\_exp} \; M \; F \; g \; \textrm{d}
	\]
	for all $n \in \mathbb{N}$
	
	By our additional assumption that the result already holds in the real case, we have
	\[
	\lvert z \cdot \texttt{cond\_exp} \; M \; F \; (\lambda x. \lVert g(x \rVert))\rvert = \texttt{cond\_exp} \; M \; F \; (\lambda x. \lvert z(x) \cdot \lVert g(x)\rVert\rvert)
	\]
	Using the contractivity of the conditional expectation and the above bound on $s_n$, it follows that 
	\[
		\lVert s_n \cdot \texttt{cond\_exp} \; M \; F \; g \rVert \le 2 \cdot \texttt{cond\_exp} \; M \; F \; (\lambda x. \lvert z(x) \cdot \lVert g(x)\rVert\rvert)
	\]
	Applying the dominated convergence theorem twice, we get
	\[
		\lim_{n \rightarrow \infty} \int s_n \cdot g \; \textrm{d} \mu = \int z \cdot g \; \textrm{d} \mu
	\]
	and
	\[
		\lim_{n \rightarrow \infty} \int s_n \cdot \texttt{cond\_exp} \; M \; F \; g \; \textrm{d} \mu = \int z \cdot \texttt{cond\_exp} \; M \; F \; g \; \textrm{d} \mu
	\]
	Since the sequence on the right hand side are equal, the statement follows from the fact that limits are unique.
\end{proof}

\subsubsection{Irrelevance of Independent Information}

Let $M = (\Omega, \Sigma, \mu)$ be a probability space, i.e. $\mu(\Omega) = 1$. Two sub-$\sigma$-algebras $F \subseteq \Sigma$ and $G \subseteq \Sigma$ are called independent, if for any $A \in F$ and $B \in G$ we have
\[
	\mu(A \cap B) = \mu(A) \mu(B)
\]
\begin{lemma}
Let $(E, \lVert \cdot\rVert)$ be a complete and normed field. Let $f : \Omega \rightarrow E$ be an integrable random variable. Let $G$ be a sub-$\sigma$-algebra which is independent from $\sigma(F \cup \sigma(f))$, where $\sigma(f)$ is the $\sigma$-algebra generated by $f$. Then,
\[
	\normalfont \texttt{cond\_exp} \; M \; (\sigma(F \cup G)) \; f = \texttt{cond\_exp} \; M \; F \; f \quad \mu\textrm{-a.e.}
\]
In particular if $f$ is independent of $F$, then 
\[
	\normalfont \texttt{cond\_exp} \; M \; F \; f = \left(\int_\Omega f \; \textrm{d} \mu\right) \quad \mu\textrm{-a.e.}
\]
\end{lemma}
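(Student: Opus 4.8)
The plan is to apply the characterization lemma \texttt{cond\_exp\_charact}. Write $g := \texttt{cond\_exp} \; M \; F \; f$; I want to show that $g$ is a version of $\texttt{cond\_exp} \; M \; (\sigma(F \cup G)) \; f$. By \texttt{integrable\_cond\_exp} and \texttt{borel\_measurable\_cond\_exp}, $g$ is integrable and $F$-measurable, hence also $\sigma(F \cup G)$-measurable since $F \subseteq \sigma(F \cup G)$. So everything reduces to verifying
\[
	\int_C f \; \textrm{d}\mu = \int_C g \; \textrm{d}\mu \qquad \text{for all } C \in \sigma(F \cup G).
\]

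First I would prove this for sets of the special form $C = A \cap B$ with $A \in F$ and $B \in G$. Since $\mathbf{1}_A$ is $F$-measurable and $f$ is $\sigma(f)$-measurable, the product $\mathbf{1}_A f$ is $\sigma(F \cup \sigma(f))$-measurable, while $\mathbf{1}_B$ is $G$-measurable; as $G$ is independent of $\sigma(F \cup \sigma(f))$, the random variables $\mathbf{1}_A f$ and $\mathbf{1}_B$ are independent, so the expectation of their product factorizes:
\[
	\int_{A \cap B} f \; \textrm{d}\mu = \int \mathbf{1}_B \cdot (\mathbf{1}_A f) \; \textrm{d}\mu = \mu(B) \int_A f \; \textrm{d}\mu.
\]
Running the same argument with $g$ in place of $f$ — note $g$ is $F$-measurable, so $\mathbf{1}_A g$ is $\sigma(F \cup \sigma(f))$-measurable and still independent of $\mathbf{1}_B$ — gives $\int_{A \cap B} g \; \textrm{d}\mu = \mu(B) \int_A g \; \textrm{d}\mu$. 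Since $A \in F$, \texttt{has\_cond\_expI} yields $\int_A f \; \textrm{d}\mu = \int_A g \; \textrm{d}\mu$, so the two sides agree.

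Next I would extend from this generating family to all of $\sigma(F \cup G)$ by a Dynkin ($\pi$--$\lambda$) argument: the sets $\{ A \cap B : A \in F, \; B \in G \}$ form a $\pi$-system generating $\sigma(F \cup G)$, and $\mathcal{D} := \{ C \in \sigma(F \cup G) : \int_C f \; \textrm{d}\mu = \int_C g \; \textrm{d}\mu \}$ is a $\lambda$-system — it contains $\Omega$ (which lies in $F$), is closed under proper differences by linearity of the Bochner integral, and closed under increasing countable unions by dominated convergence (dominating by $f$, resp.\ $g$). Hence $\mathcal{D} \supseteq \sigma(F \cup G)$, and \texttt{cond\_exp\_charact} establishes the first assertion. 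For the ``in particular'' statement I would instantiate the lemma with the trivial $\sigma$-algebra $\{\varnothing, \Omega\}$ in place of $F$ and with $F$ in place of $G$: independence of $f$ from $F$ is precisely independence of $F$ from $\sigma(\{\varnothing,\Omega\} \cup \sigma(f)) = \sigma(f)$, and $\sigma(\{\varnothing,\Omega\} \cup F) = F$, so the first part gives $\texttt{cond\_exp} \; M \; F \; f = \texttt{cond\_exp} \; M \; \{\varnothing,\Omega\} \; f$ $\mu$-a.e.; finally the constant $c := \int_\Omega f \; \textrm{d}\mu$ is $\{\varnothing,\Omega\}$-measurable and integrable (since $\mu(\Omega) = 1$) and satisfies $\int_C f \; \textrm{d}\mu = \int_C c \; \textrm{d}\mu$ for $C \in \{\varnothing,\Omega\}$, so \texttt{cond\_exp\_charact} identifies the trivial conditional expectation with $c$.

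The main obstacle is the factorization $\int \mathbf{1}_B \cdot (\mathbf{1}_A f) \; \textrm{d}\mu = \mu(B) \int_A f \; \textrm{d}\mu$, which requires the product rule for expectations of independent random variables in the $E$-valued case rather than just the real one. This is exactly where the hypothesis that $E$ is a complete normed \emph{field} enters — it makes the product of the indicator with the $E$-valued integrand, and the accompanying independence machinery, available. In the formalization this step will likely need a reduction of $\mathbf{1}_A f$ to integrable simple functions (via \texttt{integrable\_simple\_function\_induct}) together with the real-valued product formula, or the lifting of an existing independence lemma to the Banach-field setting; the $\pi$--$\lambda$ extension itself is routine but needs some care to phrase the closure properties for vector-valued integrals.
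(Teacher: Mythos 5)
Your proposal is correct and follows essentially the same route as the paper: verify the defining integral identity on the $\pi$-system $\{A \cap B : A \in F,\, B \in G\}$ by factorizing the expectation via independence of $\mathbf{1}_A f$ (resp.\ $\mathbf{1}_A \cdot \texttt{cond\_exp}\;M\;F\;f$) and $\mathbf{1}_B$, then extend to all of $\sigma(F \cup G)$ by a Dynkin-system argument and conclude with the characterization lemma. Your more explicit treatment of the $\lambda$-system closure properties and of the ``in particular'' instantiation with the trivial $\sigma$-algebra only fills in details the paper leaves implicit.
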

\begin{proof}
	We need to show that
	\[
		\int \mathbf{1}_A \cdot f \; \textrm{d} \mu = \int \mathbf{1}_A \cdot \texttt{cond\_exp} \; M \; F \; f \; \textrm{d} \mu
	\]
	for all $A \in \sigma(F \cup G)$. Let $D$ be the collection of all $A \in \sigma(F \cup G)$ such that the statement above holds. It is clear that $D$ is a Dynkin system (also called a $\lambda$-system). The set $\{A \cap B \;\vert\; A \in F \wedge B \in G\}$ generates the $\sigma$-algebra $\sigma(F \cup G)$. Hence, it is enough to show that the property holds for elements of this set. Let $A \in F$ and $B \in G$. We have
\begin{align*}
	\int \mathbf{1}_{A \cap B} \cdot f \; \textrm{d} \mu &= \int \mathbf{1}_A \mathbf{1}_B \cdot f \; \textrm{d} \mu \\
	&= \left(\int \mathbf{1}_A \cdot f \; \textrm{d} \mu\right) \left(\int \mathbf{1}_B\; \textrm{d} \mu \right) \\
	&= \left(\int \mathbf{1}_A \cdot \texttt{cond\_exp} \; M \; F \; f \; \textrm{d} \mu \right) \left(\int \mathbf{1}_B\; \textrm{d} \mu \right) \\
	&= \int \mathbf{1}_{A \cap B} \cdot \texttt{cond\_exp} \; M \; F \; f \; \textrm{d} \mu
\end{align*}
Here we have used the independence of $\mathbf{1}_A \cdot f$ and $\mathbf{1}_B$, as well as the independence of $\mathbf{1}_A \cdot \texttt{cond\_exp} \; M \; F \; f$ and $\mathbf{1}_B$
\end{proof}

In order to show this property in Isabelle, we have used the induction scheme \texttt{sigma\_sets\_induct\_disjoint}, which lets us replicate the $\lambda$-system argument used above.

\section{Conditional Expectation on Linearly Ordered Banach Spaces}

In the presence of a linear ordering, we can prove certain monotonicity properties of the conditional expectation. We start with the following two lemmas

\begin{lemma}
	Let $f \in L^1(E)$. Assume $f \ge c$ $\mu$-a.e. for some $c \in E$. Then 
	\[
		\normalfont\texttt{cond\_exp} \; M \; F \; f \ge c \quad \mu\textrm{-a.e.}
	\]
\end{lemma}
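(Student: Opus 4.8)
The plan is to derive this from the averaging theorem (Theorem~\ref{thm:averaging_theorem}), applied to $g := \texttt{cond\_exp}\; M\; F\; f$ on the restricted measure space $(\Omega, F, \mu\vert_F)$ with the closed set $S := \{y \in E : y \ge c\}$. Note that $S$ is indeed closed: since $E$ is linearly ordered its topology is the order topology, so the ray $\{y : y < c\}$ is open and $S$ is its complement --- the same observation used in the monotonicity corollary of the averaging theorem. Also $g$ is integrable with respect to $M$ by \texttt{integrable\_cond\_exp}, hence integrable with respect to $\mu\vert_F$ as well (it is $F$-measurable), and $(\Omega, F, \mu\vert_F)$ is $\sigma$-finite by the assumption carried throughout the conditional-expectation development.

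The first real step is to verify the hypothesis of the averaging theorem for $g$ on $(\Omega, F, \mu\vert_F)$. Fix $A \in F$ with $0 < \mu(A) < \infty$. From $f \ge c$ $\mu$-a.e.\ we get $\mathbf{1}_A \cdot f \ge \mathbf{1}_A \cdot c$ $\mu$-a.e., and both functions are integrable because $\mu(A) < \infty$; the monotonicity corollary of Lemma~\ref{nonneg_integral}, applied to $\mathbf{1}_A \cdot f$ and $\mathbf{1}_A \cdot c$, then gives $\int_A f \; \textrm{d}\mu \ge \int_A c \; \textrm{d}\mu = \mu(A) \cdot c$, the last equality being the Bochner integral of the simple function $\mathbf{1}_A \cdot c$. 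By \texttt{has\_cond\_expI} we have $\int_A g \; \textrm{d}\mu = \int_A f \; \textrm{d}\mu$, and since $g$ is $F$-measurable this also equals $\int_A g \; \textrm{d}(\mu\vert_F)$. Dividing by $\mu(A) > 0$, which preserves the order (the same order-compatibility already used in the indicator-function case of Lemma~\ref{nonneg_integral}), yields $\tfrac{1}{\mu\vert_F(A)} \int_A g \; \textrm{d}(\mu\vert_F) \ge c$, i.e.\ this average lies in $S$. The averaging theorem therefore yields $g(x) \in S$, that is $g(x) \ge c$, for $\mu\vert_F$-almost every $x$.

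It remains to transfer this back to $\mu$. The set $\{x : g(x) < c\}$ belongs to $F$ because $g$ is $F$-measurable and $\{y : y < c\}$ is open, and it is $\mu\vert_F$-null by the previous step; since $\mu$ and $\mu\vert_F$ agree on $F$, it is $\mu$-null as well. Hence $\texttt{cond\_exp}\; M\; F\; f = g \ge c$ $\mu$-a.e.

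I expect the main nuisance to be the bookkeeping between the two measures $\mu$ and $\mu\vert_F$: the integral comparison $\int_A f \ge \mu(A) \cdot c$ lives naturally over $\mu$, where $f$ resides, but the averaging theorem must be run over the $\sigma$-finite space $(\Omega, F, \mu\vert_F)$, where the $F$-measurability of $g$ lets one localise to sets of $F$ and then identify the limit value. The tempting shortcut of applying the averaging theorem to $g - c$ with $S = \{\,\cdot \ge 0\,\}$ fails, since $g - c$ need not be integrable when $\mu(\Omega) = \infty$; applying it to $g$ with the shifted closed set $S = \{\,\cdot \ge c\,\}$ is precisely what sidesteps this.
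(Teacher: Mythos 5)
Your proposal is correct and follows essentially the same route as the paper: both verify, for $A \in F$ of finite nonzero measure, that $\frac{1}{\mu(A)}\int_A \texttt{cond\_exp}\,M\,F\,f \;\textrm{d}(\mu\vert_F) = \frac{1}{\mu(A)}\int_A f\;\textrm{d}\mu \ge c$ via monotonicity of the integral and the defining property of the conditional expectation, and then apply the averaging theorem on the $\sigma$-finite space $(\Omega, F, \mu\vert_F)$ with the closed set $\{y \in E : y \ge c\}$. Your version is somewhat more explicit about the $\mu$ versus $\mu\vert_F$ bookkeeping and about why one cannot shortcut via $g - c$, but the underlying argument is identical.
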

\begin{proof}
	We will show the statement using the averaging theorem (\ref{thm:averaging_theorem}). Let $A \in F$ be a measurable set with $\mu(A) < \infty$. Then
	\begin{align*}
		c &= \frac{1}{\mu(A)} \int_A c \; \textrm{d} \mu \\
		&\le \frac{1}{\mu(A)} \int_A f \; \textrm{d} \mu \\
		&= \frac{1}{\mu(A)} \int_A \texttt{cond\_exp} \; M \; F \; f \; \textrm{d} \mu \\
		&= \frac{1}{\mu(A)} \int_A \texttt{cond\_exp} \; M \; F \; f \; \textrm{d} \mu\vert_F \\
	\end{align*}
	Hence $\int_A \texttt{cond\_exp} \; M \; F \; f \; \textrm{d} \mu\vert_F \in \{x \in E \;\vert\; x \ge c \}$. The statement follows from the fact that $\{x \in E \;\vert\; x \ge c \}$ is closed.
\end{proof}

\begin{lemma}
	Let $f \in L^1(E)$. Assume $f > c$ $\mu$-a.e. for some $c \in E$. Then 
	\[
		\normalfont\texttt{cond\_exp} \; M \; F \; f > c \quad \mu\textrm{-a.e.}
	\]
\end{lemma}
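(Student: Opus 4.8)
The plan is to show that the exceptional set $N := \{x \in \Omega \mid \texttt{cond\_exp}\ M\ F\ f\ x \le c\}$ is $\mu$-null. Since $\{z \in E \mid z \le c\}$ is closed in the order topology and $\texttt{cond\_exp}\ M\ F\ f$ is $F$-measurable, $N \in F$. Suppose for contradiction that $\mu(N) > 0$. As the conditional expectation is set up over a space on which $\mu\vert_F$ is $\sigma$-finite, $N$ is a countable union of $F$-sets of finite measure, so we may fix $A \in F$ with $A \subseteq N$ and $0 < \mu(A) < \infty$. Put $g := \texttt{cond\_exp}\ M\ F\ f$ and $h := (\lambda x.\ f\ x - c)$. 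By \texttt{has\_cond\_expI} (using integrability of $f$) we have $\int_A f\ \textrm{d}\mu = \int_A g\ \textrm{d}\mu$, and since $g \le c$ on $A$ the monotonicity of the Bochner integral on linearly ordered Banach spaces (the corollary following Lemma~\ref{nonneg_integral}) gives $\int_A g\ \textrm{d}\mu \le \int_A c\ \textrm{d}\mu = \mu(A) \cdot c$. Hence $\int_A h\ \textrm{d}\mu \le 0$. On the other hand $f > c$ $\mu$-a.e.\ forces $h \ge 0$ $\mu$-a.e., so Lemma~\ref{nonneg_integral} (applied to $\mathbf{1}_A \cdot h$) gives $\int_A h\ \textrm{d}\mu \ge 0$. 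Therefore $\int_A h\ \textrm{d}\mu = 0$, and the task reduces to deriving a contradiction from $h > 0$ $\mu$-a.e.\ together with $\mu(A) > 0$.

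For this I would strengthen the pointwise strict inequality into a bound against a fixed positive element. Since $E$ is a nontrivial Banach space it is connected, so for every $y \in E$ with $y > 0$ the open interval between $0$ and $y$ is nonempty (otherwise $\{z \in E \mid z > 0\}$ would be a nonempty proper clopen subset of $E$); combined with separability of $E$ this yields a countable set $D \subseteq \{z \in E \mid z > 0\}$ with $\{z \in E \mid z > 0\} = \bigcup_{d \in D} \{z \in E \mid z \ge d\}$, picking for each $y > 0$ a point of a countable dense set inside the nonempty open interval between $0$ and $y$. Applying this to $h$, the set $\{x \in A \mid h\ x > 0\}$ coincides $\mu$-a.e.\ with $\bigcup_{d \in D}\{x \in A \mid h\ x \ge d\}$ and has measure $\mu(A) > 0$, so there is $d \in D$ with $B := \{x \in A \mid h\ x \ge d\} \in \Sigma$ and $\mu(B) > 0$. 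Now $h \cdot \mathbf{1}_B \ge d \cdot \mathbf{1}_B$ $\mu$-a.e., hence $\int_B h\ \textrm{d}\mu \ge \mu(B) \cdot d > 0$ (a positive scalar times a positive vector is positive), and since $h \ge 0$ $\mu$-a.e.\ also $\int_A h\ \textrm{d}\mu \ge \int_B h\ \textrm{d}\mu > 0$, contradicting $\int_A h\ \textrm{d}\mu = 0$. Hence $\mu(N) = 0$, i.e.\ $\texttt{cond\_exp}\ M\ F\ f > c$ $\mu$-a.e.

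I expect the last step to be the main obstacle. The non-strict analogue (the preceding lemma) drops straight out of the averaging theorem, but converting $f > c$ $\mu$-a.e.\ into something quantitative needs the covering of $\{z \in E \mid z > 0\}$ by countably many up-sets $\{z \in E \mid z \ge d\}$ — trivial for $\mathbb{R}$ via $d = 1/n$, but for a general linearly ordered Banach space it has to be squeezed out of second-countability and the connectedness of $E$ (which is exactly what excludes the pathological case that $0$ has an immediate successor). A tidier-looking alternative would be to prove that a nontrivial linearly ordered Banach space is order-isomorphic and homeomorphic to $\mathbb{R}$ and then invoke the real-valued version of the statement, but this would require formalizing the classification of separable connected linearly ordered topological spaces, so the direct argument is probably the cheaper one. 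A little additional care is needed for degenerate cases: if $E = \{0\}$ the claim is vacuous unless $\Omega$ is itself null (where it is immediate), and $c$ cannot be a maximum once the hypothesis $f > c$ $\mu$-a.e.\ is non-vacuous.
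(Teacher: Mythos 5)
Your proof is correct and follows the same skeleton as the paper's: isolate the $F$-measurable set where $\texttt{cond\_exp}\;M\;F\;f \le c$, pass to a piece of finite positive measure using $\sigma$-finiteness of $\mu\vert_F$, show that the integrals of $f$ and of the constant $c$ over that piece agree, and conclude that this is incompatible with $f > c$ $\mu$-a.e.\ unless the piece is null. The one place you genuinely diverge is the final step, which is also where the real content of the lemma sits (the paper itself notes the averaging theorem cannot be applied directly here). You prove ``$h > 0$ $\mu$-a.e.\ and $\mu(A) > 0$ imply $\int_A h\,\textrm{d}\mu > 0$'' from scratch by covering $\{z \in E \mid z > 0\}$ with countably many up-sets $\{z \mid z \ge d\}$, using connectedness of $E$ to exclude an immediate successor of $0$ and separability to extract a countable $D$; this is sound, and your treatment of the degenerate cases ($E = \{0\}$, $c$ maximal) is right. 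The paper instead delegates this step to a corollary (the cross-reference ``Corollary 3.1.16'' is garbled, but the intended statement is that an a.e.-nonnegative integrable function with vanishing integral is $0$ a.e.), and that corollary follows in two lines from results already established in the preliminaries: if $g \ge 0$ $\mu$-a.e.\ and $\int g\,\textrm{d}\mu = 0$, then $0 \le \int_B g\,\textrm{d}\mu \le \int g\,\textrm{d}\mu = 0$ for \emph{every} measurable $B$ by the monotonicity corollary to Lemma~\ref{nonneg_integral}, and the corollary to the averaging theorem on functions whose integral vanishes on all measurable sets then gives $g = 0$ a.e. So your covering argument, while valid, is heavier than necessary: the quantitative lower bound you squeeze out of second-countability and connectedness can be bypassed entirely by reducing to vanishing integrals over all measurable subsets, which is presumably why the paper never needs the structure of $\{z \in E \mid z > 0\}$ at all.
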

\begin{proof}
	The averaging theorem is not applicable in this case since $\{x \in E \;\vert\; x > c \}$ is not closed. Therefore, we argue as follows.
	
	Let $S = \{\texttt{cond\_exp} \; M \; F \; f \le c\}$. The conditional expectation $\texttt{cond\_exp} \; M \; F \; f$ is $F$-measurable, hence $S \in F$. Since $F$ is a $\sigma$-finite sub-$\sigma$-algebra, we can assume without loss of generality that $\mu(S) < \infty$.
	The assumption $f > c$ $\mu$-a.e. implies
	\[
		\int_S f \; \textrm{d} \mu \ge \int_S c \; \textrm{d} \mu
	\]
	Furthermore, by the definition of $S$
	\begin{align*}
		\int_S c \; \textrm{d} \mu &\ge \int_S \texttt{cond\_exp} \; M \; F \; f \; \textrm{d} \mu \\
		&= \int_S f \; \textrm{d} \mu
	\end{align*}
	Hence $\int_S f \; \textrm{d} \mu = \int_S c \; \textrm{d} \mu$. By Corollary 3.1.16, we have
	\[
		\mathbf{1}_S \cdot f = \mathbf{1}_S \cdot c \quad \mu\textrm{-a.e.}
	\]
	Because of our assumption $f > c$ $\mu$-a.e., this can only be the case if $S$ is a $\mu$-null set, which completes the proof.
\end{proof}

The corresponding lemmas for $(\le)$ and $(<)$ are simple corollaries. The operator's monotonicity is also a corollary.

\begin{corollary}
	Let $f, g \in L^1(E)$. Assume $f \ge g$ $\mu$-a.e. Then 
	\[
		\normalfont\texttt{cond\_exp} \; M \; F \; f \ge \texttt{cond\_exp} \; M \; F \; g \quad \mu\textrm{-a.e.}
	\]
\end{corollary}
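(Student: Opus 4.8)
The plan is to reduce the claim to the preceding lemma (the case $c = 0$: if $h \ge 0$ $\mu$-a.e.\ then $\texttt{cond\_exp}\;M\;F\;h \ge 0$ $\mu$-a.e.) by exploiting the linearity of the conditional expectation operator. Since $f \ge g$ $\mu$-a.e., the function $h := f - g$ lies in $L^1(E)$ and satisfies $h \ge 0$ $\mu$-a.e., so it suffices to identify $\texttt{cond\_exp}\;M\;F\;h$ with $\texttt{cond\_exp}\;M\;F\;f - \texttt{cond\_exp}\;M\;F\;g$ up to a $\mu$-null set.

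To establish that identity I would first record the linearity facts needed. By \texttt{has\_cond\_expI}, both $\texttt{has\_cond\_exp}\;M\;F\;f\;(\texttt{cond\_exp}\;M\;F\;f)$ and $\texttt{has\_cond\_exp}\;M\;F\;g\;(\texttt{cond\_exp}\;M\;F\;g)$ hold because $f, g \in L^1(E)$. The function $-\,\texttt{cond\_exp}\;M\;F\;g$ is $F$-measurable, integrable, and satisfies $\int_A (-\,\texttt{cond\_exp}\;M\;F\;g)\,\textrm{d}\mu = -\int_A g\,\textrm{d}\mu = \int_A (-g)\,\textrm{d}\mu$ for every $A \in F$; hence $\texttt{cond\_exp\_charact}$ gives $\texttt{cond\_exp}\;M\;F\;(-g) = -\,\texttt{cond\_exp}\;M\;F\;g$ $\mu$-a.e., and in particular $\texttt{has\_cond\_exp}\;M\;F\;(-g)\;(-\,\texttt{cond\_exp}\;M\;F\;g)$. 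Applying the lemma on the sum of two conditional expectations to $f$ and $-g$ yields $\texttt{has\_cond\_exp}\;M\;F\;(f - g)\;(\texttt{cond\_exp}\;M\;F\;f - \texttt{cond\_exp}\;M\;F\;g)$, and one more appeal to $\texttt{cond\_exp\_charact}$ gives
\[
	\texttt{cond\_exp}\;M\;F\;(f - g) = \texttt{cond\_exp}\;M\;F\;f - \texttt{cond\_exp}\;M\;F\;g \quad \mu\textrm{-a.e.}
\]

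Now applying the previous lemma with $c = 0$ to $h = f - g \ge 0$ gives $\texttt{cond\_exp}\;M\;F\;(f - g) \ge 0$ $\mu$-a.e., and combining this with the displayed identity yields $\texttt{cond\_exp}\;M\;F\;f - \texttt{cond\_exp}\;M\;F\;g \ge 0$ $\mu$-a.e., which is the assertion. I do not expect a genuine obstacle here: all the mathematical content sits in the earlier lemmas (the averaging-theorem-based monotonicity statement and the additivity of \texttt{has\_cond\_exp}), and the only care required is the routine bookkeeping of intersecting the finitely many $\mu$-null sets outside of which the various a.e.\ equalities and inequalities hold.
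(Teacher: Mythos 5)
Your proof is correct and follows the same route as the paper: set $h = f - g \ge 0$, use linearity of \texttt{cond\_exp} (via the additivity of \texttt{has\_cond\_exp} and the characterization lemma) to identify $\texttt{cond\_exp}\;M\;F\;(f-g)$ with the difference of the conditional expectations, and apply the preceding lemma with $c=0$. The paper's proof is just a terser version of this, asserting the linearity step without the bookkeeping you spell out.
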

\begin{proof}
	From the assumptions, we have $f - g \ge 0$ $\mu$-a.e. and $f - g \in L^1(E)$. Hence
	\[
		\texttt{cond\_exp} \; M \; F \; (f - g) = \texttt{cond\_exp} \; M \; F \; f - \texttt{cond\_exp} \; M \; F \; g \ge 0
	\]
\end{proof}

Apart from some auxillary lemmas, this wraps up our overview of the formalization of the conditional expectation operator.

% !TeX root = ../main.tex
% Add the above to each chapter to make compiling the PDF easier in some editors.

\chapter{Stochastic Processes}\label{chapter:stochastic_processes}

It would not make sense to talk about martingales without introducing stochastic processes first. In standard terminology, a stochastic process is a collection of random variables defined on the same probability space. The indexing set often represents time, and each random variable in the collection corresponds to an outcome at a specific point of time in that set.

Take the example of stock price movement, where each day's stock price is a random variable influenced by a variety of uncertain factors. This sequence of prices forms a stochastic process, describing the stock's behavior. Another instance is the Poisson process, which models events like customer arrivals at a service center. This process captures the randomness in the timing of arrivals, aiding in optimizing resource allocation and enhancing customer service. In physics, Brownian motion characterizes the unpredictable and continuous trajectory followed by particles suspended in a medium due to random collisions with surrounding molecules, which is again modelled as a stochastic process. The theory of stochastic processes is the cornerstone for analysing randomness and building models that mirror real-world uncertainties.

Keeping this in consideration, we aim to build a comprehensive foundation for a theory of stochastic process in Isabelle. Since the definition is so straightforward, it usually suffices to just consider a collection of measurable functions to make formal statements about stochastic processes. There is not much to gain from making an explicit definition on its own. Nonetheless, we must create a framework to discuss stochastic processes that can afterwards be broadened to formalize concepts like adaptedness and predictability. Locales present themselves as the solution we are looking for.

The locale system in Isabelle is useful for managing large formal developments, as it promotes modularity and reusability. It allows us to define generic theorems and structures in one place and then reuse them in multiple contexts without duplicating efforts. For instance, when defining filtered measure spaces in the following section, we will need to have an element act as the de facto bottom element of an index type. Locales allow us to easily fix such an element for this purpose. The results formalized in this chapter can be found in the theory files \texttt{Martingale.Filtered\_Measure}, \texttt{Martingale.Measure\_Space\_Supplement} and \texttt{Martingale.Stochastic\_Process} in \cite{Keskin_A_Formalization_of_2023}.
\vspace{2em}\\

We state the definition of a stochastic process.

\begin{definition}
	Let $(\Omega, \Sigma, \mu)$ be a measure space. Let $(X_i)_{i \in I}$ be a family of functions with $X_i : \Omega \rightarrow E$ for $i \in I$. The collection $(X_i)_{i \in I}$ is called a stochastic process if $X_i$ is $\Sigma$-measurable for all $i \in I$.
\end{definition}

Subsequently, we introduce the corresponding locale.

\begin{isadefinition}
{\small
\begin{lstlisting}[style=isabelle]
locale stochastic_process =
  fixes $M \; t_0$ 
    and $X$ :: "$'b \; :: \; \{$second_countable_topology, order_topology, t2_space$\} \Rightarrow \; 'a \; \Rightarrow \; 'c$"
  assumes random_variable[measurable]: "$\bigwedge i. \; t_0 \le i \implies X \; i \in \texttt{borel\_measurable} \; M$"
\end{lstlisting}
}
\end{isadefinition}

The measure $M$ represents the underlying measure space on which the stochastic process is defined. The index $t_0$ represents the initial point in time beyond which the process $X$ should be defined. As such, this locale formalizes a stochastic process defined on the set $[t_0, \infty) = \{t \;\vert\; t \ge t_0\}$. In the upcoming section, we will discuss why we chose to constrain the index set in this manner.

We have the following introduction lemmas for ``constant'' stochastic processes.

\begin{isalemma}
{\small
\begin{lstlisting}[style=isabelle]
lemma stochastic_process_const_fun:
  assumes "$f \in \texttt{borel\_measurable} \; M$"
  shows "$\texttt{stochastic\_process} \; M \; t_0 \; (\lambda \_. \; f)$"

lemma stochastic_process_const:
  shows "$\texttt{stochastic\_process} \; M \; t_0 \; (\lambda i \; \_. \; c \; i)$" 

\end{lstlisting}
}
\end{isalemma}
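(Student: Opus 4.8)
The plan is to prove both statements directly by unfolding the \texttt{stochastic\_process} locale and discharging its single proof obligation, namely that $X\ i$ is Borel measurable whenever $t_0 \le i$. In fact, in both cases the process will turn out to be measurable at \emph{every} index, so the hypothesis $t_0 \le i$ is not even needed.

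For \texttt{stochastic\_process\_const\_fun} the process is $X = (\lambda\_.\ f)$, so $X\ i = f$ for every $i$, and the required fact $X\ i \in \texttt{borel\_measurable}\ M$ is literally the hypothesis $f \in \texttt{borel\_measurable}\ M$. Hence the locale assumption follows by \texttt{unfold\_locales} and then \texttt{assumption} (or a one-line \texttt{simp}). For \texttt{stochastic\_process\_const} the process is $X = (\lambda i\ \_.\ c\ i)$, so $X\ i = (\lambda\_.\ c\ i)$ is a constant function; constant functions are Borel measurable (\texttt{borel\_measurable\_const} from \texttt{HOL-Analysis}), which is why this lemma needs no assumptions at all. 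Again \texttt{unfold\_locales} followed by a \texttt{measurable} or \texttt{simp} step closes the goal.

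There is essentially no obstacle here: both are immediate consequences of the definition of the locale together with measurability of constant functions (and, in the first case, of the supplied measurable function $f$). The only point worth a glance is that the index type $'b$ must be instantiated at a type in the classes \texttt{second\_countable\_topology}, \texttt{order\_topology} and \texttt{t2\_space} required by the locale signature — but since these lemmas are stated inside that signature, nothing has to be checked. Their purpose is purely to provide convenient introduction rules for ``constant'' stochastic processes to be reused later.
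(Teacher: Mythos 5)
Your proof is correct and matches what the paper (implicitly) does: the paper states these as immediate introduction lemmas whose only obligation is the measurability of $X\ i$, which in the first case is exactly the hypothesis on $f$ and in the second case is measurability of a constant function. Nothing further is needed.
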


The lemmas state that we can define a stochastic process by setting $X_i = f$ for a measurable function $f : \Omega \rightarrow E$, or via $X_i = c$ for some $c \in E$. Furthermore, by composing a Borel-measurable function with a stochastic process, we can obtain another stochastic process. We formalize this statement as follows.

\begin{isalemma}
{\small
\begin{lstlisting}[style=isabelle]
lemma compose_stochastic:
  assumes "$\bigwedge i. \; t_0 \le i \implies f \; i \in \texttt{borel\_measurable} \; \texttt{borel}$"
  shows "$\texttt{stochastic\_process} \; M \; t_0 \; (\lambda i \; x. \; (f \; i) \; (X \; i \; x))$"
  \end{lstlisting}
}
\end{isalemma}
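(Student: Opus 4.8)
The plan is to discharge the single proof obligation of the \texttt{stochastic\_process} locale for the composed family, namely that $x \mapsto (f\,i)(X\,i\,x)$ is Borel measurable on $M$ whenever $t_0 \le i$. Since \texttt{compose\_stochastic} is stated inside the \texttt{stochastic\_process} locale, we may freely use the assumption \texttt{random\_variable}, which already gives $X\,i \in \texttt{borel\_measurable}\,M$ for every such index $i$.

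First I would fix an index $i$ with $t_0 \le i$. From the locale assumption, $X\,i$ is a measurable map from $M$ into the Borel space on the type $'c$; from the hypothesis of the lemma, $f\,i$ is a measurable map from that same Borel space into itself. The function $x \mapsto (f\,i)(X\,i\,x)$ is exactly the composition $f\,i \circ X\,i$, hence measurable by the general fact that the composition of two measurable functions is measurable (in Isabelle, \texttt{measurable\_comp}, applied with the matching Borel spaces; in practice the \texttt{measurable} method should close this goal directly once \texttt{random\_variable} and the hypothesis are available). Establishing this for arbitrary $i \ge t_0$ is precisely the \texttt{random\_variable} obligation for the new process, so the locale instance \texttt{stochastic\_process}\;$M\;t_0\;(\lambda i\,x.\,(f\,i)(X\,i\,x))$ follows.

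There is essentially no hard part. The only point that requires a glance is that the measurable space appearing as the codomain of $X\,i$ agrees with the one appearing as the domain of $f\,i$ — both being \texttt{borel} on the type $'c$ — so that the composition lemma applies with no change-of-space step. The \texttt{second\_countable\_topology}, \texttt{order\_topology} and \texttt{t2\_space} constraints on $'c$ play no active role in this argument; they are inherited from the ambient locale and serve only to make the locale itself well-formed.
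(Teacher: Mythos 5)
Your proposal is correct and matches the paper's (implicit) argument: the paper states this lemma without a written proof precisely because the only locale obligation is the \texttt{random\_variable} condition, which is discharged by composing the $M$-to-Borel measurability of $X\,i$ with the Borel-to-Borel measurability of $f\,i$. Your remark that the type-class constraints on $'c$ play no active role here is also accurate.
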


In the upcoming sections, we will observe how the assumptions for these statements change, when we place further restrictions on the collection of functions $(X_t)_{t \in [t_0,\infty)}$. For sake of completeness, we also provide lemmas which show that stochastic processes are stable under various operations on a vector space, such as norming, multiplication by a scalar valued function, addition, and partial sums over indices.

We also introduce the following sublocales to easily make statements about discrete-time and continuous-time stochastic processes.

\begin{isadefinition}
{\small
\begin{lstlisting}[style=isabelle]
locale nat_stochastic_process = stochastic_process $M$ "0 :: nat" $X$ for $M \; X$
locale real_stochastic_process = stochastic_process $M$ "0 :: real" $X$ for $M \; X$
\end{lstlisting}
}
\end{isadefinition}

By explicitly designating an element $t_0$ to be the bottom element, we can formalize continuous-time stochastic processes, i.e. $(X_t)_{t \in \mathbb{R}_{\ge 0}}$, without the need for introducing a new type for non-negative real numbers. 

\begin{remark}
Moving forward, we will define the concepts of adaptedness, progressive measurability and predictability. In our formalization, we have introduced analogous lemmas and sublocales for these process varieties as well. To avoid repeating ourselves, we will only reiterate these statements, if the proofs become non-trivial or if the assumptions change.
\end{remark}

Before presenting the remaining process varities, we must introduce the concept of a filtered measure space.

\section{Filtered Measure Spaces}

A filtered measure space is a measure space equipped with a sequence of increasing sub-$\sigma$-algebras, called a \textit{filtration} that represents the accumulation of information over time.

Let $M$ be a measure space. Assume we have a sequence of $\sigma$-algebras $(F_n)_{n \in \mathbb{N}}$ where 
\[
	F_0 \subseteq F_1 \subseteq F_2 \subseteq \dots
\]
This sequence forms a filtration on $M$. Intuitively, each $F_n$ represents the information available up to time $n$. In general, the index set does not need to be countable. The exact definition of a filtration is as follows. 
\begin{definition}
	Let $M = (\Omega, \Sigma, \mu)$ be a measure space and $(F_t)_{t \in I}$ a collection of sub-$\sigma$-algebras of $\Sigma$. $(F_t)_{t \in I}$ is called a \textit{filtration} of the measure space $M$, if for any $i, j\in I$ with $i \le j$ we have 
	\[
		F_i \subseteq F_j
	\]
	Hence, the sub-$\sigma$-algebras $F_i$ reflect the accumulation of information over time.
\end{definition}

In Isabelle, we define the following locale to capture this concept.

\begin{isadefinition}
{\small
\begin{lstlisting}[style=isabelle]
locale filtered_measure = 
  fixes $M \; F$ and $t_0$ :: "$'b \; :: \; \{$second_countable_topology, order_topology, t2_space$\}$"
  assumes subalgebra: "$\bigwedge i. \; t_0 \le i \implies \texttt{subalgebra} \; M \; (F \; i)$"
      and sets_F_mono: "$\bigwedge i \; j. \; t_0 \le i \implies i \le j \implies \texttt{sets} \; (F \; i) \subseteq \texttt{sets} \; (F \; j)$"
\end{lstlisting}

with the predicate \texttt{subalgebra} in \texttt{HOL-Probability.Conditional\_Expectation} defined via

\begin{lstlisting}[style=isabelle]
definition subalgebra where
  "$\texttt{subalgebra} \; M \; F = ((\texttt{space} \; F = \texttt{space} \; M) \wedge (\texttt{sets} \; F \subseteq \texttt{sets} \; M))$"
  \end{lstlisting}
}
\end{isadefinition}

\begin{remark}
	In Isabelle the $\texttt{measure}$ type is used to represent both measure spaces and $\sigma$-algebras. The latter is achieved by only considering the underlying $\sigma$-algebra via the projection ``\texttt{sets}''.
\end{remark}

In general, a type with an ordering does not necessarily inhabit a bottom element, i.e. an element that is less than or equal to any other element. In the next section, we will see how the existence of a bottom element lets us easily make statements concerning when a family of random variables constitutes an adapted process. From a practical point of view, this is not too much to assume, since all random processes one encounters in the real world must start at some fixed point in time (or at least that assumption can be made for practical purposes).

The keen reader might have noticed that we need a little bit more to define martingales properly. Namely, the sub-$\sigma$-algebras that comprise the filtration $(F_n)_{n \in \mathbb{N}}$ must all be $\sigma$-finite. Otherwise, we can not make use of our lemmas concerning the conditional expectation. We introduce the following locale to adress this issue.

\begin{isadefinition}
{\small
\begin{lstlisting}[style=isabelle]
locale sigma_finite_filtered_measure = filtered_measure +
  assumes sigma_finite: "sigma_finite_subalgebra $M \; (F \; t_0)$"
  \end{lstlisting}
}
\end{isadefinition}

\begin{remark}
	Since we artifically designated an element $t_0$ to represent the least index in consideration, we only need to show $\sigma$-finiteness for the sub-$\sigma$-algebra $F_{t_0}$. $\sigma$-finiteness of all other sub-$\sigma$-algebras follows from the monotonicity of the filtration.
\end{remark}

For the sake of completeness, we also introduce a local covering the case where the measure space is finite.

\begin{isadefinition}
{\small
\begin{lstlisting}[style=isabelle]
locale finite_filtered_measure = filtered_measure + finite_measure
  \end{lstlisting}
}
\end{isadefinition}

In order to make the ideas in this section a bit more concrete, we will now discuss two example filtrations. 

If we have some sub-$\sigma$-algebra $F \subseteq \Sigma$, then we can trivially take as our filtration $F_i = F$ for all $i \in [t_0,\infty)$. If we additionally know that we are working with a $\sigma$-finite subalgebra, then this yields a trivial $\sigma$-finite filtration on $M$. This choice of filtration is called a \textbf{constant filtration}. In Isabelle, we have the following lemma and the sublocale relation to reflect this.

\begin{isalemma}
{\small
\begin{lstlisting}[style=isabelle]
lemma filtered_measure_constant_filtration:
  assumes "$\texttt{subalgebra} \; M \; F$"
  shows "$\texttt{filtered\_measure} \; M \; (\lambda \_. \; F) \; t_0$"

sublocale sigma_finite_subalgebra $\subseteq$ constant_filtration: 
		sigma_finite_filtered_measure $M$ "$(\lambda$_. $F$)" $t_0$
\end{lstlisting}
}
\end{isalemma}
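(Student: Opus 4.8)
The plan is to discharge the lemma first and then bootstrap the sublocale relation from it; both goals unfold to essentially trivial content.

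For \texttt{filtered\_measure\_constant\_filtration}, I would proceed by \texttt{unfold\_locales}, which leaves exactly the two axioms of \texttt{filtered\_measure} instantiated at the triple $(M,\ \lambda\_.\,F,\ t_0)$. The \texttt{subalgebra} obligation asks for $\texttt{subalgebra}\ M\ ((\lambda\_.\,F)\ i)$ whenever $t_0 \le i$; after $\beta$-reduction this is literally the hypothesis $\texttt{subalgebra}\ M\ F$. The \texttt{sets\_F\_mono} obligation asks for $\texttt{sets}\ ((\lambda\_.\,F)\ i) \subseteq \texttt{sets}\ ((\lambda\_.\,F)\ j)$ whenever $t_0 \le i \le j$; this reduces to $\texttt{sets}\ F \subseteq \texttt{sets}\ F$, i.e.\ reflexivity of subset. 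A one-liner such as \texttt{by unfold\_locales simp\_all} should close it.

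For the sublocale declaration, I would reason inside the ambient locale \texttt{sigma\_finite\_subalgebra} $M$ $F$, which already supplies both $\texttt{subalgebra}\ M\ F$ and the fact $\texttt{sigma\_finite\_subalgebra}\ M\ F$ itself (this is exactly the locale assumption). The target locale \texttt{sigma\_finite\_filtered\_measure} extends \texttt{filtered\_measure} by the single extra axiom $\texttt{sigma\_finite\_subalgebra}\ M\ (F\ t_0)$. So I would invoke \texttt{filtered\_measure\_constant\_filtration} applied to the ambient $\texttt{subalgebra}\ M\ F$ to obtain $\texttt{filtered\_measure}\ M\ (\lambda\_.\,F)\ t_0$, and then observe that the remaining obligation $\texttt{sigma\_finite\_subalgebra}\ M\ ((\lambda\_.\,F)\ t_0)$ $\beta$-reduces to $\texttt{sigma\_finite\_subalgebra}\ M\ F$, which is the enclosing assumption; \texttt{unfold\_locales} together with these two facts finishes the interpretation.

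I do not expect a genuine obstacle. The mildest point of care is parameter bookkeeping: \texttt{sigma\_finite\_subalgebra} fixes only $M$ and $F$, so the index type and the base point $t_0$ enter fresh in the sublocale statement and the interpretation must go through uniformly in $t_0$; but since no step uses any property of $t_0$ whatsoever --- not even $t_0 \le t_0$, because the constant family ignores its argument --- this is automatic. In essence the statement says only that a constant family of $\sigma$-algebras is (trivially) monotone and that its member at index $t_0$ is the $\sigma$-finite subalgebra one started with.
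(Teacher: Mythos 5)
Your proof is correct and matches the paper's treatment: the paper presents this lemma without an explicit argument precisely because, as you observe, both locale obligations $\beta$-reduce to the hypothesis \texttt{subalgebra}\ $M$\ $F$ and to reflexivity of $\subseteq$, and the extra $\sigma$-finiteness obligation at $t_0$ reduces to the ambient \texttt{sigma\_finite\_subalgebra} assumption. Nothing is missing.
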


\begin{remark}
	Both of the statements above convey the same information essentially. The first one is stated in terms of premises and results, the latter in the language of locales. The notion of a $\sigma$-algebra being a subalgebra is formalized via the predicate \texttt{subalgebra}. Had the formalization been done in the language of locales, we could replace the first statement with an equivalent sublocale relation.
\end{remark}

Preparing for our next example, we introduce a formalization for the notion of a $\sigma$-algebra generated by a family of functions:

\begin{isadefinition}
{\small
\begin{lstlisting}[style=isabelle]
definition family_vimage_algebra where
  "$\texttt{family\_vimage\_algebra} \; \Omega \; S \; N  \equiv \texttt{sigma} \; \Omega \; (\bigcup f \in S. \; \{(f \;\text{-\textasciigrave} \; A) \; \cap \; \Omega \;\vert\; A. \; A \in N\})$"
\end{lstlisting}
}
\end{isadefinition}

Given two measure spaces $(V, \mathcal{A})$ and $(W, \mathcal{B})$, it is a well known fact that a function $f : V \rightarrow W$ is measurable, if and only if the generated $\sigma$-algebra $\sigma(f)$ is a subalgebra of $\mathcal{A}$. This result is captured for families of functions in the following lemma.

\begin{isalemma}
{\small
\begin{lstlisting}[style=isabelle]
lemma measurable_family_iff_sets:
  shows "$(S \subseteq N \rightarrow_M M) \longleftrightarrow S \subseteq (\texttt{space}\; N \rightarrow \;\texttt{space}\; M) \; \wedge$
                            $\texttt{family\_vimage\_algebra} \; (\texttt{space}\; N) \; S\; M \subseteq N$"
\end{lstlisting}
}
\end{isalemma}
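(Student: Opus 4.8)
The plan is to reduce the claim to the standard characterisation of measurability of a \emph{single} function and then dispatch the quantifier over the family by a minimality argument for generated $\sigma$-algebras. First I would unfold the left-hand side: $S \subseteq N \rightarrow_M M$ is by definition the statement that every $f \in S$ lies in $\texttt{measurable}\;N\;M$. The library lemma \texttt{measurable\_iff\_sets} gives, for a fixed $f$, that $f \in N \rightarrow_M M$ iff $f$ maps $\texttt{space}\;N$ into $\texttt{space}\;M$ \emph{and} $\{\,f^{-1}(A) \cap \texttt{space}\;N \mid A \in \texttt{sets}\;M\,\} \subseteq \texttt{sets}\;N$. Taking the conjunction over all $f \in S$, the first halves collect to exactly $S \subseteq (\texttt{space}\;N \rightarrow \texttt{space}\;M)$, and the second halves collect to $\bigcup_{f \in S} \{\,f^{-1}(A) \cap \texttt{space}\;N \mid A \in \texttt{sets}\;M\,\} \subseteq \texttt{sets}\;N$.

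So everything reduces to showing that this union-inclusion is equivalent to $\texttt{family\_vimage\_algebra}\;(\texttt{space}\;N)\;S\;M \subseteq N$. Write $G$ for that generating union; by construction every element of $G$ is a subset of $\texttt{space}\;N$, and $\texttt{family\_vimage\_algebra}\;(\texttt{space}\;N)\;S\;M = \texttt{sigma}\;(\texttt{space}\;N)\;G$, whose underlying space is $\texttt{space}\;N$, so the measure-level inclusion in $N$ simply means $\texttt{sets}\;(\texttt{sigma}\;(\texttt{space}\;N)\;G) \subseteq \texttt{sets}\;N$. One direction is immediate: generators always lie in the $\sigma$-algebra they generate (they are subsets of the space), so $G \subseteq \texttt{sets}\;(\texttt{sigma}\;(\texttt{space}\;N)\;G) \subseteq \texttt{sets}\;N$. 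For the converse, $N$ is itself a $\sigma$-algebra on $\texttt{space}\;N$ containing $G$, so minimality of the generated $\sigma$-algebra (\texttt{sigma\_sets\_subset} and friends) yields $\texttt{sets}\;(\texttt{sigma}\;(\texttt{space}\;N)\;G) \subseteq \texttt{sets}\;N$. Chaining the two equivalences closes the proof.

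I expect the only real friction to be bookkeeping around \texttt{space}, \texttt{sets}, \texttt{sigma} and the measure-level $\subseteq$: one must record that the space of $\texttt{family\_vimage\_algebra}\;(\texttt{space}\;N)\;S\;M$ is $\texttt{space}\;N$ (immediate from the \texttt{sigma} in its definition) so that the measure inclusion is genuinely a \texttt{sets}-inclusion, and that each preimage generator $f^{-1}(A) \cap \texttt{space}\;N$ is a subset of $\texttt{space}\;N$ so that it survives the implicit intersection with $\mathcal{P}(\texttt{space}\;N)$ inside \texttt{sigma}. The mathematical content is a single appeal to minimality of a generated $\sigma$-algebra; the step most likely to be fiddly in Isabelle is manipulating the set-comprehension $\{\,f^{-1}(A) \cap \texttt{space}\;N \mid A.\; A \in \texttt{sets}\;M\,\}$ underneath the big union over $f \in S$ and matching it syntactically against \texttt{measurable\_iff\_sets}.
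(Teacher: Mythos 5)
Your proposal is correct and is essentially the argument the paper intends: the paper states this lemma as the family version of the standard fact that $f$ is measurable iff $\sigma(f)$ is a subalgebra, and your reduction via the single-function characterization of measurability followed by minimality of the generated $\sigma$-algebra is exactly that argument. The bookkeeping points you flag (the space of the \texttt{sigma} construction, generators being subsets of $\texttt{space}\;N$, and reading the measure-level $\subseteq$ as a \texttt{sets}-inclusion) are indeed the only real friction in the formalization.
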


Now, we can introduce our more interesting example, the \textbf{natural filtration}. Formally, we define it as follows.

\begin{isadefinition}
{\small
\begin{lstlisting}[style=isabelle]
definition natural_filtration where
  "$\texttt{natural\_filtration} \; M \; t_0 \; Y$
	$= (\lambda t. \; \texttt{family\_vimage\_algebra} \; (\texttt{space} \; M) \; \{Y \; i \; \vert\; i. \; i \in \{t_0..t\}\}\; \texttt{borel})$"
\end{lstlisting}
}
\end{isadefinition}
The natural filtration with respect to a stochastic process $Y$ is the filtration generated by all events involving the process up to the time index $t$, i.e. $F_t = \sigma(\{Y_i \; \vert\; i. \; i \le t\})$. Assuming that $Y$ is a stochastic process, i.e. $Y_i$ is $\Sigma$-measurable for all $i \ge t_0$, the definition indeed provides a filtration. The following sublocale relation formalizes this.

\begin{isalemma}
{\small
\begin{lstlisting}[style=isabelle]
sublocale stochastic_process $\subseteq$ filtered_measure_natural_filtration: 
	filtered_measure $M$ "natural_filtration $M$ $t_0$ $X$" $t_0$
\end{lstlisting}
}
\end{isalemma}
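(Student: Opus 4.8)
I would work inside the \texttt{stochastic\_process} locale, where $M$, $t_0$ and $X$ are fixed together with the assumption \texttt{random\_variable}, and discharge the two obligations of the \texttt{filtered\_measure} locale for the filtration $F := \texttt{natural\_filtration}\;M\;t_0\;X$. After unfolding \texttt{natural\_filtration} and \texttt{family\_vimage\_algebra}, this amounts to two claims: (1) for every $i$ with $t_0 \le i$, the space $\texttt{sigma}\;(\texttt{space}\;M)\;(\bigcup_{k\in\{t_0..i\}}\{(X\;k\;\text{-\textasciigrave}\;A)\cap\texttt{space}\;M\mid A\in\texttt{borel}\})$ is a subalgebra of $M$; and (2) for $t_0 \le i \le j$, the sets of this space at $i$ are contained in those at $j$. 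The plan is to derive (1) from the already-stated lemma \texttt{measurable\_family\_iff\_sets}, and (2) from monotonicity of the $\sigma$-algebra generated by an increasing family of generators.

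\textbf{Subalgebra.} The equality $\texttt{space}\;(F\;i) = \texttt{space}\;M$ is immediate, since \texttt{sigma}\;\(\Omega\)\;\(\cdot\) always has space $\Omega$ and here $\Omega = \texttt{space}\;M$. For the inclusion $\texttt{sets}\;(F\;i)\subseteq\texttt{sets}\;M$, I would instantiate \texttt{measurable\_family\_iff\_sets} with $N := M$ and codomain \texttt{borel}, applied to the family $S := \{X\;k\mid k\in\{t_0..i\}\}$. The left-hand side of that equivalence, $S \subseteq M \to_M \texttt{borel}$, holds because $\{t_0..i\}\subseteq\{k\mid t_0\le k\}$ and each such $X\;k$ is Borel-measurable by \texttt{random\_variable}; the side condition $S \subseteq (\texttt{space}\;M \to \texttt{space}\;\texttt{borel})$ is trivial. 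Hence the right-hand side gives $\texttt{family\_vimage\_algebra}\;(\texttt{space}\;M)\;S\;\texttt{borel} \subseteq M$, i.e. exactly $\texttt{sets}\;(F\;i)\subseteq\texttt{sets}\;M$, and together with the space equality this is \texttt{subalgebra}\;$M$\;$(F\;i)$.

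\textbf{Monotonicity.} For $t_0 \le i \le j$ the generating collection at time $i$, $\bigcup_{k\in\{t_0..i\}}\{(X\;k\,\text{-\textasciigrave}\,A)\cap\texttt{space}\;M\mid A\in\texttt{borel}\}$, is a subset of the corresponding collection at time $j$, since $\{t_0..i\}\subseteq\{t_0..j\}$. As both collections consist of subsets of $\texttt{space}\;M$, monotonicity of \texttt{sigma\_sets} (e.g. \texttt{sigma\_sets\_mono} / \texttt{sigma\_mono}) yields $\texttt{sets}\;(F\;i)\subseteq\texttt{sets}\;(F\;j)$, which is \texttt{sets\_F\_mono}.

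\textbf{Main obstacle.} The mathematics here is light; the real work is bookkeeping around the \texttt{sigma}/\texttt{family\_vimage\_algebra} definitions — in particular making sure the $\cap\,\texttt{space}\;M$ in each generator is handled so that the generators genuinely lie in $\mathcal{P}(\texttt{space}\;M)$ (so that \texttt{sigma} behaves as expected and the monotonicity lemma applies), and matching the exact shape of \texttt{measurable\_family\_iff\_sets} to the unfolded goal. I expect the step most likely to need care is the subalgebra inclusion, where one must line up the direction of \texttt{measurable\_family\_iff\_sets} with the index restriction $\{t_0..i\}$ versus the full index range $\{k\mid t_0\le k\}$ in \texttt{random\_variable}.
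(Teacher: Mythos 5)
Your proposal is correct and follows exactly the route the paper sets up: the subalgebra obligation via \texttt{measurable\_family\_iff\_sets} applied to the family $\{X\;k \mid k \in \{t_0..i\}\}$ (whose measurability comes from \texttt{random\_variable}), and \texttt{sets\_F\_mono} via monotonicity of the generated $\sigma$-algebra under the inclusion $\{t_0..i\}\subseteq\{t_0..j\}$ of generator families. The paper gives no further proof details, so there is nothing to contrast.
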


The natural filtration contains information concerning the process's past behavior at each point in time. The natural filtration is essentially the simplest filtration for studying a process. However, the natural filtration is not always $\sigma$-finite. In order to show that the natural filtration gives rise to a $\sigma$-finite filtered measure, we need to provide a countable exhausting set in the preimage of $X_{t_0}$. This statements is also present in our formalization\footnote{\texttt{Stochastic\_Process.sigma\_finite\_filtered\_measure\_natural\_filtration} in \cite{Keskin_A_Formalization_of_2023}}.

Of course, if the measure is already finite, the filtered measure space is also finite:

\begin{isalemma}
{\small
\begin{lstlisting}[style=isabelle]
lemma (in finite_measure) finite_filtered_measure_natural_filtration:
  assumes "$\texttt{stochastic\_process} \; M \; t_0 \; X$"
  shows "$\texttt{finite\_filtered\_measure} \; M \; (\texttt{natural\_filtration} \; M \; t_0 \; X) \; t_0$"
\end{lstlisting}
}
\end{isalemma}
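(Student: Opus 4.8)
The plan is to unfold the target locale \texttt{finite\_filtered\_measure}, which by its defining declaration (\texttt{finite\_filtered\_measure = filtered\_measure + finite\_measure}) is exactly the conjunction of the two locale predicates \texttt{filtered\_measure M (natural\_filtration M $t_0$ X) $t_0$} and \texttt{finite\_measure M}. Since the lemma is stated inside the \texttt{finite\_measure} locale on $M$, the second conjunct is available verbatim from the ambient context and needs no work. Thus everything reduces to producing the \texttt{filtered\_measure} instance.

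For that, I would not re-derive the filtration axioms by hand but instead invoke the sublocale relation established just above, namely \texttt{stochastic\_process $\subseteq$ filtered\_measure\_natural\_filtration: filtered\_measure M "natural\_filtration M $t_0$ X" $t_0$}. Interpreting the \texttt{stochastic\_process} locale with the hypothesis \texttt{assms} immediately yields \texttt{filtered\_measure M (natural\_filtration M $t_0$ X) $t_0$}. Concretely this sublocale fact encodes the two routine arguments: for $t_0 \le t$, the $\sigma$-algebra $F_t = \sigma(\{X_i \mid t_0 \le i \le t\})$ is a subalgebra of $\Sigma$ because \texttt{family\_vimage\_algebra} is built over \texttt{space M} and each generator $(X_i^{-1}A)\cap\Omega$ lies in \texttt{sets M} by measurability of $X_i$ (this is precisely the content of \texttt{measurable\_family\_iff\_sets}); and $F_s \subseteq F_t$ for $t_0\le s\le t$ holds because $\{X_i \mid i\in[t_0,s]\}\subseteq\{X_i \mid i\in[t_0,t]\}$, so the generated $\sigma$-algebra only grows.

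Combining the two conjuncts discharges the proof obligations of \texttt{finite\_filtered\_measure} and finishes the argument. There is no genuine obstacle here: the only points requiring care are bookkeeping ones — obtaining the \texttt{filtered\_measure} component through the already-proven sublocale relation (applied to \texttt{assms}) rather than re-proving it, and recognizing that the \texttt{finite\_measure} component of the target is supplied by the enclosing locale context rather than an assumption to be shown. Whatever real mathematical work exists was done earlier, in establishing \texttt{stochastic\_process $\subseteq$ filtered\_measure\_natural\_filtration}, where the measurability-of-preimages and monotonicity-in-$t$ facts actually reside.
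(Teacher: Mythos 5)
Your proposal is correct and matches the paper's intent exactly: the paper presents this lemma as an immediate consequence of the sublocale relation \texttt{stochastic\_process} $\subseteq$ \texttt{filtered\_measure\_natural\_filtration} together with the ambient \texttt{finite\_measure} context, since \texttt{finite\_filtered\_measure} is defined as precisely that conjunction. Nothing further is required.
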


This concludes our development of filtered measure spaces.

\section{Adapted Processes}

\begin{definition}

Let $(F_t)_{t \in [t_0, \infty)}$ be a filtration of the measure space $M = (\Omega, \Sigma, \mu)$. A stochastic process $(X_t)_{t \in [t_0, \infty)}$ is called \textit{an adapted process} if, for every index $t \ge t_0$, the random variable $X_t$ is measurable with respect to the $\sigma$-algebra $F_t$.

\end{definition}

This means that the value of $X_t$ depends only on the information available up to time $t$. In other words, the process ``adapts'' to the information in a way that it cannot anticipate future values based on events that have not occurred yet. We introduce the following locale.

\begin{isalemma}
{\small
\begin{lstlisting}[style=isabelle]
locale adapted_process = filtered_measure $M$ $F$ $t_0$ for $M$ $F$ $t_0$ 
	and $X$ :: "$\_ \; \Rightarrow \; \_ \; \Rightarrow \; \_ :: \; \{$second_countable_topology, banach$\}$" $+$
  assumes adapted[measurable]: "$\bigwedge i. \; t_0 \le i \implies X \; i \in \texttt{borel\_measurable} \; (F \; i)$"
\end{lstlisting}
}
\end{isalemma}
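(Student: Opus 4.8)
The natural statement to prove at this point — and the first entry in the chain $\texttt{stochastic} \supseteq \texttt{adapted}$ announced earlier — is that \texttt{adapted\_process} is a sublocale of \texttt{stochastic\_process}, i.e.\ that every adapted process is in particular a stochastic process. Unfolding the sublocale obligation, one has to show, under the assumptions of \texttt{adapted\_process M F t\_0 X}, that $X\,i \in \texttt{borel\_measurable}\,M$ for every index $i$ with $t_0 \le i$.

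The plan is short. First I would fix an index $i$ with $t_0 \le i$ and apply the locale assumption \texttt{adapted}, obtaining $X\,i \in \texttt{borel\_measurable}\,(F\,i)$. Since \texttt{adapted\_process} extends \texttt{filtered\_measure}, the assumption \texttt{subalgebra} is in scope and yields $\texttt{subalgebra}\,M\,(F\,i)$, which by the definition recalled above means $\texttt{space}\,(F\,i) = \texttt{space}\,M$ and $\texttt{sets}\,(F\,i) \subseteq \texttt{sets}\,M$. The decisive ingredient is then the standard fact that Borel-measurability is monotone in the domain $\sigma$-algebra: a function measurable with respect to a subalgebra $N$ of $M$ is also measurable with respect to $M$ itself (the library lemma \texttt{measurable\_from\_subalg}). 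Applying this with $N = F\,i$ gives $X\,i \in \texttt{borel\_measurable}\,M$, and generalising over $i$ discharges the sublocale proof.

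I do not expect a real obstacle here: this is essentially a one-line Isar argument resting entirely on machinery already available from \texttt{filtered\_measure}. The only points that need a little attention are bookkeeping ones. One must check that the type-class constraints line up — the index type retains its \texttt{\{second\_countable\_topology, order\_topology, t2\_space\}} class in both locales, while the codomain class \texttt{banach} of \texttt{adapted\_process} merely refines the unconstrained codomain of \texttt{stochastic\_process}, so the \texttt{sublocale} declaration typechecks — and one must instantiate the \texttt{subalgebra} premise at the correct index $i$ before passing it to the monotonicity lemma. Everything beyond that is automatic, and the analogous introduction and composition lemmas for adapted processes (mirroring \texttt{stochastic\_process\_const}, \texttt{compose\_stochastic}, etc.) would then follow by the same pattern, combining the stochastic-process versions with this sublocale fact.
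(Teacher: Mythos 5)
Your proposal is correct and matches the paper's own justification: the paper disposes of the sublocale relation \texttt{adapted\_process} $\subseteq$ \texttt{stochastic\_process} with exactly the observation that $F_t \subseteq \Sigma$ for all $t \ge t_0$, i.e.\ the \texttt{subalgebra} assumption of \texttt{filtered\_measure} combined with monotonicity of measurability in the domain $\sigma$-algebra. Your version merely spells out the bookkeeping (fixing $i$, instantiating \texttt{subalgebra}, invoking the library monotonicity lemma) that the paper leaves implicit.
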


The properties we have shown concerning stochastic processes also hold for adapted processes. Although in some cases, we need to modify the measurability assumptions we make. For example, a constant process is adapted to a filtration, only if the defining function $f$ is $F_{t_0}$-measurable. The following formal lemma reflects this.
\begin{isalemma}
{\small
\begin{lstlisting}[style=isabelle]
lemma (in filtered_measure) adapted_process_const_fun:
  assumes "$f \in \texttt{borel\_measurable} \; (F \; t_0)$"
  shows "$\texttt{adapted\_process} \; M \; F \; t_0 \; (\lambda \_. \; f)$"
\end{lstlisting}
}
\end{isalemma}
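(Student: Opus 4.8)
The plan is to peel apart the two obligations bundled into the locale predicate \texttt{adapted\_process}. Since \texttt{adapted\_process M F t\_0} is defined as an extension of \texttt{filtered\_measure M F t\_0} by the single assumption \texttt{adapted}, and we are already reasoning inside the \texttt{filtered\_measure} locale, the \texttt{filtered\_measure} component is discharged for free. Hence the only thing left to prove is the adaptedness condition for the constant process: for every index $i$ with $t_0 \le i$ we must show $(\lambda \_.\; f)\; i = f \in \texttt{borel\_measurable}\; (F\; i)$.

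First I would fix such an $i$ and reduce the claim to the statement that $F\; t_0$ is a sub-$\sigma$-algebra of $F\; i$. Concretely, \texttt{sets\_F\_mono} applied to $t_0 \le t_0$ and $t_0 \le i$ yields $\texttt{sets}\; (F\; t_0) \subseteq \texttt{sets}\; (F\; i)$, while \texttt{subalgebra} applied to $t_0 \le t_0$ and to $t_0 \le i$ yields $\texttt{space}\; (F\; t_0) = \texttt{space}\; M = \texttt{space}\; (F\; i)$ (reflexivity of $\le$ is available since the index type is an \texttt{order\_topology}). With the carrier sets matching and the $\sigma$-algebras nested, the standard monotonicity lemma for measurable functions (e.g.\ \texttt{measurable\_mono}, together with the characterisation via \texttt{sets} and \texttt{space}) upgrades the hypothesis $f \in \texttt{borel\_measurable}\; (F\; t_0)$ to $f \in \texttt{borel\_measurable}\; (F\; i)$, which is exactly what is required. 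Re-generalising over $i$ then establishes the \texttt{adapted} assumption.

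There is no deep obstacle here; the only point demanding care is that \texttt{borel\_measurable} packages together a condition on preimages of Borel sets \emph{and} a condition on the domain space, so one must remember to feed in both the \texttt{sets}-inclusion and the \texttt{space}-equality rather than the former alone --- which is precisely why the \texttt{filtered\_measure} locale carries the \texttt{subalgebra} assumption in addition to \texttt{sets\_F\_mono}. Once this is observed the argument is essentially a one-liner, and it completes the proof of \texttt{adapted\_process M F t\_0 (\lambda \_.\; f)}.
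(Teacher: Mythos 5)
Your argument is correct and matches the paper's intended reasoning: the paper gives no explicit proof but its accompanying remark makes clear that the point is precisely the monotonicity of the filtration, i.e.\ $F\;t_0$ being a sub-$\sigma$-algebra of $F\;i$ for every $i \ge t_0$, which lifts $F_{t_0}$-measurability of $f$ to $F_i$-measurability. Your additional care about needing both the \texttt{sets}-inclusion and the \texttt{space}-equality is exactly right for the Isabelle formalization.
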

\begin{remark}
Here, we see how constraining ourselves to an index set bounded from below helps make the assumption simpler. In order to show that the process is adapted, it suffices to check that $f$ is $F_{t_0}$-measurable. Without such a least element, we would have to check that $f$ is $F_t$-measurable for all indices $t$.
\end{remark}
An adapted process is necessarily a stochastic process. This follows directly from the fact that $F_t \subseteq \Sigma$ for all $t \ge t_0$. We capture this via the following sublocale relation.

\begin{isalemma}
{\small
\begin{lstlisting}[style=isabelle]
sublocale adapted_process $\subseteq$ stochastic_process 

sublocale nat_adapted_process $\subseteq$ nat_stochastic_process ..
sublocale real_adapted_process $\subseteq$ real_stochastic_process ..
\end{lstlisting}
}
\end{isalemma}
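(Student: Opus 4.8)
The plan is to discharge the single proof obligation generated by the \texttt{sublocale} command: to establish \texttt{adapted\_process} $\subseteq$ \texttt{stochastic\_process}, I must derive the defining assumption of \texttt{stochastic\_process} --- that $X_i \in \texttt{borel\_measurable}\ M$ for every index $i$ with $t_0 \le i$ --- from the assumptions available inside the \texttt{adapted\_process} locale. Recall that \texttt{adapted\_process} extends \texttt{filtered\_measure} and adds the hypothesis $X_i \in \texttt{borel\_measurable}\ (F\ i)$ for all $i \ge t_0$; from \texttt{filtered\_measure} I additionally have $\texttt{subalgebra}\ M\ (F\ i)$, which unfolds to $\texttt{space}\ (F\ i) = \texttt{space}\ M$ and $\texttt{sets}\ (F\ i) \subseteq \texttt{sets}\ M$.

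First I would fix an index $i$ with $t_0 \le i$. By the \texttt{adapted} assumption $X_i$ is measurable into $F_i$, and by the \texttt{subalgebra} assumption $F_i$ is a sub-$\sigma$-algebra of $M$ sharing its carrier set. The conclusion $X_i \in \texttt{borel\_measurable}\ M$ is then exactly the standard library fact that a function measurable into a sub-$\sigma$-algebra is measurable into the ambient $\sigma$-algebra (invoked via the \texttt{subalgebra} predicate, e.g.\ \texttt{measurable\_from\_subalg} or similar). Since both \texttt{adapted} and \texttt{subalgebra} carry the \texttt{measurable} attribute, this step is essentially automatic for \texttt{auto}/\texttt{measurable} or a one-line Sledgehammer call; there is no real mathematical content beyond this transfer lemma.

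For the remaining two lines, \texttt{nat\_adapted\_process} $\subseteq$ \texttt{nat\_stochastic\_process} and \texttt{real\_adapted\_process} $\subseteq$ \texttt{real\_stochastic\_process}, nothing new is required. By definition \texttt{nat\_adapted\_process} is \texttt{adapted\_process}\ $M$\ $F$\ $0$ (for $0 :: \texttt{nat}$) and \texttt{nat\_stochastic\_process} is \texttt{stochastic\_process}\ $M$\ $0$ with the same instantiation, so the inclusion is obtained simply by composing the first sublocale relation with the definitional unfoldings --- precisely what the empty proof \texttt{..} does --- and the \texttt{real} case is identical with $0 :: \texttt{real}$.

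The only thing resembling an obstacle is type-class bookkeeping: \texttt{stochastic\_process} leaves the codomain type unconstrained whereas \texttt{adapted\_process} sharpens it to a second-countable Banach space, while the index type carries \texttt{\{second\_countable\_topology, order\_topology, t2\_space\}} in both locales. Because a Banach space satisfies every class demanded of the codomain in \texttt{stochastic\_process} (namely none) and the index types already agree, the instantiation goes through without incident, so I do not expect any genuine difficulty here.
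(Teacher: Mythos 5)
Your proposal is correct and matches the paper's own argument exactly: the paper likewise derives the \texttt{stochastic\_process} obligation from the \texttt{adapted} assumption combined with $\texttt{sets}\,(F\,i) \subseteq \texttt{sets}\,M$ from the \texttt{subalgebra} fact, with the \texttt{nat} and \texttt{real} variants following by instantiation. Nothing further is needed.
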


In the other direction, a stochastic process is always adapted to the natural filtration it generates:

\begin{isalemma}
{\small
\begin{lstlisting}[style=isabelle]
sublocale stochastic_process $\subseteq$ adapted_natural: 
	adapted_process $M$ "natural_filtration $M$ $t_0$ $X$" $t_0$ $X$ 
\end{lstlisting}
}
\end{isalemma}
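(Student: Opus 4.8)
The plan is to split the \texttt{sublocale} goal into its two locale obligations — the \texttt{filtered\_measure} part and the \texttt{adapted} part — and to observe that the first is already discharged. Indeed, the preceding sublocale relation \texttt{filtered\_measure\_natural\_filtration} establishes, from the \texttt{random\_variable} assumption of \texttt{stochastic\_process}, that $\mathcal{F}^X := \texttt{natural\_filtration}\;M\;t_0\;X$ is a filtration of $M$; in particular each stage $\mathcal{F}^X_i$ is a subalgebra of $M$ with $\texttt{space}\;\mathcal{F}^X_i = \texttt{space}\;M$, and monotonicity holds. So the only thing left to prove is the \texttt{adapted} hypothesis: for every index $i$ with $t_0 \le i$ we need $X_i \in \texttt{borel\_measurable}\,(\mathcal{F}^X_i)$.

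By unfolding the definition, $\mathcal{F}^X_i = \texttt{family\_vimage\_algebra}\,(\texttt{space}\;M)\;\{X_j \mid j \in \{t_0..i\}\}\;\texttt{borel}$, i.e. the $\sigma$-algebra on $\texttt{space}\;M$ generated by the Borel-preimages of $X_{t_0},\dots,X_i$. This is by construction the coarsest $\sigma$-algebra making all of $X_{t_0},\dots,X_i$ measurable, and $X_i$ is one of them because $t_0 \le i$ forces $i \in \{t_0..i\}$. To turn this into a formal argument I would apply \texttt{measurable\_family\_iff\_sets} with the family $S = \{X_j \mid j \in \{t_0..i\}\}$, codomain space \texttt{borel}, and ambient space $N = \mathcal{F}^X_i$. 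Its right-hand side is the conjunction $S \subseteq (\texttt{space}\;\mathcal{F}^X_i \rightarrow \texttt{space\;borel})$ and $\texttt{family\_vimage\_algebra}\,(\texttt{space}\;\mathcal{F}^X_i)\;S\;\texttt{borel} \subseteq \mathcal{F}^X_i$. The first conjunct is immediate since $\texttt{space\;borel} = \texttt{UNIV}$ and every $X_j$ is a function out of $\texttt{space}\;M$; the second is reflexivity once $\texttt{space}\;\mathcal{F}^X_i = \texttt{space}\;M$ is rewritten, because then the left-hand term is syntactically $\mathcal{F}^X_i$. Hence $S \subseteq N \rightarrow_M \texttt{borel}$, and instantiating the family index at $i$ yields exactly $X_i \in \texttt{borel\_measurable}\,(\mathcal{F}^X_i)$. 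Combining this with the \texttt{filtered\_measure} instance closes the goal.

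I do not expect a genuine mathematical obstacle here; the statement is essentially definitional. The point that needs a little care is the bookkeeping around the carrier set of a generated $\sigma$-algebra: one must rewrite $\texttt{space}\,(\texttt{sigma}\;\Omega\;\mathcal{A})$ to $\Omega$ so that the self-containment condition collapses to reflexivity, and use $\texttt{space\;borel} = \texttt{UNIV}$ (legitimate because the codomain type of $X$ is treated as a second-countable topological/Borel type) to discharge the range condition. If routing through \texttt{measurable\_family\_iff\_sets} turns out to be clumsy, there is an even more elementary alternative: for each Borel set $A$, the set $X_i^{-1}(A) \cap \texttt{space}\;M$ lies in the generating system of $\mathcal{F}^X_i$, hence in $\texttt{sets}\,\mathcal{F}^X_i$, and since the spaces already coincide this gives measurability of $X_i$ with respect to $\mathcal{F}^X_i$ directly.
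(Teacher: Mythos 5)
Your proposal is correct and matches the paper's (essentially definitional) treatment: the paper states this sublocale without further argument, having already set up exactly the two ingredients you use, namely the \texttt{filtered\_measure} instance for the natural filtration and the lemma \texttt{measurable\_family\_iff\_sets} characterizing measurability with respect to a \texttt{family\_vimage\_algebra}. Your observation that the only genuine obligation is $X_i \in \texttt{borel\_measurable}\,(\mathcal{F}^X_i)$, discharged because $i \in \{t_0..i\}$ puts $X_i$ in the generating family and the self-containment condition collapses to reflexivity after rewriting the carrier set, is precisely the intended argument.
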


Adapted processes are cruicial for defining martingales. A martingale is by definition an adapted process. In the following section, we will explore progressively measurable processes, even though they are not directly relevant to our formalization of martingales. This serves two purposes: first, to replicate the corresponding results on \textsf{mathlib}, and second, to establish a solid foundation for future projects to build upon.

\section{Progressively Measurable Processes}

The definition of a progressively measurable process is more intricate.

\begin{definition}
	Let $(F_t)_{t \in [t_0, \infty)}$ be a filtration of the measure space $M$. A stochastic process $(X_t)_{t \in [t_0, \infty)}$ is called progressively measurable (or simply \textit{progressive}) if, for every index $t \ge t_0$, the map $[t_0, t] \times \Omega \rightarrow E$ defined by $(i, w) \mapsto X_i(w)$ is measurable with respect to the $\sigma$-algebra $\mathcal{B}([t_0, t]) \otimes F_t$. Here $\mathcal{B}([t_0, t])$ denotes the Borel $\sigma$-algebra on $[t_0, t]$ induced by the order topology.
\end{definition}

The formalized version is as follows.

\begin{isadefinition}
{\small
\begin{lstlisting}[style=isabelle]
locale progressive_process = filtered_measure $M$ $F$ $t_0$ for $M$ $F$ $t_0$
	and $X$ :: "$\_ \; \Rightarrow \; \_ \; \Rightarrow \; \_ :: \; \{$second_countable_topology, banach$\}$" $+$
  assumes progressive[measurable]: "$\bigwedge t. \; t_0 \le t$
	$\implies (\lambda (i, x). \; X \; i \; x) \in \texttt{borel\_measurable} \; (\texttt{restrict\_space} \; \texttt{borel} \; \{t_0..t\} \otimes_M (F \; t))$"
\end{lstlisting}
}
\end{isadefinition}

Notice that the measurability assumption we make here is on the entire map $(i, w) \mapsto X_i(w)$ instead of being ``pointwise'' as in the previous two sections. As a side effect, the stochastic process defined by $X_i = c(i)$ for some $c : [t_0, \infty) \rightarrow E$ is progressively measurable, only if the function $c$ is Borel measurable. Previously, this assumption was not required. Hence, we modify the corresponding lemma:

\begin{isalemma}
{\small
\begin{lstlisting}[style=isabelle]
lemma (in filtered_measure) progressive_process_const:
  assumes "$c \in \texttt{borel\_measurable} \; \texttt{borel}$"
  shows "$\texttt{progressive\_process} \; M \; F \; t_0 \; (\lambda i \; \_. \; c \; i)$"
  \end{lstlisting}
}
\end{isalemma}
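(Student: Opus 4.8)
The plan is to \texttt{unfold\_locales} so that, since the ambient \texttt{filtered\_measure} structure is already supplied by the enclosing locale context, the only remaining proof obligation is the single new axiom \texttt{progressive} instantiated at $X = (\lambda i\ \_.\ c\ i)$. Concretely, I fix an index $t$ with $t_0 \le t$ and must show that the map $(i,x) \mapsto c\ i$ is measurable from $\texttt{restrict\_space}\ \texttt{borel}\ \{t_0..t\} \otimes_M (F\ t)$ to \texttt{borel}.

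The key observation is that this map ignores its second argument, so it factors as $c \circ \texttt{fst}$. The first projection $\texttt{fst}$ is measurable from the product $\texttt{restrict\_space}\ \texttt{borel}\ \{t_0..t\} \otimes_M (F\ t)$ to its first factor $\texttt{restrict\_space}\ \texttt{borel}\ \{t_0..t\}$ by \texttt{measurable\_fst}. It then remains to check that $c$ is measurable from $\texttt{restrict\_space}\ \texttt{borel}\ \{t_0..t\}$ to \texttt{borel}; this follows from the hypothesis $c \in \texttt{borel\_measurable}\ \texttt{borel}$ together with \texttt{measurable\_restrict\_space1}, i.e. the fact that restricting the domain of a measurable function preserves measurability. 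Composing the two measurable maps yields the \texttt{progressive} assumption, and hence the claim. Note that no property of the filtration $F$ beyond what \texttt{filtered\_measure} already packages is needed, precisely because the process is genuinely constant in the sample point $\omega$.

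The bulk of this is bookkeeping rather than mathematics, and the main obstacle is making the two \texttt{restrict\_space} constructions line up on the two sides of the composition so that Isabelle's \texttt{measurable} method (or a short chain of \texttt{measurable\_comp}, \texttt{measurable\_fst}, \texttt{measurable\_restrict\_space1}) closes the goal once the decomposition $c \circ \texttt{fst}$ is made explicit; one should also keep in mind that $\{t_0..t\}$ is a closed, hence Borel, subset of $'b$ in the order topology, which is what justifies working inside $\texttt{restrict\_space}\ \texttt{borel}\ \{t_0..t\}$. Finally, adaptedness and being a stochastic process follow for free from the sublocale chain $\texttt{progressive} \subseteq \texttt{adapted} \subseteq \texttt{stochastic}$ established earlier, so nothing further has to be verified here.
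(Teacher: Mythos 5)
Your argument is correct and is essentially the canonical one: the paper states this lemma without any accompanying proof text, and the factorization $(i,x)\mapsto c(i)$ as $c\circ\texttt{fst}$, combined with \texttt{measurable\_fst} and \texttt{measurable\_restrict\_space1}, is exactly the measurability bookkeeping the formalization needs. Nothing is missing.
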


Similarly, we must modify the premise of the lemma \texttt{compose} in order to reflect this change:

\begin{isalemma}
{\small
\begin{lstlisting}[style=isabelle]
lemma compose_progressive:
  assumes "$(\lambda (i, x). \; f \; i \; x) \in \texttt{borel\_measurable} \; \texttt{borel}$"
  shows "$\texttt{progressive\_process} \; M \; F \; t_0 \; (\lambda i \; x. \; (f \; i) \; (X \; i \; x))$"
  \end{lstlisting}
}
\end{isalemma}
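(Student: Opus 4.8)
The plan is to verify, inside the \texttt{progressive\_process} locale, the single additional axiom of \texttt{progressive\_process} for the process $\lambda i\, x.\, (f\, i)(X\, i\, x)$; the \texttt{filtered\_measure} component is inherited for free, since \texttt{progressive\_process} extends it and we are arguing in that locale. So fix $t$ with $t_0 \le t$ and abbreviate $N = \texttt{restrict\_space} \; \texttt{borel} \; \{t_0..t\} \otimes_M (F\, t)$. We must show that the uncurried map $(i,x) \mapsto (f\, i)(X\, i\, x)$ is measurable from $N$ to $\texttt{borel}$. The idea is to factor it as
\[
	(i,x) \;\longmapsto\; (i,\, X\, i\, x) \;\longmapsto\; (f\, i)(X\, i\, x)
\]
and check that each stage is measurable between suitable spaces.

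First I would show that $(i,x) \mapsto (i,\, X\, i\, x)$ is measurable from $N$ to $\texttt{restrict\_space} \; \texttt{borel} \; \{t_0..t\} \otimes_M \texttt{borel}$. By the product-space criterion (\texttt{measurable\_Pair}) this amounts to two facts: the first coordinate is the projection \texttt{fst}, which is measurable $N \to \texttt{restrict\_space} \; \texttt{borel} \; \{t_0..t\}$; and the second coordinate $(i,x) \mapsto X\, i\, x$ is measurable $N \to \texttt{borel}$, which is precisely the \texttt{progressive} hypothesis of the ambient locale instantiated at $t$. Next I would show that $(i,y) \mapsto (f\, i)(y)$ is measurable $\texttt{restrict\_space} \; \texttt{borel} \; \{t_0..t\} \otimes_M \texttt{borel} \to \texttt{borel}$: the assumption gives that this map is measurable with respect to $\texttt{borel}$ on the product type, and since both the index type and the codomain type are second-countable, $\texttt{borel}$ on the product type coincides with $\otimes_M$ of the component Borel algebras, while cutting the first factor down to $\{t_0..t\}$ corresponds, under this identification, to $\texttt{restrict\_space}$ of the full product to $\{t_0..t\}\times\texttt{UNIV}$; as measurability survives restriction of the domain, the hypothesis transfers. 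Composing the two stages (\texttt{measurable\_comp}) gives exactly the measurability of $(i,x) \mapsto (f\, i)(X\, i\, x)$ required by the \texttt{progressive} axiom, which finishes the proof.

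I expect the main obstacle to be the bookkeeping in the second stage, namely reconciling $\texttt{borel}$ on a product type with the product $\sigma$-algebra $\otimes_M$ of the factor Borel algebras and checking that $\texttt{restrict\_space}$ commutes with the product in the needed way. These are standard but fiddly measure-theoretic identities (in the formalization they would most likely be discharged by the \texttt{measurable} method together with a handful of product and restriction lemmas, much as in the earlier \texttt{compose} lemmas); everything else is a routine chain of \texttt{measurable\_Pair}, \texttt{measurable\_fst}, the \texttt{progressive} assumption, and \texttt{measurable\_comp}.
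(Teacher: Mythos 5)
Your proof is correct and is exactly the routine argument intended here; the paper states \texttt{compose\_progressive} without proof, treating it as a standard measurability fact. Your factorization through $(i,x)\mapsto(i,\,X\,i\,x)$ via \texttt{measurable\_Pair}, followed by the identification of \texttt{borel} on the product type with $\texttt{borel}\otimes_M\texttt{borel}$ (which is where second-countability of both factors is used) and the compatibility of measurability with \texttt{restrict\_space}, is precisely how the formalization discharges it.
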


A progressively measurable process is necessarily adapted. The proof is trivial and arises from the fact that the injection $y \mapsto (t, y)$ is measurable as a function $\Omega \rightarrow [t_0, t] \times \Omega$ for fixed $t \ge t_0$. We formalize this fact as a sublocale relation.

\begin{isalemma}
{\small
\begin{lstlisting}[style=isabelle]
sublocale progressive_process $\subseteq$ adapted_process 
\end{lstlisting}
}
\end{isalemma}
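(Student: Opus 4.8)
The plan is to unfold the \texttt{sublocale} statement into the proof obligations of \texttt{adapted\_process}. Since \texttt{progressive\_process} already extends \texttt{filtered\_measure} with the very same parameters $M$, $F$, $t_0$, the \texttt{filtered\_measure} part is discharged for free, and the only thing left to establish is the \texttt{adapted} assumption: for every $i \ge t_0$ the map $X_i : \Omega \to E$ is $F_i$-measurable. So I would fix $t \ge t_0$ and aim to exhibit $X_t$ as a composition through the product space $[t_0,t] \times \Omega$, on which the \texttt{progressive} assumption already supplies joint measurability.

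Concretely, first I would introduce the injection $\iota_t : \Omega \to [t_0,t] \times \Omega$, $\iota_t\,x = (t, x)$, and prove that it is measurable from $(\Omega, F_t)$ to $(\text{restrict\_space borel } \{t_0..t\}) \otimes_M (F\,t)$. Via the pair-measurability rule this reduces to two coordinate checks: the constant component $x \mapsto t$ is Borel-measurable and takes values in $[t_0,t]$ — this last point is exactly where the hypothesis $t_0 \le t$ is used — hence it is measurable into the restricted Borel space; and the component $x \mapsto x$ is the identity, trivially $F_t$-$F_t$-measurable. Second, I would observe the factorization $X_t = (\lambda (i,x).\, X_i\,x) \circ \iota_t$ and invoke the \texttt{progressive} assumption, which states precisely that $(\lambda (i,x).\, X_i\,x)$ is measurable from $(\text{restrict\_space borel } \{t_0..t\}) \otimes_M (F\,t)$ to \texttt{borel}. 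Composition of measurable maps then yields that $X_t$ is $F_t$-measurable, which is the adaptedness requirement. The \texttt{nat\_} and \texttt{real\_} variants would then follow by the same instantiation pattern already used for the earlier adapted/stochastic sublocales.

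There is no genuine mathematical difficulty here — the whole content is the factorization $X_t = \Phi \circ \iota_t$ with $\Phi$ the progressive map. The only friction I anticipate is administrative: coaxing Isabelle's \texttt{measurable} machinery into accepting the constant map into \texttt{restrict\_space borel} $\{t_0..t\}$ (which needs $t \in \{t_0..t\}$ discharged from $t_0 \le t$) and lining up the product $\sigma$-algebra — the restricted Borel space on $[t_0,t]$ tensored with $F_t$ — in exactly the shape in which it occurs in the \texttt{progressive} assumption, so that the composition rule fires without manual massaging.
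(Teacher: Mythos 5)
Your proposal is correct and matches the paper's argument exactly: the paper likewise reduces adaptedness to the measurability of the injection $y \mapsto (t,y)$ from $(\Omega, F_t)$ into $[t_0,t]\times\Omega$ equipped with the product $\sigma$-algebra, and then composes with the progressively measurable map. Nothing further is needed.
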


On a more interesting note, progressive measurability is equivalent to adaptedness in the discrete-time setting. The following lemma demonstrates this.

\begin{lemma}
	Let $(X_i)_{i \in \mathbb{N}}$ be an adapted process with respect to the filtration $(F_i)_{i \in \mathbb{N}}$. Then it is also progressively measurable.
\end{lemma}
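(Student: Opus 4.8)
The plan is to reduce the measurability of the map $(i,w) \mapsto X_i(w)$ on $\{0,\dots,t\} \times \Omega$ to the pointwise measurability guaranteed by adaptedness, exploiting the fact that in discrete time the index interval $\{0,\dots,t\}$ is a finite set. Fix $t \in \mathbb{N}$. The Borel $\sigma$-algebra on $\{0,\dots,t\}$ induced by the order topology is the discrete (power set) $\sigma$-algebra, since the space is finite. Consequently, the product $\sigma$-algebra $\mathcal{B}(\{0,\dots,t\}) \otimes F_t$ is generated by the ``rectangular cylinders'' $\{i\} \times A$ with $i \le t$ and $A \in F_t$, and the measurable space $\{0,\dots,t\} \times \Omega$ decomposes as the finite disjoint union $\bigcup_{i=0}^t (\{i\} \times \Omega)$.

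The key observation is then that a function on a finite disjoint union of measurable spaces is measurable if and only if its restriction to each piece is measurable. So it suffices to show, for each fixed $i \le t$, that $w \mapsto X_i(w)$ is measurable as a map $(\{i\} \times \Omega, F_t) \to (E, \mathcal{B}(E))$, i.e.\ that $X_i$ is $F_t$-measurable. But adaptedness gives that $X_i$ is $F_i$-measurable, and since $i \le t$ the filtration monotonicity (\texttt{sets\_F\_mono}) yields $F_i \subseteq F_t$, so $X_i$ is $F_t$-measurable. Gluing the pieces back together gives measurability of $(i,w) \mapsto X_i(w)$ on the whole product, which is exactly the \texttt{progressive} assumption; the remaining \texttt{filtered\_measure} obligations are inherited verbatim from the \texttt{adapted\_process} hypotheses, so the \texttt{progressive\_process} locale instance follows.

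The main obstacle is not mathematical but formalization-technical: in Isabelle one must identify \texttt{restrict\_space borel \{0..t\}} (for \texttt{t :: nat}) with \texttt{count\_space \{0..t\}} (or show its \texttt{sets} coincide with the full powerset), and then manipulate the product measurable space via the pair-collection / disjoint-union lemmas in \texttt{HOL-Analysis} to reduce to the pointwise statement. Concretely one would invoke something like \texttt{measurable\_pair\_measure} together with a case split on the first coordinate, or build the map as a finite sum of indicator-restricted pieces. Assembling the $\sigma$-algebra identification and the piecewise-measurability lemma in the right form is the fiddly part; the underlying argument — discrete index set plus filtration monotonicity — is elementary.
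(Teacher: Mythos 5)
Your proposal is correct and follows essentially the same route as the paper: both decompose the product space $\{0,\dots,t\}\times\Omega$ along the first coordinate using the discreteness of the order topology on $\mathbb{N}$, reduce to $F_t$-measurability of each $X_i$ via adaptedness and filtration monotonicity, and reassemble by a finite (countable) union. The paper phrases the gluing step as writing the preimage of an open set as $\bigcup_{j\le t}\{j\}\times X_j^{-1}(S)$ rather than invoking a disjoint-union measurability lemma, but this is a cosmetic difference.
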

\begin{proof}
	Let $S$ be an open set in $E$. Then $X_j^{-1}(S) \in F_i$ for all $j \le i \in \mathbb{N}$, since $(X_i)_{i \in \mathbb{N}}$ is adapted by assumption. Let $\psi : \{0,\dots,i\} \times \Omega \rightarrow E$ with $\psi(j,x) = X_j(x)$. Then, we have
	\[
		\psi^{-1}(S) \cap \{j\} \times \Omega = \{j\} \times X_j^{-1}(S) \in \mathcal{B}(\{0,\dots,i\}) \otimes F_i
	\]
	since the order topology on $\mathbb{N}$ is discrete. Furthermore
	\[
		\psi^{-1}(S) = \bigcup_{j \le i} \psi^{-1}(S) \cap \{j\} \times \Omega 
	\]
	Since the set $\{0,\dots,i\}$ is countable, it follows that $\psi^{-1}(S) \in \mathcal{B}(\{0,\dots,i\}) \otimes F_i$, since it is expressable as the union of a countable family of measurable sets.
\end{proof}

Subsequently we express this fact in the language of locales.

\begin{isalemma}
{\small
\begin{lstlisting}[style=isabelle]
sublocale nat_adapted_process $\subseteq$ nat_progressive_process
\end{lstlisting}
}
\end{isalemma}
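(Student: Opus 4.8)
The plan is to peel the \texttt{sublocale} claim down to its only nontrivial content. Since \texttt{nat\_adapted\_process} abbreviates \texttt{adapted\_process}~$M$~$F$~$(0::\mathrm{nat})$~$X$ and \texttt{nat\_progressive\_process} abbreviates \texttt{progressive\_process}~$M$~$F$~$(0::\mathrm{nat})$~$X$, and both locales extend the very same \texttt{filtered\_measure} context, the sublocale goal is discharged the moment we verify the \texttt{progressive} assumption for the discrete index set. Concretely, fixing $t \in \mathbb{N}$, the obligation is that the map $\psi : \{0,\dots,t\} \times \Omega \rightarrow E$, $\psi(i,x) = X_i(x)$, is measurable with respect to $\mathcal{B}(\{0,\dots,t\}) \otimes F_t$; this is exactly the plain lemma proved immediately above, so the real work is simply to route that argument through Isabelle's \texttt{restrict\_space}/\texttt{pair\_measure} API, after which the \texttt{progressive\_process} introduction rule closes the sublocale, the \texttt{filtered\_measure} part being inherited verbatim from the ambient locale.

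For the measurability step itself I would argue as in the lemma: the order topology on $\mathbb{N}$ is discrete, so the subspace $\{0,\dots,t\}$ is discrete and its Borel $\sigma$-algebra is the full powerset; in particular each singleton $\{j\}$ with $j \le t$ is measurable, and $\{j\} \times \Omega$ is a measurable rectangle in $\mathcal{B}(\{0,\dots,t\}) \otimes F_t$. For any open $S \subseteq E$ one has $\psi^{-1}(S) \cap (\{j\} \times \Omega) = \{j\} \times X_j^{-1}(S)$, and $X_j^{-1}(S) \in F_j \subseteq F_t$ by adaptedness together with filtration monotonicity (using $j \le t$); hence this piece lies in the product $\sigma$-algebra. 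Since $\psi^{-1}(S) = \bigcup_{j \le t} \bigl(\psi^{-1}(S) \cap (\{j\} \times \Omega)\bigr)$ is a countable union of such sets, it is measurable, and as the open sets generate $\mathcal{B}(E)$, $\psi$ is Borel measurable. Equivalently — and perhaps more convenient formally — one can glue over the finite measurable partition $\{\, \{j\} \times \Omega : j \le t \,\}$ of the product space, on each block of which $\psi$ agrees with the $F_t$-measurable map $x \mapsto X_j(x)$, and invoke a piecewise-restriction measurability lemma.

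The main obstacle is not mathematical but bookkeeping in the formal system: handling the composite $\sigma$-algebra \texttt{restrict\_space borel}~$\{0..t\}\ \otimes_M\ F\ t$. One must confirm that singletons genuinely lie in \texttt{sets (restrict\_space borel}~$\{0..t\})$ — i.e. that the Borel sets of a countable discrete subspace are its powerset — and pick the right gluing principle (in the spirit of \texttt{measurable\_piecewise\_restrict}) to conclude that a function measurable on each block of a countable measurable cover of the product is measurable on the whole product. Once those two API-level facts are in place the proof is short; indeed the plain lemma above already records the mathematical core, and the \texttt{sublocale} declaration essentially repackages it in locale language.
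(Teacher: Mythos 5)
Your proposal is correct and follows essentially the same route as the paper: fix $t$, use the discreteness of the order topology on $\mathbb{N}$ to see that each $\{j\}\times X_j^{-1}(S)$ is a measurable rectangle (with $X_j^{-1}(S)\in F_j\subseteq F_t$ by adaptedness and monotonicity of the filtration), and write $\psi^{-1}(S)$ as the countable union of these pieces. The additional remarks about the \texttt{restrict\_space}/\texttt{pair\_measure} bookkeeping and the piecewise-gluing alternative are reasonable implementation notes but do not change the underlying argument.
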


Now comes the most challenging portion of this chapter.

\section{Predictable Processes}

Before defining predictable processes in full generality, we will introduce them in the discrete-time setting, where the definition is easier to grasp.

\begin{definition}
	A discrete-time stochastic process $(X_i)_{i\in\mathbb{N}}$ is called \textit{predictable} with respect to a filtration $(F_i)_{i\in\mathbb{N}}$, if $X_{i + 1}$ is $F_i$-measurable for all $i \in \mathbb{N}$.
\end{definition}

This means that the value of the process in the future, $X_{i+1}$, can be ``predicted'' using the information available up to time $i$. This definition is a special case of the following more general definition for arbitrary index sets.

\begin{definition}
	Let $[t_0, \infty)$ be linearly ordered.  Let $(F_t)_{t\in[t_0, \infty)}$ be a filtration of the measure space $M$. We define the \textit{predictable $\sigma$-algebra} $\Sigma_P$ as follows.
	\[
		\Sigma_P = \sigma(\{(s,t] \times A \;\vert\; A \in F_s \;\wedge\; t_0 \le s \;\wedge\; s < t \} \cup \{\{t_0\} \times A \;\vert\; A \in F_{t_0}\})
	\]
	A stochastic process $(X_t)_{t\in[t_0, \infty)}$ is called \textit{predictable} if the map $[t_0, \infty) \times \Omega \rightarrow E$ defined by $(t,x) \mapsto X_t(x)$ is measurable with respect to this $\sigma$-algebra.
\end{definition}

At first glance, it is difficult to make intuitive sense of this definition. Investigating properties of predictable processes in arbitrary settings is well beyond the scope of this thesis. However, we will make the following remark.

\begin{remark}
	One can show that the $\sigma$-algebra $\Sigma_P$ coincides with the $\sigma$-algebra generated by all left-continuous adapted processes. A stochastic process is called left-continuous, if the sample paths $t \mapsto X_t(x)$ are left-continuous for $\mu$-almost all $x \in \Omega$. Right-continuity is similarly defined.
\end{remark}

The corresponding locale is easy to define.

\begin{isadefinition}
{\small
\begin{lstlisting}[style=isabelle]
locale predictable_process = linearly_filtered_measure $M$ $F$ $t_0$ for $M$ $F$ $t_0$
	and $X$ :: "$\_ \; \Rightarrow \; \_ \; \Rightarrow \; \_ :: \; \{$second_countable_topology, banach$\}$" $+$
  assumes progressive[measurable]: "$(\lambda (t, x). \; X \; t \; x) \in \texttt{borel\_measurable} \; \Sigma_P$"
\end{lstlisting}
}
\end{isadefinition}

In the previous section, our results concerning progressively measurable processes all made use of the fact that the projection functions $\pi_1 : [t_0, \infty) \times \Omega \rightarrow [t_0, \infty)$ and $\pi_2 : [t_0, \infty) \times \Omega \rightarrow \Omega$ are measurable with respect to the underlying $\sigma$-algebra. In that setting, this was a triviality, since the $\sigma$-algebra in question was the product $\sigma$-algebra $\mathcal{B}([t_0, t]) \otimes F_t$ for some $t \ge t_0$, which has many nice properties. We wish to show a similar statement for the projection functions $\pi_i$ when $[t_0, \infty) \times \Omega$ is equipped with the $\sigma$-algebra $\Sigma_P$. We have come up with a sufficient condition on the index set $[t_0, \infty)$ that guarantees this.

\begin{lemma}
	Assume there exists some countable family of sets $\mathcal{I} \subseteq \{(s, t] \;\vert\; t_0 \le s \;\wedge\; s < t\}$ such that $(t_0, \infty) \subseteq (\bigcup \mathcal{I})$. Let $\pi_1 : [t_0, \infty) \times \Omega \rightarrow [t_0, \infty)$ and $\pi_2 : [t_0, \infty) \times \Omega \rightarrow \Omega$ be projections onto respective components. Then, $\pi_1$ is $\Sigma_P$-Borel-measurable and $\pi_2$ is $\Sigma_P$-$F_{t_0}$-measurable.
\end{lemma}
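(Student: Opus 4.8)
The plan is to reduce everything to membership of concrete rectangles in $\Sigma_P$, exploiting that, by the very definition of $\Sigma_P$, the set $(s,t]\times A$ is one of its generators whenever $A\in F_s$ and $t_0\le s<t$, and $\{t_0\}\times A$ is a generator whenever $A\in F_{t_0}$. Two preliminary observations are needed. First, since each $F_s$ is a subalgebra of $M$, its space is $\Omega$, so $\Omega\in F_s$ for all $s\ge t_0$; in particular $(s,t]\times\Omega$ and $\{t_0\}\times\Omega$ lie in $\Sigma_P$. Second, the filtration is monotone, so $F_{t_0}\subseteq F_s$ for all $s\ge t_0$. Finally, writing $(t_0,\infty)=\bigcup_{(s,t]\in\mathcal{I}}(s,t]$ with $\mathcal{I}$ countable, the whole space $[t_0,\infty)\times\Omega=(\{t_0\}\times\Omega)\cup\bigcup_{(s,t]\in\mathcal{I}}\bigl((s,t]\times\Omega\bigr)$ is a countable union of generators, hence in $\Sigma_P$.

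For $\pi_2$ I would just check that the preimage of every $A\in F_{t_0}$ lies in $\Sigma_P$. We have $\pi_2^{-1}(A)=[t_0,\infty)\times A=(\{t_0\}\times A)\cup\bigcup_{(s,t]\in\mathcal{I}}\bigl((s,t]\times A\bigr)$; here $\{t_0\}\times A$ is a generator since $A\in F_{t_0}$, and each $(s,t]\times A$ with $(s,t]\in\mathcal{I}$ is a generator since $A\in F_{t_0}\subseteq F_s$. As $\mathcal{I}$ is countable this is a countable union of generators, so $\pi_2^{-1}(A)\in\Sigma_P$. This is exactly where the countability of $\mathcal{I}$ is indispensable: otherwise $(t_0,\infty)\times A$ is only an uncountable union of generators.

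For $\pi_1$ the idea is that $\mathcal{D}:=\{B\subseteq[t_0,\infty) : B\times\Omega\in\Sigma_P\}$ is a $\sigma$-algebra on $[t_0,\infty)$, since it contains the whole space and $(\cdot)\times\Omega$ commutes with complementation and countable unions; hence it suffices to show $\mathcal{D}$ contains a generating family of $\mathcal{B}([t_0,\infty))$. From the generators we get $(s,t]\in\mathcal{D}$ for all $t_0\le s<t$ and $\{t_0\}\in\mathcal{D}$, hence $[t_0,t]=\{t_0\}\cup(t_0,t]\in\mathcal{D}$ for $t>t_0$ and $(t,\infty)=[t_0,\infty)\setminus[t_0,t]\in\mathcal{D}$. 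Since $[t_0,\infty)$ is second countable (inherited from the type class on the index type), $\mathcal{B}([t_0,\infty))$ is generated by a countable base of open intervals and rays, and each basic set is an intersection of a ray $(s,\infty)$ with a left segment $[t_0,b)$; the left segments lie in $\mathcal{D}$ because $[t_0,b)=[t_0,b^-]$ when $b$ has an immediate predecessor $b^-$, and $[t_0,b)=\bigcup_n[t_0,b_n]$ for a sequence $b_n\uparrow b$ (supplied by first countability) otherwise. Thus $\mathcal{D}=\mathcal{B}([t_0,\infty))$, so $\pi_1$ is $\Sigma_P$-Borel-measurable.

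The hard part is precisely this last bookkeeping for $\pi_1$: translating ``the Borel $\sigma$-algebra of the order topology on $[t_0,\infty)$'' into countably many half-open intervals of the shape $(s,t]$ that literally occur among the generators of $\Sigma_P$. The delicate points are the minimal index $t_0$, whose singleton must be pulled in by hand — which is exactly why $\{\{t_0\}\times A\}$ is a separate family in the definition of $\Sigma_P$ — and left segments at points that are limits from below, where one must invoke first and second countability to stay within countable operations. The $\pi_2$ part and the closure properties of $\mathcal{D}$ are routine once the generators and the monotonicity of the filtration are available.
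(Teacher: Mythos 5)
Your proof is correct, and for $\pi_2$ it is essentially identical to the paper's: write $[t_0,\infty)\times A$ as $\{t_0\}\times A$ together with the countably many generator rectangles $(s,t]\times A$ for $(s,t]\in\mathcal{I}$, using $A\in F_{t_0}\subseteq F_s$. For $\pi_1$ your route differs in a useful way. The paper proves that $\mathcal{B}([t_0,\infty))=\sigma(\{(s,t]\mid t_0\le s<s\le t\})$ by invoking the library fact (\texttt{borel\_Ioi}) that the Borel $\sigma$-algebra of a second-countable linear order is generated by the upward rays, and then produces each ray $(c,\infty)$ as the countable union $\bigcup_{I\in\mathcal{I}}I\cap(c,\infty)$ — so the paper's $\pi_1$ argument uses the covering family $\mathcal{I}$. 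You instead obtain $(c,\infty)$ as the complement of $[t_0,c]=\{t_0\}\cup(t_0,c]$ inside the good-sets class $\mathcal{D}$, which needs no reference to $\mathcal{I}$ at all; this makes visible that the countable-cover hypothesis is only genuinely required for $\pi_2$. The price you pay is that you re-derive by hand the reduction of the order Borel $\sigma$-algebra to rays and left segments (immediate predecessor versus an increasing sequence $b_n\uparrow b$ supplied by first countability, plus hereditary Lindel\"ofness to get countable unions of basic intervals), which is exactly the content the paper outsources to \texttt{borel\_Ioi}; that bookkeeping is standard and your sketch of it is sound, but in a formalization one would certainly reuse the library lemma rather than redo it.
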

\begin{proof}
	We first show that $\pi_1$ is $\Sigma_P$-Borel-measurable.
	
	$\pi_1$ is trivially $(\mathcal{B}([t_0,\infty)) \otimes \sigma(\varnothing))$-Borel-measurable. Hence, if we can show 
	\[
		(\mathcal{B}([t_0,\infty)) \otimes \sigma(\varnothing)) \subseteq \Sigma_P
	\]
	then this implies that $\pi_1$ is $\Sigma_P$-Borel-measurable. For this, we will show that the Borel $\sigma$-algebra $\mathcal{B}([t_0,\infty))$ coincides with the $\sigma$-algebra generated by the set $\{(s,t] \;\vert\; t_0 \le s \;\wedge\; s < t\}$.
	
	Since the ordering on $[t_0, \infty)$ is linear, the set of open rays $\{(s,\infty) \;\vert\; s \ge t_0\}$ generates the order topology on $\mathcal{B}([t_0,\infty))$. This also depends on the premise that the order topology is second-countable. Let $c \ge t_0$. We have
	\begin{align*}
		(c, \infty) &= (c, \infty) \cap (\bigcup \mathcal{I}) \\
		&= (\bigcup_{I \in \mathcal{I}} I \cap (c, \infty))
	\end{align*}
	From the assumptions, we know that 
	\[
		I \cap (c, \infty) \in \{(s, t] \;\vert\; t_0 \le s \;\wedge\; s < t\}
	\]
	for $I \in \mathcal{I}$. Hence $(c, \infty) \in \sigma(\{(s,t] \;\vert\; t_0 \le s \;\wedge\; s < t\})$, since $\mathcal{I}$ is countable.
	In the other direction, any interval $(s,t]$ with $s \ge t_0$ is obviously $\mathcal{B}([t_0,\infty))$-measurable. Thus, the $\sigma$-algebras indeed coincide. This completes the first part of the proof.
	
	Next, we show that $\pi_2$ is $\Sigma_P$-$F_{t_0}$-measurable.
	Let $S \in F_{t_0}$. We have
	\[
		\pi_2^{-1}(S) = [t_0, \infty) \times S
	\]
	The assumptions already imply $(t_0, \infty) = (\bigcup \mathcal{I})$. Furthermore $S \in F_t$ for all $t \ge t_0$. Hence, we have
	\[
		(t_0, \infty) \times S = (\bigcup \mathcal{I}) \times S \in \Sigma_P \;\;\textrm{and}\;\; \{t_0\} \times S \in \Sigma_P
	\]
	which together imply $[t_0, \infty) \times S \in \Sigma_P$.
\end{proof}

\begin{remark}
	Our formal proof for the $\Sigma_P$-Borel-measurability of $\pi_1$ follows an alternative path to the one given here. The lemma \texttt{borel\_Ioi} establishes that the Borel $\sigma$-algebra  $\mathcal{B}$ on the entire space is generated by open rays. We then consider the restricted $\sigma$-algebra $\mathcal{B}([t_0,\infty))$, which is defined in Isabelle as
	\[
		\mathcal{B}([t_0,\infty)) = \sigma\left(\left\{[t_0, \infty) \cap A \;\vert\; A \in \mathcal{B}\right\}\right)
	\]
	Together with \texttt{borel\_Ioi} this yields
	\[
	 	\mathcal{B}([t_0,\infty)) = \sigma\left(\left\{[t_0, \infty) \cap A \;\vert\; A \in \sigma(\{(s,\infty) \;\vert\; s \in (-\infty,\infty)\})\right\}\right)
	\]
	In our formalization, we show that this $\sigma$-algebra on the right hand side is equal to $\sigma(\{(s,t] \;\vert\; t_0 \le s \;\wedge\; s < t\})$.
	
	On a different note, we believe strongly that $\pi_2$ is not $\Sigma_P$-$F_t$-measurable for $t > t_0$ in general, or at least our condition is not sufficient to show this. This stems from the fact that $\pi_2^{-1}(S) = [t_0, \infty) \times S$ and the element $t_0$ can only originate from some set in $\{\{t_0\} \times A \;\vert\; A \in F_{t_0}\}$. However, in general $F_t \not\subseteq F_{t_0}$.
\end{remark}

In the discrete-time case, the family $\mathcal{I} = \{\{n + 1\}\}_{n \in \mathbb{N}}$ fulfills this condition. Similarly, in the continous-time case we can use $\mathcal{I} = \{(0,n + 1]\}\}_{n \in \mathbb{N}}$. In our formalization, we present these results in the context of the locales \texttt{nat\_filtered\_measure} and \texttt{real\_filtered\_measure}\footnote{\texttt{Filtered\_Measure.nat\_filtered\_measure.measurable\_predictable\_sigma\_snd} \quad \texttt{-.measurable\_predictable\_sigma\_fst} \\\texttt{real\_filtered\_measure.measurable\_predictable\_sigma\_snd} \quad \texttt{-.measurable\_predictable\_sigma\_fst} in \cite{Keskin_A_Formalization_of_2023}}.

These measurability results concerning projections are necessary to show the following statements about ``constant'' processes being predictable.

\begin{isalemma}
{\small
\begin{lstlisting}[style=isabelle]
lemma (in filtered_measure) predictable_process_const_fun:
  assumes "$\texttt{snd} \; \in \Sigma_P \rightarrow_M F \; t_0$" "$f \in \texttt{borel\_measurable} \; (F \; t_0)$"
  shows "$\texttt{predictable\_process} \; M \; F \; t_0 \; (\lambda\_. \; f)$"

lemma (in filtered_measure) predictable_process_const:
  assumes "$\texttt{fst} \; \in \texttt{borel\_measurable} \; \Sigma_P$" "$c \in \texttt{borel\_measurable} \; \texttt{borel}$"
  shows "$\texttt{predictable\_process} \; M \; F \; t_0 \; (\lambda i \_. \; c \; i)$"
\end{lstlisting}
}
\end{isalemma}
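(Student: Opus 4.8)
The plan is to recognise that for a ``constant'' process the map $(t,x) \mapsto X_t(x)$ collapses to a composition of one of the two coordinate projections on $[t_0,\infty) \times \Omega$ with a Borel-measurable function, so that measurability against the predictable $\sigma$-algebra $\Sigma_P$ follows purely from measurability of composites together with the projection lemma established just above.

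First, for \texttt{predictable\_process\_const\_fun}, I would unfold the \texttt{predictable\_process} locale and reduce the goal to $(\lambda (t,x).\, f\, x) \in \texttt{borel\_measurable}\ \Sigma_P$. Since $(\lambda (t,x).\, f\, x) = f \circ \mathrm{snd}$, and the hypotheses give $\mathrm{snd} \in \Sigma_P \to_M F\, t_0$ and $f \in \texttt{borel\_measurable}\ (F\, t_0)$, the goal closes by \texttt{measurable\_compose}. The remaining locale obligations (that $M$, $F$, $t_0$ form a, linearly, filtered measure space) are inherited from the ambient context.

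Symmetrically, for \texttt{predictable\_process\_const} the goal reduces to $(\lambda (t,x).\, c\, t) \in \texttt{borel\_measurable}\ \Sigma_P$, and since $(\lambda (t,x).\, c\, t) = c \circ \mathrm{fst}$ the hypotheses $\mathrm{fst} \in \texttt{borel\_measurable}\ \Sigma_P$ and $c \in \texttt{borel\_measurable}\ \texttt{borel}$ again suffice by composition.

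There is no real obstacle here: the entire difficulty was pushed into the preceding lemma, which supplies $\mathrm{snd} \in \Sigma_P \to_M F\, t_0$ and $\mathrm{fst} \in \texttt{borel\_measurable}\ \Sigma_P$ under the hypothesis that $(t_0,\infty)$ admits a countable cover by half-open intervals $(s,t]$ with $t_0 \le s$. That is precisely why these two lemmas carry the projection-measurability facts as explicit assumptions rather than deriving them in the bare \texttt{filtered\_measure} locale; once those are granted, the proofs are essentially one-line appeals to measurability of compositions.
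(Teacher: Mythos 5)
Your proposal is correct and matches the paper's intended argument: the paper states these lemmas immediately after the projection-measurability result precisely so that the constant process factors as $f\circ\mathrm{snd}$ (resp.\ $c\circ\mathrm{fst}$) and predictability follows by composition of measurable maps. Nothing further is needed.
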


We will now show that a predictable process is necessarily progressively measurable.

\begin{lemma}
	A predictable process $(X_t)_{t \in [t_0,\infty)}$ is also progressively measurable.
\end{lemma}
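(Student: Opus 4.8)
The plan is to fix an index $t \ge t_0$ and show that the map $\psi_t : [t_0,t] \times \Omega \rightarrow E$, $(i,w) \mapsto X_i(w)$, is measurable with respect to $\mathcal{B}([t_0,t]) \otimes F_t$; since $t$ is arbitrary, this is precisely progressive measurability. By hypothesis the map $\psi : (t,x) \mapsto X_t(x)$ is $\Sigma_P$-measurable on $[t_0,\infty) \times \Omega$, and $\psi_t = \psi \circ \iota$ where $\iota : [t_0,t] \times \Omega \hookrightarrow [t_0,\infty) \times \Omega$ is the inclusion. Hence it suffices to show that $\iota$ is measurable from $\mathcal{B}([t_0,t]) \otimes F_t$ to $\Sigma_P$, i.e.\ that $B \cap ([t_0,t] \times \Omega) \in \mathcal{B}([t_0,t]) \otimes F_t$ for every $B \in \Sigma_P$. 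Equivalently, the trace of $\Sigma_P$ on $[t_0,t] \times \Omega$ should be a sub-$\sigma$-algebra of $\mathcal{B}([t_0,t]) \otimes F_t$.

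Since $\Sigma_P$ is generated by
\[
	\mathcal{G} = \{(s,u] \times A \mid A \in F_s,\ t_0 \le s < u\} \cup \{\{t_0\} \times A \mid A \in F_{t_0}\},
\]
the trace of $\Sigma_P$ on $[t_0,t] \times \Omega$ is generated by the traces of the elements of $\mathcal{G}$, so only these need to be checked. For $(s,u] \times A$ with $A \in F_s$ and $t_0 \le s < u$: intersecting with $[t_0,t] \times \Omega$ gives $((s,u] \cap [t_0,t]) \times A$ (note $A \subseteq \Omega$ since $F_s$ is a subalgebra of $M$), which is empty when $s \ge t$ and otherwise equals $(s,\min(u,t)] \times A$; the interval $(s,\min(u,t)]$ is Borel in $[t_0,t]$ and $A \in F_s \subseteq F_t$ by \texttt{sets\_F\_mono} (using $s \le t$), so the trace is a measurable rectangle and lies in $\mathcal{B}([t_0,t]) \otimes F_t$. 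For $\{t_0\} \times A$ with $A \in F_{t_0}$: since $t_0 \in [t_0,t]$ the trace is $\{t_0\} \times A$; the singleton $\{t_0\}$ is closed, hence Borel in $[t_0,t]$, and $A \in F_{t_0} \subseteq F_t$, so again we get a measurable rectangle. Thus every generator of the trace $\sigma$-algebra lies in $\mathcal{B}([t_0,t]) \otimes F_t$, so the whole trace does, $\iota$ is measurable, and $\psi_t = \psi \circ \iota$ is $\mathcal{B}([t_0,t]) \otimes F_t$-measurable. As $t \ge t_0$ was arbitrary, the predictable process is progressively measurable.

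The main obstacle in a formalization is the bookkeeping around $\Sigma_P$: it is defined via \texttt{sigma} on the ambient space $[t_0,\infty) \times \Omega$, so one must carefully relate the restriction of a generated $\sigma$-algebra to the $\sigma$-algebra generated by the restricted generators, and then match this up with the product $\sigma$-algebra \texttt{restrict\_space borel \{t\_0..t\}} $\otimes_M$ \texttt{(F t)} appearing in the \texttt{progressive\_process} locale. Handling the degenerate cases (the empty intersection when $s \ge t$, and $(s,u] \cap [t_0,t]$ collapsing to a shorter interval) and verifying that $\{t_0\}$ and the half-open intervals are Borel in $[t_0,t]$ are routine once the framework is in place; the remainder is elementary interval arithmetic together with the monotonicity of the filtration.
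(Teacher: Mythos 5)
Your proposal is correct and follows essentially the same route as the paper's proof: both reduce the claim to showing that the inclusion (restricted identity) $[t_0,t]\times\Omega \hookrightarrow [t_0,\infty)\times\Omega$ is measurable from $\mathcal{B}([t_0,t])\otimes F_t$ to $\Sigma_P$, check this only on the two families of generators of $\Sigma_P$, and conclude via the same case split on whether $(s,u]$ meets $[t_0,t]$, using monotonicity of the filtration and the fact that $(s,\min(u,t)]$ and $\{t_0\}$ are Borel in $[t_0,t]$. No gaps.
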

\begin{proof}
	Let $i \ge t_0$. Let $\iota$ denote the identity function, restricted to the domain $[t_0, i] \times \Omega$, i.e. $\iota = \texttt{id}\vert_{[t_0,t]}$. We aim to show that $\iota$ is $(\mathcal{B}([t_0,i]) \otimes F_i)$-$\Sigma_P$-measurable. The statement follows simply from definitions of predictability and progressive measurability.
	
	For any $S$ in the generating set of $\Sigma_P$, we will show that $\iota^{-1}(S) \in \mathcal{B}([t_0,i]) \otimes F_i$. This is enough to show the required measurability.
	
	First, let $S = \{t_0\} \times A$ for some $A \in F_{t_0}$.
	Then
	\[
		\iota^{-1}(S) = \{t_0\} \times A \in \mathcal{B}([t_0,i]) \otimes F_i
	\]
	since $\{t_0\}$ is closed and $F_{t_0} \subseteq F_i$.
	
	Next, let $S = (s,t] \times A$ for some $A \in F_s$ and $s, t \ge t_0$ with $s < t$.
	Then
	\[
		\iota^{-1}(S) = (s, \min(i, t)] \times A
	\]
	
	Assume $s \le i$. Then $A \in F_i$. Furthermore, $(s, \min(i, t)] \in \mathcal{B}([t_0,i])$ since the $\sigma$-algebra $\mathcal{B}([t_0,i])$ is generated by half-open intervals. 
	Hence, $\iota^{-1}(S) \in \mathcal{B}([t_0,i]) \otimes F_i$.
	
	Assume $s > i$. Then $\iota^{-1}(S) = \varnothing \in \mathcal{B}([t_0,i]) \otimes F_i$. This covers all cases and the proof is complete.
\end{proof}

We formalize this fact as a sublocale relation.

\begin{isalemma}
{\small
\begin{lstlisting}[style=isabelle]
sublocale predictable_process $\subseteq$ progressive_process
\end{lstlisting}
}
\end{isalemma}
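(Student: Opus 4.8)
The plan is to derive the sublocale claim directly from the paper lemma just proved (that a predictable process is progressively measurable) by discharging the single obligation that \texttt{progressive\_process} adds on top of \texttt{filtered\_measure}. Since \texttt{predictable\_process} extends \texttt{linearly\_filtered\_measure}, and hence \texttt{filtered\_measure}, the \texttt{filtered\_measure} part of \texttt{progressive\_process} is immediate, and only the \texttt{progressive} assumption remains: for every $t \ge t_0$, the map $(i,x) \mapsto X_i\; x$ should be measurable from $\texttt{restrict\_space}\; \texttt{borel}\; \{t_0..t\} \otimes_M (F\; t)$ to \texttt{borel}. I would obtain this by factoring that map through the predictability hypothesis $(t,x) \mapsto X_t\; x \in \texttt{borel\_measurable}\; \Sigma_P$.

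Concretely, fix $t \ge t_0$ and let $\iota$ be the inclusion of $\{t_0..t\} \times \texttt{space}\; M$ into $\{t_0..\} \times \texttt{space}\; M$, the carrier of $\Sigma_P$ (note that $\{t_0..t\}$ is nonempty since $t_0 \le t$, so $\iota$ is genuinely well defined). The key step is to show $\iota$ is measurable from $\texttt{restrict\_space}\; \texttt{borel}\; \{t_0..t\} \otimes_M (F\; t)$ to $\Sigma_P$; the required measurability of $(i,x) \mapsto X_i\; x$ then follows by composing with $(\lambda(t,x).\; X_t\; x)$. To prove $\iota$ measurable I would use the generating-set criterion \texttt{measurable\_sigma\_sets}: it suffices to check that $\iota^{-1}(S) \cap (\{t_0..t\} \times \texttt{space}\; M)$ lies in the product $\sigma$-algebra for every $S$ in the generating family of $\Sigma_P$. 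For $S = \{t_0\} \times A$ with $A \in F_{t_0}$ the preimage is $\{t_0\} \times A$, which is in the product since $\{t_0\}$ is closed (hence Borel) in the restricted line and $A \in F_{t_0} \subseteq F_t$ by \texttt{sets\_F\_mono}. For $S = (s,u] \times A$ with $t_0 \le s < u$ and $A \in F_s$, the preimage is $(s, \min(t,u)] \times A$ when $s \le t$ and $\varnothing$ otherwise; in the first case $(s, \min(t,u)]$ is a half-open interval, hence Borel and thus measurable in the restricted line, and $A \in F_s \subseteq F_t$, while in the second case $\varnothing$ is trivially measurable.

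I expect the main obstacle to be the measure-theoretic bookkeeping rather than any genuine difficulty: one must identify the carrier of $\Sigma_P$ with $\{t_0..\} \times \texttt{space}\; M$, confirm that singletons and half-open intervals are measurable in $\texttt{restrict\_space}\; \texttt{borel}\; \{t_0..t\}$, thread \texttt{sets\_F\_mono} through at the right points, and combine the standard lemmas for $\sigma$-generated and product measures (\texttt{measurable\_sigma\_sets}, \texttt{measurable\_pair}, \texttt{space\_pair\_measure}, and the \texttt{measurable\_restrict\_space} lemmas). Assembling these in Isabelle while keeping the various \texttt{space} and \texttt{sets} side conditions straight is the part most likely to need care.
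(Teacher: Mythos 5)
Your proposal is correct and follows essentially the same route as the paper: both reduce the obligation to showing that the inclusion (restricted identity) $\iota : [t_0,t]\times\Omega \to [t_0,\infty)\times\Omega$ is measurable from $\mathcal{B}([t_0,t])\otimes F_t$ into $\Sigma_P$, verified on the generating sets of $\Sigma_P$ via the same two-case analysis ($\{t_0\}\times A$ using closedness of the singleton and $F_{t_0}\subseteq F_t$; $(s,u]\times A$ with preimage $(s,\min(t,u)]\times A$ or $\varnothing$), and then conclude by composition with the predictability hypothesis. No gaps.
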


In the scope of our thesis, we will only use results concerning discrete-time predictable processes. We will now show that the general definition of a predictable process coincides with the definition in the discrete-time case. First we show the following lemma.

\begin{theorem}
	Let $(\bigcup_{i \in \mathbb{N}} \; \{i\} \times A_i) \in \Sigma_P$ for some collection of sets $(A_i)_{i \in \mathbb{N}}$. Then $A_0 \in F_0$ and $A_{i+1} \in F_i$ for all $i \in \mathbb{N}$.
\end{theorem}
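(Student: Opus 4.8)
The plan is to use the good-sets principle, i.e.\ structural induction over the generators of $\Sigma_P$. First observe that the statement is really about fibers: every subset $B \subseteq \mathbb{N}\times\Omega$ can be written as $\bigcup_{i\in\mathbb{N}} \{i\}\times B_i$ with $B_i = \{x \mid (i,x)\in B\}$, and since the sets $\{i\}\times A_i$ are pairwise disjoint across $i$, the $i$-th fiber of the given set $\bigcup_{i}\{i\}\times A_i$ is exactly $A_i$. So it suffices to prove: for every $B \in \Sigma_P$, the fiber $B_0$ lies in $F_0$ and the fiber $B_{i+1}$ lies in $F_i$ for all $i \in \mathbb{N}$.

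To this end I would introduce the collection
\[
	\mathcal{D} = \bigl\{\, B \subseteq \mathbb{N}\times\Omega \;\bigm|\; B_0 \in F_0 \ \wedge\ \forall i\in\mathbb{N}.\ B_{i+1} \in F_i \,\bigr\}
\]
and first check that $\mathcal{D}$ is a $\sigma$-algebra on $\mathbb{N}\times\Omega$. Taking fibers commutes with complement (relative to the ambient space $\mathbb{N}\times\Omega$) and with countable unions: the $i$-th fiber of $(\mathbb{N}\times\Omega)\setminus B$ is $\Omega\setminus B_i$, and the $i$-th fiber of $\bigcup_n B^{(n)}$ is $\bigcup_n B^{(n)}_i$. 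Since each $F_i$ is a $\sigma$-algebra with $\operatorname{space}(F_i) = \Omega$, these operations keep us inside $\mathcal{D}$; and the whole space $\mathbb{N}\times\Omega$ has every fiber equal to $\Omega \in F_i$, so it lies in $\mathcal{D}$.

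Next I would verify that $\mathcal{D}$ contains the two families of generators of $\Sigma_P$. For a generator $\{t_0\}\times A = \{0\}\times A$ with $A\in F_0$: fiber $0$ is $A\in F_0$ and every other fiber is $\varnothing \in F_i$. For a generator $(s,t]\times A$ with $A\in F_s$ and $0\le s<t$: in discrete time $(s,t] = \{s+1,\dots,t\}$, so fiber $0$ is $\varnothing$ (as $s+1\ge 1$), and fiber $i+1$ is either $\varnothing$ or, precisely when $s\le i\le t-1$, equal to $A$; in that case $A\in F_s\subseteq F_i$ by monotonicity of the filtration. Hence the generating set of $\Sigma_P$ is contained in $\mathcal{D}$, and since $\Sigma_P$ is the smallest $\sigma$-algebra containing it, $\Sigma_P\subseteq\mathcal{D}$. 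Applying this to $B = \bigcup_i\{i\}\times A_i$ yields $A_0 = B_0\in F_0$ and $A_{i+1} = B_{i+1}\in F_i$ for all $i$, which is the claim.

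The argument is otherwise routine; the only real content — and the step I would be most careful about — is the off-by-one bookkeeping for the interval generators: a set $A\in F_s$ is contributed exactly to the fiber indices $s+1,\dots,t$, so fiber $i+1$ ever sees it only when $i\ge s$, which is precisely what makes $F_s\subseteq F_i$ applicable and closes the argument (this is the discrete-time reason why predictability means ``$X_{i+1}$ is $F_i$-measurable''). A minor secondary point is to perform complementation in the correct ambient space $\mathbb{N}\times\operatorname{space}(M)$, so that fiberwise complements land back in the respective $F_i$.
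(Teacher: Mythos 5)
Your proposal is correct and follows essentially the same route as the paper: a good-sets argument showing that the collection of sets whose fibers satisfy the measurability condition is a $\sigma$-algebra containing the generators of $\Sigma_P$, with the same off-by-one bookkeeping $(s,t]=\{s+1,\dots,t\}$ and appeal to $F_s\subseteq F_i$ for the interval generators. Your phrasing of $\mathcal{D}$ directly in terms of (unique) fibers is a slightly cleaner packaging of the paper's definition, which instead quantifies over all representations $\bigcup_i\{i\}\times A_i$, but the substance is identical.
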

\begin{proof}
	Consider the set 
	\[
		\mathcal{D} = \{S \in \Sigma_P \;\vert\; \forall (A_i)_{i \in \mathbb{N}}. \; S = (\bigcup_{i \in \mathbb{N}} \{i\} \times A_i) \Longrightarrow A_{i+1} \in F_i \;\wedge\; A_0 \in F_0 \}
	\]
	We will show that $\mathcal{D}$ constitutes a $\sigma$-algebra. Obviously $\varnothing \in \mathcal{D}$.
	
	Assume $S \in \mathcal{D}$. 
	
	Let $(A_i)_{i \in \mathbb{N}}$ be a family of sets with $(\mathbb{N} \times \Omega) \setminus S = (\bigcup_{i \in \mathbb{N}} \{i\} \times A_i)$. Then
	\begin{align*}
		S &= (\mathbb{N} \times \Omega) \setminus (\bigcup_{i \in \mathbb{N}} \{i\} \times A_i) \\
		&= (\bigcup_{i \in \mathbb{N}} \{i\} \times \Omega) \setminus (\bigcup_{i \in \mathbb{N}} \{i\} \times A_i) \\
		&= (\bigcup_{i \in \mathbb{N}} \{i\} \times (\Omega \setminus A_i))
	\end{align*}
	Hence, we know $\Omega \setminus A_{i+1} \in F_i$ and $\Omega \setminus A_0 \in F_0$. Therefore, $A_{i+1} \in F_i$ and $A_0 \in F_0$. We have $(\mathbb{N} \times \Omega) \setminus S \in \mathcal{D}$.
	
	Assume $S_i \in \mathcal{D}$ for $i \in \mathbb{N}$. 
	
	Let $(A_i)_{i \in \mathbb{N}}$ be a family of sets with $(\bigcup_{i \in \mathbb{N}} S_i) = (\bigcup_{i \in \mathbb{N}} \{i\} \times A_i)$. For each $S_i$, we need to find some family of sets $(B_j(i))_{j \in \mathbb{N}}$, such that $S_i = (\bigcup_{j \in \mathbb{N}} \{j\} \times B_j(i))$. Define
	\[
		B_j(i) = \pi_2(S_i \cap \{j\} \times \Omega)
	\]
	The intuition is as follows. We first select only those pairs in $S_i$, with the first component equal to $j$. Then we project onto the second component. Hence, we have the equality
	\[
		\{j\} \times B_j(i) = S_i \cap \{j\} \times \Omega
	\]
	Therefore $S_i = (\bigcup_{j \in \mathbb{N}} \{j\} \times B_j(i))$. We have $B_{j+1}(i) \in F_j$ and $B_0(i) \in F_0$. Furthermore, we know
	\begin{align*}
		A_i &= \pi_2\left((\bigcup_{j \in \mathbb{N}} S_j) \cap \{i\} \times \Omega\right) \\
		&= \bigcup_{j \in \mathbb{N}} \pi_2(S_j \cap \{i\} \times \Omega) \\
		&= \bigcup_{j \in \mathbb{N}} B_i(j)
	\end{align*}
	Hence, $A_{i + 1} \in F_i$ and $A_0 \in F_0$. Thus $\mathcal{D}$ is indeed a $\sigma$-algebra. Now we show 
	\[
		\{\{s + 1,\dots, t\} \times A \;\vert\; A \in F_s \;\wedge\; s < t \} \cup \{\{0\} \times A \;\vert\; A \in F_0\} \subseteq \mathcal{D}
	\]
	Let $S \in \{\{0\} \times A \;\vert\; A \in F_0\}$. Then $S = (\bigcup_{i \in \mathbb{N}} \{i\} \times A_i)$ implies $A_0 \in F_0$ and $A_i = \varnothing$ for $i > 0$. Hence $S \in \mathcal{D}$.
	
	Let $S = \{s + 1,\dots,t\} \times B$ with $s < t$ and $B \in F_s$ for some $s$, $t$ and $B$. Then, $S = (\bigcup_{i \in \mathbb{N}} \{i\} \times A_i)$ implies $A_i = B$ for $i \in \{s + 1,\dots,t\}$ and $A_i = \varnothing$ otherwise. Thus, $A_0 = \varnothing \in F_0$. Moreover, $A_{i + 1} = B \in F_i$ if $i \in \{s,\dots,t - 1\}$ since the subalgebras $F_i$ are nested, and $A_{i + 1} = \varnothing \in F_i$ for $i \notin \{s,\dots,t - 1\}$. Together with our previous result, this implies $\Sigma_P \subseteq \mathcal{D}$, which completes the proof.
\end{proof}

\begin{remark}
	For the proof of this lemma in Isabelle, we have used the induction scheme \texttt{sigma\_sets.induct}, since the generated $\sigma$-algebra $\sigma(\cdot)$ is defined as an inductive set in Isabelle. The proof above demonstrates that this induction scheme is equivalent to the principle of ``good-sets'' which we have utilized.
\end{remark}

We can now characterize predictability in the discrete-time setting as follows.
\begin{theorem}
	A stochastic process $(X_n)_{n \in \mathbb{N}}$ is predictable, if and only if $(X_{n + 1})_{n \in \mathbb{N}}$ is adapted to the filtration $(F_n)_{n \in \mathbb{N}}$ and $X_0$ is $F_0$-measurable.
\end{theorem}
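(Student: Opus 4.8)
The plan is to work entirely with preimages and to exploit the explicit generating set of the predictable $\sigma$-algebra $\Sigma_P$ together with the ``good-sets'' theorem just established. Write $\psi : \mathbb{N} \times \Omega \rightarrow E$ for the map $\psi(i,x) = X_i(x)$, so that predictability of $(X_n)_{n\in\mathbb{N}}$ means exactly that $\psi$ is $\Sigma_P$-$\mathcal{B}(E)$-measurable. For any Borel set $B \subseteq E$ one has the identity
\[
	\psi^{-1}(B) = \bigcup_{i \in \mathbb{N}} \{i\} \times X_i^{-1}(B),
\]
which is the bridge between the joint measurability of $\psi$ and the pointwise measurability of the individual $X_i$.

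For the forward direction, assume $(X_n)_{n\in\mathbb{N}}$ is predictable. Fix a Borel set $B \subseteq E$. Then $\psi^{-1}(B) \in \Sigma_P$, and by the identity above it has the form $\bigcup_{i\in\mathbb{N}} \{i\} \times A_i$ with $A_i = X_i^{-1}(B)$. The preceding theorem now yields $A_0 = X_0^{-1}(B) \in F_0$ and $A_{i+1} = X_{i+1}^{-1}(B) \in F_i$ for all $i$. Since $B$ was an arbitrary Borel set, $X_0$ is $F_0$-measurable and $X_{n+1}$ is $F_n$-measurable for every $n$, i.e. $(X_{n+1})_{n\in\mathbb{N}}$ is adapted.

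For the converse, assume $X_0$ is $F_0$-measurable and $X_{n+1}$ is $F_n$-measurable for all $n$. Fix a Borel set $B \subseteq E$ and split
\[
	\psi^{-1}(B) = \bigl(\{0\} \times X_0^{-1}(B)\bigr) \cup \bigcup_{n \in \mathbb{N}} \bigl(\{n+1\} \times X_{n+1}^{-1}(B)\bigr).
\]
The first set lies in $\Sigma_P$ because $X_0^{-1}(B) \in F_0$ and sets $\{0\} \times A$ with $A \in F_0$ are generators of $\Sigma_P$. For the remaining sets, observe that in $\mathbb{N}$ we have $\{n+1\} = (n, n+1]$, so $\{n+1\} \times X_{n+1}^{-1}(B) = (n, n+1] \times X_{n+1}^{-1}(B)$ with $X_{n+1}^{-1}(B) \in F_n$ and $n < n+1$; this is again a generator of $\Sigma_P$. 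Hence $\psi^{-1}(B)$ is a countable union of elements of $\Sigma_P$, so $\psi^{-1}(B) \in \Sigma_P$. As $B$ ranges over $\mathcal{B}(E)$ this shows $\psi$ is $\Sigma_P$-measurable, i.e. $(X_n)_{n\in\mathbb{N}}$ is predictable.

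The genuine content has already been isolated in the previous theorem (the good-sets argument showing that $\bigcup_i \{i\} \times A_i \in \Sigma_P$ forces $A_0 \in F_0$ and $A_{i+1} \in F_i$); given that, both directions here are short. The only point requiring a little care is keeping the index-zero term separate throughout and recognising that the singleton $\{n+1\}$ is the half-open interval $(n,n+1]$ in the discrete order, so that the sets $\{n+1\}\times A$ with $A\in F_n$ really are among the generators of $\Sigma_P$; in the formalization this is also where one invokes that the order topology on $\mathbb{N}$ is discrete.
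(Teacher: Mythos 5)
Your proof is correct and takes essentially the same route as the paper: the same slice decomposition $\psi^{-1}(B)=\bigcup_{i}\{i\}\times X_i^{-1}(B)$, the same appeal to the preceding good-sets theorem, and an identical converse. The only (harmless) difference is that in the forward direction you apply that theorem once to the full decomposition, which also hands you the $F_0$-measurability of $X_0$ for free, whereas the paper intersects $\psi^{-1}(S)$ with each slice $\{n+1\}\times\Omega$ separately and obtains $X_0$'s measurability from the fact that predictable processes are adapted.
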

\begin{proof}
	Assume $(X_n)_{n \in \mathbb{N}}$ is predictable. Since predictable processes are also adapted, $X_0$ is $F_0$-measurable.
	
	Let $n \in \mathbb{N}$ and let $S \subseteq E$ be an open set. Consider the map $\psi$ defined by $\psi(i,x) = X_i(x)$. We have
	\[
		\psi^{-1}(S) \cap (\{n + 1\} \times \Omega) = \psi^{-1}(S) \cap ((n, n + 1] \times \Omega) \in \Sigma_P
	\]
	On the other hand
	\[
		\psi^{-1}(S) \cap (\{n + 1\} \times \Omega) = \{n + 1\} \times X_{n + 1}^{-1}(S)
	\]
	Applying the previous lemma for 
	\[
		A_i = 
		\begin{cases}
			X_{n + 1}^{-1}(S) &\quad \textrm{if} \; i = n + 1 \\
			\varnothing &\quad \textrm{otherwise}
		\end{cases}
	\]
	we get $X_{n + 1}^{-1}(S) \in F_n$. Hence $(X_{n + 1})_{n \in \mathbb{N}}$ is adapted to the filtration $(F_n)_{n \in \mathbb{N}}$.
	
	For the other direction, assume $(X_{n + 1})_{n \in \mathbb{N}}$ is adapted to the filtration $(F_n)_{n \in \mathbb{N}}$ and $X_0$ is $F_0$-measurable.
	
	Let $S$ be an open set. We have
	\[
		\{0\} \times X_0^{-1}(S) \in \Sigma_P
	\]
	using the definition of $\Sigma_P$ and the fact that $X_0$ is $F_0$-measurable. Similarly, for $n \in \mathbb{N}$ we have $X_{n + 1}^{-1}(S) \in F_n$. Hence
	\[
		\{n + 1\} \times X_{n + 1}^{-1}(S) = (n, n + 1] \times X_{n + 1}^{-1}(S) \in \Sigma_P
	\]
	Putting it all together, we have
	\[
		\psi^{-1}(S) = \left(\bigcup_{i \in \mathbb{N}} \{i\} \times X_i^{-1}(S) \right) \in \Sigma_P
	\]
	since $\Sigma_P$ is closed under countable unions. Thus $(X_n)_{n \in \mathbb{N}}$ is predictable.
\end{proof}

This finalizes our formalization of various types of stochastic processes in Isabelle.

% !TeX root = ../main.tex
% Add the above to each chapter to make compiling the PDF easier in some editors.

\chapter{Martingales}\label{chapter:martingales}

In this section we will introduce and discuss martingales, the namesake of our thesis. Originally referring to a system of betting strategies, martingales have evolved far beyond their gambling origins and have found profound applications in various fields, including finance, probability theory, and statistical analysis. Our formalization aims for a high level of generality while maintaining clarity and simplicity, making it easier for future formalization efforts to build upon our foundation. The results formalized in this chapter can be found in the theory file \texttt{Martingale.Martingale} in \cite{Keskin_A_Formalization_of_2023}.

\section{Fundamentals}

We start with the definition of a martingale.

\begin{definition}
	Let $(F_t)_{t \in [t_0,\infty)}$ be a filtration of the measure space $M$. A stochastic process $(X_t)_{t \in [t_0,\infty)}$ taking values in a Banach space $(E, \lVert \cdot \rVert)$ is a \textit{martingale with respect to the filtration $(F_t)_{t \in [t_0,\infty)}$} if the following conditions hold
	\begin{enumerate}
	\item $(X_t)_{t \in [t_0,\infty)}$ is adapted to the filtration $(F_t)_{[t_0,\infty)}$,
	\item $X_t \in L^1(E)$ for all $t \in [t_0, \infty)$,
	\item $X_s = \mathbb{E}(X_t \;\vert\; F_s)$ $\mu$-a.e. for all $s,t \in [t_0,\infty)$ with $s \le t$.
	\end{enumerate}
	Replacing ``$=$'' in the third condition with ``$\le$'' or ``$\ge$'' gives rise to the definition of a sub- or supermartingale, respectively.
\end{definition}

\begin{remark}
 In addition to what we have discussed in the last chapter, we have introduced the locale \texttt{sigma\_finite\_adapted\_process} which combines the locale \texttt{adapted\_process} with the locale \texttt{sigma\_finite\_filtered\_measure}. Without this additional restriction, we can not use the operator \texttt{cond\_exp}. Similary, the locale  \texttt{sigma\_finite\_adapted\_pro\-cess\-\_order} places a restriction on the Banach space $(E, \lVert \cdot \rVert)$, asserting the existence of an ordering compatible with scalar multiplication. Finally, the locale \texttt{sigma\_finite\_adap\-ted\-\_process\_linorder} further mandates that this ordering be total. We have also introduced locales for discrete-time and continuous-time counterparts.
\end{remark}

Using these additional definitions we introduce the following locale which formalizes martingales.

\begin{isadefinition}
{\small
\begin{lstlisting}[style=isabelle]
locale martingale = sigma_finite_adapted_process +
  assumes integrable: "$\bigwedge i. \; t_0 \le i \implies \texttt{integrable} \; M \; (X \; i)$"
      and martingale_property: "$\bigwedge i \; j. \; t_0 \le i \implies i \le j$
	  $\implies $AE$ \; x \; $in$ \; M. \; X \; i \; x = \texttt{cond\_exp} \; M \; (F \; i) \; (X \; j) \; x$"
\end{lstlisting}
}
\end{isadefinition}

Locales for submartingales and supermartingales are introduced similarly.

\begin{isadefinition}
{\small
\begin{lstlisting}[style=isabelle]
locale submartingale = sigma_finite_adapted_process_order +
  assumes integrable: "$\bigwedge i. \; t_0 \le i \implies \texttt{integrable} \; M \; (X \; i)$"
      and submartingale_property: "$\bigwedge i \; j. \; t_0 \le i \implies i \le j$
	  $\implies $AE$ \; x \; $in$ \; M. \; X \; i \; x \le \texttt{cond\_exp} \; M \; (F \; i) \; (X \; j) \; x$"
\end{lstlisting}
}
\end{isadefinition}

\begin{isadefinition}
{\small
\begin{lstlisting}[style=isabelle]
locale supermartingale = sigma_finite_adapted_process_order +
  assumes integrable: "$\bigwedge i. \; t_0 \le i \implies \texttt{integrable} \; M \; (X \; i)$"
      and supermartingale_property: "$\bigwedge i \; j. \; t_0 \le i \implies i \le j$
	  $\implies $AE$ \; x \; $in$ \; M. \; X \; i \; x \ge \texttt{cond\_exp} \; M \; (F \; i) \; (X \; j) \; x$"
\end{lstlisting}
}
\end{isadefinition}

Any stochastic process that is both a submartingale and a supermartingale is a martingale. Conversely, every martingale is also a submartingale and a supermartingale if there exists an ordering on the Banach space $E$. In anticipation of this result, we introduce the following locale.

\begin{isadefinition}
{\small
\begin{lstlisting}[style=isabelle]
locale martingale_order = martingale $M$ $F$ $t_0$ $X$ for $M$ $F$ $t_0$
	and $X$ :: "_ $\Rightarrow$ _ $\Rightarrow$ _ :: {order_topology, ordered_real_vector}"
\end{lstlisting}
}
\end{isadefinition}

Using thise locale we can state the following lemma which formalizes this fact.

\begin{isalemma}
{\small
\begin{lstlisting}[style=isabelle]
lemma martingale_iff: 
  shows "$\texttt{martingale} \; M \; F \; t_0 \; X \longleftrightarrow \texttt{submartingale} \; M \; F \; t_0 \; X \; \wedge \; \texttt{supermartingale} \; M \; F \; t_0 \; X$"
\end{lstlisting}
}
\end{isalemma}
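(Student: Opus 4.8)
The plan is to prove the biconditional by establishing each implication separately, working in a setting (as furnished by the \texttt{martingale\_order} locale) in which the codomain $E$ is an ordered real vector space carrying the order topology, so that its order is an antisymmetric partial order and the locales \texttt{submartingale} and \texttt{supermartingale} even typecheck. In both directions the argument reduces to the elementary observation that, in a partial order, $x = y$ is equivalent to $x \le y \wedge x \ge y$, transported through the ``almost everywhere'' quantifier.

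For the direction from \texttt{martingale} to \texttt{submartingale} $\wedge$ \texttt{supermartingale}, I would unfold the \texttt{martingale} axioms: $X$ is a \texttt{sigma\_finite\_adapted\_process}, each $X_i$ with $i \ge t_0$ is integrable, and $X_i = \mathbb{E}(X_j \mid F_i)$ $\mu$-a.e.\ whenever $t_0 \le i \le j$. Since the order structure on $E$ is available in this context, \texttt{sigma\_finite\_adapted\_process\_order} follows from \texttt{sigma\_finite\_adapted\_process} merely by recording the extra order type class; the integrability axioms of \texttt{submartingale} and \texttt{supermartingale} are literally the integrability axiom of \texttt{martingale}; and rewriting with the a.e.\ equality and then invoking reflexivity of $\le$ yields both the a.e.\ inequality $X_i \le \mathbb{E}(X_j \mid F_i)$ demanded by \texttt{submartingale} and the reverse one demanded by \texttt{supermartingale}. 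Packaging these three ingredients produces both locale instances.

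For the converse, I would assume both \texttt{submartingale M F t0 X} and \texttt{supermartingale M F t0 X}. Either assumption already supplies \texttt{sigma\_finite\_adapted\_process\_order}, hence \texttt{sigma\_finite\_adapted\_process} by forgetting the order, and the integrability of each $X_i$. To verify the martingale property, fix $i$ and $j$ with $t_0 \le i \le j$: the submartingale axiom gives $X_i \le \mathbb{E}(X_j \mid F_i)$ $\mu$-a.e., the supermartingale axiom gives the reverse inequality $\mu$-a.e.; intersecting the two conull sets (via \texttt{AE\_conjI} / \texttt{eventually\_conj}) and applying antisymmetry of $\le$ pointwise gives $X_i = \mathbb{E}(X_j \mid F_i)$ $\mu$-a.e., which is exactly the \texttt{martingale\_property} axiom (with \texttt{cond\_exp} $M\ (F\ i)$ in the role of $\mathbb{E}(\cdot \mid F_i)$). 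Hence \texttt{martingale M F t0 X} holds.

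The argument is bookkeeping rather than mathematics; the only mild friction is the locale plumbing — having the \texttt{sigma\_finite\_adapted\_process} versus \texttt{sigma\_finite\_adapted\_process\_order} sublocale facts available in the direction one needs — together with the routine fusion of two almost-everywhere statements into one before invoking antisymmetry. I expect the Isabelle proof to close after unfolding the relevant locale definitions and discharging the remaining goals with \texttt{auto}/\texttt{blast} supported by the antisymmetry lemma, with no genuinely hard step.
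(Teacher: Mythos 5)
Your proposal is correct and matches the paper's (largely implicit) argument: the paper states \texttt{martingale\_iff} without a written proof, but the surrounding discussion and the appendix lookup table make clear that the forward direction is discharged via sublocale relations from \texttt{martingale\_order} to \texttt{submartingale}/\texttt{supermartingale} (reflexivity of $\le$ applied to the a.e.\ equality), and the converse by combining the two a.e.\ inequalities and using antisymmetry, exactly as you describe. The only substance is the locale bookkeeping you already identify, so there is nothing to add.
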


Additionally, we have included lemmas for introducing martingales in simple cases. For $f \in L^1(E)$ and $F_{t_0}$-measurable, the constant stochastic process defined by $X_t = f$ is a martingale. The following lemma reflects this.

\begin{isalemma}
{\small
\begin{lstlisting}[style=isabelle]
lemma (in sigma_finite_filtered_measure) martingale_const_fun:  
  assumes "$\texttt{integrable} \; M \; f$" "$f \in \texttt{borel\_measurable} \; (F \; t_0)$"
  shows "$\texttt{martingale} \; M \; F \; t_0 \; (\lambda \_. \; f)$"
\end{lstlisting}
}
\end{isalemma}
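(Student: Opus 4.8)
The plan is to verify the three defining conditions of a martingale for the constant process $X_t = f$, each of which reduces to a lemma already available in the excerpt. First I would establish that $(\lambda\_.\,f)$ is adapted: by \texttt{sets\_F\_mono} we have $\texttt{sets}\,(F\,t_0)\subseteq\texttt{sets}\,(F\,i)$ for every $i\ge t_0$, so the hypothesis $f\in\texttt{borel\_measurable}\,(F\,t_0)$ upgrades to $f\in\texttt{borel\_measurable}\,(F\,i)$ for all such $i$; equivalently, one simply invokes \texttt{adapted\_process\_const\_fun}. Since we are working inside the locale \texttt{sigma\_finite\_filtered\_measure}, combining this adaptedness with the ambient $\sigma$-finiteness hypothesis yields an instance of \texttt{sigma\_finite\_adapted\_process}, which is the base locale of \texttt{martingale}.

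Second, the integrability condition ``$\bigwedge i.\; t_0\le i \implies \texttt{integrable}\,M\,(X\,i)$'' is immediate: $X\,i = f$ for every $i$, and $\texttt{integrable}\,M\,f$ is precisely the first hypothesis. Third, for the martingale property I would fix $i,j$ with $t_0\le i\le j$ and show $\texttt{cond\_exp}\,M\,(F\,i)\,(X\,j) = X\,i$ a.e., i.e. $\texttt{cond\_exp}\,M\,(F\,i)\,f = f$ a.e. This is exactly the content of \texttt{cond\_exp\_F\_meas}, whose premises are $\texttt{integrable}\,M\,f$ (hypothesis) and $f\in\texttt{borel\_measurable}\,(F\,i)$ (obtained from the $F_{t_0}$-measurability of $f$ together with \texttt{sets\_F\_mono} as in the first step). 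Assembling these three facts discharges all the assumptions of the \texttt{martingale} locale.

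There is essentially no hard step here; the only mild bookkeeping is the repeated transfer of $F_{t_0}$-measurability of $f$ to $F_i$-measurability along the filtration, and the correct packaging of the ambient $\sigma$-finiteness into the \texttt{sigma\_finite\_adapted\_process} sublocale so that \texttt{martingale} can be introduced. If the AFP development already provides a combined introduction rule for \texttt{martingale} that takes adaptedness, integrability, and the conditional-expectation identity as separate goals, the whole proof is a three-line application of \texttt{adapted\_process\_const\_fun}, the integrability hypothesis, and \texttt{cond\_exp\_F\_meas}.
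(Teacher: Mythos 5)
Your proposal is correct and is exactly the intended argument: the paper states this lemma without an explicit proof, but the ingredients it supplies for precisely this purpose — \texttt{adapted\_process\_const\_fun} for adaptedness, the integrability hypothesis verbatim, and \texttt{cond\_exp\_F\_meas} (with $F_{t_0}$-measurability transported to $F_i$ along \texttt{sets\_F\_mono}, and $\sigma$-finiteness of $F_i$ inherited from that of $F_{t_0}$ by monotonicity) for the martingale property — are the ones you use. Nothing is missing.
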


The statements below follow directly.

\begin{isacorollary}
{\small
\begin{lstlisting}[style=isabelle]
corollary (in sigma_finite_filtered_measure) martingale_zero: 
	"$\texttt{martingale} \; M \; F \; t_0 \; (\lambda \_ \; \_. \; 0)$" by fastforce

corollary (in finite_filtered_measure) martingale_const: 
	"$\texttt{martingale} \; M \; F \; t_0 \; (\lambda \_ \; \_. \; c)$" by fastforce
\end{lstlisting}
}
\end{isacorollary}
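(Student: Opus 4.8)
The plan is to verify, for the constant stochastic process $X_t = f$, the three conditions packaged in the \texttt{martingale} locale --- namely the \texttt{sigma\_finite\_adapted\_process} obligation, integrability, and the martingale property --- reading each one off from a result established earlier in this excerpt.

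First I would establish that $(\lambda \_.\; f)$ is an \texttt{adapted\_process} with respect to $(F_t)_{t \in [t_0,\infty)}$. Since \texttt{sigma\_finite\_adapted\_process} is just \texttt{adapted\_process} combined with the \texttt{sigma\_finite\_filtered\_measure} assumptions already in force, it remains to check that $f$ is $F_i$-measurable for every $i \ge t_0$. This is immediate from the hypothesis that $f$ is $F_{t_0}$-measurable together with \texttt{sets\_F\_mono}, which gives $\texttt{sets}\;(F\;t_0) \subseteq \texttt{sets}\;(F\;i)$ whenever $t_0 \le i$. This is exactly the situation of \texttt{adapted\_process\_const\_fun}, and it is the point where having a distinguished least index $t_0$ helps: $F_{t_0}$-measurability is the weakest hypothesis that still propagates up the whole filtration.

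Second, the integrability condition $\texttt{integrable}\;M\;(X\;i)$ is literally $\texttt{integrable}\;M\;f$ for every $i$, which is the first assumption. Third, the martingale property for $t_0 \le i \le j$ asks for $\texttt{cond\_exp}\;M\;(F\;i)\;(X\;j) = X\;i$ $\mu$-a.e., i.e.\ $\texttt{cond\_exp}\;M\;(F\;i)\;f = f$ $\mu$-a.e.; since $f$ is integrable and, by the first step, $F_i$-measurable, this is precisely the corollary \texttt{cond\_exp\_F\_meas} on the identity of the conditional expectation on $F$-measurable functions. Collecting these three facts discharges the \texttt{martingale} locale, and the two displayed corollaries follow by specialization: $f = (\lambda \_.\; 0)$ for \texttt{martingale\_zero}, and $f = (\lambda \_.\; c)$ for \texttt{martingale\_const}, using that a constant function is Borel measurable and, in a finite measure space, integrable, and that \texttt{finite\_filtered\_measure} is a sublocale of \texttt{sigma\_finite\_filtered\_measure}. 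I do not expect any genuine obstacle here: the only point needing a moment's care is routing the adaptedness obligation through \texttt{sets\_F\_mono} rather than verifying $F_i$-measurability separately for each index; everything else is bookkeeping against lemmas already proved.
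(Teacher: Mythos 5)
Your proposal is correct and matches the paper's route: both corollaries are obtained by specializing the preceding lemma \texttt{martingale\_const\_fun} (whose proof you correctly reconstruct via \texttt{adapted\_process\_const\_fun}, \texttt{sets\_F\_mono}, and \texttt{cond\_exp\_F\_meas}) to $f = (\lambda\_.\;0)$ and $f = (\lambda\_.\;c)$, with finiteness of the measure supplying integrability of the constant in the second case. The paper presents exactly this as ``the statements below follow directly.''
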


Using our development of the conditional expectation operator, we have the following corollary.

\begin{corollary}
	The stochastic process defined by $X_t = \mathbb{E}(f \;\vert\; F_t)$ for $f \in L^1(E)$ is also a martingale. 
\end{corollary}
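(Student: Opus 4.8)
The plan is to verify the three defining conditions of a martingale for the process $X_t := \mathbb{E}(f \mid F_t) = \texttt{cond\_exp} \; M \; (F\,t) \; f$, working in the locale \texttt{sigma\_finite\_filtered\_measure} so that each $F_t$ gives rise to a $\sigma$-finite sub-$\sigma$-algebra (this follows from the $\sigma$-finiteness of $F_{t_0}$ together with the monotonicity of the filtration, as remarked earlier). The argument is essentially a repackaging of the tower property together with the basic measurability and integrability lemmas for \texttt{cond\_exp}.

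First I would establish adaptedness: by \texttt{borel\_measurable\_cond\_exp}, the function $\texttt{cond\_exp} \; M \; (F\,t) \; f$ is $F_t$-measurable for every $t \ge t_0$, so $(X_t)_{t \in [t_0,\infty)}$ is an adapted process. Integrability of each $X_t$ is then immediate from \texttt{integrable\_cond\_exp}, which yields $\texttt{integrable} \; M \; (\texttt{cond\_exp} \; M \; (F\,t) \; f)$ unconditionally. Together these also discharge the hypotheses needed to enter the \texttt{sigma\_finite\_adapted\_process} locale underlying \texttt{martingale}.

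The only substantive step is the martingale equality $X_s = \mathbb{E}(X_t \mid F_s)$ $\mu$-a.e. for $s \le t$. Here I would invoke the tower property: since $s \le t$ implies $F_s \subseteq F_t$ and $F_t \subseteq \Sigma$, we have
\[
	\texttt{cond\_exp} \; M \; (F\,s) \; (\texttt{cond\_exp} \; M \; (F\,t) \; f) = \texttt{cond\_exp} \; M \; (F\,s) \; f \quad \mu\textrm{-a.e.},
\]
which is exactly $\mathbb{E}(X_t \mid F_s) = X_s$ $\mu$-a.e. Assembling the three facts via the introduction rule for the \texttt{martingale} locale completes the proof.

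I do not expect a real obstacle. The only point requiring a little care is keeping track of the $\sigma$-finiteness hypothesis, since both the tower property and the very existence of \texttt{cond\_exp} with its defining integral identity depend on it; but this is already built into the ambient locale, so it costs nothing here.
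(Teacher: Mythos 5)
Your proposal is correct and follows the same route as the paper, whose proof simply cites the tower property; you additionally spell out the routine adaptedness and integrability checks via \texttt{borel\_measurable\_cond\_exp} and \texttt{integrable\_cond\_exp}, which the paper leaves implicit. No gaps.
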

\begin{proof}
	This follows from the tower property of the conditional expectation.
\end{proof}

\section{Basic Operations and Alternative Characterizations}

First and foremost, we will discuss elementary properties of martingales, submartingales and supermartingales. 

\begin{lemma}
	Let $(X_t)_{t \in [t_0,\infty)}$ be a martingale with respect to a filtration $(F_t)_{t \in [t_0,\infty)}$. We have
	\begin{enumerate}
	\item[$\bullet$] Let $c \in \mathbb{R}$. Then $(c \cdot X_t)_{t \in [t_0,\infty)}$ is also a martingale.
	\item[$\bullet$] Let $(Y_t)_{t \in [t_0,\infty)}$ be another martingale with respect to the same filtration $(F_t)_{t \in [t_0,\infty)}$. Then $(X_t + Y_t)_{t \in [t_0,\infty)}$ and $(X_t - Y_t)_{t \in [t_0,\infty)}$ are also martingales.
	\end{enumerate}
\end{lemma}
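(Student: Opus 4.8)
The plan is to verify, for each of the three processes $(c \cdot X_t)_{t \in [t_0,\infty)}$, $(X_t + Y_t)_{t \in [t_0,\infty)}$ and $(X_t - Y_t)_{t \in [t_0,\infty)}$, the three defining conditions of a martingale: adaptedness, $L^1$-integrability at every index, and the martingale identity. For adaptedness I would simply invoke the closure properties already recorded for adapted processes: an adapted process stays adapted after multiplication by a real scalar and after pointwise addition or subtraction of another adapted process with respect to the same filtration. Integrability of $c \cdot X_t$, $X_t + Y_t$ and $X_t - Y_t$ is immediate from the integrability of $X_t$ and $Y_t$ together with the fact that the Bochner-integrable functions $\Omega \to E$ form a real vector space.

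The only real content is the martingale identity, and it reduces entirely to linearity of the conditional expectation. Fix $s, t \in [t_0,\infty)$ with $s \le t$. For the scalar multiple, the constant function with value $c$ is $F_s$-measurable, so ``pulling out what's known'' gives $\texttt{cond\_exp}\;M\;(F_s)\;(c \cdot X_t) = c \cdot \texttt{cond\_exp}\;M\;(F_s)\;X_t = c \cdot X_s$ $\mu$-a.e., the last step being the martingale property of $X$. For the sum, I would combine the lemma stating that $\texttt{has\_cond\_exp}\;M\;(F_s)\;f\;f'$ and $\texttt{has\_cond\_exp}\;M\;(F_s)\;g\;g'$ entail $\texttt{has\_cond\_exp}\;M\;(F_s)\;(f+g)\;(f'+g')$ with the existence corollary $\texttt{has\_cond\_expI}$ and the characterization lemma $\texttt{cond\_exp\_charact}$ to obtain $\texttt{cond\_exp}\;M\;(F_s)\;(X_t + Y_t) = \texttt{cond\_exp}\;M\;(F_s)\;X_t + \texttt{cond\_exp}\;M\;(F_s)\;Y_t = X_s + Y_s$ $\mu$-a.e. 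The difference case follows by writing $X_t - Y_t = X_t + (-1) \cdot Y_t$ and chaining the two cases above.

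Since the genuine mathematics (additivity and real-homogeneity of $\texttt{cond\_exp}$) is already available from the previous chapter, the main obstacle is purely one of bookkeeping: each identity holds only almost everywhere, so one must intersect the relevant $\mu$-null sets for a given pair $s \le t$ and argue via $\texttt{cond\_exp\_charact}$ rather than with a literal pointwise representative. Note that the index set $[t_0,\infty)$ may be uncountable, but because the martingale property is asserted separately for each pair $s \le t$ there is no need to take an uncountable union of null sets, so no measurability pathology arises.
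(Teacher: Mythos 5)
Your proposal is correct and follows essentially the same route as the paper, which disposes of the lemma in one line by appealing to linearity of the conditional expectation and of the Bochner integral; you simply spell out the bookkeeping (adaptedness, integrability, and the a.e.\ identity via \texttt{cond\_exp\_charact}). The only cosmetic difference is that you invoke ``pulling out what's known'' for the scalar case where plain real-homogeneity of \texttt{cond\_exp} suffices, but that is a valid (if slightly heavier) instance of the same linearity argument.
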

\begin{proof}
	The statements follow using the linearity of the conditional expectation and the Bochner-integral.
\end{proof}

Similarly, we have the following statement for submartingales.

\begin{lemma}
	Let $(X_t)_{t \in [t_0,\infty)}$ be a submartingale with respect to a filtration $(F_t)_{t \in [t_0,\infty)}$. We have
	\begin{enumerate}
	\item[$\bullet$] Let $c \in \mathbb{R}$ with $c \ge 0$. Then $(c \cdot X_t)_{t \in [t_0,\infty)}$ is also a submartingale.
	\item[$\bullet$] Let $c \in \mathbb{R}$ with $c \le 0$. Then $(c \cdot X_t)_{t \in [t_0,\infty)}$ is a supermartingale.
	\item[$\bullet$] Let $(Y_t)_{t \in [t_0,\infty)}$ be another submartingale with respect to the same filtration $(F_t)_{t \in [t_0,\infty)}$. Then $(X_t + Y_t)_{t \in [t_0,\infty)}$ is also a submartingale.
	\end{enumerate}
\end{lemma}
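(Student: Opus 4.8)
The plan is to verify, in each of the three cases, the three defining conditions of a (sub-/super-)martingale in turn: adaptedness, integrability, and the relevant inequality relating the process to its conditional expectation. Nothing new is needed — everything follows by assembling facts already established: the stability of adapted processes and of integrable functions under scalar multiplication and addition, the linearity of the conditional expectation, and its monotonicity (the corollary stating that $f \ge g$ $\mu$-a.e. implies $\texttt{cond\_exp} \; M \; F \; f \ge \texttt{cond\_exp} \; M \; F \; g$ $\mu$-a.e.).

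For the first bullet, fix $c \ge 0$. Adaptedness of $(c \cdot X_t)_{t \in [t_0,\infty)}$ and integrability of each $c \cdot X_t$ follow from the corresponding closure lemmas for adapted processes and integrable functions under scalar multiplication. For the submartingale inequality, fix $t_0 \le s \le t$. Linearity of the conditional expectation gives
\[
	\texttt{cond\_exp} \; M \; (F \; s) \; (c \cdot X_t) = c \cdot \texttt{cond\_exp} \; M \; (F \; s) \; (X_t) \quad \mu\textrm{-a.e.}
\]
Combining this with the submartingale property $X_s \le \texttt{cond\_exp} \; M \; (F \; s) \; (X_t)$ $\mu$-a.e. and multiplying through by the non-negative scalar $c$ — which preserves the order, since $E$ is an \texttt{ordered\_real\_vector} — yields $c \cdot X_s \le \texttt{cond\_exp} \; M \; (F \; s) \; (c \cdot X_t)$ $\mu$-a.e., as required. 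The second bullet is identical, except that multiplication by a non-positive $c$ reverses the inequality, producing the supermartingale property $c \cdot X_s \ge \texttt{cond\_exp} \; M \; (F \; s) \; (c \cdot X_t)$ $\mu$-a.e.

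For the third bullet, adaptedness and integrability of $(X_t + Y_t)_{t \in [t_0,\infty)}$ follow from the closure lemmas for sums. For the inequality, fix $t_0 \le s \le t$; linearity of the conditional expectation gives
\[
	\texttt{cond\_exp} \; M \; (F \; s) \; (X_t + Y_t) = \texttt{cond\_exp} \; M \; (F \; s) \; (X_t) + \texttt{cond\_exp} \; M \; (F \; s) \; (Y_t) \quad \mu\textrm{-a.e.},
\]
and adding the two submartingale inequalities for $X$ and $Y$ — legitimate on the intersection of the two associated $\mu$-null sets — gives $X_s + Y_s \le \texttt{cond\_exp} \; M \; (F \; s) \; (X_t + Y_t)$ $\mu$-a.e.

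I do not expect a genuine obstacle: the content is purely the marshalling of facts already in hand. The only points requiring care are bookkeeping ones — every identity about the conditional expectation holds only $\mu$-a.e., so the relevant null sets must be collected before the inequalities are chained, and the sign of $c$ dictates whether multiplication preserves or reverses the order, which is precisely why the non-positive case lands on a supermartingale. Should a dedicated lemma for scaling the conditional expectation by a real constant not already be available, it is obtained as the special case of ``pulling out what's known'' applied to the constant (hence $F$-measurable) function $x \mapsto c$.
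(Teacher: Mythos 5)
Your proposal is correct and follows essentially the same route as the paper, which disposes of all three bullets by appealing to the linearity of the conditional expectation and of the Bochner integral; you simply spell out the additional (and genuinely needed) ingredients, namely the compatibility of the ordering with non-negative scalar multiplication and the bookkeeping of the $\mu$-null sets. The fallback you mention for the scalar identity $\texttt{cond\_exp}\;M\;(F\,s)\;(c\cdot X_t)=c\cdot\texttt{cond\_exp}\;M\;(F\,s)\;(X_t)$ via the characterization (uniqueness) lemma is exactly how it is obtained in the development.
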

\begin{proof}
	These statements also follow using the linearity of the conditional expectation and the Bochner-integral.
\end{proof}

\begin{remark}
In the lemma above we can exchange ``submartingale'' with ``supermartingale'' and the results are still valid.
\end{remark}

Going forward, we use the martingale property and the characterization of the conditional expectation to show the following lemma.

\begin{lemma}
	Let $(X_t)_{t \in [t_0,\infty)}$ be a martingale with respect to a filtration $(F_t)_{t \in [t_0,\infty)}$. Let $i,j \in [t_0,\infty)$ with $i \le j$ and $A \in F_i$. Then
	\[
		\int_A X_i \; \textrm{d}\mu = \int_A X_j \; \textrm{d}\mu
	\]
\end{lemma}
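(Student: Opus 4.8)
The plan is to read off the claim from the martingale property together with the defining equation of the conditional expectation. First I would fix $i, j \in [t_0,\infty)$ with $i \le j$ and $A \in F_i$. Since $X_j \in L^1(E)$ by the integrability assumption of the martingale locale, the corollary \texttt{has\_cond\_expI} applies and yields ``$\texttt{has\_cond\_exp}\; M\; (F\; i)\; (X\; j)\; (\texttt{cond\_exp}\; M\; (F\; i)\; (X\; j))$''. Unfolding the predicate \texttt{has\_cond\_exp}, this gives in particular the equality
\[
	\int_A X_j \; \textrm{d}\mu = \int_A \texttt{cond\_exp}\; M\; (F\; i)\; (X\; j) \; \textrm{d}\mu,
\]
valid for every $A \in F_i$, as well as integrability of $\texttt{cond\_exp}\; M\; (F\; i)\; (X\; j)$.

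Next I would invoke the martingale property, which states $X_i = \mathbb{E}(X_j \mid F_i) = \texttt{cond\_exp}\; M\; (F\; i)\; (X\; j)$ $\mu$-a.e. Since two integrable functions that agree $\mu$-a.e. have the same integral over any measurable set (and $A \in F_i \subseteq \Sigma$ because $F_i$ is a subalgebra of $M$), we obtain
\[
	\int_A X_i \; \textrm{d}\mu = \int_A \texttt{cond\_exp}\; M\; (F\; i)\; (X\; j) \; \textrm{d}\mu.
\]
Chaining the two displayed equalities gives $\int_A X_i \; \textrm{d}\mu = \int_A X_j \; \textrm{d}\mu$, which is the assertion.

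There is no real obstacle here; the only points requiring a little care are bookkeeping ones. One must check that $A$ is genuinely $\Sigma$-measurable, which follows from \texttt{subalgebra}, so that the Bochner integrals $\int_A X_i\,\textrm{d}\mu$ and $\int_A X_j\,\textrm{d}\mu$ are well defined; and one must ensure the a.e.\ equality of $X_i$ and the conditional expectation transfers to equality of the restricted integrals, for which integrability of both sides (already available) is what is needed. Everything else is a direct consequence of the definition of \texttt{has\_cond\_exp} and the martingale property.
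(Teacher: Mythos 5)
Your proposal is correct and follows exactly the route the paper indicates (it only sketches this lemma in one sentence: ``we use the martingale property and the characterization of the conditional expectation''). Combining \texttt{has\_cond\_expI} applied to $X_j$ with the a.e.\ identity $X_i = \texttt{cond\_exp}\; M\; (F\; i)\; (X_j)$ and the measurability of $A$ via \texttt{subalgebra} is precisely the intended argument, and your bookkeeping remarks are the right ones.
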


This lemma already shows us the intuition behind the definition of a martingale. Let $A$ be a set which is measurable at time $i$, i.e. some property of the process which we can inspect at time $i$. The average value that the process has on this set at time $i$ is equal to the average value it will have on the same set at a future time $j$. Essentially, this is the reason why martingales are employed for modeling fair games that incorporate an element of chance. 

Similarly, for submartingales we have the following lemmas.

\begin{lemma}
	Let $(X_t)_{t \in [t_0,\infty)}$ be a submartingale with respect to a filtration $(F_t)_{t \in [t_0,\infty)}$. Let $i,j \in [t_0,\infty)$ with $i \le j$ and $A \in F_i$. Then
	\[
		\int_A X_i \; \textrm{d}\mu \le \int_A X_j \; \textrm{d}\mu
	\]
\end{lemma}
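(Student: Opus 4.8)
The plan is to reproduce the argument used for the preceding martingale lemma, replacing the equality coming from the martingale property with the inequality coming from the submartingale property. The three ingredients are the submartingale property, monotonicity of the Bochner integral on ordered Banach spaces, and the defining equality of the conditional expectation.

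First I would apply the submartingale property to the pair $i \le j$, which yields
\[
	X_i \le \texttt{cond\_exp} \; M \; (F\; i) \; X_j \quad \mu\text{-a.e.}
\]
Since the indicator $\mathbf{1}_A$ is non-negative, multiplying through gives $\mathbf{1}_A \cdot X_i \le \mathbf{1}_A \cdot \texttt{cond\_exp} \; M \; (F\; i) \; X_j$ $\mu$-a.e. Both sides are integrable: $X_i$ is integrable by the submartingale assumption, $\texttt{cond\_exp} \; M \; (F\; i) \; X_j$ is integrable by \texttt{integrable\_cond\_exp}, and multiplication by a bounded measurable function preserves integrability. Applying the monotonicity corollary of Lemma~\ref{nonneg_integral} (integral of a function that dominates another is larger) then gives
\[
	\int_A X_i \; \textrm{d}\mu \;=\; \int \mathbf{1}_A \cdot X_i \; \textrm{d}\mu \;\le\; \int \mathbf{1}_A \cdot \texttt{cond\_exp} \; M \; (F\; i) \; X_j \; \textrm{d}\mu \;=\; \int_A \texttt{cond\_exp} \; M \; (F\; i) \; X_j \; \textrm{d}\mu.
\]

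Next I would rewrite the right-hand side via the defining property of the conditional expectation. The filtration is $\sigma$-finite by the \texttt{sigma\_finite\_filtered\_measure} assumption carried inside \texttt{sigma\_finite\_adapted\_process\_order}, and $X_j$ is integrable, so the corollary \texttt{has\_cond\_expI} applies and yields $\int_A \texttt{cond\_exp} \; M \; (F\; i) \; X_j \; \textrm{d}\mu = \int_A X_j \; \textrm{d}\mu$ for every $A \in F\; i$. Chaining this with the inequality above establishes $\int_A X_i \; \textrm{d}\mu \le \int_A X_j \; \textrm{d}\mu$.

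I do not expect a genuine obstacle here. The only points needing a little care are (i) making sure the monotonicity step is applied to honestly integrable functions, which forces one to carry the indicator-multiplied versions so that $\int_A$ is literally an integral over the whole space, and (ii) discharging the integrability side conditions, both of which reduce to \texttt{integrable\_cond\_exp} and closure of integrability under multiplication by an indicator. Everything else is a direct transcription of the martingale-case proof with ``$=$'' weakened to ``$\le$''.
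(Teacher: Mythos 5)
Your proof is correct and follows essentially the route the paper intends (the paper states this lemma without an explicit proof, but derives the preceding martingale version from the same three ingredients): the submartingale property, monotonicity of the Bochner integral, and the defining equality of \texttt{cond\_exp} on sets of $F_i$. The one point worth flagging is that the integral-monotonicity corollary you invoke is established in the paper only for \emph{linearly} ordered Banach spaces, which is exactly why the formalized counterpart is \texttt{submartingale\_linorder.set\_integral\_le} rather than a lemma of the bare \texttt{submartingale} locale; your argument silently inherits that linearity assumption at the monotonicity step.
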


Replacing ``$\le$'' with ``$\ge$'' gives a corresponding introduction lemma for supermartingales.

In this case, the intuition is similar. The average value of a submartingale on a set which is measurable at time $i$ is less than or equal to the average value it will take on the same set at a future time $j$. The case for a supermartingale is analogous. Here is a simple example illustrating this concept.

\begin{example}
Consider a coin-tossing game, where the coin lands on heads with probability $p \in [0,1]$. Assume that the gambler wins a fixed amount $c > 0$ on a heads outcome and loses the same amount $c$ on a tails outcome. Let $(X_n)_{n \in \mathbb{N}}$ be a stochastic process, where $X_n$ denotes the gambler's fortune after the $n$-th coin toss. Then, we have the following three cases.
\begin{enumerate}
\item If $p = \frac{1}{2}$, it means the coin is fair and has an equal chance of landing heads or tails. In this case, the gambler, on average, neither wins nor loses money over time. The expected value of the gambler's fortune stays the same over time. Therefore, $(X_n)_{n \in \mathbb{N}}$ is a martingale.
\item If $p \ge \frac{1}{2}$, it means the coin is biased in favor of heads. In this case, the gambler is more likely to win money on each bet. Over time, the gambler's fortune tends to increase on average. Therefore, $(X_n)_{n \in \mathbb{N}}$ is a submartingale.
\item If $p \le \frac{1}{2}$, it means the coin is biased in favor of tails. In this scenario, the gambler is more likely to lose money on each bet. Over time, the gambler's fortune decreases on average. Therefore, $(X_n)_{n \in \mathbb{N}}$ is a supermartingale.
\end{enumerate}
\end{example}

The property discussed above is of such fundamental significance that it can be employed to characterize martingales, submartingales, and supermartingales. We present the formal statement first, followed by a subsequent discussion of the proof idea.

\begin{lemma}
	Let $(X_t)_{t \in [t_0,\infty)}$ be an adapted process with respect to the filtration $(F_t)_{t \in [t_0,\infty)}$ consisting of Bochner-integrable random variables.
  Assume 
  \[
  \normalfont \int_A X_i \;\textrm{d}\mu = \int_A X_j \;\textrm{d}\mu
  \] 
  for all $A \in F_i$ and for all $i,j \in [t_0,\infty)$ with $i \le j$. Then the process $(X_t)_{t \in [t_0,\infty)}$ is a martingale.
\end{lemma}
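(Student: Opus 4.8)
The plan is to verify directly the three defining conditions of a martingale for the process $(X_t)_{t \in [t_0,\infty)}$. Conditions (1) and (2) --- adaptedness to $(F_t)_{t \in [t_0,\infty)}$ and Bochner-integrability of each $X_t$ --- are exactly the hypotheses of the lemma, so nothing remains to be shown there. The only substantive point is condition (3): that $X_i = \mathbb{E}(X_j \mid F_i)$ $\mu$-a.e. whenever $t_0 \le i \le j$, i.e. that $\texttt{cond\_exp}\; M\; (F\; i)\; (X\; j) = X\; i$ $\mu$-a.e.

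For this I would appeal to the characterization of the conditional expectation established earlier, namely the lemma \texttt{cond\_exp\_charact}. Fix $i, j$ with $t_0 \le i \le j$ and instantiate that lemma with $f := X_j$ and $g := X_i$. Its four premises are then discharged as follows: the equality $\int_A X_j \;\textrm{d}\mu = \int_A X_i \;\textrm{d}\mu$ for all $A \in F_i$ is precisely the assumption of the present lemma; integrability of $X_j$ and of $X_i$ holds because the process consists of Bochner-integrable random variables; and $F_i$-measurability of $X_i$ is exactly adaptedness. The characterization lemma then yields $\texttt{cond\_exp}\; M\; (F\; i)\; (X\; j) = X\; i$ $\mu$-a.e., which is condition (3).

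It remains only to note that the ambient locale \texttt{sigma\_finite\_adapted\_process} supplies the $\sigma$-finiteness of the filtration that the conditional expectation machinery requires, so \texttt{cond\_exp\_charact} is indeed available in this context. Consequently there is no real obstacle here: the lemma is essentially a repackaging of \texttt{cond\_exp\_charact} together with the defining data of an adapted, integrable process. In the formalization the proof should therefore reduce to unfolding the \texttt{martingale} locale and discharging each goal by the corresponding hypothesis, invoking \texttt{cond\_exp\_charact} to obtain the martingale property.
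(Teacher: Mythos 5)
Your proof is correct and takes essentially the same route as the paper: the paper's argument combines the defining integral identity of $\texttt{cond\_exp}\;M\;F_i\;X_j$ with the uniqueness-of-densities corollary and the $F_i$-measurability of $X_i$, which is precisely the content of \texttt{cond\_exp\_charact} that you invoke directly with $f := X_j$, $g := X_i$. The bookkeeping (adaptedness, integrability, $\sigma$-finiteness of $F_i$ inherited from $F_{t_0}$) is handled identically in both.
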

\begin{proof}
Using the defining property of the conditional expectation, the second assumption can be restated as
\[
	\int_A X_i \;\textrm{d}\mu = \int_A \texttt{cond\_exp} \; M \; F_i \; X_j \;\textrm{d}\mu
\]
Applying the lemma on the uniqueness of densities (\ref{cor:density_unique}) and the fact that both functions are $F_i$-measurable, we get
\[
	\quad X_i = \texttt{cond\_exp} \; M \; F_i \; X_j \;\; \mu\vert_{F_i}\textrm{-a.e.}
\]
The statement follows from the fact that $F_i \subseteq \Sigma$, i.e. all $\mu\vert_{F_i}$-null sets are $\mu$-null sets. 
\end{proof}

Analogously, we have the following introduction lemma for submartingales.

\begin{lemma}
	Let $(X_t)_{t \in [t_0,\infty)}$ be an adapted process with respect to the filtration $(F_t)_{t \in [t_0,\infty)}$ consisting of Bochner-integrable random variables.
  Assume 
  \[
  \normalfont \int_A X_i \;\textrm{d}\mu \le \int_A X_j \;\textrm{d}\mu
  \] 
  for all $A \in F_i$ and for all $i,j \in [t_0,\infty)$ with $i \le j$. Then the process $(X_t)_{t \in [t_0,\infty)}$ is a submartingale.
\end{lemma}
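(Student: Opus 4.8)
The plan is to follow the proof of the preceding martingale introduction lemma almost verbatim, substituting for the uniqueness-of-densities step its one-sided analogue. Since each $X_j$ is Bochner-integrable and $F_i$ is a $\sigma$-finite sub-$\sigma$-algebra of $\Sigma$ (the ambient setting being that of a $\sigma$-finite filtered measure), the corollary \texttt{has\_cond\_expI} guarantees that $\texttt{cond\_exp}\, M\, F_i\, X_j$ exists and satisfies its defining identity. Using this, I would rewrite the hypothesis as
\[
	\int_A X_i \;\textrm{d}\mu \;\le\; \int_A \texttt{cond\_exp}\, M\, F_i\, X_j \;\textrm{d}\mu \qquad\text{for all } A \in F_i .
\]

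Next I would consider the difference $g := \texttt{cond\_exp}\, M\, F_i\, X_j - X_i$, which is $F_i$-measurable — $X_i$ because $(X_t)$ is adapted, and $\texttt{cond\_exp}\, M\, F_i\, X_j$ by \texttt{borel\_measurable\_cond\_exp} — and integrable, by \texttt{integrable\_cond\_exp} together with the integrability of $X_i$. The displayed inequality says exactly that $\int_A g \;\textrm{d}\mu\vert_{F_i} \ge 0$ for every $A \in F_i$. As $\mu\vert_{F_i}$ is a $\sigma$-finite measure on $(\Omega, F_i)$, I would apply the corollary on non-negativity of densities (the consequence of the Averaging Theorem, Theorem~\ref{thm:averaging_theorem}, taken with $S = \{y \in E \mid y \ge 0\}$) in this measure space to obtain $g \ge 0$ $\mu\vert_{F_i}$-a.e., i.e. $X_i \le \texttt{cond\_exp}\, M\, F_i\, X_j$ $\mu\vert_{F_i}$-a.e. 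Since every $\mu\vert_{F_i}$-null set is a $\mu$-null set, the inequality holds $\mu$-a.e., and together with adaptedness and integrability this is precisely the defining clause of \texttt{submartingale}.

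Beyond the routine bookkeeping, the one step that needs genuine care is the invocation of the density non-negativity corollary: it rests on $\{y \in E \mid y \ge 0\}$ being a closed subset of $E$, which holds because the topology on $E$ coincides with the order topology of its ordering (the \texttt{order\_topology} constraint built into the \texttt{sigma\_finite\_adapted\_process\_order} locale). One must also stay disciplined about the passage between $\mu$ and $\mu\vert_{F_i}$: the two measures agree on integrals of $F_i$-measurable functions over sets in $F_i$, but the concluding almost-everywhere statement is first obtained for $\mu\vert_{F_i}$ and must be transported back to $\mu$ via the null-set inclusion $F_i \subseteq \Sigma$. The analogous supermartingale lemma (replacing ``$\le$'' by ``$\ge$'') follows the same route, or more economically by applying this lemma to $(-X_t)_{t \in [t_0,\infty)}$.
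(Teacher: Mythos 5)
Your proof is correct and follows the same route the paper intends: the paper explicitly says the idea is the same as in the martingale introduction lemma, with the uniqueness-of-densities step replaced by its one-sided analogue (the non-negativity corollary of the averaging theorem, which needs the linear ordering so that $\{y \in E \mid y \ge 0\}$ is closed), followed by the passage from $\mu\vert_{F_i}$-a.e.\ to $\mu$-a.e.\ via $F_i \subseteq \Sigma$. Your closing remark that the supermartingale version is obtained by applying this lemma to $(-X_t)_{t \in [t_0,\infty)}$ is also exactly how the paper handles it.
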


The idea of the proof is the same. To prevent redundancy in our formalization, we have demonstrated the lemma for supermartingales\footnote{\texttt{Martingale.supermartingale\_of\_set\_integral\_ge} in \cite{Keskin_A_Formalization_of_2023}} by equivalently showing that the stochastic process $(-X_t)_{t \in [t_0,\infty)}$ is a submartingale using the lemma above\footnote{\texttt{Martingale.submartingale\_of\_set\_integral\_le} in \cite{Keskin_A_Formalization_of_2023}}. This is easily achieved using the locale system.

Another way to characterize martingales is by examining the conditional expectation of the difference between the process's values at different points in time. We have the following lemmas to address this.

\begin{lemma}
  Let $(X_n)_{n \in \mathbb{N}}$ be an adapted process with respect to the filtration $(F_n)_{n \in \mathbb{N}}$ consisting of Bochner-integrable random variables. Assume 
  \[
  \normalfont \texttt{cond\_exp} \; M \; (F_i) \; (X_{i+1} - X_i) = 0 \quad\mu\textrm{-a.e.}
  \] 
  Then the process $(X_n)_{n \in \mathbb{N}}$ is a martingale.
\end{lemma}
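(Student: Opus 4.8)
The plan is to reduce to the integral characterization of martingales proved earlier in this section: for an adapted process consisting of Bochner-integrable random variables, it suffices to show that
\[
	\int_A X_i \;\textrm{d}\mu = \int_A X_j \;\textrm{d}\mu
\]
for every $A \in F_i$ and all $i, j \in \mathbb{N}$ with $i \le j$. Fixing $i$ and a set $A \in F_i$, I would establish this equality by induction on $j \ge i$.

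The base case $j = i$ is trivial. For the inductive step, assume $\int_A X_i \;\textrm{d}\mu = \int_A X_j \;\textrm{d}\mu$; I want the same statement with $j+1$ in place of $j$. Since $i \le j$, the monotonicity of the filtration gives $A \in F_i \subseteq F_j$, so $A$ is a legitimate test set for the conditional expectation with respect to $F_j$. Combining the defining property of \texttt{cond\_exp} with the hypothesis $\texttt{cond\_exp} \; M \; (F_j) \; (X_{j+1} - X_j) = 0$ $\mu$-a.e.\ yields
\[
	\int_A (X_{j+1} - X_j) \;\textrm{d}\mu = \int_A \texttt{cond\_exp} \; M \; (F_j) \; (X_{j+1} - X_j) \;\textrm{d}\mu = 0.
\]
Because $X_{j+1}$ and $X_j$ are Bochner-integrable, the left-hand integral splits as $\int_A X_{j+1} \;\textrm{d}\mu - \int_A X_j \;\textrm{d}\mu$, so $\int_A X_{j+1} \;\textrm{d}\mu = \int_A X_j \;\textrm{d}\mu$, and the induction hypothesis gives $\int_A X_{j+1} \;\textrm{d}\mu = \int_A X_i \;\textrm{d}\mu$. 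This closes the induction, and since $i$ and $A$ were arbitrary the integral criterion is met; the earlier characterization lemma then delivers that $(X_n)_{n \in \mathbb{N}}$ is a martingale.

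I do not expect a genuine obstacle here. The only points needing care are invoking filtration monotonicity to move $A$ from $F_i$ into $F_j$ \emph{before} applying the defining property of conditional expectation with respect to $F_j$, keeping $i$ and $A$ fixed throughout the induction on $j$, and using integrability to justify the linearity step that separates $\int_A X_{j+1}$ from $\int_A X_j$ — all of which are available from the ambient locale assumptions.
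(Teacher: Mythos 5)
Your argument is correct: the hypothesis together with the defining property of \texttt{cond\_exp} (available because $X_{j+1}-X_j$ is integrable and the filtration is $\sigma$-finite, so \texttt{has\_cond\_expI} applies) gives $\int_A (X_{j+1}-X_j)\,\textrm{d}\mu = 0$ for every $A \in F_j$, and your induction on $j$ with $A \in F_i \subseteq F_j$ fixed correctly propagates the set-integral equality to all $j \ge i$, after which the set-integral characterization lemma closes the argument. The paper leaves this lemma without an explicit proof, but its surrounding machinery points to a different decomposition: one first observes that, by linearity of the conditional expectation and the identity property on $F_i$-measurable functions (\texttt{cond\_exp\_F\_meas}), the hypothesis $\texttt{cond\_exp}\;M\;F_i\;(X_{i+1}-X_i) = 0$ a.e.\ rewrites as $X_i = \texttt{cond\_exp}\;M\;F_i\;X_{i+1}$ a.e., and then invokes the discrete-time successor characterization (\texttt{martingale\_nat}), whose induction is carried out at the level of a.e.\ identities using the tower property rather than at the level of set integrals. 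The two routes buy slightly different things: the paper's route reuses the already-established algebra of the conditional expectation operator and keeps the induction in one reusable lemma, while yours is more elementary in that it needs only the defining integral identity of \texttt{cond\_exp} and linearity of the Bochner integral, at the cost of essentially re-proving the successor-to-general-index induction inline. Both are sound; just be aware that your inductive step silently re-derives what the paper packages as \texttt{martingale\_of\_set\_integral\_eq\_Suc}.
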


This characterization also has an informal interpretation. The expected future value of a martingale does not change, when we restrict it to those events that we can measure right now. In a similar fashion, the value of a submartingale is expected to increase and the value of a supermartingale is expected to decrease as time progresses.

\section{Discrete-Time Martingales}

Discrete-time martingales are widely used to model processes that evolve over a sequence of discrete time steps while satisfying the martingale property, such as financial asset prices, random walks, and stochastic games. We define the following locales.

\begin{isadefinition}
{\small
\begin{lstlisting}[style=isabelle]
locale nat_martingale = martingale $M$ $F$ "$0 :: \; \texttt{nat}$" $X$ for $M$ $F$ $X$
locale nat_submartingale = submartingale $M$ $F$ "$0 :: \; \texttt{nat}$" $X$ for $M$ $F$ $X$
locale nat_supermartingale = supermartingale $M$ $F$ "$0 :: \; \texttt{nat}$" $X$ for $M$ $F$ $X$
\end{lstlisting}
}
\end{isadefinition}

Many of the statements we have made above, can be simplified when the indexing set is the natural numbers. Given a point in time $i \in \mathbb{N}$, it suffices to consider the successor $i + 1$, instead of all future times $j \ge i$. This can be stated as follows.

\begin{lemma}
	Let $(F_n)_{n \in \mathbb{N}}$ be a filtration of the measure space $M$. A stochastic process $(X_n)_{n \in \mathbb{N}}$ which is adapted to the filtration $(F_n)_{n \in \mathbb{N}}$ is a martingale, if and only if it is integrable at all time indices $n \in \mathbb{N}$ and $X_n = \mathbb{E}(X_{n + 1} \;\vert\; F_n)$ $\mu$-a.e. for all $n \in \mathbb{N}$.
\end{lemma}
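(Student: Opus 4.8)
The plan is to prove the two directions separately, with the forward direction essentially immediate and the backward direction requiring a short induction on the time lag.

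For the forward implication, suppose $(X_n)_{n\in\mathbb{N}}$ is a martingale. Integrability at every index is part of the definition, and specializing the martingale property to the pair $(n,n+1)$ — which satisfies $n \le n+1$ — yields $X_n = \mathbb{E}(X_{n+1} \mid F_n)$ $\mu$-a.e. directly. Adaptedness is already among the hypotheses.

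For the backward implication, assume $(X_n)$ is adapted, $X_n \in L^1(E)$ for all $n$, and $X_n = \mathbb{E}(X_{n+1}\mid F_n)$ $\mu$-a.e. for every $n$. We must recover the full martingale property $X_i = \mathbb{E}(X_j\mid F_i)$ $\mu$-a.e. for all $i \le j$. Fix $i$ and induct on $j$, ranging over $j \ge i$. The base case $j=i$ is $X_i = \mathbb{E}(X_i\mid F_i)$ $\mu$-a.e., which holds because $X_i$ is $F_i$-measurable and integrable — this is precisely the corollary \texttt{cond\_exp\_F\_meas} (identity on $F$-measurable functions). For the inductive step, assume $X_i = \mathbb{E}(X_j\mid F_i)$ $\mu$-a.e. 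Since $F_i \subseteq F_j$, the tower property gives $\mathbb{E}(X_{j+1}\mid F_i) = \mathbb{E}(\mathbb{E}(X_{j+1}\mid F_j)\mid F_i)$ $\mu$-a.e.; substituting the one-step hypothesis $\mathbb{E}(X_{j+1}\mid F_j) = X_j$ $\mu$-a.e. — and using that \texttt{cond\_exp} respects $\mu$-a.e.\ equality of its argument — yields $\mathbb{E}(X_{j+1}\mid F_i) = \mathbb{E}(X_j\mid F_i)$ $\mu$-a.e., which by the induction hypothesis equals $X_i$ $\mu$-a.e. This closes the induction.

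The only genuine subtlety, and the main point to watch in the formal development, is bookkeeping of the exceptional $\mu$-null sets: each use of the tower property, the one-step hypothesis, and the induction hypothesis discards a null set, but since the induction runs over the countable set $\{j : j \ge i\}$ (and ultimately over all pairs $(i,j)$), the union of these sets remains $\mu$-null and the statements combine cleanly. In Isabelle this is handled by the \texttt{AE} machinery once each step is phrased in the \texttt{AE\_mp}/\texttt{AE\_cong} style, so the argument reduces to a \texttt{nat\_induct} wrapped around the tower-property lemma and the congruence lemma for \texttt{cond\_exp}. Finally I would package the result as an \texttt{iff} lemma in the \texttt{nat\_martingale} setting and remark that the analogous arguments — replacing the tower-property substitution step by monotonicity of \texttt{cond\_exp} — give the corresponding one-step characterizations for \texttt{nat\_submartingale} and \texttt{nat\_supermartingale}.
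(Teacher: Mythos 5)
Your proof is correct and follows essentially the same route as the paper: an induction on the time gap whose base case is the identity of \texttt{cond\_exp} on $F$-measurable integrable functions and whose step combines the tower property with the one-step hypothesis and congruence of \texttt{cond\_exp} under $\mu$-a.e.\ equality. The only (cosmetic) difference is that you extend the chain at the upper index $j \to j+1$ with $i$ fixed, whereas the paper inducts on $k = m-n$ and peels off the step at the lower index $n \to n+1$, applying the induction hypothesis to the pair $(n+1, m)$.
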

\begin{proof}
	The ``only if'' part is evident. For the other direction, let $n, m \in \mathbb{N}$ with $n \le m$. We will show that $X_n = \mathbb{E}(X_m \;\vert\; F_n)$ holds $\mu$-a.e. via induction on the difference $k := m - n$. 
	\begin{enumerate}
	\item[$\bullet$] For the induction basis, assume $k = 0$. Hence $n = m$. Since the process is adapted by our assumption, we have $X_n = \mathbb{E}(X_n \;\vert\; F_n) = \mathbb{E}(X_m \;\vert\; F_n)$.
	
	\item[$\bullet$] For the induction step, let $k + 1 = m - n$, and assume $X_i = \mathbb{E}(X_j \;\vert\; F_i)$ for all $i, j \in \mathbb{N}$ such that $k = j - i$ and $i \le j$. Using the fact that $k = m - (n + 1) \ge 0$, we have
	\[
		X_{n + 1} = \mathbb{E}(X_m \;\vert\; F_{n + 1}) \;\;\mu\textrm{-a.e.}
	\]
	We take the conditonal expectation of both sides with respect to the sub-$\sigma$-algebra $F_n$.
	\[
		\mathbb{E}(X_{n + 1} \;\vert\; F_n) =  \mathbb{E}(\mathbb{E}(X_m \;\vert\; F_{n + 1}) \;\vert\; F_n) \;\;\mu\textrm{-a.e.}
	\]
	Using the tower propery, we have
	\[
		\mathbb{E}(X_{n + 1} \;\vert\; F_n) =  \mathbb{E}(X_m \;\vert\; F_n) \;\;\mu\textrm{-a.e.}
	\]
	Finally, we use our assumption to get
	\[
		X_n =  \mathbb{E}(X_m \;\vert\; F_n) \;\;\mu\textrm{-a.e.}
	\]	
	which completes the proof by induction.
	\end{enumerate}
\end{proof}

The same proof idea can be used to show the statement for submartingales (resp. supermartingales) by replacing ``$=$'' with ``$\le$'' (resp. ``$\ge$'') and using the monotonicity of the conditional expectation. We use this characterization to introduce the following introduction lemmas for discrete-time martingales. Analogous statements hold for submartingales and supermartingales as well:

\begin{lemma}
  Let $(X_n)_{n \in \mathbb{N}}$ be an adapted process with respect to the filtration $(F_n)_{n \in \mathbb{N}}$ consisting of Bochner-integrable random variables. Assume 
  \[
  \normalfont X_i = \texttt{cond\_exp} \; M \; (F_i) \; X_{i+1} \quad\mu\textrm{-a.e.} 
  \]
  for all $i \in \mathbb{N}$. Then the process $(X_n)_{n \in \mathbb{N}}$ is a martingale.
\end{lemma}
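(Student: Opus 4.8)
The plan is to derive the statement immediately from the characterization of discrete-time martingales established in the preceding lemma, which asserts that an adapted process $(X_n)_{n\in\mathbb{N}}$ consisting of Bochner-integrable random variables is a martingale if and only if $X_n = \mathbb{E}(X_{n+1}\mid F_n)$ holds $\mu$-a.e.\ for every $n\in\mathbb{N}$.

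First I would observe that the hypotheses of the present lemma already supply two of the three ingredients: the process is assumed adapted to $(F_n)_{n\in\mathbb{N}}$, and it is assumed to consist of Bochner-integrable random variables, i.e.\ $X_n \in L^1(E)$ for all $n$. The remaining ingredient, the one-step identity $X_n = \mathbb{E}(X_{n+1}\mid F_n)$ $\mu$-a.e., is precisely the stated hypothesis $X_i = \texttt{cond\_exp}\; M\; (F_i)\; X_{i+1}$ $\mu$-a.e.\ for all $i\in\mathbb{N}$, since $\mathbb{E}(\cdot\mid F_i)$ is our notation for $\texttt{cond\_exp}\; M\; (F_i)\; (\cdot)$. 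Invoking the characterization lemma then yields that $(X_n)_{n\in\mathbb{N}}$ is a martingale, and that concludes the argument.

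There is essentially no obstacle here; the only point worth keeping in mind is that the statement is read inside the locale providing a $\sigma$-finite filtered measure, so that $\texttt{cond\_exp}$ is well-behaved and the characterization lemma is applicable — in the formalization this is guaranteed by the ambient locale \texttt{sigma\_finite\_adapted\_process}. The analogous introduction lemmas for submartingales and supermartingales follow in exactly the same fashion, replacing ``$=$'' by ``$\le$'' or ``$\ge$'' and citing the corresponding one-step characterization obtained via the monotonicity of the conditional expectation.
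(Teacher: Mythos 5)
Your proposal is correct and matches the paper's intent exactly: the paper introduces this lemma immediately after the one-step characterization of discrete-time martingales (proved there by induction on $m-n$ using the tower property) with the words ``we use this characterization to introduce the following introduction lemmas,'' so the proof is precisely the direct appeal to that characterization that you describe. Your remarks about the ambient $\sigma$-finite locale and the sub-/supermartingale analogues are likewise consistent with the paper.
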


We can express the alternative characterizations in the previous subsections this way as well:

\begin{lemma}
  Let $(X_n)_{n \in \mathbb{N}}$ be an adapted process with respect to the filtration $(F_n)_{n \in \mathbb{N}}$ consisting of Bochner-integrable random variables. 
  Assume 
  \[
  \normalfont \int_A X_i \;\textrm{d}\mu = \int_A X_{i+1} \;\textrm{d}\mu
  \] 
  for all $A \in F_i$ for all $i \in \mathbb{N}$. Then the process $(X_n)_{n \in \mathbb{N}}$ is a martingale.
\end{lemma}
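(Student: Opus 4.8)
The plan is to reduce this statement to the discrete-time characterization already established, namely that an adapted, pointwise Bochner-integrable process $(X_n)_{n\in\mathbb{N}}$ is a martingale as soon as $X_i = \mathbb{E}(X_{i+1}\mid F_i)$ holds $\mu$-a.e. for every $i\in\mathbb{N}$. So the only real work is to extract this identity of conditional expectations from the hypothesis about set integrals.

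First I would fix $i\in\mathbb{N}$ and collect the four ingredients needed to invoke the characterization lemma \texttt{cond\_exp\_charact}: by assumption $\int_A X_{i+1}\,\textrm{d}\mu = \int_A X_i\,\textrm{d}\mu$ for every $A\in F_i$; the functions $X_i$ and $X_{i+1}$ are Bochner-integrable by hypothesis; and $X_i$ is $F_i$-measurable because the process is adapted. Applying \texttt{cond\_exp\_charact} with $f:=X_{i+1}$ and $g:=X_i$ then yields
\[
	X_i = \texttt{cond\_exp} \; M \; (F_i) \; X_{i+1} \quad \mu\textrm{-a.e.}
\]
Since $i$ was arbitrary, this holds for all $i\in\mathbb{N}$, and together with adaptedness and integrability at every time index it is precisely the premise of the discrete-time introduction lemma stated immediately above; invoking that lemma, $(X_n)_{n\in\mathbb{N}}$ is a martingale.

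There is no genuinely hard step: the substantive content — propagating the successor-step relation $X_i=\mathbb{E}(X_{i+1}\mid F_i)$ to the full martingale property $X_n=\mathbb{E}(X_m\mid F_n)$ for all $n\le m$ — was already carried out by the telescoping/tower-property induction in the preceding discrete-time lemma, and the uniqueness input (that $\int_A f=\int_A g$ on $F_i$ forces $f=g$ $\mu\vert_{F_i}$-a.e., hence $\mu$-a.e.) is Corollary~\ref{cor:density_unique} packaged inside \texttt{cond\_exp\_charact}. The only point meriting a moment's care is bookkeeping: assigning the roles of $f$ and $g$ in \texttt{cond\_exp\_charact} so that the $F_i$-measurable function is the representative being identified with the conditional expectation, which is harmless since a.e.-equality is symmetric.
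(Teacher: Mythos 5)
Your proposal is correct and matches the paper's (implicit) argument: the paper likewise derives $X_i = \texttt{cond\_exp}\;M\;(F_i)\;X_{i+1}$ $\mu$-a.e.\ from the set-integral hypothesis via the uniqueness of densities packaged in \texttt{cond\_exp\_charact}, and then feeds this into the discrete-time successor-step characterization (\texttt{martingale\_nat}), whose tower-property induction does the remaining work. The only implicit hypothesis worth keeping in mind is that the filtration must be $\sigma$-finite at $F_{t_0}$ for \texttt{cond\_exp} to be available, which is built into the ambient locale.
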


\begin{lemma}
  Let $(X_n)_{n \in \mathbb{N}}$ be an adapted process with respect to the filtration $(F_n)_{n \in \mathbb{N}}$ consisting of Bochner-integrable random variables. Assume 
  \[
  \normalfont \texttt{cond\_exp} \; M \; F_i \; (X_{i+1} - X_i) = 0 \quad\mu\textrm{-a.e.}
  \] 
  for all $i \in \mathbb{N}$. Then the process $(X_n)_{n \in \mathbb{N}}$ is a martingale.
\end{lemma}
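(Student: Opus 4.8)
The plan is to reduce this to the one-step characterization of discrete-time martingales established in the previous lemma, namely that an integrable adapted process $(X_n)_{n\in\mathbb{N}}$ satisfying $X_i = \texttt{cond\_exp}\;M\;(F_i)\;X_{i+1}$ $\mu$-a.e. for all $i$ is a martingale. Thus it suffices to derive this one-step identity from the hypothesis on the conditional expectation of the increments $X_{i+1} - X_i$.

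First I would invoke the linearity of the conditional expectation operator. Since $X_{i+1}$ and $X_i$ are both Bochner-integrable, so is their difference, and linearity gives
\[
	\texttt{cond\_exp}\;M\;(F_i)\;(X_{i+1} - X_i) = \texttt{cond\_exp}\;M\;(F_i)\;X_{i+1} - \texttt{cond\_exp}\;M\;(F_i)\;X_i \quad \mu\textrm{-a.e.}
\]
Next, because the process is adapted, $X_i$ is $F_i$-measurable and integrable, so the corollary on the identity on $F$-measurable functions yields $\texttt{cond\_exp}\;M\;(F_i)\;X_i = X_i$ $\mu$-a.e. Combining this with the assumption $\texttt{cond\_exp}\;M\;(F_i)\;(X_{i+1} - X_i) = 0$ $\mu$-a.e. and intersecting the finitely many $\mu$-null sets involved, we obtain $X_i = \texttt{cond\_exp}\;M\;(F_i)\;X_{i+1}$ $\mu$-a.e. for every $i \in \mathbb{N}$. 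Applying the previous lemma then concludes that $(X_n)_{n\in\mathbb{N}}$ is a martingale.

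There is no serious obstacle here; the only points requiring a little care are that every step is an almost-everywhere equality, so one has to combine (countably many) $\mu$-null sets, and that linearity of $\texttt{cond\_exp}$ requires the integrability hypothesis, which is precisely what is assumed. The analogous statements for submartingales and supermartingales follow by the same argument, replacing the equality with the appropriate inequality and using the monotonicity properties of the conditional expectation together with the sign of the increments' conditional expectation.
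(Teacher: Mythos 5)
Your proof is correct and follows essentially the same route the paper intends: the paper introduces these discrete-time lemmas explicitly as consequences of the one-step characterization $X_i = \texttt{cond\_exp}\;M\;(F_i)\;X_{i+1}$, and your derivation of that identity from the hypothesis---via linearity of the conditional expectation and the identity on $F_i$-measurable integrable functions---is exactly the intended argument. The only minor imprecision is the aside about ``countably many'' null sets: for each fixed $i$ only finitely many null sets need to be combined, and the one-step lemma only requires the a.e.\ identity separately for each $i$, so no countable union is needed.
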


Lastly, we will discuss the behavior of discrete-time martingales, under the additional assumption that they are predictable. At the end of the preceeding chapter, we have shown that a discrete-time $(X_n)_{n \in \mathbb{N}}$ process is predictable, if and only if the time shifted process $(X_{n + 1})_{n \in \mathbb{N}}$ is an adapted process. That is, $X_{n+1}$ is $F_n$ measurable for all $n \in \mathbb{N}$. Under this additional assumption the martingale property becomes trivial, i.e.
\[
	X_n = \mathbb{E}(X_{n+1} \;\vert\; F_n) = X_{n+1} \;\;\mu\textrm{-a.e.}
\]
By induction, we have $X_n = X_0$ $\mu$-a.e. Hence, a predictable martingale must be constant. The formalized statement is as follows.

\begin{isalemma}
{\small
\begin{lstlisting}[style=isabelle]
lemma (in nat_martingale) predictable_const:
  assumes "$\texttt{nat\_predictable\_process} \; M \; F \; X$"
  shows "AE$ \; x \; $in$ \; M. \; X \; i \; x = X \; j \; x$"
\end{lstlisting}
}
\end{isalemma}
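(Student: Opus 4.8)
The plan is to reduce the statement to the single-step equality $X_n = X_{n+1}$ $\mu$-a.e.\ and then close by induction. First I would unfold the hypothesis \texttt{nat\_predictable\_process} using the discrete-time characterization of predictability proven at the end of Chapter~\ref{chapter:stochastic_processes}: it tells us that $X_{n+1}$ is $F_n$-measurable for every $n \in \mathbb{N}$ (and $X_0$ is $F_0$-measurable, which we will not even need). Since $(X_n)_{n\in\mathbb{N}}$ is a martingale, each $X_{n+1}$ is Bochner-integrable, so the corollary \texttt{cond\_exp\_F\_meas} (identity of the conditional expectation on $F$-measurable integrable functions) applies and gives
\[
	\texttt{cond\_exp} \; M \; (F \; n) \; (X \; (n+1)) = X \; (n+1) \quad \mu\textrm{-a.e.}
\]
On the other hand, the martingale property (instantiated at $i = n$, $j = n+1$) gives $X_n = \texttt{cond\_exp} \; M \; (F \; n) \; (X \; (n+1))$ $\mu$-a.e. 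Chaining these two $\mu$-a.e.\ equalities yields $X_n = X_{n+1}$ $\mu$-a.e.\ for every $n$.

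Next I would prove $X_n = X_0$ $\mu$-a.e.\ for all $n$ by induction on $n$: the base case is trivial, and the step combines the induction hypothesis $X_n = X_0$ $\mu$-a.e.\ with the single-step equality $X_n = X_{n+1}$ $\mu$-a.e.\ just established; the union of the two exceptional null sets is again null. For the stated conclusion with arbitrary indices $i, j$, apply this to both $i$ and $j$ and use transitivity and symmetry of equality under the almost-everywhere quantifier, once more noting that a finite (indeed countable) union of $\mu$-null sets is a $\mu$-null set.

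I do not expect a genuine mathematical obstacle here; the only thing requiring a little care is the bookkeeping with the \lstinline[mathescape]{AE $x$ in $M$.} binder — each chaining and each induction step silently discards a null set, and in Isabelle these must be assembled explicitly (typically via \texttt{AE\_mp} / \texttt{eventually} reasoning). The cleanest route is to first prove the auxiliary fact ``$\texttt{nat\_predictable\_process} \Longrightarrow (\forall n.\ \text{AE } x \text{ in } M.\ X\,n\,x = X\,(n+1)\,x)$'' as a standalone step, then do the induction, and finally specialize. This keeps each lemma's almost-everywhere reasoning short and local.
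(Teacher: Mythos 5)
Your proposal is correct and follows essentially the same route as the paper: use the discrete-time characterization of predictability to get $F_n$-measurability of $X_{n+1}$, combine the martingale property with the identity of the conditional expectation on $F_n$-measurable integrable functions to obtain $X_n = X_{n+1}$ $\mu$-a.e., and conclude $X_n = X_0$ $\mu$-a.e.\ by induction. Your additional remarks about assembling the exceptional null sets explicitly are exactly the bookkeeping the formalization requires, but they do not constitute a different argument.
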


In the same vein, a predictable submartingale must be monotonically increasing and a predictable supermartingale must be monotonically decreasing:

\begin{isalemma}
{\small
\begin{lstlisting}[style=isabelle]
lemma (in nat_submartingale) predictable_mono:
  assumes "$\texttt{nat\_predictable\_process} \; M \; F \; X$" "$i \le j$"
  shows "AE$ \; x \; $in$ \; M. \; X \; i \; x \le X \; j \; x$"
\end{lstlisting}
}
\end{isalemma}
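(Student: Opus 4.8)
The plan is to reduce the submartingale inequality to a pointwise comparison of consecutive terms, exploiting that predictability forces each $X_{n+1}$ to be $F_n$-measurable, and then to chain these one-step comparisons by induction on $j - i$. This mirrors the proof of \texttt{predictable\_const}, with ``$=$'' relaxed to ``$\le$''.

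First I would invoke the characterization of discrete-time predictability proved at the end of Chapter~\ref{chapter:stochastic_processes}: since \texttt{nat\_predictable\_process}~$M$~$F$~$X$ holds, the shifted process $(X_{n+1})_{n \in \mathbb{N}}$ is adapted to $(F_n)_{n \in \mathbb{N}}$, i.e.\ $X_{n+1}$ is $F_n$-measurable for every $n$. Since a submartingale is in particular integrable at every index, $X_{n+1} \in L^1(E)$ and is $F_n$-measurable, so the corollary on the identity of the conditional expectation on $F$-measurable functions (\texttt{cond\_exp\_F\_meas}) yields
\[
	\texttt{cond\_exp} \; M \; (F \; n) \; (X \; (n+1)) = X \; (n+1) \quad \mu\textrm{-a.e.}
\]
Combining this with the submartingale property $X_n \le \texttt{cond\_exp}\;M\;(F\;n)\;(X\;(n+1))$ $\mu$-a.e.\ gives $X_n \le X_{n+1}$ $\mu$-a.e.\ for every $n \in \mathbb{N}$.

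It remains to upgrade this one-step inequality to $X_i \le X_j$ $\mu$-a.e.\ whenever $i \le j$. I would proceed by induction on $k := j - i$: the base case $k = 0$ is trivial, and for the step I combine $X_i \le X_{i+k}$ $\mu$-a.e.\ (induction hypothesis) with $X_{i+k} \le X_{i+k+1}$ $\mu$-a.e.\ and transitivity of $\le$. The only subtlety is that each step discards a $\mu$-null set, so one must note that a finite union of $\mu$-null sets is again $\mu$-null and the combined inequality holds off a single null set; in Isabelle this is handled transparently by the \texttt{AE} reasoning infrastructure. I expect no genuine obstacle here beyond bookkeeping of the almost-everywhere qualifiers; the substantive input is entirely the predictability characterization together with \texttt{cond\_exp\_F\_meas}.
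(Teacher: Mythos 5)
Your proposal is correct and matches the paper's own argument: the paper likewise invokes the discrete-time predictability characterization to get $F_n$-measurability of $X_{n+1}$, concludes $X_n \le \mathbb{E}(X_{n+1}\mid F_n) = X_{n+1}$ $\mu$-a.e.\ from the submartingale property together with the identity of \texttt{cond\_exp} on $F$-measurable integrable functions, and then chains the one-step inequalities by induction. Nothing is missing; the almost-everywhere bookkeeping you mention is exactly the only remaining detail.
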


\begin{isalemma}
{\small
\begin{lstlisting}[style=isabelle]
lemma (in nat_supermartingale) predictable_mono:
  assumes "$\texttt{nat\_predictable\_process} \; M \; F \; X$" "$i \le j$"
  shows "AE$ \; x \; $in$ \; M. \; X \; i \; x \ge X \; j \; x$"
\end{lstlisting}
}
\end{isalemma}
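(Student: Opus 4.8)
The plan is to reduce this to the submartingale case \texttt{nat\_submartingale.predictable\_mono} that was just established, exploiting that the supermartingale locale is set up so that negation interchanges the two notions. Concretely, from \texttt{nat\_supermartingale} \(M\) \(F\) \(X\) one obtains \texttt{nat\_submartingale} \(M\) \(F\) \((\lambda i\,x.\; -\,X\,i\,x)\) (the same trick the paper already uses for \texttt{supermartingale\_of\_set\_integral\_ge}), and from \texttt{nat\_predictable\_process} \(M\) \(F\) \(X\) one obtains \texttt{nat\_predictable\_process} \(M\) \(F\) \((\lambda i\,x.\; -\,X\,i\,x)\), since predictability is preserved under composition with the Borel-measurable negation map. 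Applying the submartingale version of the lemma with \(i \le j\) yields \(\text{AE } x \text{ in } M.\; -\,X\,i\,x \le -\,X\,j\,x\), and moving the sign across the inequality inside the almost-everywhere binder gives the claim.

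If one prefers a self-contained argument, the proof mirrors that of \texttt{predictable\_const}. First I would invoke the discrete-time characterization of predictability from the previous chapter: \texttt{nat\_predictable\_process} \(M\) \(F\) \(X\) implies that \(X_{n+1}\) is \(F_n\)-measurable for every \(n \in \mathbb{N}\). Combining this with \texttt{cond\_exp\_F\_meas} (the identity of \texttt{cond\_exp} on \(F\)-measurable functions) and the defining \texttt{supermartingale\_property} gives the one-step inequality
\[
	X_n \ge \texttt{cond\_exp} \; M \; (F\,n) \; X_{n+1} = X_{n+1} \quad \mu\text{-a.e.}
\]
for every \(n\). Then I would induct on \(k := j - i\): the base case \(k = 0\) is immediate since \(X_i = X_j\), and the induction step chains \(X_i \ge X_{i+1}\) \(\mu\)-a.e. with the induction hypothesis \(X_{i+1} \ge X_j\) \(\mu\)-a.e. using transitivity of \((\ge)\).

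The main obstacle is purely the bookkeeping with the almost-everywhere quantifier: each induction step discards a \(\mu\)-null set, and one must fuse these finitely many exceptional sets into a single null set and glue the chain of inequalities together inside one \texttt{AE} binder rather than several (handled in Isabelle via \texttt{AE\_mp} and \texttt{eventually} reasoning). Since this difficulty is already fully dealt with in the submartingale and martingale analogues, the reduction in the first paragraph is the cleanest route, and the one I would take.
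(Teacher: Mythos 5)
Your proposal is correct, and in fact it contains the paper's own argument as its second half: the text derives these \texttt{predictable\_mono} facts ``in the same vein'' as \texttt{predictable\_const}, i.e.\ from the discrete-time characterization of predictability ($X_{n+1}$ is $F_n$-measurable), the identity of \texttt{cond\_exp} on $F$-measurable integrable functions, and the one-step supermartingale inequality $X_n \ge \texttt{cond\_exp}\; M\; (F\,n)\; X_{n+1} = X_{n+1}$ $\mu$-a.e., followed by induction on $j-i$ with the usual fusing of finitely many null sets. Your preferred route---reducing to \texttt{nat\_submartingale.predictable\_mono} via $X \mapsto -X$---is a legitimate alternative and is consistent with how the development handles other supermartingale statements (e.g.\ \texttt{supermartingale\_of\_set\_integral\_ge}); the only prerequisites are that negation of a supermartingale is a submartingale (available as \texttt{supermartingale.uminus}) and that negation preserves predictability, which holds because $x \mapsto -x$ is Borel and the predictable-process locale admits composition with Borel maps. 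The trade-off is the usual one: the reduction avoids redoing the a.e.\ induction bookkeeping, at the cost of needing the closure properties of the predictable locale under negation, while the direct argument is self-contained but duplicates the transitivity-under-\texttt{AE} reasoning already done for the submartingale case. Either is acceptable.
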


This wraps up our formalization of martingales in Isabelle. As it was out of the scope of this thesis, we have not formalized any major results concerning martingales. That being said, we hold a strong belief that the groundwork we've laid here will provide an excellent foundation for future formalization endeavors.
% !TeX root = ../main.tex
% Add the above to each chapter to make compiling the PDF easier in some editors.

\chapter{Discussion}\label{chapter:discussion}

In this chapter, we discuss the decisions we took throughout the formalization process. Furthermore, we compare our work with the existing formalization of martingales in Lean, and outline directions for future research.

While constructing the conditional expectation operator, we have come up with the following approach, which draws inspiration from the construction in \cite{Hytoenen_2016}. Both our approach, and the one in \cite{Hytoenen_2016} are based on showing that the conditional expectation is a contraction on some dense subspace of the space of functions $L^1(E)$.

In our approach, we start by constructing the conditional expectation explicitly for simple functions. Then we showed that the conditional expectation is a contraction on simple functions, i.e. $\lVert \mathbb{E}(s \vert F)(x) \rVert \le \mathbb{E}(\lVert s(x) \rVert \vert F)$ for $\mu$-almost all $x \in \Omega$ with $s : \Omega \rightarrow E$ simple and integrable. Using this, we were able to show that the conditional expectation of a convergent sequence of simple functions is again convergent. Finally, we showed that this limit exhibits the properties of a conditional expectation. This approach has the benefit of being straightforward and easy to implement, since we could make use of the existing formalization for real-valued functions. 

Now, we present the following alternative method, which is described in detail in \cite{Hytoenen_2016}.

One first shows that the conditional expectation exists for functions in $f \in L^2(\mathbb{R})$. Then one uses the fact that functions in $L^1(\mathbb{R})$ can be approximated by functions in $L^1(\mathbb{R}) \cap L^\infty(\mathbb{R})$ to obtain a conditional expectation operator $L^1(\mathbb{R})\rightarrow L^1(\mathbb{R})$. Then, one shows that this operator is bounded and positive. Hence, it can be extended to a bounded operator $L^1(E)\rightarrow L^1(E)$, which retains the properties of a conditional expectation. In more detail, one argues as follows. 

Let $F \subset \Sigma$, be a sub-$\sigma$-algebra. First, one shows that the subspace 
\[
	L^2(\mathbb{R}; F) := \{f \in L^2(\mathbb{R}) \;\vert\; f \; \textrm{is} \; F\textrm{-measurable}\}
\] is a closed and convex subset of $L^2(\mathbb{R})$. Using the Hilbert projection theorem, we obtain a projection $P : L^2(\mathbb{R}) \rightarrow L^2(\mathbb{R}; F)$. We then verify that the projected function $Pf$ satisfies the properties of a conditional expectation using the fact that projections are self-adjoint. 

Next, we show that the conditional expectation is contractive with respect to the $L^1$-norm. We know that $L^1(\mathbb{R}) \cap L^\infty(\mathbb{R}) \subseteq L^2(\mathbb{R})$ and $L^1(\mathbb{R}) \cap L^\infty(\mathbb{R})$ is dense in $L^1(\mathbb{R})$. Therefore, we can extend the conditional expectation operator, which is currently only defined for functions in $L^2(\mathbb{R})$ to a contraction $L^1(\mathbb{R}) \rightarrow L^1(\mathbb{R}; F)$. Then, it is straightforward to verify that this operator is indeed the conditional expectation. Finally, one shows that this operator is positive and extends it to a bounded operator $L^1(E)\rightarrow L^1(E; F)$ which still has the properties of a conditional expectation.

This is an elegant way of showing that the conditional expectation exists as a bounded operator on $L^1(E)$. Furthermore, one can easily extend this definition to functions in $L^p(E)$. Our approach also uses a similar argument; we show that the conditional expectation is a contraction on the dense subset of $L^1(E)$ generated by integrable simple functions.

Another reason why we did not employ this approach is because the only formalization of the Hilbert projection theorem in Isabelle/HOL is for complex vector spaces. Therefore we decided to take a simpler approach and construct the conditional expectation using mostly measure theoretical arguments. This also makes the proofs more accesible in our opinion.

To actually formalize this operator in Isabelle, we have decided to first create a predicate \texttt{has\_cond\_exp} which characterizes the existence of a conditional expectation. Then, we define the actual operator in terms of Hilbert's choice function. This has a couple of advantages. First and foremost, it allows us to state many of the properties of the conditional expectation operator without first needing to show that it actually exists. From a mathematical perspective, this doesn't change anything in a major way. However in practice, it makes the formalization much easier.

One of the short-comings of our formalization is how lemmas concerning ordered Banach spaces are developed. In many stages, we require that the ordering on the Banach spaces be linear. Otherwise, it doesn't necessarily follow that the sets $[a,\infty) = \{x \in E\;\vert\; a \le x\}$ and $(\infty, a] = \{x \in E\;\vert\; x \le a\}$ are closed. This is cruicial in many stages of our formalization. 

With this in mind, there are weaker restrictions we can place on the ordering that allow us to obtain the same results. A Banach space $(E, \lVert \cdot \rVert)$ equipped with a lattice ordering is called a ``Banach lattice'', if for any $a, b\in E$ the following implication holds.
\[
	\lvert a \rvert \le \lvert b \rvert \implies \lVert a \rVert \le \lVert b \rVert
\]
where $\lvert x \rvert := x \vee -x$. Under this weaker assumption, the aforementioned sets are still closed and one can still show the monotonicity results concerning the conditional expectation. This is actually how the formalization is done on \textsf{mathlib}. Even though it would be great to introduce Banach lattices and show these results in this more general setting, the limited scope of our thesis did not permit this. This is a subject we might will explore in future projects.

In our formalization of stochastic processes, we have decided to restrict the index set to $\{t \; \vert\; t \ge t_0\}$ for some $t_0$. The main advantage of this approach is that it lets us formalize continuous-time stochastic processes with ease. Concretely, we can set $t_0 = 0 \in \mathbb{R}$ to introduce a stochastic process $(X_t)_{t \in [0, \infty)}$. Without such an initial index, we would either need to define a new type for non-negative real numbers or have to set $X_t$ to some predefined value for all $t < 0$. On the other hand, if the existence of such an initial index proves to be a problem, we can always use the \texttt{option} type, ordered via $\texttt{None} \le \texttt{Some} \; t$ for all $t$ and $\texttt{Some} \; t_1 \le \texttt{Some} \; t_2 \iff t_1 \le t_2$ for all $t_1, t_2$. Then we can just set $F_{\texttt{None}} = \sigma(\varnothing)$ and forget about this additional constraint altogether. Apart from this, we have aimed to place as little restriction on the index set as possible. Initially, we had formalized stochastic processes with the additional assumption that the index set be linearly ordered. After careful inspection, we have noticed that this is only necessary for our definition of predictable processes.

\begin{remark} 
	If we assume that a filtration is indexed by a lattice $\mathcal{A}$ of subsets of an indexing set $I$, we can still meaningfully define predictable processes. However, in this setting the generating set for the preditable $\sigma$-algebra is much less intuitive at first glance. More information on the definition of predictability in this setting can be found in \cite{Ivanoff1993}.
\end{remark}

Our main motivation for this project was to port the \textsf{mathlib} formalization on martingales to Isabelle/HOL. We have fully accomplished this goal. In Appendix \ref{chapter:appendix}, we present the tables \ref{tab:martingale_theories}, \ref{tab:submartingale_theories} and \ref{tab:supermartingale_theories}, which contain the results of the \textsf{mathlib} document \texttt{probability.mar\-tingale.basic} \cite{Degenne_Ying_2022} with their counterparts in our formalization \cite{Keskin_A_Formalization_of_2023}. We have aimed to state all our results with at least the same level of generality as their \textsf{mathlib} counterparts. Furthermore, we were able to restate all results which contained the real numbers with an arbitrary linearly ordered $\mathbb{R}$-Banach space instead. In the future, we aim to further weaken the assumption to only require Banach lattices, which will make our formalization strictly more general than the one on \textsf{mathlib}.

As briefly stated at the end of the last chapter, we have primarily focused on laying the groundwork for formalizing martingales in arbitrary Banach spaces, with specific emphasis on extending the conditional expectation operator and generalizing specific concepts from Bochner integration. Building upon our formalization framework, the natural next step is to delve into the formalization of key martingale theorems and properties, such as the martingale convergence theorem, the optional stopping theorem, and the Doob decomposition, among others. 

Another direction for exploration is the further development of the theory of stochastic processes introduced in this work. This will lead us into more advanced territories, such as stochastic differential equations (SDEs), It\^o calculus, and the theory of semimartingales. It\^o calculus is used to rigorously define and solve SDEs. SDEs describe how quantities evolve over time when influenced by both deterministic trends and random fluctuations. This is particularly valuable in finance for modeling asset prices, interest rates, and other financial variables that exhibit inherent uncertainty. The ability to work with SDEs allows researchers to develop sophisticated pricing models for financial derivatives, assess risk accurately, and optimize investment strategies.

% !TeX root = ../main.tex
% Add the above to each chapter to make compiling the PDF easier in some editors.

\chapter{Conclusion}\label{chapter:conclusion}

This thesis has been dedicated to the formalization of martingales in arbitrary Banach spaces, using the proof assistant Isabelle. Our central objective was to provide a rigorous foundation for this endeavor and expand upon existing formalizations. As we conclude our work, we reflect upon the contributions we have made.

A major achievement of our work is the extension of the conditional expectation operator from the familiar real-valued setting to the broader context of Banach spaces. This generalization allows for a more versatile and comprehensive approach to modeling stochastic processes, accommodating a wider range of applications. Furthermore, we have lifted many of the commonly used properties of the conditional expectation to this more general setting. We have introduced locale definitions to characterize various types of stochastic processes and filtered measure spaces, which was essential for the development of martingale theory in Banach spaces. We have introduced corresponding locales for discrete-time and continuous-time processes. Additionally, we have characterized predictable processes in the discrete-time setting, using proof methods from measure theory.

Finally, we have introduced suitable locales for martingales, demonstrated their basic properties, and presented alternative characterizations in the discrete-time setting. As we look ahead, we face several potential directions for future formalization endeavors, such as martingale convergence theorems, It\^o calculus and the theory of semimartingales. By providing a robust framework for martingales and related concepts in Banach spaces, we hope to facilitate further exploration and development in this field. Prior to our work, there was no development of martingales in a general Banach space setting within the \textsf{AFP}. With our contributions, this gap has been addressed, offering a solid starting point for further formalizations within the theory of stochastic processes.

\printbibliography{}

\appendix
% !TeX root = ../main.tex
% Add the above to each chapter to make compiling the PDF easier in some editors.
\newgeometry{left=0.5cm,right=0.5cm}
\chapter{Appendix}\label{chapter:appendix}

The following tables contain the results of the \textsf{mathlib} document \texttt{probability.mar\-tingale.basic} \cite{Degenne_Ying_2022} matched with the corresponding results from our formalization \cite{Keskin_A_Formalization_of_2023}.

\vspace{1cm}

\begin{longtable}{| p{.38\textwidth} p{.57\textwidth} |}
	\caption[Lookup Table for Martingale Lemmas and Definitions]{Lookup table for martingale lemmas and definitions}\label{tab:martingale_theories} \vspace{0.5cm} \\
	\hline
	\textsf{Lean} & \textsf{Isabelle} \\ \hline
	\texttt{martingale} & \texttt{martingale (locale)}  \\
	\texttt{martingale.adapted} & \texttt{adapted\_process.adapted}  \\
	\texttt{martingale.add} & \texttt{martingale.add}  \\
	\texttt{martingale.condexp\_ae\_eq} & \texttt{martingale.martingale\_property}  \\
	\texttt{martingale.eq\_zero\_of\_predictable} & \texttt{martingale.predictable\_const}  \\
	\texttt{martingale.integrable} & \texttt{martingale.integrable}  \\
	\texttt{martingale.neg} & \texttt{martingale.uminus}  \\
	\texttt{martingale.set\_integral\_eq} & \texttt{martingale.set\_integral\_eq}  \\
	\texttt{martingale.smul} & \texttt{martingale.scaleR}  \\
	\texttt{martingale.strongly\_measurable} & \texttt{stochastic\_process.random\_variable}  \\
	\texttt{martingale.sub} & \texttt{martingale.diff}  \\
	\texttt{martingale.submartingale} & \textsf{via sublocale relation}  \\
	\texttt{martingale.supermartingale} & \textsf{via sublocale relation}  \\
	\texttt{martingale\_condexp} & \texttt{sigma\_finite\_filtered\_measure.martingale\_cond\_exp}  \\
	\texttt{martingale\_const} & \texttt{finite\_filtered\_measure.martingale\_const}  \\
	\texttt{martingale\_const\_fun} & \texttt{sigma\_finite\_filtered\_measure.martingale\_const}  \\
	\texttt{martingale\_iff} & \texttt{martingale\_iff}  \\
	\texttt{martingale\_nat} & \texttt{nat\_sigma\_finite\_adapted\_process.martingale\_nat}  \\
	\texttt{martingale\_of\_condexp\_sub\_eq\_zero\_nat} & \texttt{nat\_sigma\_finite\_adapted\_process.martingale\_of\_cond\_exp\-\_diff\_Suc\_eq\_zero}  \\
	\texttt{martingale\_of\_set\_integral\_eq\_succ} & \texttt{nat\_sigma\_finite\_adapted\_process.martingale\_of\_set\_integ\-ral\_eq\_Suc}  \\
	\texttt{martingale\_zero} & \texttt{sigma\_finite\_filtered\_measure.martingale\_zero} \\
	\hline
\end{longtable}
\pagebreak
\begin{longtable}{| p{.38\textwidth} p{.57\textwidth} |}
	\caption[Lookup Table for Submartingale Lemmas and Definitions]{Lookup table for submartingale lemmas and definitions}\label{tab:submartingale_theories} \vspace{0.5cm} \\
	\hline
	\textsf{Lean} & \textsf{Isabelle} \\ \hline
	\texttt{submartingale} & \texttt{submartingale (locale)}  \\
	\texttt{submartingale.adapted} & \texttt{adapted\_process.adapted}  \\
	\texttt{submartingale.add} & \texttt{submartingale.add}  \\
	\texttt{submartingale.add\_martingale} & \texttt{submartingale.add}  \\
	\texttt{submartingale.ae\_le\_condexp} & \texttt{submartingale\_property}  \\
	\texttt{submartingale.condexp\_sub\_nonneg} & \texttt{submartingale.cond\_exp\_diff\_nonneg}  \\
	\texttt{submartingale.integrable} & \texttt{submartingale.integrable}  \\
	\texttt{submartingale.neg} & \texttt{submartingale.uminus}  \\
	\texttt{submartingale.pos} & \texttt{submartingale.max\_0}  \\
	\texttt{submartingale.set\_integral\_le} & \texttt{submartingale\_linorder.set\_integral\_le}  \\
	\texttt{submartingale.smul\_nonneg} & \texttt{submartingale.scaleR\_nonneg}  \\
	\texttt{submartingale.smul\_le\_zero} & \texttt{submartingale.scaleR\_le\_zero}  \\
	\texttt{submartingale.strongly\_measurable} & \texttt{stochastic\_process.random\_variable}  \\
	\texttt{submartingale.sub\_martingale} & \texttt{submartingale.diff}  \\
	\texttt{submartingale.sub\_supermartingale} & \texttt{submartingale.diff}  \\
	\texttt{submartingale.sum\_mul\_sub} & \texttt{nat\_submartingale.partial\_sum\_scaleR}  \\
	\texttt{submartingale.sum\_mul\_sub'} & \texttt{nat\_submartingale.partial\_sum\_scaleR'}  \\
	\texttt{submartingale.sup} & \texttt{submartingale\_linorder.max}  \\
	\texttt{submartingale.zero\_le\_of\_predictable} & \texttt{nat\_submartingale.predictable\_mono}  \\
	\texttt{submartingale\_nat} & \texttt{nat\_sigma\_finite\_adapted\_process\_linorder.submartingale\-\_nat}  \\
	\texttt{submartingale\_of\_condexp\_sub\_nonneg} & \texttt{sigma\_finite\_adapted\_process\_order.submartingale\_of\_cond\-\_exp\_diff\_nonneg}  \\
	\texttt{submartingale\_of\_condexp\_sub\_nonneg\_nat} & \texttt{nat\_sigma\_finite\_adapted\_process\_linorder.submartingale\-\_of\_cond\_exp\_diff\_Suc\_nonneg}  \\
	\texttt{submartingale\_of\_set\_integral\_le} & \texttt{sigma\_finite\_adapted\_process\_linorder.submartingale\_of\-\_set\_integral\_le}  \\
	\texttt{submartingale\_of\_set\_integral\_le\_succ} & \texttt{nat\_sigma\_finite\_adapted\_process\_linorder.submartingale\-\_of\_set\_integral\_le\_Suc} \\
	\hline
\end{longtable}
\pagebreak
\begin{longtable}{| p{.38\textwidth} p{.57\textwidth} |}
	\caption[Lookup Table for Supermartingale Lemmas and Definitions]{Lookup table for supermartingale lemmas and definitions}\label{tab:supermartingale_theories} \vspace{0.5cm} \\
	\hline
	\textsf{Lean} & \textsf{Isabelle} \\ \hline
	\texttt{supermartingale} & \texttt{supermartingale (locale)}  \\
	\texttt{supermartingale.adapted} & \texttt{adapted\_process.adapted}  \\
	\texttt{supermartingale.add} & \texttt{supermartingale.add}  \\
	\texttt{supermartingale.add\_martingale} & \texttt{supermartingale.add}  \\
	\texttt{supermartingale.condexp\_ae\_le} & \texttt{supermartingale\_property}  \\
	\texttt{supermartingale.integrable} & \texttt{supermartingale.integrable}  \\
	\texttt{supermartingale.le\_zero\_of\_predictable} & \texttt{supermartingale.predictable\_mono}  \\
	\texttt{supermartingale.neg} & \texttt{supermartingale.uminus}  \\
	\texttt{supermartingale.set\_integral\_le} & \texttt{supermartingale\_linorder.set\_integral\_ge}  \\
	\texttt{supermartingale.smul\_nonneg} & \texttt{supermartingale.scaleR\_nonneg}  \\
	\texttt{supermartingale.smul\_le\_zero} & \texttt{supermartingale.scaleR\_le\_zero}  \\
	\texttt{supermartingale.strongly\_measurable} & \texttt{stochastic\_process.random\_variable}  \\
	\texttt{supermartingale.sub\_martingale} & \texttt{supermartingale.diff}  \\
	\texttt{supermartingale.sub\_submartingale} & \texttt{supermartingale.diff}  \\
	\texttt{supermartingale\_nat} & \texttt{nat\_sigma\_finite\_adapted\_process\_linorder.supermartingale\-\_nat}  \\
	\texttt{supermartingale\_of\_condexp\_sub\_nonneg\-\_nat} & \texttt{nat\_sigma\_finite\_adapted\_process\_linorder.supermartingale\-\_of\_cond\-\_exp\_diff\_Suc\_le\_zero}  \\
	\texttt{supermartingale\_of\_set\_integral\_succ\_le} & \texttt{nat\_sigma\_finite\_adapted\_process\_linorder.supermartingale\-\_of\_set\_integral\_le\_Suc} \\
	\hline
\end{longtable}

\restoregeometry

\end{document}